\newtheorem{theorem}{Theorem}
\newtheorem{lemma}{Lemma}[section]
\newtheorem{claim}[lemma]{Claim}
\newtheorem{corollary}[lemma]{Corollary}
\newtheorem{remark}[theorem]{Remark}
\newtheorem{definition}[lemma]{Definition}
\theoremstyle{definition}
\theoremstyle{plain}
\newcommand{\ignore}[1]{}
\newcommand{\Ex}{\mathop{{\bf E}\/}}
\renewcommand{\Pr}{\operatorname{{\bf Pr}}}
\newcommand{\Prx}{\mathop{{\bf Pr}\/}}
\newcommand{\Ber}{\operatorname{Ber}}
\newcommand{\poly}{\mathrm{poly}}
\newcommand{\dist}{\mathrm{dist}}
\newcommand{\Alg}{\mathrm{Alg}}
\newcommand{\N}{\mathbbm N}
\newcommand{\Z}{\mathbbm Z}
\newcommand{\eps}{\varepsilon}
\renewcommand{\epsilon}{\eps}
\newcommand{\ol}[1]{\overline{#1}}
\newcommand{\oA}{\ol{A}}
\newcommand{\oS}{\ol{S}}
\newcommand{\bA}{\mathbf{A}}
\newcommand{\bB}{\mathbf{B}}
\newcommand{\bC}{\mathbf{C}}
\newcommand{\bG}{\mathbf{G}}
\newcommand{\bH}{\mathbf{H}}
\newcommand{\bL}{\mathbf{L}}
\newcommand{\bM}{\mathbf{M}}
\newcommand{\bP}{\mathbf{P}}
\newcommand{\bQ}{\mathbf{Q}}
\newcommand{\bS}{\mathbf{S}}
\newcommand{\bT}{\mathbf{T}}
\newcommand{\bV}{\mathbf{V}}
\newcommand{\bW}{\mathbf{W}}
\newcommand{\bX}{\mathbf{X}}
\newcommand{\bY}{\mathbf{Y}}
\newcommand{\bZ}{\mathbf{Z}}
\newcommand{\boldf}{\boldsymbol{f}} 
\newcommand{\bb}{\boldsymbol{b}}
\newcommand{\bd}{\boldsymbol{d}}
\newcommand{\bg}{\boldsymbol{g}}
\newcommand{\bh}{\boldsymbol{h}}
\newcommand{\bi}{\boldsymbol{i}}
\newcommand{\bj}{\boldsymbol{j}}
\newcommand{\bm}{\boldsymbol{m}}
\newcommand{\boldr}{\boldsymbol{r}} 
\newcommand{\bv}{\boldsymbol{v}}
\newcommand{\bx}{\boldsymbol{x}}
\newcommand{\by}{\boldsymbol{y}}
\newcommand{\bell}{\boldsymbol{\ell}}
\newcommand{\bGamma}{{\mathbf{\Gamma}} }
\newcommand{\eqdef}{\stackrel{\rm def}{=}}
\newcommand{\calD}{\mathcal{D}}
\newcommand{\calE}{\mathcal{E}}
\newcommand{\calG}{\mathcal{G}}
\newcommand{\calH}{\mathcal{H}}
\newcommand{\calN}{\mathcal{N}}
\newcommand{\calP}{\mathcal{P}}
\newcommand{\calU}{\mathcal{U}}
\newcommand{\calV}{\mathcal{V}}
\newcommand{\calY}{\mathcal{Y}}
\newcommand{\Dyes}{\calD_{\text{yes}}}
\newcommand{\Dno}{\calD_{\text{no}}}
\def\FullBox{\hbox{\vrule width 8pt height 8pt depth 0pt}}
\newcommand{\QED}{\;\;\;\FullBox}
\renewenvironment{proof}{\noindent{\bf Proof:~~}}{\hfill\QED}
\newenvironment{proofof}[1]{\noindent{\bf Proof of {#1}:~~}}{\hfill\(\QED\)}
\def\authornameAL{Amit Levi}
\def\authoraffiAL{University of Waterloo. Email: \href{mailto: amit.levi@uwaterloo.ca}{amit.levi@uwaterloo.ca}.}
\def\authoraffiEW{Columbia University. Email: \href{mailto: eaw@cs.columbia.edu}{eaw@cs.columbia.edu}.}
\def\Dyes{\mathcal{D}_{\text{yes}}}
\def\Dno{\mathcal{D}_{\text{no}}}
\def\bM{\mathbf{M}}
\def\bA{\mathbf{A}}
\def\bH{\mathbf{H}}
\def\Junta{\mathrm{Junta}}
\def\obM{\overline{\bM}}
\def\obA{\overline{\bA}}
\def\calEy{\calE_{\text{yes}}}
\def\calEn{\calE_{\text{no}}}
\title{Lower Bounds for Tolerant Junta and Unateness Testing\\ via Rejection Sampling of Graphs}
\author{
\authornameAL\thanks{\authoraffiAL}
\and
Erik Waingarten\thanks{\authoraffiEW}}
\begin{document}         
\maketitle

\begin{abstract}
We introduce a new model for testing graph properties which we call the \emph{rejection sampling model}. We show that testing bipartiteness of $n$-nodes graphs using rejection sampling queries requires complexity $\widetilde{\Omega}(n^2)$. Via reductions from the rejection sampling model, we  give three new lower bounds for tolerant testing of Boolean functions of the form $f\colon\{0,1\}^n\to \{0,1\}$:
\begin{itemize}
\item Tolerant $k$-junta testing with \emph{non-adaptive} queries requires $\widetilde{\Omega}(k^2)$ queries.
\item Tolerant unateness testing requires $\widetilde{\Omega}(n)$ queries.
\item Tolerant unateness testing with \emph{non-adaptive} queries requires $\widetilde{\Omega}(n^{3/2})$ queries.
\end{itemize}
Given the $\widetilde{O}(k^{3/2})$-query non-adaptive junta tester of Blais \cite{B08}, we conclude that non-adaptive tolerant junta testing requires more queries than non-tolerant junta testing. In addition, given the $\widetilde{O}(n^{3/4})$-query unateness tester of Chen, Waingarten, and Xie \cite{CWX17b} and the $\widetilde{O}(n)$-query non-adaptive unateness tester of Baleshzar, Chakrabarty, Pallavoor, Raskhodnikova, and Seshadhri \cite{BCPRS17}, we conclude that tolerant unateness testing requires more queries than non-tolerant unateness testing, in both adaptive and non-adaptive settings. These lower bounds provide the first separation between tolerant and non-tolerant testing for a natural property of Boolean functions.
\end{abstract}

\thispagestyle{empty}

\newpage
\thispagestyle{empty}
\setcounter{tocdepth}{2}
\tableofcontents
\thispagestyle{empty}
\newpage
\setcounter{page}{1}

\newcommand{\calbE}{\boldsymbol{\mathcal{E}}}

\section{Introduction}

Over the past decades, property testing has emerged as an
important line of research in sublinear time algorithms. 
{The} goal is to understand randomized algorithms for approximate decision making, where the algorithm needs to decide (with high probability) whether a huge object has some property by making a few queries to the object. Many different types of objects and properties
have been studied from this property testing perspective
(see the surveys by Ron \cite{R08,R10} and the recent textbook by Goldreich \cite{G17} for overviews of
contemporary property testing research). This paper deals with property testing of Boolean functions $f \colon \{0, 1\}^n \to \{0, 1\}$ and property testing of graphs with vertex set $[n]$. 

In this paper we describe a new model of graph property testing, which we call the \emph{rejection sampling model}. For $n \in \N$ and a subset $\calP$ of graphs on the vertex set $[n]$, we say a graph $G$  on vertex set $[n]$ has property $\calP$ if $G \in \calP$ and say $G$ is $\eps$-far from having property $\calP$ if all graphs $H \in \calP$ differ on at least $\eps n^2$ edges\footnote{The distance definition can be modified accordingly when one considers bounded degree or sparse graphs. }. The problem of $\eps$-testing $\calP$ with \emph{rejection sampling queries} is the following task:
\begin{quote}
	Given some $\eps > 0$ and access to an unknown graph $G=([n],E)$, output ``accept" with probability at least $\frac{2}{3}$ if $G$ has property $\calP$, and output ``reject'' with probability at least $\frac{2}{3}$ if $G$ is $\eps$-far from having property $\calP$. The access to $G$ is given by the following oracle queries: given a query set $L \subseteq [n]$, the oracle samples an edge $(\bi, \bj) \sim E$ uniformly at random and returns $\{ \bi, \bj\} \cap L$. 
\end{quote}
We measure the complexity of algorithms with rejection sampling queries by considering the sizes of the queries. The complexity of an algorithm making queries $L_1, \dots, L_t \subset [n]$ is  $\sum_{i=1}^t |L_i|$.

The rejection sampling model allows us to study testers which rely on random sampling of edges, while providing the flexibility of making lower-cost queries. This type of query access strikes a delicate balance between simplicity and generality: queries are constrained enough for us to show high lower bounds, and at the same time, the flexibility of making queries allows us to reduce the rejection sampling model to Boolean function testing problems. Specifically, we reduce to tolerant junta testing and tolerant unateness testing (see Subsection~\ref{sec:intro-apps}). 

Our main result in the rejection sampling model is regarding \emph{non-adaptive} algorithms. These algorithms need to fix their queries in advance and are not allowed to depend on answers to previous queries (in the latter case we say that the algorithm is \emph{adaptive}).
We show a lower bound on the complexity of testing whether an unknown graph $G$ is bipartite using non-adaptive queries.
\begin{theorem}\label{thm:bipartite}
	There exists a constant $\eps > 0$ such that any non-adaptive $\eps$-tester for bipartiteness in the rejection sampling model has cost $\widetilde{\Omega}(n^2)$\footnote{We use the notations $\widetilde{O},\widetilde{\Omega}$  to hide polylogarithmic dependencies on the argument, i.e. for expressions of the form $O(f\log^c f)$ and $\Omega(f /\log^c f)$ respectively (for some absolute constant $c$).}.
\end{theorem}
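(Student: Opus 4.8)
The plan is to prove the lower bound via Yao's principle: I will exhibit two distributions $\Dyes$ and $\Dno$ over graphs on $[n]$, where graphs drawn from $\Dyes$ are bipartite and graphs drawn from $\Dno$ are (with high probability) $\eps$-far from bipartite, and then argue that no non-adaptive deterministic algorithm of cost $o(n^2/\polylog n)$ can distinguish them. For the hard distributions, the natural choice is a variant of the classical bipartiteness lower-bound construction: $\Dno$ is a random graph that is locally indistinguishable from a random bipartite graph but has odd girth $\omega(1)$, so that any small witness of non-bipartiteness (an odd cycle) is hard to find. A clean way to do this is to let $\Dyes$ place a uniformly random bipartition of $[n]$ and then include each cross-pair as an edge with the appropriate probability (so the expected edge set has size $\Theta(n^2)$), while $\Dno$ is the Erd\H{o}s--R\'enyi graph $G(n,p)$ for the matching edge density; a standard computation shows $\Dno$ is $\eps$-far from bipartite with high probability for a suitable constant $\eps$.

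The core of the argument is information-theoretic. Fix a non-adaptive algorithm with queries $L_1,\dots,L_t\subseteq[n]$ of total cost $\sum_i|L_i| = C$. For each query $L_i$, the oracle returns $\{\bi,\bj\}\cap L_i$ where $(\bi,\bj)$ is a uniformly random edge of $G$. The key observation is that because the edge is sampled uniformly at random from the whole graph $G$ (which has $\Theta(n^2)$ edges), the probability that a given query $L_i$ of size $|L_i|=\ell_i$ ``sees'' a vertex of the sampled edge is only $O(\ell_i / n)$; and conditioned on the rare event that it sees a vertex, the second endpoint is essentially a uniformly random vertex of $[n]$ under both $\Dyes$ and $\Dno$, so it reveals almost nothing about the bipartition. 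I would formalize this by a hybrid/coupling argument: couple the oracle's answers under $\Dyes$ and under $\Dno$ so that they agree unless some ``bad'' event occurs, and bound the total probability of a bad event by $\sum_i O(\ell_i/n) \cdot (\text{prob. of correlation}) = O(C \cdot \polylog(n)/n^2)$. If $C = o(n^2/\polylog n)$ this is $o(1)$, so the transcripts are statistically close and the algorithm cannot distinguish $\Dyes$ from $\Dno$ — contradiction.

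The main obstacle is the second half of that bound: when a query $L_i$ does catch an endpoint of the sampled edge, I must argue the conditional distribution of the returned set (and its correlations across the $t$ queries) does not leak the bipartition. The subtlety is that the $t$ answers are not independent — they all concern the same random graph $G$ — so even though each individual answer is nearly useless, an adversary could in principle correlate many answers to detect the planted bipartition, exactly as in the $\widetilde\Omega(n^2)$ non-adaptive bipartiteness bound in the standard model. Handling this likely requires a more careful second-moment or martingale analysis: one shows that the joint transcript is, up to total variation $o(1)$, a function only of a sparse random object (e.g., the multiset of sampled edges restricted to $\bigcup_i L_i$ together with incidence counts), whose distribution is nearly identical under $\Dyes$ and $\Dno$ because a sparse random bipartite graph and a sparse Erd\H{o}s--R\'enyi graph are hard to tell apart. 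I expect this ``collision/correlation'' bookkeeping — tracking which pairs of queries $L_i, L_j$ can jointly reveal an edge and showing the number of informative collisions is $o(1)$ when $C = o(n^2/\polylog n)$ — to be the technically heaviest part, and the $\polylog$ losses in $\widetilde\Omega$ to come precisely from the slack needed there.
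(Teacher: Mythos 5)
Your high-level plan is sound --- Yao's principle applied to a pair of hard distributions, followed by an argument that any low-cost non-adaptive transcript is statistically close --- but both the choice of distributions and the crucial technical step differ from the paper, and the sketch papers over the part of the argument that is actually hardest.

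On the distributions: the paper does not use a random bipartite graph versus $G(n,p)$. It uses $\calG_2 = \{K_{\bA, \obA}\}$ (a uniformly random \emph{complete} bipartite graph with parts of size $n/2$) against $\calG_1 = \{K_{\bA}\cup K_{\obA}\}$ (a uniformly random union of two disjoint cliques). The point of this pair is that both graphs have identical edge count and identical degree sequence and, most importantly, both are deterministic functions of the \emph{same} random bipartition $\bA$. This lets the entire lower bound be phrased as an outcome-by-outcome comparison of $\Pr[\text{transcript}\mid\calG_1]$ versus $\Pr[\text{transcript}\mid\calG_2]$, with $\bA$ as the only source of hidden randomness in the graph. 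Your pair, a $p$-random bipartite graph versus Erd\H{o}s--R\'enyi, does not share a common parameterization, so the likelihood-ratio bookkeeping becomes substantially messier (and the two distributions are not even matched at second order in edge count), which is exactly where you lose control in the regime you care about.

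On the technical heart of the argument: you write that ``conditioned on the rare event that [a query] sees a vertex, the second endpoint is essentially a uniformly random vertex of $[n]$ under both distributions, so it reveals almost nothing about the bipartition.'' This is where the real difficulty lives, and the claim as stated misses it. The leaked information is not in the hidden second endpoint; it is in the \emph{observed lone vertex} itself. For a query $L$ of size $\ell$, the probability that the returned lone vertex lies in $\bA$ differs between $\calG_1$ and $\calG_2$ by a quantity of order $\bigl||L\cap\bA|-|L\cap\obA|\bigr|/n = O(\sqrt{\ell}\log n / n)$ for a typical $L$. This bias is tiny per query, but it has opposite sign under $\calG_1$ and $\calG_2$ and it accumulates; ruling out a distinguisher that aggregates these biases is precisely what the paper's event $\calbE_{B}$ (Definition~\ref{def:event-l}), Lemma~\ref{lem:calE_Q_L}, and the likelihood-ratio estimate in Lemma~\ref{lem:good} (Claims~\ref{clm:LoneUpper} and~\ref{clm:LoneLower}) are for. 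Your coupling sketch, which bounds the ``bad event'' by $\sum_i O(\ell_i/n)\cdot(\text{prob.\ of correlation}) = O(C\,\polylog(n)/n^2)$, does not engage with this accumulation: it treats each query as either a clean miss or an $O(\polylog n / n)$-bad hit, whereas the actual per-lone-vertex bias scales like $\sqrt{\ell_i}\,\polylog(n)/n$, and the danger is the $\widetilde\Omega(n)$ lone-vertex observations you could afford with budget $\widetilde{o}(n^2)$ conspiring coherently. The paper handles this not by a direct coupling or a second-moment/martingale argument, but by an outcome-by-outcome likelihood-ratio decomposition into events $\calE_T$ (observed subgraph is a forest of small trees), $\calE_F$ (few non-$\emptyset$ responses), the consistency events $\calbE_{C,\text{yes}}/\calbE_{C,\text{no}}$, and the balancedness event $\calbE_B$, proved in Lemmas~\ref{lem:consistency}, \ref{lem:good}, and~\ref{lem:bad}. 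To make your sketch a proof you would need something playing the role of $\calbE_B$; without it, the claimed $O(C\,\polylog(n)/n^2)$ bound is not justified.
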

More specifically, Theorem~\ref{thm:bipartite} follows from applying Yao's principle to the following lemma.
\begin{lemma}\label{lem:graphs}
	Let $\calG_1$ be the uniform distribution over the union of two disjoint cliques of size $n/2$, and let $\calG_2$ be the uniform distribution over complete bipartite graphs with each part of size $n/2$. Any deterministic non-adaptive algorithm that can distinguish between $\calG_1$ and $\calG_2$ with constant probability using rejection sampling queries, must have complexity $\widetilde{\Omega}(n^2)$. 
\end{lemma}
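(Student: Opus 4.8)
I would fix the deterministic non-adaptive algorithm; since it is non-adaptive it amounts to a fixed sequence of query sets $L_1,\dots,L_t\subseteq[n]$, and since a distinguisher's advantage is bounded by the total variation distance between the transcript distributions, it suffices to show: if $M:=\sum_s|L_s|\le n^2/\log^C n$ for a sufficiently large absolute constant $C$, then the transcript $\bA=(\bA_1,\dots,\bA_t)$ has total variation distance $o(1)$ under $\calG_1$ versus $\calG_2$; this gives $M=\widetilde\Omega(n^2)$. In both models $\bA$ is generated by sampling a uniformly random equipartition $\bpi$ of $[n]$ into two halves, forming the graph $\bG$ (two cliques on the halves for $\calG_1$; the complete bipartite graph between them for $\calG_2$), and then, independently over $s$, returning $\bA_s=\boldsymbol{e}_s\cap L_s$ for a uniformly random edge $\boldsymbol{e}_s$ of $\bG$.

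The core is a coupling of the $\calG_1$-process and the $\calG_2$-process that handles the queries in order while maintaining a revealed graph $\bH$ on $[n]$ — the pairs that have so far returned as two-element answers — together with the partial information $\bH$ gives about $\bpi$: vertices in one $\bH$-component lie on a common side in the $\calG_1$-world, and are two-colored by tree parity in the $\calG_2$-world. Empty and singleton answers are produced identically in the two worlds; when query $s$ returns a two-element answer $\{u,v\}$, we add $\{u,v\}$ to $\bH$, which is simultaneously consistent with both worlds unless the new edge (i) closes an odd cycle in $\bH$ — impossible for the $\calG_2$-world's two-coloring — or (ii) creates an $\bH$-component on more than $n/2$ vertices — impossible for the $\calG_1$-world's equipartition. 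As long as neither bad event occurs, the two transcripts coincide up to a per-step slippage: conditioned on the current $\bH$, the laws of $\bA_s$ in the two worlds differ in total variation by only $O\!\left(|L_s|^2\polylog(n)/n^3+|L_s|/n^2\right)$, stemming from the mild effect of $\bH$ and of the equipartition constraint on a fresh edge sample.

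It then remains to bound the failure modes. Using the key inequality $|L_s|\le n$, one has $\E[|E(\bH)|]\le\sum_s\Pr[|\bA_s|=2]\le\sum_s\binom{|L_s|}{2}/|E(\bG)|=O\!\left(\sum_s|L_s|^2/n^2\right)=O(M/n)=o(n)$. A Galton--Watson domination, using that every weighted degree $\sum_u\#\{s:\{u,v\}\subseteq L_s\}/|E(\bG)|$ is at most $4M/n^2=o(1)$, shows that with probability $1-o(1)$ every $\bH$-component has size $O(\log n/\log\log n)$, so event (ii) never occurs; and a first-moment estimate gives $\E[\#\text{odd cycles of }\bH]\le\sum_{\text{odd }\ell\ge 3}\tfrac1{2\ell}(4M/n^2)^\ell=O\!\left((M/n^2)^3\right)=o(1)$, so event (i) fails with probability $1-o(1)$. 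Summing the per-step slippage over the $t$ queries yields $O(M\polylog(n)/n^2)=o(1)$, as $M\le n^2/\log^C n$ and $C$ is large. Combining these, the total variation distance between the two transcript distributions is $o(1)$, as desired.

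The main obstacle is the slippage estimate and, inseparably, checking that the coupling is legitimate: one must show that, conditioned on the current revealed forest $\bH$, a fresh sample $\boldsymbol{e}_s\cap L_s$ has nearly the same law in the two worlds. This rests on two points — a uniformly random equipartition restricted to any small vertex set is extremely close to independent fair bits, and, relative to such a bipartition, ``being a monochromatic pair'' and ``being a bichromatic pair'' carry symmetric information — so the two worlds' views of the new sample differ noticeably only when that sample would close an odd cycle in $\bH$. The subtlety is that the query sets are adversarial and may keep probing the small revealed part of the graph, and it is exactly the combination of $|L_s|\le n$ with the $O(\log n/\log\log n)$ bound on $\bH$-component sizes that keeps the accumulated discrepancy $o(1)$.
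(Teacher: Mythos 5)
Your overall strategy (a step-by-step coupling that tracks the revealed forest $\bH$ and charges the failure probability of the coupling as total-variation slippage) is a reasonable alternative to the paper's likelihood-ratio analysis, and your bounds on the revealed components — $\E[|E(\bH)|]=O(M/n)=o(n)$, subcritical Galton--Watson domination giving $O(\log n)$-size trees, and a first-moment bound ruling out cycles — correctly mirror the paper's events $\calE_T$ and $\calE_F$ (small trees, few non-empty responses). But there is a genuine gap in the central slippage estimate, precisely at the place where the paper has to work hardest.

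You assert that ``empty and singleton answers are produced identically in the two worlds,'' and then quantify the discrepancy by a per-step conditional TV of $O(|L_s|^2\polylog n/n^3+|L_s|/n^2)$. The first claim is not true, and the second is not justified once the history contains lone-vertex responses. Conditioned on a fixed partition $A$ with $a=|A\cap L_s|$ and $b=|\bar A\cap L_s|$, a lone vertex $u\in A\cap L_s$ has probability proportional to $n/2-a$ in $\calG_1$ and $n/2-b$ in $\calG_2$, so the per-vertex likelihood factors differ by order $|a-b|/n\approx\sqrt{|L_s|}/n$. The apparent cancellation that shrinks this to $O(1/n)$ (and hence yields your $O(|L_s|/n^2)$ term) comes from averaging over a posterior on $A$ that is still \emph{balanced} — i.e., symmetric under $A\leftrightarrow\bar A$ up to the forest two-coloring. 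That balance is exactly what is eroded by the lone-vertex observations themselves: every such response shifts the posterior in opposite directions in the two worlds (toward ``$u$'s side is light in $L_s$'' under $\calG_1$ and ``$u$'s side is heavy in $L_s$'' under $\calG_2$), so the conditional distributions over $A$ drift apart as the transcript lengthens, and the naive per-step bound no longer holds. Your proposal never bounds this drift; it is exactly the content of the paper's event $\calbE_B$ (the balanced-lone-vertices condition of Definition~\ref{def:event-l} and Lemma~\ref{lem:calE_Q_L}), which shows that $\sum_{i\in\bV_L}(-1)^{\by_i}\bigl(|L_i\cap\bA|-|L_i\cap\obA|\bigr)=O(n/\log n)$ with high probability under $\calG_1$, and of Claims~\ref{clm:LoneUpper}--\ref{clm:LoneLower}, which use this to pin the product of lone-vertex likelihood factors to $1\pm o(1)$. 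Without an analogue of $\calbE_B$, a step-by-step coupling does not certify that the cumulative slippage is $o(1)$ at cost $n^2/\polylog n$; in fact the crude per-step estimate conditioned on a typical fixed $A$ only rules out $M=\widetilde o(n^{3/2})$, which is why the delicate cancellation must be tracked explicitly.

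A secondary, smaller issue: you only bound odd cycles of $\bH$. An even cycle is harmless for the $\calG_2$ two-coloring but still has to be controlled for the consistency bookkeeping (the paper's $\calE_T$ requires $\bH$ to be a forest outright); and your cycle-count formula $\sum_\ell\tfrac{1}{2\ell}(4M/n^2)^\ell$ is missing the combinatorial factor for choosing the cycle vertices or, equivalently, the per-edge weights $|L_{i_j}\cap L_{i_{j+1}}|$ that the paper sums via the elementary symmetric polynomial in Claim~\ref{lem:no-cycles}. These are fixable, but the lone-vertex drift is the substantive gap.
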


We discuss a number of applications of the rejection sampling model (specifically, of Lemma~\ref{lem:graphs}) in the next subsection. In particular, we obtain new lower bounds in the \emph{tolerant testing framework} introduced by Parnas, Ron, and Rubinfeld in \cite{parnas2006tolerant} for {two} well-studied properties
of Boolean functions (specifically, $k$-juntas and unateness; see the next subsection for definitions of these properties). These lower bounds are obtained by a reduction from the rejection sampling model; we show that too-good-to-be-true Boolean function testers for these properties imply the existence of rejection sampling algorithms which distinguish $\calG_1$ and $\calG_2$ with $\tilde{o}(n^2)$ complexity. Therefore, we may view the rejection sampling model as a useful abstraction in studying the hard instances of tolerant testing $k$-juntas and unateness.


\subsection{Applications to Tolerant Testing: Juntas and Unateness}\label{sec:intro-apps}

Given $n \in \N$ and a subset $\calP$ of $n$-variable Boolean functions, a Boolean function $f \colon \{0, 1\}^n \to \{0, 1\}$ has property $\calP$ if $f \in \calP$. The distance between Boolean functions $f, g\colon \{0, 1\}^n \to \{0, 1\}$ is $\dist(f, g) = \Prx_{\bx \sim \{0,1\}^n}[f(\bx) \neq g(\bx)]$. The distance of $f$ to the property $\calP$ is $\dist(f, \calP) = \min_{g \in \calP} \dist(f, g)$. We  say that $f$ is $\eps$-close to $\calP$ if $\dist(f, \calP) \leq \eps$ and $f$ is $\eps$-far from $\calP$ if $\dist(f, \calP) > \eps$. The problem of \emph{tolerant property testing} \cite{parnas2006tolerant} of $\calP$ asks for query-efficient randomized algorithms for the following task:
\begin{quote}
	Given parameters $0 \leq \eps_0 < \eps_1 < 1$ and black-box query access to a Boolean function $f \colon \{0, 1\}^{n} \to \{0, 1\}$, accept with probability at least $\frac{2}{3}$ if $f$ is $\eps_0$-close to $\calP$ and reject with probability at least $\frac{2}{3}$ if $f$ is $\eps_1$-far from $\calP$.
\end{quote}
An algorithm which performs the above task is an $(\eps_0, \eps_1)$-tolerant tester for $\calP$. A $(0,\eps_1)$-tolerant tester is a \emph{standard} property tester or a \emph{non-tolerant} tester. As noted in \cite{parnas2006tolerant}, tolerant testing is not only a natural generalization, but is also very often the desirable attribute of testing algorithms. This motivates the high level question: how does the requirement of being tolerant affect the complexity of testing the properties studied?  We make progress on this question by showing query-complexity separations for two well-studied properties of Boolean functions: $k$-juntas, and unate functions.
\begin{itemize}
	\item ($k$-junta) A function $f\colon\{0,1\}^n\to\{0,1\}$ is a \textit{$k$-junta} if it depends on at most $k$ of its variables, i.e., there exists $k$ distinct indices $i_1, \dots i_k \in [n]$ and a $k$-variable function $g \colon \{0, 1\}^k \to \{0, 1\}$ where $f(x) = g(x_{i_1}, \dots, x_{i_k})$ for all $x \in \{0, 1\}^n$.
	\item (unateness) A function $f\colon\{0,1\}^n\to\{0,1\}$ is \emph{unate} if $f$ is either non-increasing or non-decreasing in every variable. Namely, there exists a string $r \in \{0, 1\}^n$ such that the function $f(x \oplus r)$ is monotone with respect to the bit-wise partial order on $\{0, 1\}^n$.
\end{itemize}

The next theorem concerns non-adaptive tolerant testers for $k$-juntas.

\begin{theorem}\label{thm:junta}
	For any $\alpha < 1$, there exists constants $0 < \eps_0 < \eps_1 < 1$ such that for any $k = k(n) \leq \alpha n$, any non-adaptive $(\eps_0, \eps_1)$-tolerant $k$-junta tester must make $\widetilde{\Omega}(k^2)$ queries.
\end{theorem}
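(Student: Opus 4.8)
The plan is to prove Theorem~\ref{thm:junta} by reducing from the non-adaptive rejection-sampling lower bound of Lemma~\ref{lem:graphs}, with the number of graph vertices set to $m=\Theta(k)$ (which is at most $n$ since $k\le\alpha n$). The core object is a randomized map $G\mapsto\boldf_G$ that sends a graph $G$ on vertex set $[m]$ to a Boolean function $\boldf_G\colon\{0,1\}^n\to\{0,1\}$ reading $\Theta(k)$ designated ``active'' coordinates identified with $V(G)$ --- the other $n-m$ coordinates being padding, duplicated appropriately so that they leak nothing --- and satisfying three properties. First, if $G\sim\calG_2$ is complete bipartite then $\boldf_G$ is $\eps_0$-close to a $k$-junta with high probability over the map's internal randomness. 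Second, if $G\sim\calG_1$ is a disjoint union of two cliques then $\boldf_G$ is $\eps_1$-far from every $k$-junta with high probability. Third, the map is ``locally random'': for any fixed set of query points $x^{(1)},\dots,x^{(q)}$, the bits $\boldf_G(x^{(1)}),\dots,\boldf_G(x^{(q)})$ are, in distribution and essentially independently across distinct points, reproducible by answering for each $x^{(i)}$ a single rejection-sampling query to $G$ of size $\widetilde{O}(1)$ that depends only on $x^{(i)}$ and public randomness. A natural template to attempt here is a random-DNF-style construction whose terms are tied to edges of $G$ and in which the live active coordinate of each ``vertex block'' is chosen at random.

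Granting such a map, the theorem follows quickly via Yao's principle and a simulation argument. Fix a non-adaptive $(\eps_0,\eps_1)$-tolerant $k$-junta tester $T$ of query complexity $q$, with query points $x^{(1)},\dots,x^{(q)}$ fixed in advance. Build a non-adaptive rejection-sampling algorithm that samples the internal randomness of the map and of $T$, issues the $q$ rejection-sampling queries promised by the third property --- of total cost $\widetilde{O}(q)$ --- to obtain bits distributed as $\boldf_G(x^{(1)}),\dots,\boldf_G(x^{(q)})$, feeds them to $T$, and returns $T$'s verdict. By the first two properties, $G\sim\calG_2$ makes the simulated function $\eps_0$-close to a $k$-junta and $G\sim\calG_1$ makes it $\eps_1$-far, so this algorithm distinguishes $\calG_1$ from $\calG_2$ with constant advantage, non-adaptively. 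Lemma~\ref{lem:graphs} then forces $\widetilde{O}(q)=\widetilde{\Omega}(m^2)$, i.e.\ $q=\widetilde{\Omega}(k^2)$; the universal constants $0<\eps_0<\eps_1<1$ are those produced by the map, and the ``high probability'' slack is absorbed into the tester's failure probability.

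I expect constructing the map to be the main obstacle, because its ``global'' and ``local'' requirements pull against each other. Globally, $\calG_1$ and $\calG_2$ are nearly identical in coarse structure --- each is two groups of $m/2$ vertices --- so the map must amplify the purely local within-group-versus-across-group distinction into a robust, constant-size gap in distance to the $k$-junta property, with one side a genuine $k$-junta and the other provably far from all $k$-juntas. The obvious candidates (a parity or AND of edge-gadgets over $G$, or contracting the function's live variables along the edges of $G$) all fail, as they collapse $\calG_1$ and $\calG_2$ to functions of essentially the same junta complexity and hence yield no separation. Locally, and simultaneously, a single function value must reveal only $\widetilde{O}(1)$ bits about $G$ in the rejection-sampling cost metric, which rules out, for instance, any gadget where $\boldf_G(x)$ is a parity over all $\Theta(k)$ active coordinates. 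Finding a gadget that is at once ``globally sensitive'' and ``locally cheap'' is the crux; within it, the most delicate step will be establishing the ``$\eps_1$-far from every $k$-junta'' bound for $G\sim\calG_1$, which should require an averaging/anti-concentration argument over the map's randomness and the random graph rather than exhibiting a single hard function.
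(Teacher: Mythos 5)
Your high-level architecture is the same as the paper's: reduce from the non-adaptive rejection-sampling lower bound (Lemma~\ref{lem:graphs}) by mapping a hidden graph to a distribution over Boolean functions, simulate a Boolean-function tester with $\widetilde{O}(q)$-cost rejection-sampling queries, and apply Yao. But you have explicitly left out the construction you yourself identify as ``the crux,'' and the candidates you rule out include the one that actually works. The paper does not use a single parity or AND of edge-gadgets over $G$; it uses a \emph{multiplexed family} of two-variable edge-parities. Concretely, take a random $\bM\subset[n]$ of size $n/2$ and index $\{0,1\}^n$ by $\Gamma_\bM(x)=x|_\bM$; for half the indices the sub-function is the fixed parity $\bigoplus_{\ell\in\bM}x_\ell$, and for the other half it is $x_{\bj_1}\oplus x_{\bj_2}\oplus\boldr$ where $(\bj_1,\bj_2)$ is a fresh uniform edge of $G$ on vertex set $\obM$. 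Because different input regions see different edges, the distance to $k$-junta (for $k=\tfrac34 n$) concentrates around $\tfrac14\chi(G)$ where $\chi(G)$ is the minimum fraction of edges incident to an $\tfrac n4$-subset of $\obM$. This is exactly a graph-cut quantity with a constant gap between the two distributions ($\chi\le\tfrac12$ for a union of two $\tfrac n4$-cliques, $\chi\ge\tfrac34$ for the complete bipartite graph on halves of $\obM$), which is precisely the ``global amplification'' you say you don't see how to achieve. Note also that you have the roles reversed: in the paper $\calG_1$ (union of cliques) is the close/yes case and $\calG_2$ (bipartite) is the far/no case. That flip is harmless for the final statement, since Lemma~\ref{lem:graphs} is symmetric, but it is a sign the gadget's mechanism isn't yet in view.

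Your ``locally cheap'' property is also not quite the right formalization, and the difference matters. You ask for one $\widetilde{O}(1)$-size rejection-sampling query per Boolean-function query, with essentially independent answers across distinct points. But a non-adaptive tester can and will query Hamming-close points, and such points land in the \emph{same} part of $\Gamma_\bM$ with constant probability, so their function values are strongly correlated through the same hidden edge. The paper's reduction therefore groups the Boolean-function queries into classes $\bQ_1,\dots,\bQ_t$ by their $\Gamma_\bM$-index and issues one rejection-sampling query $\bL_i=\{j:\exists z,z'\in\bQ_i,\,z_j\neq z'_j\}$ per class; the cost bound $O(q\log n)$ then follows from the (w.h.p.) fact that two queries with Hamming distance $\gg\log n$ never share a class. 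Without that grouping, your simulated answers would not match the true joint distribution of $(\boldf(x^{(1)}),\dots,\boldf(x^{(q)}))$, and the reduction would break. So the skeleton of your plan is sound, but the two missing pieces --- the multiplexed edge-parity gadget with its $\dist(\boldf,\text{$k$-}\Junta)\approx\tfrac14\chi(G)$ analysis, and the class-by-class rather than query-by-query simulation --- are exactly what the proof requires and are not derivable from what you have written.
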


We give a noteworthy consequences of the Theorem~\ref{thm:junta}. In \cite{B08}, Blais gave a non-adaptive $\widetilde{O}(k^{3/2})$-query tester for (non-tolerant) testing of $k$-juntas, which was shown to be optimal for non-adaptive algorithms by Chen, Servedio, Tan, Waingarten and Xie in \cite{CSTWX17}. Combined with Theorem~\ref{thm:junta}, this shows a polynomial separation in the query complexity of non-adaptive tolerant junta testing and non-adaptive junta testing.

The next two theorems concern tolerant testers for unateness.
\begin{theorem}\label{thm:unate}
	There exists constants $0 < \eps_0 < \eps_1 < 1$ such that any (possibly adaptive) $(\eps_0, \eps_1)$-tolerant unateness tester must make $\widetilde{\Omega}(n)$ queries.
\end{theorem}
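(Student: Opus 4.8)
The plan is to prove Theorem~\ref{thm:unate} by reducing from the rejection sampling problem of distinguishing $\calG_1$ (the union of two disjoint cliques of size $n/2$) from $\calG_2$ (the complete bipartite graph on two parts of size $n/2$). I would design a randomized reduction that, given oracle access to an unknown $G$ on $[n]$ that is a sample from $\calG_1$ or from $\calG_2$ (plus some auxiliary internal randomness), exposes black-box access to a Boolean function $\boldf_G\colon\{0,1\}^N\to\{0,1\}$ with $N=\Theta(n)$ such that: \emph{(i)} if $G\sim\calG_2$ then $\boldf_G$ is $\eps_0$-close to unate with probability at least $9/10$; \emph{(ii)} if $G\sim\calG_1$ then $\boldf_G$ is $\eps_1$-far from unate with probability at least $9/10$; and \emph{(iii)} each black-box query to $\boldf_G$ can be answered using rejection sampling queries to $G$ of total cost $\widetilde{O}(1)$. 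Given \emph{(i)}--\emph{(iii)}, one runs a hypothetical $q$-query $(\eps_0,\eps_1)$-tolerant unateness tester on $\boldf_G$ and echoes its answer; this distinguishes $\calG_1$ from $\calG_2$ with constant advantage at rejection-sampling cost $\widetilde{O}(q)$, and since the tester may be adaptive so is the resulting distinguisher, so an adaptive $\widetilde{\Omega}(n)$ lower bound for distinguishing $\calG_1$ from $\calG_2$ yields $q=\widetilde{\Omega}(n)$.

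The function $\boldf_G$ should be built so that ``close to unate'' mirrors ``close to bipartite'', which is precisely why the far-from-bipartite $\calG_1$ and the bipartite $\calG_2$ pull it apart. A natural template: split the $N$ coordinates into a $\Theta(n)$-size block of ``address'' coordinates and $n$ ``vertex'' coordinates $y_1,\dots,y_n$; fix a uniformly random orientation $\bsigma\in\{0,1\}^n$ of the vertices; associate to each address a fresh uniform edge of $G$; and on the slice picked out by an address, let $\boldf_G$ equal a fixed gadget on the two vertex coordinates of that edge, twisted by the corresponding bits of $\bsigma$ and designed so that the slice becomes (close to) monotone, after flipping vertex coordinates by a sign vector $r$, essentially only when $r$ assigns the two endpoints opposite signs --- so that an edge acts like an ``inequality constraint'' on $r$. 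Then $\boldf_G$ can be close to $r$-monotone only if $r$ is (close to) a proper $2$-coloring of $G$: for $G\sim\calG_2$ the planted balanced bipartition supplies such an $r$ and leaves only an $\eps_0$-fraction of violated inputs, whereas for $G\sim\calG_1$ every $r$ leaves $\Omega(n^2)$ monochromatic edges, each of which owns a $\Theta(2^N/n^2)$-fraction of the domain on which the gadget is violated, so $\boldf_G$ is $\eps_1$-far from unate; the gap $0<\eps_0<\eps_1$ comes out as an absolute constant from the edge-density gap between the two graph distributions. Simulation is easy in outline: the address of a query point is chosen by the tester, so the reduction reads it, makes one rejection-sampling query to obtain the associated random edge (restricted to the coordinates the gadget will inspect), and combines it with $\bsigma$ and the vertex bits of the query point to output $\boldf_G$.

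The technical heart, I expect, is realizing \emph{(ii)} and \emph{(iii)} at once. Property \emph{(iii)} forces a highly local encoding --- a query may reveal only a fresh random edge intersected with a tiny set --- which tightly constrains how $G$ can be embedded, while \emph{(ii)} demands \emph{robust} non-unateness: a constant fraction of the domain must witness some coordinate being bidirectional, and this must persist against an adversary who, having learned the hidden structure, re-orients $r$ (and tries to exploit the address coordinates) to recover monotonicity. Engineering the gadget and the address-to-edge distribution so that (a) a single edge does not already create an unrepairable conflict, (b) the odd-cycle structure that $\calG_1$ is bound to contain does, and (c) a uniformly random input induces essentially the uniform edge distribution of $G$ on the inspected coordinates (so the simulation is faithful), is the delicate combinatorial core and is exactly where the specific shapes of $\calG_1$ and $\calG_2$ are used. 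By contrast, the adaptive $\widetilde{\Omega}(n)$ bound needed for the last step --- weaker than the non-adaptive $\widetilde{\Omega}(n^2)$ of Lemma~\ref{lem:graphs} --- is elementary: both $\calG_1$ and $\calG_2$ have $\Theta(n^2)$ edges, so a query set of size $\ell$ meets the sampled edge with probability $O(\ell/n)$; hence if the total cost is $o(n)$, a union bound over the at most $q$ queries shows the transcript is the all-empty string with probability $1-o(1)$ under both distributions, on which event the deterministic algorithm returns a fixed verdict, so its distinguishing advantage is $o(1)$.
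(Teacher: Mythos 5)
Your high-level plan matches the paper's: reduce from distinguishing $\calG_1$ and $\calG_2$ with rejection sampling, via a construction whose sub-functions are parity gadgets on the endpoints of a fresh random edge of $\bG$ (plus one of two reserved ``orientation'' coordinates), so that distance to unateness is governed by the fractional max-cut of $\bG$. The yes/no roles of $\calG_1$ and $\calG_2$ and the MaxCut intuition are exactly the paper's.

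The gap is in your item~\emph{(iii)}: the claim that each black-box query to $\boldf_G$ can be answered with rejection sampling cost $\widetilde{O}(1)$. To evaluate $\boldf_G$ at a point $z$ that lands in a fresh sub-function $\bh_i$, the reduction must know $z_{\bj_1}\oplus z_{\bj_2}$ for the edge $(\bj_1,\bj_2)\sim\bG$ that defines $\bh_i$. A rejection sampling query of cost $o(n)$ meets the freshly sampled edge with probability $o(1)$ (the graph has $\Theta(n^2)$ edges), so it returns $\emptyset$ almost surely and yields essentially no information about the endpoints. This is not merely a difficulty to engineer around; with total rejection sampling budget $\widetilde{O}(q)\le\widetilde{O}(n)$, \emph{every} response would be $\emptyset$ with probability $1-o(1)$ regardless of whether $\bG\sim\calG_1$ or $\calG_2$, so the simulator would literally have no graph information to work with, yet $\boldf_G$'s value genuinely depends on $\bG$. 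In other words, your own elementary argument that an $o(n)$-cost adaptive rejection sampler cannot distinguish the two graph families also shows that an $o(n)$-cost simulator cannot produce the correlated answers that $\boldf_G$ would. The paper resolves this by paying cost $n$ per query: each time a new sub-function is encountered, the reduction queries the full vertex set and obtains the entire edge, giving a perfect (not approximate) simulation at total cost $qn$. The resulting rejection sampling algorithm is in fact \emph{non-adaptive} (all query sets equal $\obM$), even though the tester being simulated is adaptive, so the paper closes the argument with the non-adaptive $\widetilde{\Omega}(n^2)$ lower bound of Lemma~\ref{lem:graphs}: $qn=\widetilde{\Omega}(n^2)$ gives $q=\widetilde{\Omega}(n)$. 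Your observation that the distinguisher ``inherits adaptivity'' is therefore not needed, and the simple $\Omega(n)$ adaptive lower bound you sketch is correct but too weak to be useful at cost $qn$ per the actual reduction.

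If you want to reduce the per-query cost, the right model to compare against is the paper's \emph{non-adaptive} unateness reduction (Lemma~\ref{lem:nonadaptive-reduct-unate}), which gets cost $\approx\sqrt{n}$ per query by exploiting that non-adaptive queries landing in the same Talagrand term agree on all but $\widetilde{O}(\sqrt{n})$ coordinates; even there the simulation is only approximate, with a nontrivial TV-distance argument (Lemmas~\ref{lem:good-split}--\ref{cl:lone-vertex-val-dist}), and that technique fundamentally uses non-adaptivity. For the adaptive statement you are proving, there is no analogous mechanism to beat cost $n$ per fresh sub-function.
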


\begin{theorem}\label{thm:unate-non}
	There exists constant $0 < \eps_0 < \eps_1 < 1$ such that any non-adaptive $(\eps_0, \eps_1)$-tolerant unateness tester must make $\widetilde{\Omega}(n^{3/2})$ queries.
\end{theorem}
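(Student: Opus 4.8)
The plan is to derive Theorem~\ref{thm:unate-non} by reduction from the non-adaptive graph lower bound of Lemma~\ref{lem:graphs}. I would build two distributions $\Dyes$ and $\Dno$ over Boolean functions $f\colon\{0,1\}^n\to\{0,1\}$, each obtained by drawing a graph $\bG$ on a vertex set of size $\Theta(n)$ — from $\calG_2$ (complete bipartite) for $\Dyes$ and from $\calG_1$ (two disjoint cliques) for $\Dno$ — together with a small amount of auxiliary randomness, and then reading off a function $f_{\bG}$. Two properties must be established: \textbf{(i)} when $\bG\sim\calG_2$ the function $f_{\bG}$ is $\eps_0$-close to unate, while when $\bG\sim\calG_1$ it is $\eps_1$-far from unate, for absolute constants $0<\eps_0<\eps_1<1$; and \textbf{(ii)} any query $f_{\bG}(x)$ can be answered — consistently across all queries — by a single rejection sampling query $L(x)\subseteq[n]$ with $|L(x)|=\widetilde O(\sqrt n)$. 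Granting these, a non-adaptive $(\eps_0,\eps_1)$-tolerant unateness tester making $q$ queries $x^{(1)},\dots,x^{(q)}$ becomes a non-adaptive rejection sampling algorithm distinguishing $\calG_1$ from $\calG_2$ with total cost $\sum_{t}|L(x^{(t)})|=\widetilde O(q\sqrt n)$; Lemma~\ref{lem:graphs} then forces $q\sqrt n=\widetilde\Omega(n^2)$, i.e. $q=\widetilde\Omega(n^{3/2})$, and the theorem follows by Yao's minimax principle exactly as Theorem~\ref{thm:bipartite} follows from Lemma~\ref{lem:graphs}.

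For \textbf{(i)} the guiding idea is that a unate orientation of $f$ should play the role of a proper $2$-coloring of $\bG$. I would attach to each potential edge $\{i,j\}$ a gadget, acting on a fresh block of ``data'' variables governed by ``address'' variables tied to $i$ and $j$, designed so that when the edge is present the block becomes monotone under the orientation flip $x\mapsto x\oplus r$ precisely for those $r\in\{0,1\}^n$ with $r_i\neq r_j$, and otherwise contributes a constant density of anti-monotone edge-pairs regardless of how the variables are oriented. Then $\dist(f_{\bG},\text{unate})$ is small exactly when $\bG$ is close to $2$-colorable: a graph from $\calG_2$ is bipartite, so the corresponding orientation makes essentially every gadget monotone and $\dist(f_{\bG},\text{unate})\le\eps_0$; two disjoint cliques of size $n/2$ require the deletion of $\Omega(n^2)$ edges to become bipartite, so every orientation leaves a constant fraction of gadgets ``bad'', each contributing its share of violations, whence $\dist(f_{\bG},\text{unate})>\eps_1$. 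Making $\eps_1-\eps_0$ a genuine positive constant, robust to the address/block overhead, requires the usual careful accounting with block sizes but is conceptually routine once the gadget is pinned down.

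Part \textbf{(ii)} is where the $\sqrt n$ factor — hence the exponent $3/2$ rather than $1$ — comes from, and it is the main obstacle. The value $f_{\bG}(x)$ must be engineered to depend on $\bG$ only through the restriction of its edge set to a set $S(x)$ of $\widetilde O(\sqrt n)$ vertices determined by the address part of $x$ alone; a rejection query with $L(x)=S(x)$ then returns $\{\bi,\bj\}\cap S(x)$ for a uniformly random edge $(\bi,\bj)\sim\bG$, which is exactly what the gadget addressed by $x$ needs to read off its output. The delicate points are (a) packaging the construction so that the random edge sample supplies precisely the auxiliary randomness built into $f_{\bG}$, and (b) ensuring that the collection of answers returned for the $q$ fixed queries is jointly distributed as a bona fide function evaluated at those points rather than as independent per-query draws; this is exactly where non-adaptivity is used, since the sets $L(x^{(t)})$ are all fixed in advance and their pattern of overlaps with the random edge can be reasoned about up front. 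The real tension is that the gadgets must be ``spread out'' over $\widetilde O(\sqrt n)$-vertex neighborhoods for the cost bound in (ii), yet ``local'' enough for the orientation/$2$-coloring correspondence in (i) to survive; reconciling these two demands is the technical heart of the argument, and it is essentially this $\sqrt n$-fold localization, available in the non-adaptive setting, that lets the bound exceed the $\widetilde\Omega(n)$ of Theorem~\ref{thm:unate}.
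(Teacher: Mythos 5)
Your high-level plan is the right one and matches the paper: build $\Dyes$ from $\calG_2$ (bipartite $\Rightarrow$ close to unate via the unique $2$-coloring giving a consistent orientation) and $\Dno$ from $\calG_1$ (two cliques $\Rightarrow$ every orientation leaves a constant fraction of gadgets paying an extra penalty, since the max-cut is bounded away from $|E|$), localize queries to $\widetilde O(\sqrt n)$-size sets so the cost of the induced rejection-sampling algorithm is $\widetilde O(q\sqrt n)$, and invoke Lemma~\ref{lem:graphs}. You also correctly identify $\sqrt n$-localization as the source of the $3/2$ exponent. However, there is a genuine gap in step~(ii), and it is not a detail.

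You assert that ``any query $f_{\bG}(x)$ can be answered --- consistently across all queries --- by a single rejection sampling query $L(x)$,'' and later frame the consistency issue as something non-adaptivity alone handles. That is false as stated, and it is precisely where the difficulty lies. In the paper's construction (Talagrand DNF terms of width $\sqrt n$ giving the unate indexing map $\Gamma_{\bT}$, and three-variable parity gadgets $x_{\bj_1}\oplus x_{\bj_2}\oplus x_{\bj_3}$ with $\bj_3\in\{m_1,m_2\}$), the set $L_i$ you query collects only the coordinates on which the $\leq\widetilde O(\sqrt n)$-apart queries in the same part disagree. When the oracle returns $\emptyset$ or a lone vertex, the identities of $\bj_1,\bj_2$ outside $L_i$ remain hidden, so the parity $(z)_{\bj_1}\oplus(z)_{\bj_2}$, and hence $f_{\bG}(z)$, is \emph{not determined} by the oracle's answer. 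The reduction cannot simulate queries exactly; it can only sample bits whose joint distribution is \emph{close in total variation} to the true one. Controlling this error requires (a) showing the per-group bias is $O(\log^2 n/n)$ for $\emptyset$-responses and $O(\log n/\sqrt n)$ for lone-vertex responses (Lemmas~\ref{cl:empty-set-val-dist} and~\ref{cl:lone-vertex-val-dist}), using concentration of $|\bA\cap\obL_i^{(w)}|$ (Lemma~\ref{lem:good-split}); and (b) showing there are only $O(n/\log^4 n)$ lone-vertex responses, which is Lemma~\ref{lem:few-lone} and is \emph{imported from the rejection-sampling lower bound analysis itself} (Lemma~\ref{lem:calE_F}). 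Without these, the $q$ approximation errors could swamp the $1/3$ distinguishing advantage. In short, the reduction is not black-box and the distributional matching is not free; your proposal treats as a formality the single hardest part of the argument.

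A secondary omission: you never pin down an indexing function with the two required properties, namely unate-compatible partitioning and $\widetilde O(\sqrt n)$ diameter per part. The Talagrand-style construction is what delivers both, and your ``address variables tied to $i$ and $j$'' description does not obviously produce either.
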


A similar separation in tolerant and non-tolerant testing occurs for the property of unateness as a consequence of Theorem~\ref{thm:unate} and Theorem~\ref{thm:unate-non}. Recently, in \cite{BCPRS17}, Baleshzar, Chakrabarty, Pallavoor, Raskhodnikova, and Seshadhri gave a non-adaptive $\widetilde{O}(n)$-query tester for (non-tolerant) unateness testing, and Chen, Waingarten and Xie~\cite{CWX17} gave an (adaptive) $\widetilde{O}(n^{3/4})$-query tester for (non-tolerant) unateness testing. We thus, conclude that by Theorem~\ref{thm:unate} and Theorem~\ref{thm:unate-non}, tolerant unateness testing is polynomially harder than (non-tolerant) unateness testing, in both adaptive and non-adaptive settings.

\subsection{Related Work}

The properties of $k$-juntas and unateness have received much attention in property testing research (\cite{FKRSS:04, CG04, B08, B09, BGSMW13, STW15, CSTWX17, BCELR16} study $k$-juntas, and \cite{GGLRS00, KS16, CS16b, BCPRS17, CWX17, CWX17b} study unateness). We briefly review the current state of affairs in (non-tolerant) $k$-junta testing and unateness testing,
and then discuss tolerant testing of Boolean functions and the rejection sampling model.

\paragraph{Testing $k$-juntas.} The problem of testing $k$-juntas, introduced by Fischer, Kindler, Ron, Safra, and Samorodnitsky~ \cite{FKRSS:04}, is now well understood up to poly-logarithmic factors. Chockler and Gutfreund \cite{CG04} show that any tester for $k$-juntas requires $\Omega(k)$ queries (for a constant $\eps_1$). Blais~\cite{B09} gave a junta tester that uses $O(k\log k+k/\eps_1)$ queries, matching the bound of \cite{CG04} up to a factor of $O(\log k)$ for constant $\eps_1$. When restricted to non-adaptive algorithms, \cite{FKRSS:04} gave a non-adaptive tester making $\widetilde{O}(k^2/\eps_1)$ queries, which was subsequently improved in~\cite{B08} to $\widetilde{O}(k^{3/2})/\eps_1$. In terms of lower bounds, 

Buhrman, Garcia-Soriano, Matsliah, and de Wolf~\cite{BGSMW13} gave a $\Omega(k \log k)$ lower bound for $\eps = \Omega(1)$, and Servedio, Tan, and Wright \cite{STW15} gave a lower bound which showed a separation between adaptive and non-adaptive algorithms for $\eps_1 = \frac{1}{\log k}$. These results were recently improved in ~\cite{CSTWX17} to $\widetilde{\Omega}(k^{3/2}/\eps_1)$, settling the non-adaptive query complexity of the problem up to poly-logarithmic factors.

\paragraph{Testing unateness.} The problem of testing unateness was introduced alongside the problem of testing monotonicity in Goldreich, Goldwasser, Lehman, Ron, and Samorodnitsky \cite{GGLRS00}, where they gave the first $O(n^{3/2}/\eps_1)$-query non-adaptive tester. Khot and Shinkar \cite{KS16} gave the first improvement by giving a $\widetilde{O}(n/\eps_1)$-query adaptive algorithm. A non-adaptive algorithm with $\widetilde{O}(n/\eps_1)$ queries was given in \cite{CS16, BCPRS17}. Recently, \cite{CWX17,BCPRS17b} show that $\widetilde{\Omega}(n)$ queries are necessary for non-adaptive one-sided testers. Subsequently, \cite{CWX17b} gave an adaptive algorithm testing unateness with query complexity $\widetilde{O}(n^{3/4}/\eps_1^2)$. The current best lower bound for general adaptive testers appears in \cite{CWX17}, where it was shown that any adaptive two-sided tester must use $\widetilde{\Omega}(n^{2/3})$ queries.

\paragraph{Tolerant testing.} Once we consider tolerant testing, i.e., the case $\epsilon_0>0$, the picture is not as clear. In the paper introducing tolerant testing, \cite{parnas2006tolerant} observed that standard algorithms whose queries are uniform (but not necessarily independent) are inherently tolerant to some extent.  Nevertheless, achieving  $(\eps_0,\eps_1)$-tolerant testers for constants $0<\eps_0< \eps_1$, can require applying different methods and techniques (see e.g, \cite{GR05,parnas2006tolerant,FN07,ACCL07,KS09,MR09,FR10,CGR13,BRY14b,BMR16, Tel16}).

By applying the observation from \cite{parnas2006tolerant} to the unateness tester in \cite{BCPRS17}, the tester accepts functions which are $O(\eps_1/n)$-close to unate with constant probability. We similarly obtain weak guarantees for tolerant testing of $k$-juntas. Diakonikolas, Lee, Matulef, Onak, Rubinfeld, Servedio, and Wan \cite{diakonikolas2007testing} observed that one of the (non-adaptive) junta testers from~\cite{FKRSS:04} accepts functions that are $\poly(\eps_1,1/k)$-close to $k$-juntas. Chakraborty, Fischer, Garcia-Sor\'{i}ano, and Matsliah~\cite{CFGM:12}  noted that the analysis of the junta tester of Blais~\cite{B09}  implicitly implies  an  $\exp(k/\eps_1)$-query complexity tolerant tester which accepts functions that are $\eps_1/c$-close to some $k$-junta (for some constant $c > 1$) and rejects functions that are $\eps_1$-far from every $k$-junta. Recently, Blais, Canonne, Eden, Levi and Ron \cite{BCELR16} showed that when required to distinguish between the cases that $f$ is $\eps_1/10$-close to a $k$-junta, or is $\eps_1$-far from a $2k$-junta, $\poly(k,1/\eps_1)$ queries suffice. 

For general properties of Boolean functions, tolerant testing could be much harder than standard testing. Fischer and Fortnow \cite{FF06} used PCPs in order to construct a property of Boolean functions $\calP$ which is $(0, \eps_1)$-testable with a constant number of queries (depending on $\eps_1$), but any $(1/4, \eps_1)$-tolerant test for $\calP$ requires $n^{c}$ queries for some $c > 0$. While \cite{FF06} presents a strong separation between tolerant and non-tolerant testing, the complexity of tolerant testing of many natural properties remains open. We currently neither have a $\poly(k, \frac{1}{\eps_1})$-query tester which $(\eps_0, \eps_1)$-tests $k$-juntas, nor a $\poly(n, \frac{1}{\eps_1})$-query tester that $(\eps_0, \eps_1)$-tests unateness or monotonicity when $\eps_0 = \Theta(\eps_1)$.

\paragraph{Testing graphs with rejection sampling queries.} Even though the problem of testing graphs with rejection sampling queries has not been previously studied, the model shares characteristics with previous studied frameworks. 
These include sample-based testing studied by Goldreich, Goldwasser, and Ron in \cite{GGR98, GR16}, where the oracle receives random samples from the input. One crucial difference between rejection sampling algorithms (which always query $[n]$) and sample-based testers is the fact that rejection sampling algorithms only receive \emph{positive} examples (in the form of edges), as opposed to random positions in the adjacency matrix (which may be a \emph{negative} example indicated the non-existence of an edge).

The rejection sampling model for graph testing also bears some resemblance to the conditional sampling framework for distribution testing introduced in Canonne, Ron, and Servedio, as well as Chakraborty, Fischer, Goldhirsh, and Matsliah  \cite{CRS15, CFGM16}, where the algorithm specifies a query set and receives a sample conditioned on it lying in the query set. 

\subsection{Techniques and High Level Overview}\label{sec:high-level}

We first give an overview of how the lower bound in the rejection sampling model (Lemma~\ref{lem:graphs}) implies lower bounds for tolerant testing of $k$-juntas and unateness, and then we give an overview of how Lemma~\ref{lem:graphs} is proved.

\paragraph{Reducing Boolean Function Testing to Rejection Sampling} This work should be considered alongside some recent works showing lower bounds for testing the properties of monotonicity, unateness, and juntas in the standard property testing model \cite{BB16, CWX17, CSTWX17}. The lower bounds in \cite{BB16, CWX17} and \cite{CSTWX17} may be reinterpreted as following the same general paradigm. We discuss this general view next, followed by an overview of this work. At a high level, one may view the lower bounds from \cite{BB16, CWX17, CSTWX17} as proceeding in three steps:
\begin{enumerate}
	\item First, design a randomized indexing function $\bGamma \colon \{0, 1\}^n \to [N]$ that partitions the Boolean cube $\{0, 1\}^n$ into roughly equal parts in a way compatible with the property (either monotonicity, unateness, or junta). We want to ensure that algorithms that make few queries cannot learn too much about $\bGamma$, and that queries falling in the same part are close in Hamming distance.
	\item Second, define two distributions over sub-functions $\bh_i \colon \{0, 1\}^n \to \{0, 1\}$ for each $i \in [N]$. The hard functions are defined by $\boldf(x) = \bh_{\bGamma(x)}(x)$, so that one distribution corresponds to functions with the property, and the other distribution corresponds to functions far from the property.
	\item Third, show that any testing algorithm for the property is actually solving some algorithmic task (determined by the distributions of $\bh_i$) which is hard when queries are close in Hamming distance.
\end{enumerate}
Belovs and Blais \cite{BB16} used a construction of Talagrand \cite{T96}, known as the Talagrand function, to implement a randomized partition in a monotone fashion. The Talagrand function is a randomized DNF of $2^{\sqrt{n}}$ monotone terms of size $\sqrt{n}$, and one may define $\bGamma \colon \{0, 1\}^n \to [2^{\sqrt{n}}]$ to output the index of the first term of a Talagrand function which satisfies input $x \in \{0,1\}^n$. One can show that any two queries $z, z' \in \{0, 1\}^n$ which are semi-balanced\footnote{We will say $z \in \{0, 1\}^n$ is \emph{semi-balanced} if $|z| \approx \frac{n}{2} \pm \sqrt{n}$.} with Hamming distance more than $\widetilde{\Omega}(n^{3/4})$ will fall in different parts with high probability. The sub-functions $\bh_i \colon \{0, 1\}^n \to \{0, 1\}$ are then given by random dictators or random anti-dictators, so the algorithmic task is simple: determine whether the distribution over functions $\bh_i$ is supported on dictators or anti-dictators when queries in the same part are at distance at most $\widetilde{O}(n^{3/4})$ from each other. An argument in the spirit of the one-sided error monotonicity lower bound from \cite{FLNRRS02} gives an $\Omega(n^{1/4})$ lower bound for monotonicity testing. \cite{CWX17} further refined the idea by designing improved randomized partitions $\bGamma \colon \{0, 1\}^n \to [N]$, which they called two-level Talagrand functions. The improved construction $\bGamma$ partitions  $\{0, 1\}^n$ in a monotone fashion, but has the property that queries $z, z' \in \{0,1\}^n$ which are semi-balanced with Hamming distance $\widetilde{\Omega}(n^{2/3})$ fall into different parts with high probability, thus bringing the lower bound to $\widetilde{\Omega}(n^{1/3})$ using the same algorithmic task as \cite{BB16}.

Higher lower bounds for unateness are possible because the unateness property allows for reductions to harder algorithmic tasks. Specifically, \cite{CWX17} consider the following algorithmic task: there are two classes of distributions supported on $[n] \times \{ +, - \}$, and the task is to distinguish two classes with random samples. One class of distributions consists of the uniform distribution $\mu$ over $[n] \times \{+, -\}$, the other class of distributions is uniform over the support, but each $\mu$ satisfies the property that each $j \in [n]$ has either $\mu(j, +) = 0$ or $\mu(j, -) = 0$. Each sub-function $\bh_i$ is specified by a random sample of $\mu$, where $\bh_i$ is a dictator in variable $j$ if $(j, +)$ was sampled, and an anti-dictator in variable $j$ if $(j, -)$ was sampled. The first key observation is that the distance of the functions $\boldf(x) = \bh_{\bGamma(x)}(x)$ from unateness,
depends on whether $\mu$ comes from the first or second case. The second key observation is that multiple random samples are required to distinguish the two classes of distributions.\footnote{For example, in order to distinguish whether a distribution $\mu$ belongs to the first or second class with one-sided error, an algorithm must observe two samples $(j, +)$ and $(j, -)$ from $\mu$, which would indicate that $\mu$ is uniform over the whole set $[n] \times \{+, -\}$. In fact, the adaptive algorithm for unateness testing in \cite{CWX17b} can be interpreted as one based on solving this algorithmic task with a ``rejection sampling''-style oracle.}

For the case of $k$-juntas, \cite{CSTWX17} used a simple indexing function $\bGamma \colon \{0,1\}^{n}\to [2^{n/2}]$ that partitions $\{0, 1\}^n$ according to projections on randomly chosen $\frac{n}{2}$ variables. The second and third step also follows the above strategy. In their case, they define the $\mathsf{SSSQ}$ and $\mathsf{SSEQ}$ (for Set-Size-Set-Queries and Set-Size-Element-Queries) problems as the hard algorithmic task, which give the lower bounds. 

Our lower bounds for tolerant testing follow the same paradigm. For the randomized indexing function, we use the construction from \cite{CSTWX17} for the junta lower bound and a Talagrand-based construction (similar to \cite{CWX17}, but somewhat simpler) for the unateness lower bounds. The hard algorithmic task we embed is distinguishing between the distributions $\calG_1$ and $\calG_2$ with access to a rejection sampling oracle.

At a high level, our reductions show that the class of functions which are close to $k$-juntas and the class of functions which are close to unate have much richer structure than $k$-juntas and unate functions. In particular, the distance of the functions drawn from our hard distributions to $k$-junta and unateness will depend on a global parameter of an underlying graph used to define the functions\footnote{The relevant graph parameter in $k$-juntas and unateness will be different. Luckily, both graph parameters will have gaps in their value depending on the distribution the graphs were drawn from (either $\calG_1$ or $\calG_2$).  This allows us to reuse the work of proving Lemma~\ref{lem:graphs} to obtain Theorem~\ref{thm:junta}, Theorem~\ref{thm:unate}, and Theorem~\ref{thm:unate-non}.}. Thus, tolerant testing algorithms for $k$-juntas and unateness must explore the relationships between different variables to gain some information about the underlying graph. This lies in stark contrast to the algorithms of \cite{B08}, \cite{CWX17b}, and \cite{BCPRS17} which test $k$-juntas (non-adaptively) and unateness, since these three algorithms treat the variables independently.

The distributions and the reductions themselves are quite involved, so we defer a high level overview of the reductions to those corresponding sections (Sections~\ref{sec:junta} and~\ref{sec:unate}).

\paragraph{Distinguishing $\calG_1$ and $\calG_2$ with Rejection Sampling Queries}
In order to prove Lemma~\ref{lem:graphs}, one needs to rule out any deterministic non-adaptive algorithm which distinguishes between $\calG_1$ and $\calG_2$ with rejection sampling queries of complexity $\widetilde{o}(n^2)$. In order to keep the discussion at a high level, we identify three possible ``strategies'' for determining whether an underlying graph is a complete bipartite graph, or a union of two disjoint cliques:
\begin{enumerate}
	\item One approach is for the algorithm to sample edges and consider the subgraph obtained from edges returned by the oracle. For instance, the algorithm may make all rejection sampling queries to be $[n]$. These queries are expensive in the rejection sampling model, but they guarantee that an edge from the graph will be observed. If the algorithm is lucky, and there exists a triangle in the subgraph observed, the graph must not be bipartite, so it must come from $\calG_2$.
	\item Another sensible approach is for the algorithm to forget about the structure of the graph, and simply view the distribution on the edges generated by the randomness in the rejection sampling oracle as a distribution testing problem. Suppose for simplicity that the algorithm makes rejection sampling queries $[n]$. Then, the corresponding distributions supported on edges from $\calG_1$ and $\calG_2$ will be $\Omega(1)$-far from each other, so a distribution testing algorithm can be used.
	\item A third, more subtle, approach is for the algorithm to use the fact that $\calG_1$ and $\calG_2$ correspond to a complete bipartite graph and the union of two cliques, and extract knowledge about the non-existence of edges when making queries which return either $\emptyset$ or a single vertex. More specifically, an algorithm may query a random subset $L \subset [n]$ of size $\frac{n}{2}$. The subset $L$ will be split among the two sides of the graph (in the case of $\calG_1$ and $\calG_2$), and when an edge sampled by the oracle is incident on only one vertex of $L$, the rejection sampling oracle will return this one vertex. At this point, the algorithm may extract some information about how $L$ is divided in the underlying graph, and eventually distinguish between $\calG_1$ and $\calG_2$.
\end{enumerate}
The three strategies mentioned above all fail to give $\widetilde{o}(n^2)$ rejection sampling algorithms. The first approach fails because with a budget of $\widetilde{o}(n^2)$, rejection sampling algorithms will observe subgraphs which consist of various trees of size at most $\log n$, thus we will not observe cycles. The second approach fails since the distributions are supported on $\Omega(n^2)$ edges, so distribution testing algorithms will require $\Omega(n)$ edges (which costs $\Omega(n^2)$) to distinguish between $\calG_1$ and $\calG_2$. Finally, the third approach fails since algorithms will only observe $o(n)$ responses from the oracle corresponding to lone vertices which will be split roughly evenly among the unknown parts of the graph, so these observations will not be enough to distinguish between $\calG_1$ and $\calG_2$.

Our lower bound rules out the three strategies sketched above when the complexity is $\widetilde{o}(n^2)$, and shows that if the above three strategies do not work (in any possible combination with each other as well), then no non-adaptive algorithm of complexity $\widetilde{o}(n^2)$ will work. The main technical challenge is to show that the above strategies are the \emph{only} possible strategies to distinguish $\calG_1$ and $\calG_2$. In Section~\ref{sec:LowerBound}, we give a more detailed, yet still high-level discussion of the proof of Lemma~\ref{lem:graphs}.

Finally, the analysis of Lemma~\ref{lem:graphs} is tight; there is a non-adaptive  rejection sampling algorithm which distinguishes $\calG_1$ and $\calG_2$ with complexity $\widetilde{O}(n^2)$. The algorithm (based on the first approach mentioned above) is simple: make $\widetilde{O}(n)$ queries $L= [n]$, and if we observe an odd-length cycle, we output ``$\calG_1$'', otherwise, output ``$\calG_2$''. 

\section{Preliminaries}
We use boldfaced letters such as $\bA, \bM$ to denote random variables. Given a string $x\in\{0, 1\}^n$ and $j\in[n]$, we write $x^{(j)}$ to denote the string obtained from $x$ by flipping the $j$-th coordinate. An edge along the $j$-th direction in $\{0,1\}^n$ is a pair $(x,y)$ of strings with $y = x^{(j)}$. In addition, for $\alpha\in\{0,1\}$ we use the notation $x^{(j\rightarrow \alpha)}$ to denote the string $x$ where the $j$th coordinate is set to $\alpha$.
Given $x\in\{0,1\}^n$ and $S\subseteq[n]$, we use $x|_S\in\{0,1\}^S$ to denote the projection of $x$ on $S$. For a distribution $\calD$ we write $\bd \sim \calD$ to denote an element $d$ drawn according to the distribution. We sometimes write $a \approx  b\pm c$ to denote $b - c \leq a \leq b+c$. 

Throughout this paper, we extensively use a generalization of Chernoff bounds for \textit{negatively correlated} random variables. 
\begin{definition} Let $\bX_1,\ldots,\bX_n \in\{0, 1\}$ be random variables. We say that $\bX_1,\ldots,\bX_n$ are \emph{negatively correlated} if for all $I\subset[n]$ the following hold:
	\begin{align*} 
	\Prx\left[\forall i\in I\;:\; \bX_i=0\right]&\le \prod_{i\in I}\Prx\left[\bX_i=0\right]\\
	\Prx\left[\forall i\in I\;:\; \bX_i=1\right]&\le \prod_{i\in I}\Prx\left[\bX_i=1\right]\;.
	\end{align*}
\end{definition}

\begin{theorem} [Theorem $1.16$ from \cite{doerr2011analyzing}]Let $\bX_1,\ldots, \bX_n$ be negatively correlated binary random variables. Let $a_1,\ldots,a_n\in[0,1]$ and $\bX=\sum_{i=1}^na_i\bX_i$. Then,  for $\delta\in[0,1]$,
	\begin{align*}
	\Prx\left[ \bX\ge(1+\delta)\Ex \left[\bX\right] \right]&\le \exp(-\delta^2\Ex [\bX]/2)\\
	\Prx\left[ \bX\le(1-\delta)\Ex \left[\bX\right] \right]&\le \exp(-\delta^2\Ex [\bX]/3)\;.
	\end{align*}
\end{theorem}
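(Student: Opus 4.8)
The plan is the standard exponential-moment (Chernoff) argument; the only place the negative-correlation hypothesis enters is in bounding the moment generating function of $\bX=\sum_i a_i\bX_i$ by a product of the individual moment generating functions, and everything else is the textbook computation.

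\textbf{Upper tail.} Write $\mu:=\E[\bX]$. For $t>0$, Markov's inequality applied to $e^{t\bX}$ gives $\Pr[\bX\ge(1+\delta)\mu]\le e^{-t(1+\delta)\mu}\,\E[e^{t\bX}]$. The key step is to show $\E[e^{t\bX}]=\E\big[\prod_i e^{ta_i\bX_i}\big]\le\prod_i\E[e^{ta_i\bX_i}]$. Since each $\bX_i\in\{0,1\}$ we may linearize: $e^{ta_i\bX_i}=1+c_i\bX_i$ with $c_i:=e^{ta_i}-1\ge 0$ (using $t\ge 0$, $a_i\ge 0$). Multiplying out and using $\prod_{i\in S}\bX_i=\ind\{\bX_i=1\ \forall i\in S\}$,
\[
\E\Big[\prod_i e^{ta_i\bX_i}\Big]=\sum_{S\subseteq[n]}\Big(\prod_{i\in S}c_i\Big)\Pr\big[\bX_i=1\ \forall i\in S\big]\le\sum_{S\subseteq[n]}\Big(\prod_{i\in S}c_i\Big)\prod_{i\in S}\Pr[\bX_i=1]=\prod_i\E[e^{ta_i\bX_i}],
\]
where the inequality is the ``all-ones'' half of the negative-correlation definition applied term by term (valid since every coefficient $\prod_{i\in S}c_i$ is nonnegative). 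Then I bound each factor using $a_i\in[0,1]$: convexity of $x\mapsto e^{tx}$ on $[0,1]$ gives $e^{ta_i}-1\le a_i(e^t-1)$, so with $p_i:=\Pr[\bX_i=1]$, $\E[e^{ta_i\bX_i}]=1+p_i(e^{ta_i}-1)\le 1+p_ia_i(e^t-1)\le\exp(p_ia_i(e^t-1))$, hence $\prod_i\E[e^{ta_i\bX_i}]\le\exp((e^t-1)\mu)$. Substituting and optimizing over $t>0$ via the usual choice $t=\ln(1+\delta)$ yields $\Pr[\bX\ge(1+\delta)\mu]\le\big(e^{\delta}/(1+\delta)^{1+\delta}\big)^{\mu}$, and the claimed bound follows from the standard numerical estimate of this rate function for $\delta\in[0,1]$.

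\textbf{Lower tail.} This is symmetric, with one twist. For $t>0$, $\Pr[\bX\le(1-\delta)\mu]\le e^{t(1-\delta)\mu}\,\E[e^{-t\bX}]$. A naive linearization $e^{-ta_i\bX_i}=1+(e^{-ta_i}-1)\bX_i$ has a \emph{negative} coefficient, so the term-by-term argument above would need the wrong direction of inequality; instead I substitute $\bY_i:=1-\bX_i$, writing $e^{-ta_i\bX_i}=e^{-ta_i}\big(1+(e^{ta_i}-1)\bY_i\big)$ with $e^{ta_i}-1\ge 0$. Now $\prod_{i\in S}\bY_i=\ind\{\bX_i=0\ \forall i\in S\}$, and the same multiplying-out argument, using the ``all-zeros'' half of the negative-correlation definition, gives $\E[e^{-t\bX}]\le\prod_i\E[e^{-ta_i\bX_i}]$. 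Bounding single factors by concavity of $x\mapsto 1-e^{-tx}$ on $[0,1]$ gives $\prod_i\E[e^{-ta_i\bX_i}]\le\exp\!\big(-(1-e^{-t})\mu\big)$; optimizing with $t=-\ln(1-\delta)$ (the case $\delta=1$ being a trivial limiting case, since then $\Pr[\bX\le 0]\le\prod_{i:a_i>0}(1-p_i)\le e^{-\mu}$) gives the stated bound after the standard estimate of the corresponding rate function on $[0,1]$.

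\textbf{Main obstacle.} The only real content is the moment-generating-function factorization, and in particular the bookkeeping that makes the term-by-term use of the negative-correlation inequalities sign-valid: on the upper tail all expansion coefficients $\prod_{i\in S}(e^{ta_i}-1)$ are automatically nonnegative, but on the lower tail a direct expansion produces coefficients of alternating sign, which is exactly why the change of variables $\bY_i=1-\bX_i$ (equivalently, invoking the ``all-zeros'' rather than the ``all-ones'' family of inequalities) is needed. Everything after the factorization — the per-coordinate estimates exploiting $a_i\in[0,1]$ and the optimization over $t$ — is the standard Chernoff--Hoeffding calculation, and I would simply cite the usual numerical bounds on $e^{\delta}/(1+\delta)^{1+\delta}$ and $e^{-\delta}/(1-\delta)^{1-\delta}$ over $\delta\in[0,1]$ rather than re-deriving them.
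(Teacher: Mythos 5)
The paper does not prove this statement at all: it is imported verbatim as Theorem 1.16 of \cite{doerr2011analyzing}, so there is no in-paper argument to compare against. Your proof is the standard Panconesi--Srinivasan-style Chernoff argument for negatively correlated variables, and it is sound: the linearization $e^{ta_i\bX_i}=1+(e^{ta_i}-1)\bX_i$ with nonnegative coefficients makes the term-by-term use of the ``all-ones'' inequalities legitimate for the upper tail, and your substitution $\bY_i=1-\bX_i$ is exactly the right fix for the sign problem in the lower tail, where the ``all-zeros'' inequalities are the ones that apply. The per-coordinate convexity/concavity estimates using $a_i\in[0,1]$ and the optimization over $t$ are also correct.

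One caveat, which is really about the statement rather than your argument. Your derivation yields the rate functions $\bigl(e^{\delta}/(1+\delta)^{1+\delta}\bigr)^{\mu}$ and $\bigl(e^{-\delta}/(1-\delta)^{1-\delta}\bigr)^{\mu}$, and the standard numerical estimates of these on $\delta\in[0,1]$ are $\exp(-\delta^2\mu/3)$ for the \emph{upper} tail and $\exp(-\delta^2\mu/2)$ for the \emph{lower} tail --- the opposite assignment of denominators from the one displayed above. Indeed, the upper-tail bound $\exp(-\delta^2\mu/2)$ cannot be extracted from the rate function, since $\delta-(1+\delta)\ln(1+\delta)\ge -\delta^2/2$ for all $\delta\ge 0$ (the difference has nonnegative derivative $\delta-\ln(1+\delta)$ and vanishes at $0$). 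So your final sentence ``the claimed bound follows from the standard numerical estimate'' is literally false for the upper tail as stated; what you prove is the standard version with the $2$ and $3$ swapped, which is almost certainly what is intended here (the paper only ever uses these bounds qualitatively, so nothing downstream is affected).
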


In addition, some of our proofs will use \emph{hyper-geometric} random variables. Consider a population of size $N$ that consists of $K$ objects of a special type. Suppose $n$ objects are picked without replacement. Let $\bX$ be a random variable that counts the number of special objects picked in the sample. Then, we say that $\bX$ is a hyper-geometric random variable, and we denote $\bX\sim \mathrm{HG}(N,K,n)$. These hyper-geometric random variables enjoy tight concentration inequities (which are similar to Chernoff type bounds).
\begin{theorem}[\cite{Hoef63}]\label{thm:HG-Chernoff}
	Let $\bX\sim\mathrm{HG}(N,K,n)$ and  $\mu=K/N$. Then for any $t>0$
	\begin{align*}
	\Prx\left[ \bX\le (\mu-t) n\right]&\le \exp(-2t^2 n)\\
	\Prx\left[ \bX\ge (\mu+t) n\right]&\le \exp(-2t^2 n)\;.
	\end{align*}
\end{theorem}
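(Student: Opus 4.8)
The plan is to control the moment generating function of $\bX$ by exploiting the negative correlation of the sample indicators, and then to run the standard Chernoff--Hoeffding optimization; the only delicate point will be obtaining the sharp constant $2$ in the exponent. First I would write $\bX=\sum_{i=1}^n \bX_i$, where $\bX_i\in\{0,1\}$ indicates whether the $i$-th object drawn (in a uniformly random ordered sample without replacement) is special. By exchangeability $\Prx[\bX_i=1]=K/N=\mu$ for every $i$, and the family $\bX_1,\dots,\bX_n$ is negatively correlated in the sense of the definition above, since for every $J\subseteq[n]$ we have $\Prx[\forall i\in J: \bX_i=1]=\frac{K}{N}\cdot\frac{K-1}{N-1}\cdots\frac{K-|J|+1}{N-|J|+1}\le \mu^{|J|}$, and symmetrically for the all-zero events.

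The next step, and the one carrying the (mild) technical content, is to show that these indicators behave like independent ones for exponential moments. For $s>0$ I would use the identity $e^{s\bX_i}=1+(e^s-1)\bX_i$ and expand the product:
\begin{align*}
\Ex\!\left[\prod_{i=1}^n e^{s\bX_i}\right] \;=\; \sum_{J\subseteq[n]}(e^s-1)^{|J|}\,\Ex\!\left[\prod_{i\in J}\bX_i\right] \;\le\; \sum_{J\subseteq[n]}(e^s-1)^{|J|}\mu^{|J|} \;=\; \bigl(1-\mu+\mu e^s\bigr)^n,
\end{align*}
where the inequality uses $e^s-1>0$ together with the negative-correlation bound $\Ex[\prod_{i\in J}\bX_i]=\Prx[\forall i\in J:\bX_i=1]\le\mu^{|J|}$. (The same expansion applied to $\prod_i\bigl(1+(e^{-s}-1)(1-\bX_i)\bigr)$ handles $s<0$ via the other negative-correlation inequality; alternatively, one may invoke Hoeffding's classical comparison of sampling with and without replacement, which gives $\Ex[e^{s\bX}]\le\Ex[e^{s\bS}]$ for $\bS\sim\Bin(n,\mu)$ directly.) Either way, $\Ex[e^{s\bX}]\le(1-\mu+\mu e^s)^n$, the moment generating function of $\Bin(n,\mu)$.

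Finally I would optimize. Since a $\Ber(\mu)$ variable is $[0,1]$-valued, Hoeffding's lemma gives $1-\mu+\mu e^s=e^{s\mu}\,\Ex[e^{s(\Ber(\mu)-\mu)}]\le e^{s\mu+s^2/8}$, hence $\Ex[e^{s\bX}]\le e^{sn\mu+ns^2/8}$. For $s>0$, Markov's inequality then yields
\begin{align*}
\Prx\!\left[\bX\ge(\mu+t)n\right] \;\le\; e^{-s(\mu+t)n}\,\Ex\!\left[e^{s\bX}\right] \;\le\; e^{-stn+ns^2/8},
\end{align*}
and the choice $s=4t$ gives exactly $\exp(-2t^2 n)$. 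For the lower tail I would apply the upper-tail bound just proved to the complemented population, in which every bit $c_i$ is replaced by $1-c_i$; this sends $\mu\mapsto 1-\mu$ and $\bX\mapsto n-\bX$, turning $\Prx[\bX\ge(\mu+t)n]\le e^{-2t^2n}$ into $\Prx[\bX\le(\mu-t)n]\le e^{-2t^2 n}$. I do not expect a serious obstacle here: the structural input is just the negative correlation of the without-replacement sample, and the rest is the routine exponential-moment computation, with the only subtlety being that the sharp constant $2$ requires the Hoeffding-lemma bound on the Bernoulli moment generating function rather than a cruder multiplicative Chernoff estimate.
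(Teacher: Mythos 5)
The paper does not prove this statement; it cites it directly from Hoeffding's 1963 paper (reference \cite{Hoef63}). Your proof is correct and is essentially Hoeffding's own argument: dominate the moment generating function of the without-replacement sum by that of the binomial (your negative-correlation expansion is a clean way to get the inequality $\Ex[e^{s\bX}] \le (1-\mu+\mu e^s)^n$; Hoeffding's Theorem 4 derives the same MGF comparison via convexity), then apply Hoeffding's lemma and optimize to extract the sharp constant $2$, and handle the lower tail by complementing the population. The one point worth stating explicitly if you were to write this up is the elementary inequality $\frac{K-j}{N-j}\le \frac{K}{N}$ for $0\le j<K\le N$, which is what makes $\Ex\bigl[\prod_{i\in J}\bX_i\bigr]\le\mu^{|J|}$ hold; otherwise every step is routine and correct.
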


\def\cost{\mathrm{cost}}

\section{The Rejection Sampling Model}
In this section, we define the rejection sampling model and the distributions over graphs we will use throughout this work. We define the rejection sampling model tailored to our specific application of proving Lemma~\ref{lem:graphs}.

\begin{definition}\label{def:rejec}
	Consider two distributions, $\calG_1$ and $\calG_2$ supported on graphs with vertex set $[n]$. The problem of distinguishing $\calG_1$ and $\calG_2$ with a rejection sampling oracle aims to distinguish between the following two cases with a specific kind of query:
	\begin{itemize}
		\item Cases: We have an unknown graph $\bG \sim \calG_1$ or $\bG \sim \calG_2$. 
		\item Rejection Sampling Oracle: Each query is a subset $L\subset [n]$; an oracle samples an edge $(\bj_1, \bj_2)$ from $\bG$ uniformly at random, and the oracle returns $\bv = \{ \bj_1, \bj_2 \} \cap L$. The complexity of a query $L$ is given by $|L|$. 
	\end{itemize} 
\end{definition}

We say a non-adaptive algorithm $\text{Alg}$ for this problem is a sequence of query sets $L_1, \dots, L_q \subset [n]$, as well as a function $\Alg \colon \left( [n] \cup \left([n] \times [n]\right) \cup \{ \emptyset \}\right)^q \to \{ \text{``$\calG_1$''}, \text{``$\calG_2$''} \}$. The algorithm sends each query to the oracle, and for each query $L_i$, the oracle responds $\bv_i \in [n] \cup \left([n] \times [n]\right) \cup \{ \emptyset \}$, which is either a single element of $[n]$, an edge in $\bG$, or $\emptyset$. The algorithm succeeds if:
\[ \Prx_{\substack{\bG \sim \calG_1, \\ \bv_1, \dots, \bv_q}} \left[\Alg(\bv_1, \dots, \bv_q) \text{ outputs ``$\calG_1$''} \right] - \Prx_{\substack{\bG \sim \calG_2, \\ \bv_1, \dots, \bv_q}} \left[ \Alg(\bv_1, \dots, \bv_q) \text{ outputs ``$\calG_1$''}\right] \geq \frac{1}{3}. \]
The complexity of $\text{Alg}$ is measured by the sum of the complexity of the queries, so we let $\cost(\text{Alg}) = \sum_{i=1}^q |L_i|$.

While our interest in this work is primarily on lower bounds for the rejection sampling model, an interesting direction is to explore upper bounds of various natural graph properties with rejection sampling queries. Our specific applications only require ruling out non-adaptive algorithms, but one may define adaptive algorithms in the rejection sampling model and study the power of adaptivity in this setting as well.

\ignore{Next, we define the \emph{adaptive} variant of the above. Let $\text{Alg}'$ be an adaptive algorithm making $q$ queries to the rejection sampling oracle. Then we view $\text{Alg}'$ as a (possibly probabilistic) mapping from the query-answer history $\langle (\bL_1,\bv_1),\ldots,(\bL_j,\bv_j) \rangle$ to $\bL_{j+1}$, for every $j<q$, and to $\{ \text{accept}, \text{reject} \}$ when $j=q$. Similarly to the above, we say that an adaptive algorithm $\text{Alg}'$ succeeds if:
	
	\[ \Prx_{\substack{\bG \sim \calG_1}} \left[\text{Alg}' \text{ output ``accept''} \right] - \Prx_{\substack{\bG \sim \calG_2}} \left[ \text{Alg}' \text{ outputs ``accept''}\right] \geq \frac{1}{3}. \]
	The cost of $\text{Alg}'$ is measured in the same way as the non-adaptive case.}

\subsection{The Distributions $\calG_1$ and $\calG_2$}

Let $\calG_1$ and $\calG_2$ be two distributions supported on graphs with vertex set $[n]$ defined as follows. Let $\bA \subset [n]$ be a uniform random subset of size $\frac{n}{2}$. 
\begin{align*}
\calG_1 &= \left\{ K_{\bA} \cup K_{\obA} : \bA \subset [n] \text{ random subset size $\frac{n}{2}$} \right\} \\
\calG_2&= \left\{ K_{\bA, \obA} : \bA \subset [n] \text{ random subset size $\frac{n}{2}$} \right\}, 
\end{align*}
where for a subset $A$, $K_{A}$ is the complete graph on vertices in $A$ and $K_{A, \oA}$ is the complete bipartite graph whose sides are $A$ and $\oA$.

\section{Tolerant Junta Testing}\label{sec:junta}
In this section, we will prove that distinguishing the two distributions $\calG_1$ and $\calG_2$ using a rejection sampling oracle reduces to distinguishing two distributions $\Dyes$ and $\Dno$ over Boolean functions, where $\Dyes$ is supported on functions that are close to $k$-juntas and $\Dno$ is supported on functions that are far from any $k$-junta with high probability. 

\subsection{High Level Overview}

We start by providing some intuition of how our constructions and reduction implement the plan set forth in Subsection~\ref{sec:high-level} for the property of being a $k$-junta. We define two distributions supported on Boolean functions, $\Dyes$ and $\Dno$, so that functions in $\Dyes$ are $\eps_0$-close to being $k$-juntas and functions in $\Dno$ are $\eps_1$-far from being $k$-juntas (where $\eps_0$ and $\eps_1$ are appropriately defined constants and $k = \frac{3n}{4}$). 

As mentioned in the introduction, our distributions are based on the indexing function used in \cite{CSTWX17}. We draw a uniform random subset $\bM \subset [n]$ of size $n/2$ and our function $\bGamma = \Gamma_{\bM} \colon \{0, 1\}^n \to [2^{n/2}]$ projects the points onto the variables in $\bM$. Thus, it remains to define the sequence of functions $\bH = (\bh_i \colon \{0, 1\}^n \to \{0,1\} : i \in [2^{n/2}])$.

We will sample a graph $\bG \sim \calG_1$ (in the case of $\Dyes$), and a graph $\bG \sim \calG_2$ (in the case of $\Dno$) supported on vertices in $\obM$. Each function $\bh_i \colon \{0,1\}^n \to \{0, 1\}$ is given by first sampling an edge $(\bj_1, \bj_2) \sim \bG$ and letting $\bh_i$ be a parity (or a negated parity) of the variables $x_{\bj_1}$ and $x_{\bj_2}$. Thus, a function $\boldf$ from $\Dyes$ or $\Dno$ will have all variables being relevant, however, we will see that functions in $\Dyes$ have a group of $\frac{n}{4}$ variables which can be eliminated efficiently\footnote{We say that a variable is eliminated if we change the function to remove the dependence of the variable.}.

We think of the sub-functions $\bh_i$ defined with respect to edges from $\bG$ as implementing a sort of \emph{gadget}: the gadget defined with respect to an edge $(j_1, j_2)$ will have the property that if $\boldf$ eliminates the variable $j_1$, it will be ``encouraged'' to eliminate variable $j_2$ as well. In fact, each time an edge $(\bj_1, \bj_2) \sim \bG$ is used to define a sub-function $\bh_i$, any $k$-junta $g \colon \{0, 1\}^n \to \{0, 1\}$ where variable $\bj_1$ or $\bj_2$ is irrelevant will have to change half of the corresponding part indexed by $\bGamma$. Intuitively, a function $\boldf \sim \Dyes$ or $\Dno$ (which originally depends on all $n$ variables) wants to eliminate its dependence of $n-k$ variables in order to become a $k$-junta. When $\boldf$ picks a variable $j\in \obM$ to eliminate (since variables in $\bM$ are too expensive), it must change points in parts where the edge sampled is incident on $j$. The key observation is that when $\boldf$ needs to eliminate multiple variables, if $\boldf$ picks the variables $j_1$ and $j_2$ to eliminate, whenever a part samples the edge $(j_1, j_2)$, the function changes the points in one part and eliminates two variables. Thus, $\boldf$ eliminates two variables by changing the same number of points when there are edges between $j_1$ and $j_2$. 

At a high level, the gadgets encourage the function $\boldf$ to remove the dependence of variables within a group of edges, i.e., the closest $k$-junta will correspond to a function $g$ which eliminates groups of variables with edges within each other and few outgoing edges. More specifically, if we wants to eliminate $\frac{n}{4}$ variables from $\boldf$, we must find a bisection of the graph $\bG$ whose cut value is small; in the case of $\calG_1$, one of the cliques will have cut value 0, whereas any bisection of a graph from $\calG_2$ will have a high cut value, which makes functions in $\Dyes$ closer to $\frac{3n}{4}$-juntas than functions in $\Dno$.

The reduction from rejection sampling is straight-forward. We consider all queries which are indexed to the same part, and if two queries indexed to the same part differ on a variable $j$, then we the algorithm ``explores'' direction $j$. Each part $i \in [2^{n/2}]$ where some query falls in has a corresponding rejection sampling query $L_i$, which queries the variables explored by the Boolean function testing algorithm.

\subsection{The Distributions $\Dyes$ and $\Dno$}\label{sec:distributions}
The goal of this subsection is to define the two distributions $\Dyes$ and $\Dno$, supported over Boolean functions with $n$ variables. Functions $f \in \Dyes$ will be \emph{close} to being a $k$-junta (for $k = \frac{3n}{4}$) with high probability, and functions $f \sim \Dno$ will be \emph{far} from any $k$-junta with high probability. 

\paragraph{Distribution $\Dyes$}  A function $f$ from $\Dyes$ is generated from a tuple of three random variables, $(\bM, \bA, \bH)$, and we set $f = f_{\bM, \bA, \bH}$. The tuple is drawn according to the following randomized procedure:
\begin{enumerate}
	\item Sample a uniformly random subset $\bM\subset [n]$ of size $m\eqdef\frac{n}{2}$. Let $N = 2^m$ and $\Gamma_\bM : \{0,1\}^{n}\to \left[N \right]$ be the function that maps $x\in\{0,1\}^{n}$ to a number encoded by $x|_{\bM} \in [N]$.
	\item Sample $\bA \subset \obM$ of size $\frac{n}{4}$ uniformly at random, and consider the graph $\bG$ defined on vertices $[\obM]$ with $\bG = K_{\bA} \cup K_{\obA}$, i.e., $\bG$ is a uniformly random graph drawn according to $\calG_1$.
	\item Define a sequence of $N$ functions $\bH=\{\bh_i \colon \{0, 1\}^{n} \to \{0, 1\} : i\in\left[N\right]\}$ drawn from a distribution $\calE(\bG)$. For each $i\in\{1,\ldots,N/2\}$, we let $\bh_i(x)=\bigoplus_{\ell\in\bM} x_\ell$.
	
	For each $i\in \{N/2+1,\ldots, N\}$, we will generate $\bh_i$ independently by sampling an edge $(\bj_1, \bj_2) \sim \bG$ uniformly at random, as well as a uniform random bit $\boldr \sim \{0, 1\}$. We let\label{Dyes:pickpair}
	\[ \bh_i(x) = x_{\bj_1} \oplus x_{\bj_2} \oplus \boldr. \]
	\item Using $\bM,\bA$ and $\bH$, define $f_{\bM, \bA, \bH}=\bh_{\Gamma_\bM(x)}(x)$ for each $x\in\{0,1\}^{n}$.
\end{enumerate}

\paragraph{Distribution $\Dno$} A function $f$ drawn from $\Dno$ is also generated by first drawing the tuple $(\bM, \bA, \bH)$ and setting $f = f_{\bM, \bA, \bH}$. Both $\bM$ and $\bA$ are drawn using the same procedure; the only difference is that the graph $\bG = K_{\bA, \obA}$, i.e., $\bG$ is a uniformly random graph drawn according to $\calG_2$. Then $\bH \sim \calE(\bG)$ is sampled from the modified graph $\bG$.

\ignore{For the remainder of the section, we will commonly refer to $\calEy(M, A)$ and $\calEn(M, A)$ as simply $\calEy$ and $\calEn$, respectively, since the values of $M$ and $A$ will be clear from the context.} We let 
\[ k \eqdef \frac{3n}{4} \qquad \eps_0 \eqdef\frac{1}{8} \qquad \eps_1 \eqdef \frac{3}{16}. \]

\newcommand{\oM}{\overline{M}}
\newcommand{\vol}{\mathrm{vol}}

Consider a fixed subset $M \subset [n]$ which satisfies $|M| = \frac{n}{2}$, and a fixed subset $A \subset \oM$ which satisfies $|A| = \frac{n}{4}$. Let $G$ be a graph defined over vertices in $\oM$, and for any subsets $S_1, S_2 \subset \oM$, let
\[ E_G(S_1, S_2) = \left| \left\{ (j_1, j_2) \in G : j_1 \in S_1, j_2 \in S_2 \right\} \right|, \]
be the number of edges between sets $S_1$ and $S_2$. Additionally, we let
\begin{align}
\chi(G) &= \min \left\{ \dfrac{E_G(S, S) + E_G(S, \oS)}{E_G(\oM, \oM)} : S \subset \oM, |S| \geq \frac{n}{4} \right\}  \label{eq:S-min}
\end{align}
be the minimum fraction of edges adjacent to a set $S$ of size at least $\frac{n}{4}$. 
The following lemma relates the distance of a function $\boldf = f_{M, A, \bH}$ where $\bH\sim \calE(G)$ to being a $k$-junta to $\chi(G)$. We then apply this lemma to the graph in $\Dyes$ and $\Dno$ to show that functions in $\Dyes$ are $\eps_0$-close to being $k$-juntas, and functions in $\Dno$ are $\eps_1$-far from being $k$-juntas.

\begin{lemma}
	Let $G$ be any graph defined over vertices in $A$. If $\boldf = f_{M, A, \bH}$, where $\bH \sim \calE(G)$, then
	\[ \frac{1}{4} \cdot \chi(G) - o(1) \leq \dist(\boldf, \text{$k$-}\Junta) \leq \frac{1}{4} \cdot \chi(G) + o(1) \]
	with probability at least $1 - o(1)$.
\end{lemma}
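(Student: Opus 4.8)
The plan is to establish the two inequalities by analyzing, for a fixed $M$, $A$, and $G$, the optimal way to convert $\boldf = f_{M,A,\bH}$ into a $k$-junta, and to show this optimum is governed by $\chi(G)$. Throughout we condition on $\bH = H$ and work with a fixed sequence of sub-functions; the ``with probability $1 - o(1)$'' qualifier will come from standard concentration (Chernoff / hyper-geometric bounds, as set up in the Preliminaries) applied to the random edges sampled in step~\ref{Dyes:pickpair}, so that the empirical edge-distribution of $G$ induced by $\bH$ matches the true edge counts of $G$ up to $o(1)$ relative error. First I would set up the bookkeeping: since $|M| = \tfrac n2$ and every sub-function $\bh_i$ for $i \le N/2$ is the parity $\bigoplus_{\ell \in M} x_\ell$, which genuinely depends on all $\tfrac n2$ variables of $M$, any $k$-junta $g$ with $k = \tfrac{3n}{4}$ must keep all of $M$ relevant (eliminating even one variable of $M$ costs $\tfrac12$ of the measure on the first $N/2$ parts, which is $\tfrac14 > \eps_1$); hence the only freedom is to pick a set $T \subseteq \obM$ of $\tfrac n4$ variables to \emph{keep} among $\obM$, equivalently a set $S = \obM \setminus T$ of $\ge \tfrac n4$ variables to \emph{eliminate}, and then within each part $i > N/2$, replace $\bh_i$ by the best function not depending on $S$.

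Next I would compute the per-part cost of eliminating a set $S$. A part $i > N/2$ carries the sub-function $x_{j_1} \oplus x_{j_2} \oplus r$ for its sampled edge $(j_1,j_2)$. If neither $j_1$ nor $j_2$ lies in $S$, the part needs no change. If exactly one endpoint lies in $S$, the best $S$-free function on that part disagrees with $\bh_i$ on exactly half of the part (a parity of one still-present variable XOR a noise bit — any fixed replacement is wrong on half the part, and the half-part constant is optimal). If both endpoints lie in $S$, again exactly half the part must change. So the fraction of the whole cube that must be modified to eliminate $S$ is $\tfrac12 \cdot \tfrac{1}{N/2}\cdot(\text{number of parts }i>N/2\text{ whose edge touches }S)$. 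Since the parts $i > N/2$ are a $\tfrac12$-fraction of all parts and each samples an edge of $G$ independently and uniformly, concentration gives that this fraction is, up to $o(1)$,
\[
\frac12 \cdot \frac12 \cdot \frac{E_G(S,S) + E_G(S,\oS)}{E_G(\obM,\obM)} = \frac14 \cdot \frac{E_G(S,S) + E_G(S,\oS)}{E_G(\obM,\obM)}.
\]
Minimizing over all $S \subseteq \obM$ with $|S| \ge \tfrac n4$ — which is exactly the set of eliminable sets, since we must keep $\tfrac n4$ variables of $\obM$ to land at a $\tfrac{3n}{4}$-junta — yields $\tfrac14 \chi(G) \pm o(1)$. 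The upper bound $\dist(\boldf, k\text{-}\Junta) \le \tfrac14\chi(G) + o(1)$ follows by exhibiting the $S$ achieving $\chi(G)$ and the corresponding explicit junta; the lower bound follows because \emph{any} $k$-junta $g$ can be measured against $\boldf$ part-by-part, every part with $i > N/2$ whose edge is touched by the eliminated set contributes $\ge \tfrac12$ of its mass to the disagreement (here I need that a $\tfrac{3n}{4}$-junta restricted to a part indexed by $\Gamma_M$ is a function on $\obM$-coordinates only, and a parity-plus-noise on two live variables is $\tfrac12$-far from every function ignoring either of them), and the eliminated set has size $\ge \tfrac n4$, so the accounting above is a genuine lower bound on $\dist(\boldf,g)$.

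The main obstacle I anticipate is the lower bound direction done \emph{carefully}: one must show that no $k$-junta can do better than ``eliminate a good set $S$ and pay half on each touched part.'' The subtlety is that a junta is free to keep a weird set of $\tfrac{3n}{4}$ variables that is not of the form $M \cup T$, and to behave arbitrarily (not just as a half-part constant) on each part; I would argue that (i) keeping fewer than all of $M$ is too costly by the $N/2$ parity parts, pinning the kept set to $M \cup T$ with $|T| = \tfrac n4$, and (ii) on each remaining part, any function of the kept variables disagrees with $x_{j_1}\oplus x_{j_2}\oplus r$ on at least half the part whenever $\{j_1,j_2\}\not\subseteq T$, because fixing the (at most one) present variable among $\{j_1,j_2\}$ leaves a non-constant parity-times-$\{-1,1\}$ on a free variable. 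The second potential nuisance is purely quantitative — verifying that the $o(1)$ error terms (from concentration of the $N/2$ independent edge samples, and from the difference between ``touched parts'' and $E_G(S,S)+E_G(S,\oS)$) are uniformly $o(1)$ over the at-most-$2^{n/2}$ candidate sets $S$, which needs a union bound and the fact that $E_G(\obM,\obM) = \Theta(n^2)$ so each $S$'s empirical count concentrates with failure probability $\exp(-\Omega(n))$, comfortably beating $2^{n/2}$.
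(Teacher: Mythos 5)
Your proposal is correct and follows essentially the same argument as the paper: pin the junta's kept set to $M \cup T$ via the parity on the first $N/2$ parts, observe that each part $i > N/2$ whose sampled edge touches the eliminated set $S$ costs exactly (for the upper bound) or at least (for the lower bound) half its measure, recognize the resulting fraction as $\tfrac14\chi(G)$, and control the deviations via Chernoff plus a union bound over the $\le 2^{n/2}$ candidate sets $S$. One small bookkeeping slip: the display "$\tfrac12 \cdot \tfrac{1}{N/2}\cdot(\text{number of touched parts})$" should read $\tfrac12\cdot\tfrac1N\cdot(\text{count})$, since each part has measure $1/N$, but your subsequent normalization to $\tfrac14\chi(G)$ is correct, so nothing downstream is affected.
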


\begin{proof}
	We first show that $\dist(\boldf, \text{$k$-}\Junta) \leq \frac{1}{4} \cdot \chi(G) + o(1)$. Let $S \subset \oM$ with $|S| \geq \frac{n}{4}$ be the subset achieving the minimum in (\ref{eq:S-min}), and consider the indicator random variables $\bX_i$ for $i \in \{ N/2+1, \dots, N\}$ defined as:
	\[ \bX_i = \left\{ \begin{array}{cc} 1 & \bh_i(x) = x_{j_1} \oplus x_{j_2} \oplus r \text{ with }j_1 \in S \text{ or } j_2 \in S \\
	0 & \text{otherwise}	\end{array} \right. ,\]
	and note that the variables $\bX_i$ are independent and equal $1$ with probability $\chi(G)$. Consider the function $\bg \colon \{0, 1\}^{n}\to \{0, 1\}$ is defined as:
	\[ \bg(x) = \left\{ \begin{array}{cc} \bh_{\Gamma_M(x)}(x) & \bX_{\Gamma_M(x)} = 0 \\
	0 		 & \text{otherwise} \end{array}\right. .\]
	Note that the function $\bg$ is a $k$-junta, since $\bg$ only depends on variables in $[n] \setminus S$, and $|S| \geq \frac{n}{4}$. In addition, we have that:
	\[ \dist(\boldf, \text{$k$-}\Junta) \leq \dist(\boldf, \bg) = \frac{1}{2^{n}} \sum_{i =N/2+1}^N \frac{2^{n-m}}{2} \cdot \bX_i = \frac{1}{2 \cdot 2^{m}} \sum_{i=N/2+1}^N \bX_i, \]
	and by a Chernoff bound, we obtain the desired upper bound. 
	
	For the lower bound, let $T \subset [n]$ of size $\frac{n}{4}$. We divide the proof into two cases: 1) $M \cap T \neq \emptyset$, and 2) $M \cap T = \emptyset$.
	
	We handle the first case first, and let $j \in M \cap T$. 
	\begin{itemize}
		\item Suppose $j$ is the highest order bit of $M$, so that $\Gamma_M(x^{(j\to0)}) \in \{1, \dots, N/2\}$ and $\Gamma_{M}(x^{(j\to1)}) \in \{N/2+1, \dots, N \}$. For $y \in \{0, 1\}^{M \setminus \{ j\}}$ and $\alpha \in \{0, 1\}$, let $X_{y, \alpha} = \{ x \in \{0, 1\}^{n} : x_{|M \setminus \{j\}} = y, x_j = \alpha\}$, $X_y = X_{y, 0} \cup X_{y, 1}$. For every $x \in X_{y}$,
		\[ \boldf(x) = \left\{ \begin{array}{cc} \bigoplus_{i \in M} x_i & x_j = 0 \\
		x_{\bj_1} \oplus x_{\bj_2} \oplus \boldr & x_j = 1 \end{array} \right. ,\] 
		for some $\bj_1, \bj_2 \in \oM$ and $\boldr \in \{0, 1\}$. Thus, for at least half of all points in $x \in X_{y, 0}$, $\boldf(x) \neq \boldf(x^{(j)})$. Therefore, for any function $g \colon \{0, 1\}^{n} \to \{0, 1\}$ which does not depend on $j$, for each $x \in X_{y, 0}$ where $\boldf(x) \neq \boldf(x^{(j)})$, either $\boldf(x) \neq g(x)$, or $\boldf(x^{(j)}) \neq g(x^{(j)})$, thus, 
		\[ \dist(\boldf, g) \geq \frac{1}{2^{n}} \sum_{y \in \{0, 1\}^{M \setminus \{j \}}} \frac{1}{2} \cdot |X_{y, 0}| \geq \frac{1}{4}. \]
		\item Suppose $j$ is not the highest order bit of $M$. Then, if $\Gamma_M(x) \in \{1, \dots, N/2\}$, then $\Gamma_{M}(x^{(j)}) \in \{1, \dots, N/2\}$. We note that for each $y \in \{0, 1\}^{M \setminus \{j\}}$ and $x \in X_{y, 0}$ with $\Gamma_{M}(x) \in \{1, \dots, 2^{m-1}\}$, $\boldf(x) \neq \boldf(x^{(i)})$. Thus again, for any $g \colon \{0, 1\}^{n }\to \{0, 1\}$ which does not depend on $j$,  $\dist(\boldf, g) \geq \frac{1}{4}$, since half of all points $x \in \{0, 1\}^{n}$ satisfy $\Gamma_{M}(x) \in \{1, \dots, N/2\}$.
	\end{itemize}
	
	Therefore, we may assume that $T \subset \oM$. Again, consider the indicator random variables $\bX_i$ for $i \in \{N/2+1, \dots, N\}$ given by
	\[ \bX_i = \left\{ \begin{array}{cc} 1 & \bh_i(x) = x_{j_1} \oplus x_{j_2} \oplus r \text{ with $j_1 \in T$ or $j_2 \in T$} \\
	0 & \text{otherwise} \end{array} \right. ,\]
	and by the definition of $\chi(G)$, we have that $\bX_i = 1$ with probability at least $\chi(G)$. Suppose $x \in \{0, 1\}^{n}$ with $\Gamma_M(x) = i$ and $\bX_i = 1$ with $\bh_i(x) = x_{j_1} \oplus x_{j_2} \oplus r$ with $j_1 \in T$, then $\boldf(x) \neq \boldf(x^{(j_1)})$, which means that any function $g \colon \{0, 1\}^{n} \to \{0,1\}$ which does not depend on variables in $T$, either $g(x) \neq \boldf(x)$ or $g(x^{(j_1)}) \neq \boldf(x^{(j_1)})$, thus, for all such functions $g$,
	\[ \dist(\boldf, g) \geq \frac{1}{4 \cdot 2^{m-1}} \sum_{i =N/2 + 1}^{N} \bX_i \geq \frac{1}{4} \cdot \chi(G) - \frac{1}{n}\]
	with probability $1 - \exp\left(-\Omega(\frac{N}{n^2}) \right)$ by a Chernoff bound. Thus, we union bound over at most $2^{n/2}$ possible subsets $T \subset \oM$ with $|T| \geq \frac{n}{4}$ to conclude that $\dist(\boldf, \text{$k$-}\Junta) \geq \frac{1}{4} \cdot \chi(G) - \frac{1}{n}$ with probability $1 - o(1)$. 
\end{proof}

\ignore{
	\begin{lemma} 
		Every function $f$ drawn from $\Dyes$ is $\eps_0$-close to $k$-junta. 
	\end{lemma}
	
	\begin{proof}
		Consider any tuple $(M, A, H)$ which generates a function in $\Dyes$, and note $|M \cup A| = k$. For each $i \in \{2^{m-1}+1,\ldots,2^m\}$, let $X_i$ be the indicator variable for $h_i$ picking variables from $\oA$; and assume without loss of generality that $\sum_{i\in\{2^{m-1}+1,\ldots,2^m\}} X_i \leq 2^{m-2}$ (otherwise change the role of $A$ and $\oA$). We show there is a function $g \colon \{0, 1\}^n \to \{0, 1\}$ with $\dist(f_{M, A, H}, g) \leq \eps_0$ where each $j \in \oA$ is an irrelevant variable. Let $g \colon \{0, 1\}^n \to \{0,1\}$ be the Boolean function given by:
		\[ g(x) = \left\{ \begin{array}{cc} h_{\Gamma_M(x)}(x) & X_{\Gamma_M(x)} = 0 \\
		0 & \text{o.w} \end{array} \right. .\]
		The function $g$ is a $k$-junta since it only depends on variables in $M$ and $A$. Additionally, $f_{M, A, H}(x) \neq g(x)$ only when $\Gamma_M(x)$ indexes to some function $h_i$ with $X_i =1$. Thus, we have that
		\[ \dist(f_{M, A, H}, g) = \frac{1}{2^n} \sum_{i\in \{2^{m-1}+1,\ldots,2^m\}} X_i \cdot \frac{2^{n-m}}{2} \leq \frac{1}{8}. \]
	\end{proof}
	
	\begin{lemma}
		A function $f$ drawn from $\Dno$ is $\epsilon_1$-far from $k$-junta with probability $1-o(1)$.
	\end{lemma}
	\begin{proof}
		Consider any fixed set $M$ and $A \subset \oM$. We will show that with probability $1 - o(1)$ over the draw of $\bH \sim \calEn$, we have that $f_{M, A, \bH}$ is $\eps_1$-far from any $k$-junta. For every $T\subset [n]$ of size $n/4$, we show that any function $g \colon \{0, 1\}^n \to \{0, 1\}$ that does not depend on the coordinates in $ T$ must have $\dist(f_{M, A, \bH}, g) \geq \eps_1$ with high probability. We divide the rest of the proof into two cases: (1) when $T\cap M\neq \emptyset$, and (2) when $T\cap M=\emptyset$.
		
		Suppose $j \in T \cap M$. Note that $\Gamma_M(x)\neq\Gamma_M(x^{(j)})$ for every $x\in\{0,1\}^n$. For any fixed $y \in \{0, 1\}^{m}$ with $y_j = 0$, if we let $i$ be the index of $\Gamma_M(x)$ when $x|_M = y$ and $i'$ be the index of $\Gamma_M(x)$ when $x|_M = y^{(j)}$. Let $j'$ be the highest-order bit of $M$, so $\Gamma_{M}(x) \in \{1, \dots, 2^{m-1}\}$ when $x_{j'} = 0$ and $\Gamma_M(x) \in \{2^{m-1}+1, \dots, 2^m\}$ when $x_{j'} = 1$. We have the following cases:
		\begin{enumerate}
			\item When $j = j'$, then the distance between $h_i$ and $h_{i'}$ restricted on the sub-cube $\{0, 1\}^{\oM}$ is exactly $1/2$  (this is since $h_i$ is a constant function and $h_{i'}$ is a random parity on two variables). Therefore, for any $g \colon \{0, 1\}^n \to \{0, 1\}$ that does not depend on $j$, we have that \[\dist(f_{M, A, H}, g)\geq\frac{1}{2^n}\sum_{y:\; y_j=0}\frac{2^{n-m}}{2}=\frac{1}{4}\geq \eps_1.\]
			\item When $j \neq j'$, then consider the settings of $y \in \{0, 1\}^m$ where $y_{j'} = 0$. In this case, $i, i' \in \{1, \dots, 2^{m-1}]$ and $h_i$ and $h_{i'}$ restricted on the sub-cube $\{0, 1\}^{\oM}$ are constant functions which disagree. Thus, any function $g \colon \{0, 1\}^n \to \{0, 1\}$ that does not depend on $j$ must have:
			\[ \dist(f_{M, A, H}, g) \geq \frac{1}{2^n} \sum_{\substack{y: y_j = 0,\\ y_{j'} = 0}} 2^{n-m} = \frac{1}{4} \geq \eps_1.\]
		\end{enumerate}
		Thus, we may assume that $T\cap M = \emptyset$. Let $T_A=T\cap A$ and $T_{\oA}=T\cap \oA$. For every $i\in \{2^{m-1}+1, \dots, 2^m\}$, let $\bX_i$ be an indicator random variable where:
		\[ \bX_i = \left\{ \begin{array}{cc} 1 & \bh_i \text{ picked variables } \bj \in T_A \text{ or }\bell \in T_{\oA} \\				
		0 & \text{o.w} \end{array} \right. \]
		Note that for every $i\in \{2^{m-1}+1, \dots, 2^m\}$, 
		\[ \Ex_{\bH\sim\calEn }[\bX_i]=\Prx_{\bH\sim\calEn}[\bX_i=1]=1-\frac{(n/4-|T_A|)(n/4-|T_{\oA}|)}{(n/4)^2}\ge \frac{3}{4}.\]
		Whenever $\bX_i=1$, half of the points $x$ in the subcube corresponding to $h_i$ disagree with any Boolean function $g\colon \{0, 1\}^n \to \{0, 1\}$ that does not depend on the coordinates in $ T$. Thus,
		\[ \dist(f_{M, A, \bH}, g) \geq \frac{1}{2^n} \sum_{i \in \{2^{m-1}+1, \dots, 2^m\}} \bX_i \cdot \frac{2^{n-m}}{2} = 2^{-m - 1} \sum_{i\in \{2^{m-1}+1, \dots, 2^m\}} \bX_i. \]
		Since each variable $\bX_i$ is independently set to $1$ with probability at least $3/4$ and $0$ otherwise. By a Chernoff bound, we get that with probability $1-\exp(-2^{\Omega(n)})$ over the choice of $\bX_1,\ldots,\bX_{2^m}$:
		\[ \dist(f_{M, A, \bH}, g) \ge\frac{3}{16}-\frac{1}{64}= \epsilon_1\;. \]
		By taking a union bound over all $\binom{n}{n/4}=2^{O\left(n\log n\right)}$ possible choices of $T$, we get that with probability $1-o(1)$ a function $f\sim\Dno$ is $\epsilon_1$-far from any $k$-junta. \end{proof}}

\begin{figure}\label{fig:graph-cuts}
	\centering
	\begin{picture}(250, 150)
	\put(0,10){\includegraphics[width=0.6\linewidth]{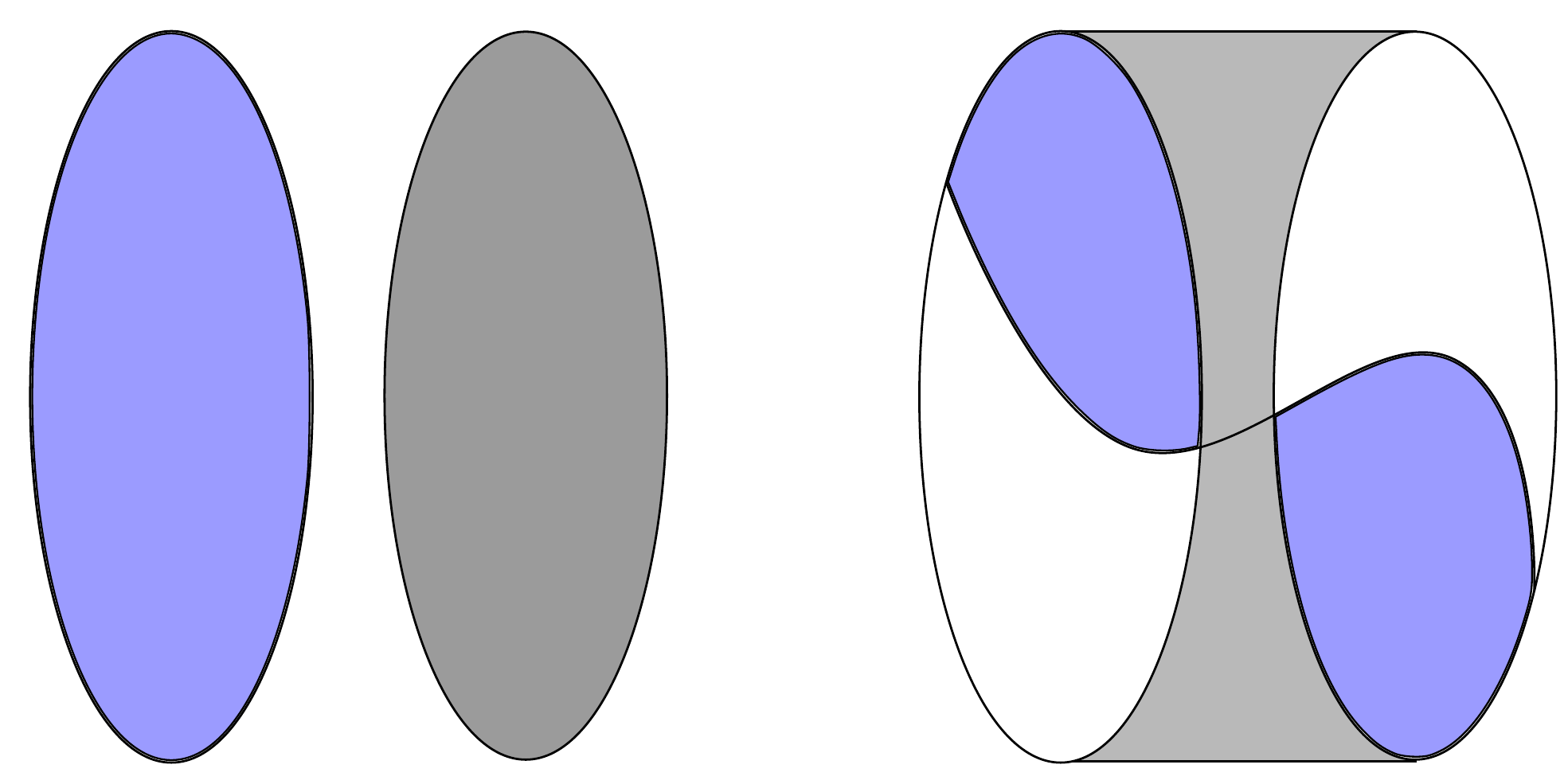}}
	\put(55, 150){$\Dyes$}
	\put(215, 150){$\Dno$}
	\put(25, 0){$A$}
	\put(90, 0){$\oA$}
	\put(185, 0){$A$}
	\put(250, 0){$\oA$}
	\put(195, 100){$\alpha$}
	\put(255, 50){$\beta$}
	\end{picture}
	\caption{Example of graphs $\bG$ from $\Dyes$ and $\Dno$. On the left, the graph $\bG$ is the union of two cliques of size $\frac{n}{4}$, corresponding to $\Dyes$. We note that $\chi(G) = \frac{1}{2}$, since if we let $S = \bA$ (pictured as the blue set), we see that $S$ contains half of the edges. On the right, the graph $\bG$ is the complete bipartite graph with side sizes $\frac{n}{4}$, corresponding to $\Dno$. We note that $\chi(G) = \frac{3}{4}$: consider any set $S \subset \oM$ of size at least $\frac{n}{4}$ pictured in the blue region, and let $\alpha = |S \cap A|$ and $\beta = |S \cap \oA|$, where $\alpha + \beta \geq \frac{n}{4}$, so $E(S, S) + E(S, \oS) \geq (\frac{n}{4})^2 - \alpha \beta \geq (\frac{n}{4})^2(1 - \frac{1}{4})$.}
\end{figure}

\begin{corollary}
	We have that $\boldf \sim \Dyes$ has $\dist(\boldf, \text{$k$-}\Junta) \leq \eps_0 + o(1)$ with probability $1 - o(1)$, and that $\boldf \sim \Dno$ has $\dist(\boldf, \text{$k$-}\Junta) \geq \eps_1 - o(1)$ with probability $1 - o(1)$.
\end{corollary}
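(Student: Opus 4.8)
The plan is to combine the preceding lemma---which, with probability $1-o(1)$, sandwiches $\dist(\boldf,\text{$k$-}\Junta)$ between $\tfrac14\chi(\bG)\pm o(1)$---with an exact evaluation of $\chi(\bG)$ for the two graph families underlying $\Dyes$ and $\Dno$. Recall that $\boldf\sim\Dyes$ is built on top of a graph $\bG=K_{\bA}\cup K_{\obA}\sim\calG_1$, a disjoint union of two cliques on $\tfrac n4$ vertices, whereas $\boldf\sim\Dno$ is built on top of $\bG=K_{\bA,\obA}\sim\calG_2$, the complete bipartite graph with two sides of size $\tfrac n4$; in both cases $\bG$ lives on the $\tfrac n2$ vertices of $\obM$. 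Since $\chi(\bG)$ depends only on the isomorphism type of $\bG$, it takes the same value across the support of each of $\calG_1$ and $\calG_2$, so the only randomness in the corollary is that of $\bH$ entering through the lemma. Hence it suffices to show $\chi(\bG)=\tfrac12$ in the first case and $\chi(\bG)=\tfrac34$ in the second, and to recall that $\eps_0=\tfrac14\cdot\tfrac12=\tfrac18$ and $\eps_1=\tfrac14\cdot\tfrac34=\tfrac3{16}$.

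For the two-cliques graph, parametrize a candidate set $S\subseteq\obM$ with $|S|\ge\tfrac n4$ by $\alpha=|S\cap\bA|$ and $\beta=|S\cap\obA|$, where $\alpha+\beta\ge\tfrac n4$. Each clique contributes $\binom{n/4}{2}$ edges, of which $\binom{n/4}{2}-\binom{n/4-\alpha}{2}$ (resp.\ $\binom{n/4}{2}-\binom{n/4-\beta}{2}$) touch $S$, so the fraction of all edges adjacent to $S$ equals $1-\tfrac12\bigl(\binom{n/4-\alpha}{2}+\binom{n/4-\beta}{2}\bigr)/\binom{n/4}{2}$. The function $(\alpha,\beta)\mapsto\binom{n/4-\alpha}{2}+\binom{n/4-\beta}{2}$ is convex, hence maximized over the polytope $\{0\le\alpha,\beta\le\tfrac n4,\ \alpha+\beta\ge\tfrac n4\}$ at a vertex, e.g.\ $(\alpha,\beta)=(\tfrac n4,0)$, where it equals $\binom{n/4}{2}$; therefore the fraction above is always at least $\tfrac12$, with equality at $S=\bA$, i.e.\ $\chi(\bG)=\tfrac12$. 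For the complete bipartite graph, an edge is \emph{not} adjacent to $S$ precisely when both endpoints lie outside $S$, accounting for $(\tfrac n4-\alpha)(\tfrac n4-\beta)$ of the $(\tfrac n4)^2$ edges; maximizing $(\tfrac n4-\alpha)(\tfrac n4-\beta)$ subject to $(\tfrac n4-\alpha)+(\tfrac n4-\beta)\le\tfrac n4$ gives $(\tfrac n8)^2$ by AM--GM, so the minimum fraction of adjacent edges is $1-(\tfrac n8)^2/(\tfrac n4)^2=\tfrac34$, i.e.\ $\chi(\bG)=\tfrac34$. (If $\tfrac n4$ or $\tfrac n8$ is not an integer, the extremal configuration moves by $O(1)$ vertices, changing these ratios by $O(1/n)$, which the $o(1)$ terms absorb.) These are precisely the extremal sets pictured in Figure~\ref{fig:graph-cuts}.

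Finally, plugging $\chi(\bG)=\tfrac12$ into the upper bound of the preceding lemma gives $\dist(\boldf,\text{$k$-}\Junta)\le\tfrac18+o(1)=\eps_0+o(1)$ with probability $1-o(1)$ for $\boldf\sim\Dyes$, and plugging $\chi(\bG)=\tfrac34$ into its lower bound gives $\dist(\boldf,\text{$k$-}\Junta)\ge\tfrac3{16}-o(1)=\eps_1-o(1)$ with probability $1-o(1)$ for $\boldf\sim\Dno$, which is exactly the corollary. The only place requiring any thought is the small combinatorial optimization pinning down $\chi$; because both optima sit at (near-)extreme splits $(\alpha,\beta)$ this is routine, and when $n$ is divisible by $8$ the parameters $\chi(\bG)$ equal $\tfrac12$ and $\tfrac34$ on the nose, so the only $o(1)$ slack in the statement is the one already present in the preceding lemma.
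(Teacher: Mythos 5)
Your proof is correct and follows the same route as the paper: invoke the preceding lemma bounding $\dist(\boldf,\text{$k$-}\Junta)$ by $\tfrac14\chi(\bG)\pm o(1)$ and then evaluate $\chi$ on the two graph families. The only (harmless) difference is that you pin down $\chi(\bG)$ exactly as $\tfrac12$ and $\tfrac34$, whereas the paper only records the one-sided bounds $\chi\le\tfrac12$ and $\chi\ge\tfrac34$, which is all the corollary needs.
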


\begin{proof}
	For the upper bound in $\Dyes$, when $G = K_{A} \cup K_{\oA}$, we have $\chi(G) \leq \frac{1}{2}$. For the lower bound in $\Dno$, when $G = K_{A, \oA}$, $\chi(G) \geq \frac{3}{4}$ (see Figure~\ref{fig:graph-cuts}). 
	\end{proof}

\subsection{Reducing from Rejection Sampling}\label{subsec:ReductionJuntas}
In this subsection, we will prove that  distinguishing the two distributions $\calG_1$ and $\calG_2$ using rejection sampling oracle reduces to distinguishing the two distributions $\Dyes$ and $\Dno$.

\begin{lemma}\label{lem:juntarejustion} 
	Suppose there exists a deterministic non-adaptive algorithm $\Alg$ making $q$ queries to Boolean functions $f\colon \{0, 1\}^{2n} \to \{0, 1\}$. Then, there exists a deterministic non-adaptive algorithm $\Alg'$ making rejection sampling queries to an $n$-vertex graph such that:
	\begin{align*}
	\Prx_{\boldf \sim \Dyes}[\Alg(\boldf)  \text{ ``accepts''}] &= \Prx_{\bG \sim \calG_1}[\Alg'(\bG)\text{ outputs ``$\calG_1$''}], \qquad\text{and} \\
	\Prx_{\boldf \sim\Dno}[\Alg(\boldf) \text{ ``accepts''}] &= \Prx_{\bG \sim \calG_2}[\Alg'(\bG)\text{ outputs ``$\calG_1$''}].
	\end{align*}
	and has $\cost(\Alg') = O (q  \log n)$ with probability $1 - o(1)$ over the randomness in $\Alg'$.   
\end{lemma}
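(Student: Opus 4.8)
The plan is to show that a non-adaptive Boolean-function algorithm $\Alg$ querying points in $\{0,1\}^{2n}$ can be simulated by a rejection-sampling algorithm $\Alg'$ on an $n$-vertex graph, by having $\Alg'$ internally sample the ``outer'' randomness $\bM$ (and the constant halves of $\bH$, and the random bits $\boldr$) itself, and using rejection sampling queries only to recover the information about $\bG$ that $\Alg$'s queries actually depend on. First I would have $\Alg'$ sample a uniformly random $\bM\subset[2n]$ of size $n$ and compute $\Gamma_\bM$ on all $q$ query points $x^{(1)},\dots,x^{(q)}$ of $\Alg$; group the queries by the value of $\Gamma_\bM$. For each occupied index $i\in[N]$: if $i\le N/2$, the sub-function $\bh_i$ is the fixed parity $\bigoplus_{\ell\in\bM} x_\ell$, which $\Alg'$ can evaluate with no oracle access; if $i>N/2$, then $\bh_i(x)=x_{\bj_1}\oplus x_{\bj_2}\oplus\boldr$ for an edge $(\bj_1,\bj_2)\sim\bG$, and the \emph{only} thing $\Alg'$ needs in order to answer all of $\Alg$'s queries indexed to $i$ is, for each coordinate $j$ on which two such queries differ, whether $j\in\{\bj_1,\bj_2\}$ — equivalently, the set $\{\bj_1,\bj_2\}\cap L_i$, where $L_i\subset\obM$ is the set of ``explored'' directions in part $i$ (coordinates $j$ such that some pair of queries with $\Gamma_\bM=i$ differs in coordinate $j$). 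So $\Alg'$ issues the rejection sampling query $L_i$ to $\bG$, receives $\{\bj_1,\bj_2\}\cap L_i$, picks a fresh uniform $\boldr$, and thereby determines $\bh_i$ on every queried point (on unexplored coordinates $\bh_i$ is constant within the part, so their values are irrelevant to distinguishing, and one can fix them arbitrarily). Then $\Alg'$ runs $\Alg$ on the reconstructed answer vector and outputs ``$\calG_1$'' iff $\Alg$ accepts.

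The correctness claim is that this coupling makes the distribution of $\Alg$'s transcript under $\bG\sim\calG_1$ (resp. $\calG_2$) exactly equal to the distribution seen by $\Alg'$ under the rejection sampling oracle on $\bG\sim\calG_1$ (resp. $\calG_2$): the joint law of $(\bM,\bA,\bH)$ generated internally by $\Alg'$ together with its oracle responses is, by construction, literally the law of $\Dyes$ (resp. $\Dno$) restricted to the coordinates $\Alg$ examines. The two displayed probability identities follow. For the cost bound, I would bound $\cost(\Alg')=\sum_i |L_i|$ where the sum is over occupied parts $i>N/2$. Here $|L_i|$ is at most the number of coordinates on which the queries landing in part $i$ disagree, which is at most $\binom{q_i}{2}$ where $q_i$ is the number of queries with $\Gamma_\bM=i$ — but that is too weak; instead I would observe that each coordinate $j$ contributes to $|L_i|$ only if part $i$ contains two queries differing at $j$, and then argue via the randomness of $\bM$: since $\bM$ is a uniform half-set of $[2n]$, each unordered pair of distinct query points $\{x,x'\}$ has $\Gamma_\bM(x)=\Gamma_\bM(x')$ only if $x,x'$ agree on all of $\bM$, which for a fixed pair differing in $d\ge 1$ coordinates happens with probability $2^{-\Theta(d)}$; summing the expected contribution over all $\binom{q}{2}$ pairs and all coordinates on which they differ gives $\E[\cost(\Alg')]=O(q)$ (the dominant contributions come from pairs at Hamming distance $O(\log n)$, which is where the $\log n$ factor and the ``with probability $1-o(1)$'' enter, via a Markov/Chernoff argument on the total cost).

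The main obstacle I anticipate is precisely this cost analysis: one must show that with probability $1-o(1)$ over $\bM$, the \emph{total} number of explored directions across all parts is $O(q\log n)$, not merely $O(q)$ in expectation. The key structural point is that the query set has size $q$, so there are at most $q$ points total, hence at most $q$ ``explored'' coordinate-incidences would already suffice were it not for pairs at small Hamming distance colliding in a part; the argument should split pairs $\{x,x'\}$ by Hamming distance $d$: for $d\gtrsim\log q$ the collision probability $2^{-\Omega(d)}$ kills the pair's contribution (so with high probability no such pair collides at all), while for $d\lesssim\log q$ each colliding pair contributes at most $d=O(\log n)$ to the cost and the number of colliding pairs is at most $q$ (each point lies in one part, and within a part the explored set is the union of symmetric differences, whose size is at most the part's diameter times... — more carefully: the explored set of a part with points $P$ has size at most $\max_{x\in P}\sum_{j}\mathbbm{1}[\exists x'\in P, x'_j\ne x_j]$, which telescopes to at most the sum of pairwise distances along a spanning structure of $P$, bounded by $(|P|-1)\cdot$(max distance within the part)). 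Making this counting rigorous — turning ``explored directions in part $i$'' into something controlled by $q$ and the maximum intra-part Hamming distance, and then showing the latter is $O(\log n)$ with probability $1-o(1)$ — is the technical heart, and is exactly where I would spend the most care.
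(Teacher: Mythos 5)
Your proposal matches the paper's proof in all essentials: $\Alg'$ samples $\bM$ internally, groups the $q$ queries by $\Gamma_{\bM}$, issues one rejection sampling query $L_i$ per occupied part consisting of the explored $\obM$-directions, absorbs the unexplored parity into a fresh uniform bit $\boldr^i$, reconstructs the answer vector from the oracle's three response types, and bounds the cost by arguing that with probability $1-o(1)$ over $\bM$ every pair of queries colliding in a part has Hamming distance $O(\log n)$. The paper resolves the cost step you hedge about with the direct estimate $|L_i|\le\sum_{j\ge 2}\|z^i_1-z^i_j\|_1\le|Q_i|\cdot O(\log n)$, summing to $\cost(\Alg')=O(q\log n)$, so your spanning-structure worry dissolves.
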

\begin{proof} 
	Consider an algorithm $\Alg$ making $q$ queries to a Boolean function $\boldf = f_{\bM, \bA, \bH} \colon \{0, 1\}^{2n} \to \{0, 1\}$ (sampled from either $\Dyes$ or $\Dno$).
	First, note that $\bM$ and $\bA$ is distributed in the same way in $\Dyes$ and $\Dno$. Therefore, a rejection sampling algorithm may generate $\bM$ and $\bA$ and utilize its randomness from the rejection sampling oracle to simulate $\bH$. 
	
	Specifically, given the queries $z_1,\ldots,z_1\in\{0,1\}^{2n}$ of $\Alg$, we will partition them into sets $\bQ_1,\ldots,\bQ_t$, such that for all $z,z'\in \bQ_i$, we have that $z|_{\bM}=z'|_{\bM}$. Given the above partition, we define our queries to the rejection sampling oracle $\bL_1,\ldots,\bL_t\subset \obM$ such that for every $i\in[t]$ we let \[ \bL_i\eqdef \{j\in\obM : \exists z,z'\in \bQ_i, (z)_j\neq (z')_j\} 
	\;.\]
	Since $|\obM| = n$, we may associate each element of $\obM$ with an integer in $[n]$ and view the graphs in $\calG_1$ and $\calG_2$ as having vertex set $\obM$. In short, we let $\bL_i$ is the set of indices with two queries in $\bQ_i$ disagreeing in that index.  Next, we claim that the cost of $\Alg'$ is at most $O(q\log n) $ with probability $1-o(1)$. 
	
	Consider the bad event which occurs if there exist two queries $z,z'\in\{0,1\}^{2n}$ such that $z|_{\bM}=z'|_{\bM}$ and  $\|z-z'\|> 100\log(2n)$. Note that for any two queries $z,z'$ such that $\|z-z'\|> 100\log(2n)$, the probability that $z|_{\bM}=z'|_{\bM}$ over the choice of $\bM$ is at most $2^{-100\log(2n)} \ll \frac{1}{q^2}$, and thus we may use a union bound over all pairs of queries to get that the bad event occurs with probability $o(1)$. Therefore, we get that for any $i\in[t]$ and two queries $z,z'\in\bQ_i$ we have that $\|z-z'\|\le 100\log(2n)$ with probability $1-o(1)$, which implies that the cost of $\Alg'$ is $O(q\log n)$ with probability $1-o(1)$.
	
	Now, given the responses to the queries $\bL_1,\ldots,\bL_t\subset[\obM]$, as well as the values of $\bM,\bA$, we will be able to simulate all the randomness in the construction of the two distributions $\Dyes$ and $\Dno$. More formally, $\Alg'$ works in the following way.
	\begin{enumerate}
		
		\item $\Alg'$ makes set queries $\bL_1, \dots, \bL_t$. 
		\item Once $\Alg'$ receives the responses $\bv_1, \dots, \bv_t \in \obM \cup \left(\obM \times \obM \right)\cup \{\emptyset\}$ from the oracle, it will generate a Boolean string $(\boldr_1, \dots, \boldr_q) \in \{0, 1\}^q$ which is distributed exactly as $(f_{\bM, \bA, \bH}(z_1), \dots, f_{\bM, \bA, \bH}(z_q))$, where $f_{\bM, \bA, \bH} \sim \Dyes$ if $\bG \sim \calG_1$ and $f_{\bM, \bA, \bH} \sim \Dno$ if $\bG \sim \calG_2$.
		\item Then if $\Alg(\boldr_1, \dots, \boldr_q)$ outputs ``accept'', then $\Alg'$ should output ``$\calG_1$'', if $\Alg(\boldr_1, \dots, \boldr_q)$ outputs ``reject'', then $\Alg'$ should output ``$\calG_2$''. 
	\end{enumerate}
	Next, we will describe how to generate $(\boldr_1,\ldots,\boldr_q)\in\{0,1\}^q$. We start with setting some notations. 
	For $i \in [t]$, we denote $\bQ_i = \{z^i_1, \dots, z^i_{|\bQ_i|}\}$ and $\boldr_1^i, \dots, \boldr_{|\bQ_i|}^i$.
	
	We aim to show that the random variables $( f_{\bM, \bA, \bH}(z^i_\ell) : \ell \in [|\bQ_i|], i \in [t])$ when $f_{\bM, \bA, \bH} \sim \Dyes$ is distributed exactly the same as $( \boldr_{\ell}^i : \ell \in [|\bQ_i|], i \in [t])$ when $\bG \sim \calG_1$ and $\bv_1, \dots, \bv_t$ are sampled by the oracle (the complement case where $f_{\bM, \bA, \bH} \sim \Dno$ and $\bG\sim \calG_2$ is similar).
	
	We will proceed in $t$ stages, each in stage $i \in [t]$, we will set the values of $\boldr_1^i, \dots, \boldr_{|\bQ_i|}^i$ which will correspond to $f_{\bM, \bA, \bH}(z_1^{i}), \dots, f_{\bM, \bA, \bH}(z_{|\bQ_i|}^i)$. 
	
	If $\bQ_i$ contains strings $z$ such that $\Gamma_\bM(z)\in \{1,\ldots,2^{n-1}\}$ then we let $\boldr_1^{i},\ldots, \boldr_{|\bQ_i|}^i$ be given by $\boldr_{\ell}^i =\bigoplus_{j\in \bM}(z_{\ell}^{i})_j$ for $\ell \in [|\bQ_i|]$.
	Otherwise $\Gamma_\bM(z) \in \{2^{n-1}+1, \dots, 2^n\}$, the algorithm will use the response $\bv_i$ to generate the values $\boldr_1^{i}, \dots, \boldr_{|\bQ_i|}^i$: $\Alg'$ samples a random bit $\boldr^{i} \sim \{0, 1\}$ uniformly and generates $\boldr_1^i, \dots, \boldr_{|\bQ_i|}^i$ according to three cases, corresponding to the three cases $\bv_i$ can be in:
	\begin{itemize}
		\item If $\bv_i = \emptyset$, then $\boldr_1^i = \dots = \boldr_{|\bQ_i|}^i = \boldr^i$. 
		\item If $\bv_i = \{ j \} \subset \obM$, for each $\ell \in [|\bQ_i|]$, $\boldr_\ell^i = \boldr^i$ if $(z_\ell^i)_j = 0$, and $\boldr_\ell^i = 1- \boldr^i$ if $(z_\ell^i)_j = 1$. 
		\item If $\bv_i = \{ j_1, j_2 \} \subset \obM$, for each $\ell \in [|\bQ_i|]$, $\boldr_\ell^i = \boldr^i$ if $(z_\ell^i)_{j_1} \oplus (z_\ell^i)_{j_2} = 0$, and $\boldr_\ell^i = 1- \boldr^i$ if $(z_\ell^i)_{j_1} \oplus (z_\ell^i)_{j_2} = 1$. 
	\end{itemize}
	
	We conclude with the following claim which is immediate from the definition of $\Dyes$, $\Dno$, $\calG_1$ and $\calG_2$, and the corresponding proof simply unravels the definitions of these distributions.
	
	\begin{claim}\label{cl:eq-dist}
		If $\bG \sim \calG_1$, then $(\boldr_1, \dots, \boldr_q)$ is distributed exactly as $(f_{\bM, \bA, \bH}(z_1), \dots, f_{\bM, \bA, \bH}(z_q))$ when $f_{\bM, \bA, \bH} \sim \Dyes$, and if $\bG \sim \calG_2$, then $(\boldr_1, \dots, \boldr_q)$ is distributed exactly as $(f_{\bM, \bA, \bH}(z_1), \dots, f_{\bM, \bA, \bH}(z_q))$ when $f_{\bM, \bA, \bH} \sim \Dno$. 
	\end{claim}
	\begin{proof}
		We give the formal proof for $\Dyes$ and $\calG_1$, as the case with $\Dno$ and $\calG_2$ is the same argumentation. 
		Recall from the definition of $\Dyes$, that $\bM$ and $\bA$ are uniform random sets of size $n$ and $\frac{n}{2}$ respectively.
		Conditioned on $\bM$ and $\bA$, each sub-function $\bh_i$ is picked independently. Thus, we have
		\begin{align*}
		&\Prx_{f_{\bM, \bA, \bH} \sim \Dyes}\left[ \forall i \in [t], \forall \ell \in [|\bQ_i|], f_{\bM, \bA, \bH}(z_{\ell}^i) = y_{\ell}^i\right]\\
		&\qquad\qquad=\binom{2n}{n}^{-1}\binom{n}{n/2}^{-1}\sum_{M\subset [2n]}\sum_{A\subset\oM} \prod_{i=1}^{t}\Prx_{\bh_i}\left[ \forall \ell \in [|Q_i|], \bh_i(z^{i}_\ell) = y^i_\ell \mid \bM=M, \bA =A\right].
		\end{align*}
		We now turn to the graph problem. Recall from the definition of $\bG \sim \calG_1$, that conditioned on $\bM$ and $\bA$, the responses of the oracle, $\bv_1, \dots, \bv_t$ are independent, and $\boldr^1, \dots, \boldr^t$ are independent. Thus, we may write:
		\begin{align*}
		\Prx_{\substack{\bM,\bA, \bv_1, \dots, \bv_t \\ \boldr^1,\dots,\boldr^t}}\left[\forall j \in [q], \forall \ell \in [|Q_i|], \boldr^i_{\ell} = y_{\ell}^i \right] &= \binom{2n}{n}^{-1}\binom{n}{n/2} ^{-1}\sum_{M}\sum_{A} \prod_{i=1}^t \Prx_{\bv_i, \boldr^i}\left[\forall \ell \in [|Q_i|], \boldr^i_\ell = y_{\ell}^i \right].
		\end{align*}
		Therefore, it suffices to show that for any $M\subset[2n]$ of size $n$, $A \subset \oM$ of size $\frac{n}{2}$ and any $i \in [t]$, the random variable $(\bh_i(z_{\ell}^{i}) : \ell \in [|Q_i|])$ with $\bh_i$ from $\Dyes$ with sets $M$ and $A$ is distributed as $(\boldr_1^i, \dots, \boldr_{|Q_i|}^i)$ with oracle response $\bv_i$ and bit $\boldr^i$. 
		
		Let $(\bj_1, \bj_2)$ be a uniform random edge from $K_{A} \cup K_{\overline{A}}$, and we let $\bh_i \colon \{0, 1\}^{2n} \to \{0, 1\}$ be given by:
		\[ \bh_i(x) = \left\{ \begin{array}{cc} x_{\bj_1} \oplus x_{\bj_2} & \text{with probability $\tfrac{1}{2}$} \\
		\neg x_{\bj_1} \oplus x_{\bj_2} & \text{with probability $\tfrac{1}{2}$} \end{array}\right. \]
		Assume that $\bv_i = L_i \cap \{\bj_1,\bj_2\}=\emptyset$. Then $\bh_i(z^i_1) = \dots =\bh_i(z_{|Q_i|}^i)$ is given by a uniform random bit. Similarly, given these values of $\bv_i = \emptyset$, $\boldr^i_1 = \dots = \boldr^i_{|Q_i|}$ is also given by a uniform random bit.
		
		Now, assume that $L_i\cap \{\bj_1,\bj_2\}=\{\bj\}$. Then, for any two queries $z,z'\in Q_i$ such that $(z)_{\bj}\neq (z')_{\bj}$ we must have that $\bh_i(z)\neq\bh_i(z')$, but after this condition is set, the value of any particular $\bh_i(z)$ is a uniform random bit. Likewise, these constraints are set by the procedure generating $\boldr_1^i, \dots, \boldr_{|Q_i|}^i$, and each $\boldr_{\ell}^i$ is a uniform random bit. 
		
		Finally, assume that $L_i\cap \{\bj_1,\bj_2\}=\{\bj_1,\bj_2\}$. Then, for any two queries $z,z'\in Q_i$ such that $(z)_{\bj_1} \oplus (z)_{\bj_2}\neq (z')_{\bj_1} \oplus (z')_{\bj_2}$ we have that $\bh_i(z)\neq\bh_i(z')$, and each value of $\bh_i(z)$ is a uniform random bit. Finally, these constraints are also set forth in the definition of $\boldr^i_1,\ldots,\boldr^i_{|Q_i|}$.
	\end{proof}
	
\end{proof}

Therefore, we conclude with the following corollary.

\begin{corollary}\label{cor:reduct}
	Suppose $\Alg$ is a deterministic non-adaptive algorithm which distinguishes $\Dyes$ and $\Dno$ supported on Boolean functions of $2n$ variables with query complexity $q$, then there exists a non-adaptive algorithm $\Alg'$ for distinguishing between $\calG_1$ and $\calG_2$ supported on graphs with $n$ vertices such that with probability $1-o(1)$ over the randomness of $\Alg'$ it holds that $\cost(\Alg') = O(q \log n)$.
\end{corollary}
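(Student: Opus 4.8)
The plan is to obtain Corollary~\ref{cor:reduct} as an essentially immediate consequence of Lemma~\ref{lem:juntarejustion}, the only content being some bookkeeping of success probabilities. Recall that saying a deterministic non-adaptive algorithm $\Alg$ (making $q$ queries) \emph{distinguishes} $\Dyes$ and $\Dno$ means that its acceptance probability on $\boldf\sim\Dyes$ exceeds its acceptance probability on $\boldf\sim\Dno$ by at least a fixed constant, say $\tfrac13$. In the intended downstream application $\Alg$ arises from an $(\eps_0,\eps_1)$-tolerant $k$-junta tester: via the distance corollary of Subsection~\ref{sec:distributions} (and, if necessary, adjusting the tolerance constants by $o(1)$), $\boldf\sim\Dyes$ is $\eps_0$-close to a $k$-junta with probability $1-o(1)$ and $\boldf\sim\Dno$ is $\eps_1$-far from every $k$-junta with probability $1-o(1)$, so such a tester accepts $\Dyes$ with probability $\ge\tfrac23-o(1)$ and accepts $\Dno$ with probability $\le\tfrac13+o(1)$, hence distinguishes the two distributions in the above sense.

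Next I would feed $\Alg$ into Lemma~\ref{lem:juntarejustion}, obtaining a deterministic non-adaptive rejection-sampling algorithm $\Alg'$ on $n$-vertex graphs with
\[
\Prx_{\bG\sim\calG_1}[\Alg'(\bG)\text{ outputs ``$\calG_1$''}] = \Prx_{\boldf\sim\Dyes}[\Alg(\boldf)\text{ accepts}], \qquad \Prx_{\bG\sim\calG_2}[\Alg'(\bG)\text{ outputs ``$\calG_1$''}] = \Prx_{\boldf\sim\Dno}[\Alg(\boldf)\text{ accepts}],
\]
and with $\cost(\Alg') = O(q\log n)$ with probability $1-o(1)$ over the randomness of $\Alg'$. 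Subtracting the two identities, the difference of the left-hand sides is precisely the distinguishing advantage of $\Alg'$ between $\calG_1$ and $\calG_2$ in the sense of Definition~\ref{def:rejec}, while the difference of the right-hand sides is the distinguishing advantage of $\Alg$ between $\Dyes$ and $\Dno$; these are equal, so $\Alg'$ distinguishes $\calG_1$ and $\calG_2$. The cost bound $\cost(\Alg') = O(q\log n)$ holding with probability $1-o(1)$ is exactly the conclusion of Lemma~\ref{lem:juntarejustion}. That proves the corollary.

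There is no genuine obstacle here — this is precisely why the statement is phrased as a corollary. The single fussy point is that the various $1-o(1)$-probability events (the two events from the distance corollary and the cost-bound event) each cost an additive $o(1)$, so the advantage delivered is $\tfrac13-o(1)$ rather than a clean constant; this is absorbed in the standard ways, e.g.\ by running $\Alg$ a constant number of times before applying the reduction — which inflates $q$, and hence the cost, only by a constant factor — or by choosing the tolerance constants $\eps_0<\eps_1$ with a little slack so that the honest tester's two-sided guarantee is bounded away from $\tfrac23$. Apart from that, the argument is a direct substitution into Lemma~\ref{lem:juntarejustion}.
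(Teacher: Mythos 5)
Your proof is correct and takes essentially the same route as the paper: apply Lemma~\ref{lem:juntarejustion} to get $\Alg'$, subtract the two probability identities to transfer the distinguishing advantage, and invoke the lemma's cost guarantee. The first paragraph about tolerant junta testers and the closing remarks about amplification are downstream context rather than part of the corollary's proof (the hypothesis already grants an $\Alg$ distinguishing $\Dyes$ and $\Dno$, so no $o(1)$ absorption is needed at this point), but they do not detract from the argument.
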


\begin{proof}
	We have:
	\begin{align*} 
	&\Prx_{\bG \sim \calG_1}[\Alg'(\bG) \text{ outputs ``$\calG_1$''}] - \Prx_{\bG \sim \calG_2}[\Alg'(\bG) \text{ outputs ``$\calG_1$''}] \\
	&\qquad \qquad = \Prx_{f_{\bM, \bA, \bH}\sim\Dyes}[\Alg(\boldf) \text{ ``accepts''}] - \Prx_{f_{\bM, \bA, \bH}\sim\Dno}[\Alg(\boldf) \text{ ``accepts''}] \geq \frac{1}{3} - o(1). 
	\end{align*}
	We also have that with probability at least $1-o(1)$, for each $i \in [t]$, if $Q_i = \{ z^i_1, \dots, z^i_{|Q_i|} \}$, then $|L_i| \leq \sum_{j=2}^{|Q_i|} \|z^i_1 -z^i_j \|_1 \leq |Q_i| \cdot 100 \log(2n)$. Therefore, $\cost(\Alg') = \sum_{i=1}^t |L_i| = O(q\log n)$ with probability at least $1-o(1)$. 
\end{proof}

\newcommand{\Unate}{\mathrm{Unate}}
\newcommand{\lsim}{\lesssim}
\newcommand{\gsim}{\gtrsim}

\section{Tolerant Unateness Testing}\label{sec:unate}

In this section, we show how to reduce distinguishing distributions $\calG_1$ and $\calG_2$ to distinguishing between Boolean functions which are close to unate and Boolean functions which are far from unate. We start with a high level overview of the constructions and reduction, and then proceed to give formal definitions and the reductions for adaptive and non-adaptive tolerant testing.

\subsection{High Level Overview} We now describe how our constructions and reduction implement the plan set forth in Subsection~\ref{sec:high-level} for the property of unateness. Similarly to Section~\ref{sec:junta}, we define two distributions $\Dyes$ and $\Dno$ supported on Boolean functions, so that functions in $\Dyes$ are $\eps_0$-close to being unate, and functions in $\Dno$ are $\eps_1$-far from being unate (where $\eps_0$ and $\eps_1$ are appropriately defined constants).

We will use a randomized indexing function $\bGamma \colon \{0, 1\}^n \to [N]$ based on the Talagrand-style constructions from \cite{BB16, CWX17} to partition $\{0, 1\}^n$ in a unate fashion. Again, we will then use a graph $\bG \sim \calG_1$ or $\calG_2$ to define the sequence of sub-function $\bH = (\bh_i \colon \{0,1\}^n \to \{0, 1\} : i \in [N])$. The sub-functions $\bh_i$ will be given by a parity (or negated parity) of three variables: two variables will correspond to the end points of an edge sampled $(\bj_1, \bj_2) \sim \calG$, the third variable will be one of two pre-specified variables, which we call $m_1$ and $m_2$. Consider for simplicity the case when $\bh_i(x) = x_{\bj_1} \oplus x_{\bj_2} \oplus x_{m_1}$, and assume that we require that variable $m_1$ is non-decreasing.

Similarly to Section~\ref{sec:junta}, the functions $\bh_i$ are thought of as gadgets. We will have that if $\bh_i$ is defined with respect to an edge $(j_1, j_2)$ and $m_1$, then the function $\boldf$ will be ``encouraged'' to make variables $j_1$ and $j_2$ have opposite directions, i.e., either $j_1$ is non-increasing and $j_2$ is non-decreasing, or $j_1$ is non-decreasing and $j_2$ is non-increasing. In order to see why the three variable parity implements this gadget, we turn our attention to Figure~\ref{fig:crossing-cut-1} and Figure~\ref{fig:not-crossing-cut-1}.

Intuitively, the function $\boldf$ needs to change some of its inputs to be unate, and it must choose whether the variables $j_1$ and $j_2$ will be monotone (non-decreasing) or anti-monotone (non-increasing). Suppose $\boldf$ decides that the variable $j_1$ should be monotone and $j_2$ be anti-monotone, and $m_1$ will always be monotone (since it will be too expensive to make it anti-monotone). Then, when $\bh_i(x) = x_{j_1} \oplus x_{j_2} \oplus x_{m_1}$, $\bh_i$ will have some \emph{violating edges}, i.e., edges in direction $j_1$ which are decreasing, or edges in direction $j_2$ which are increasing, or edges in direction $m_1$ which are decreasing (see Figure~\ref{fig:crossing-cut-1}, where these violating edges are marked in red). In this case, there exists a way that $\boldf$ may change $\frac{1}{4}$-th fraction of the points and remove all violating edges (again, this procedure is shown in Figure~\ref{fig:crossing-cut-1}). 

In contrast, suppose that $\boldf$ decides that the variables $j_1$ and $j_2$ both should be monotone. Then, when $\bh_i(x) = x_{j_1} \oplus x_{j_2} \oplus x_{m_1}$, the violating edges (shown in Figure~\ref{fig:not-crossing-cut-1}) form vertex-disjoint cycles of length $6$ in $\{0, 1\}^n$, thus, the function $\boldf$ will have to change $\frac{3}{8}$-th fraction of the points in order to remove all violating edges. In other words, when there is an edge $(j_1, j_2)$ sampled in $\bh_i$, the function $\boldf$ is ``encouraged'' to make $j_1$ and $j_2$ have opposite directions, and ``discouraged'' to make $j_1$ and $j_2$ have the same direction. The other cases are presented in Figures~\ref{fig:crossing-cut-2}, ~\ref{fig:crossing-cut-2-2}, and ~\ref{fig:not-crossing-cut-2}.

In order for $\boldf$ to become unate, it must first choose whether each variable will be monotone or anti-monotone. $\boldf$ will choose all variables in $\bM$ to be monotone, the variable $m_1$ to be monotone, and $m_2$ to be anti-monotone, but will have to make a choice for each variable in $\obM$, corresponding to each vertex of the graph $\bG$. As discussed above, for each edge $(j_1, j_2)$ in the graph, $\boldf$ is encouraged to make these orientations opposite from each other, so $\boldf$ will want to look for the maximum cut on the graph, whose value will be different in $\calG_1$ and $\calG_2$. 

\begin{figure}\label{fig:crossing-cut-1}
	\centering
	\begin{picture}(400, 160)
	\put(0,0){\includegraphics[width=0.4\linewidth]{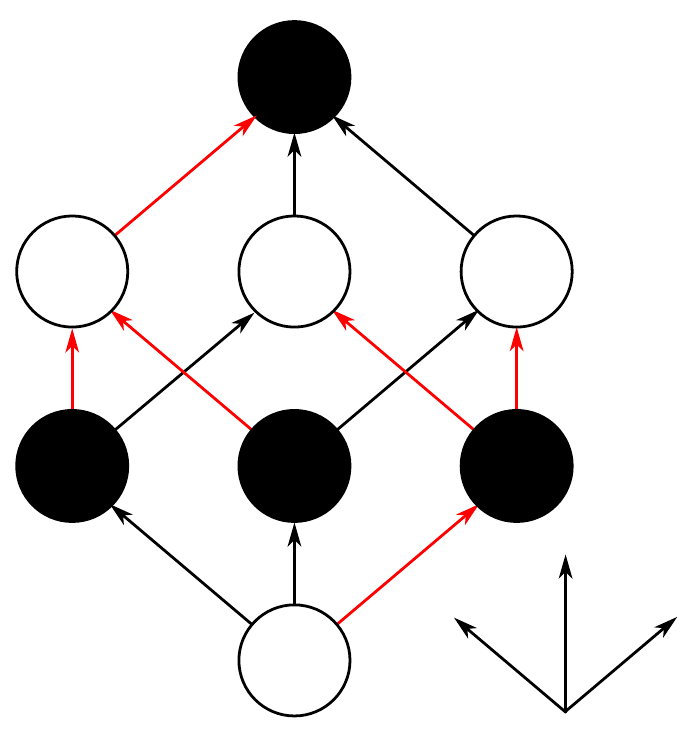}}
	\put(132, 10){$j_1$}
	\put(120, 20){$+$}
	\put(168, 10){$j_2$}
	\put(180, 20){$-$}
	\put(157, 30){$m_1$}
	\put(157, 40){$+$}
	\put(200, 100){$\longrightarrow$}
	\put(250, 0){\includegraphics[width=0.4\linewidth]{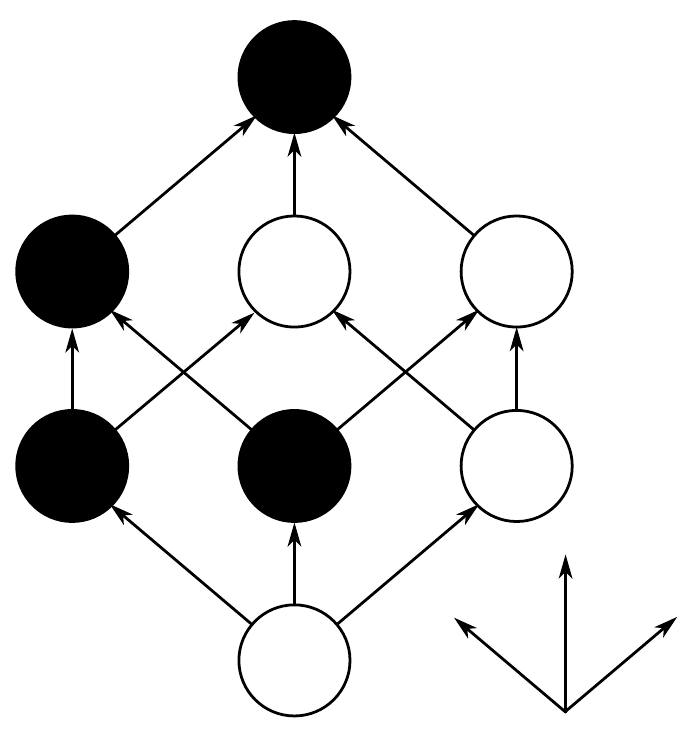}}
	\put(382, 10){$j_1$}
	\put(370, 20){$+$}
	\put(418, 10){$j_2$}
	\put(430, 20){$-$}
	\put(407, 30){$m_1$}
	\put(407, 40){$+$}
	\end{picture}
	\caption{Example of a function $\bh_i \colon \{0, 1\}^n \to \{0, 1\}$ with $\bh_i(x) = x_{j_1} \oplus x_{j_2} \oplus x_{m_1}$ with variable $j_1$ (which ought to be monotone), $j_2$ (which ought to be anti-monotone), and $m_1$ (which is always monotone). The image on the left-hand side represents $\bh_i$, and the red edges correspond to violating edges for variables $j_1, j_2$ and $m_1$. In other words, the red edges correspond to anti-monotone edges in variables $j_1$, monotone edges in variables $j_2$, and anti-monotone edges in direction $m_1$. On the right-hand side, we show how such a function can being ``fixed'' into a function $\bh_i' \colon \{0, 1\}^n \to \{0, 1\}$ by changing $\frac{1}{4}$-fraction of the points.}
\end{figure}

\begin{figure}\label{fig:not-crossing-cut-1}
	\centering
	\begin{picture}(400, 160)
	\put(0,0){\includegraphics[width=0.4\linewidth]{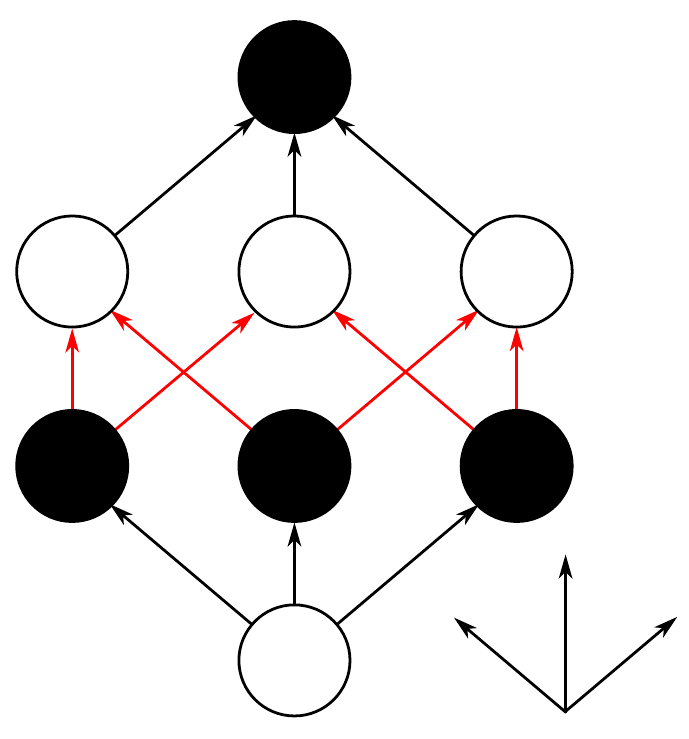}}
	\put(132, 10){$j_1$}
	\put(120, 20){$+$}
	\put(168, 10){$j_2$}
	\put(180, 20){$+$}
	\put(157, 30){$m_1$}
	\put(157, 40){$+$}
	\put(200, 100){$\longrightarrow$}
	\put(250, 0){\includegraphics[width=0.4\linewidth]{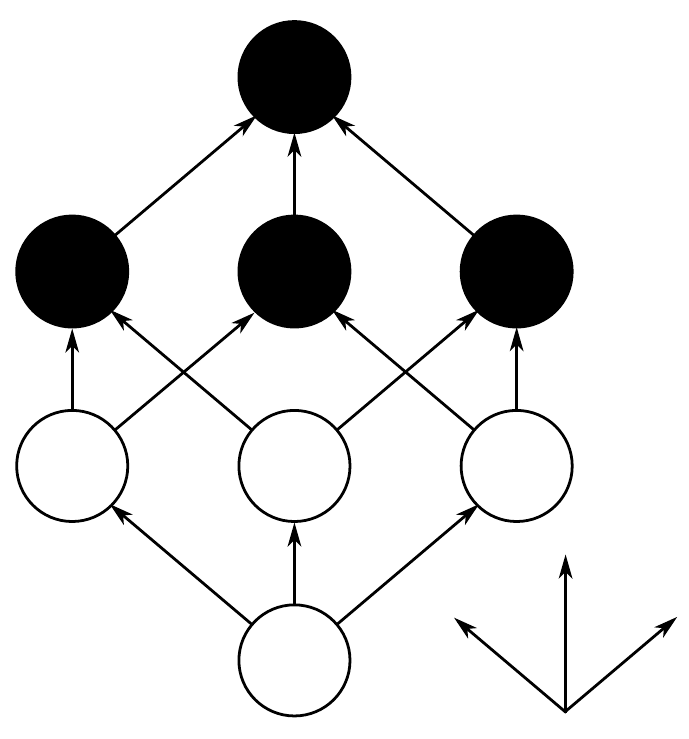}}
	\put(382, 10){$j_1$}
	\put(370, 20){$+$}
	\put(418, 10){$j_2$}
	\put(430, 20){$+$}
	\put(407, 30){$m_1$}
	\put(407, 40){$+$}
	\end{picture}
	\caption{Example of a function $\bh_i \colon \{0, 1\}^n \to \{0, 1\}$ with $\bh_i(x) = x_{j_1} \oplus x_{j_2} \oplus x_{m_1}$ with variables $j_1$ and $j_2$  (which ought to be monotone), and $m_1$ (which ought to be monotone). On the left side, we indicate the violating edges with red arrows, and note that the functions in the left and right differ by $\frac{3}{8}$-fraction of the points. We also note that any function $\bh_i' \colon \{0, 1\}^n \to \{0, 1\}$ which has $j_1$, $j_2$ and $m_1$ monotone must differ from $\bh_i$ on at least $\frac{3}{8}$-fraction of the points because the violating edges of $\bh_i$ form a cycle of length $6$.}
\end{figure}

Similarly to the case in Section~\ref{sec:junta}, the reduction will follow by defining the rejection sampling queries $L_i$ corresponding to variables explored in sub-function $\bh_i$. The unate indexing functions $\bGamma$ are not as strong as the indexing functions from the Section~\ref{sec:junta}, so for each query in the Boolean function testing algorithm, our reduction will lose some cost in the rejection sampling algorithm. In particular, the adaptive reduction loses $n$ cost for each Boolean function query, since adaptive algorithms can efficiently explore variables with a binary search; this gives the $\widetilde{\Omega}(n)$ lower bound for tolerant unateness testing. The non-adaptive reduction loses $O(\sqrt{n} \log n)$ cost for each Boolean function query since queries falling in the same part may be $\Omega(\sqrt{n})$ away from each other (the same scenario occurs in the non-adaptive monotonicity lower bound of \cite{CWX17}). The non-adaptive reduction is more complicated than the adaptive reduction since it is not exactly a black-box reduction (we require a lemma from Section~\ref{sec:LowerBound}). This gives the $\widetilde{\Omega}(n^{3/2})$ lower bound for non-adaptive tolerant unateness testing. 

\subsection{The Distributions $\Dyes$ and $\Dno$} We now turn to describing a pair of distributions $\Dyes$ and $\Dno$ supported on Boolean functions $f \colon \{0, 1\}^n \to \{0, 1\}$. These distributions will have the property that for some constants $\eps_0$ and $\eps_1$ with $0 < \eps_0 < \eps_1$, 
\[ \Prx_{\boldf \sim \Dyes}[\dist(\boldf, \Unate) \leq \eps_0] = 1 - o(1) \qquad\text{and}\qquad \Prx_{\boldf \sim \Dno}[\dist(\boldf, \Unate) \geq \eps_1] = 1 - o(1). \]

We first define a function $\boldf \sim \Dno$, where we fix the parameter:
\[ N = 2^{\sqrt{n}}.\]
\begin{enumerate}
	\item Sample some set $\bM \subset [n]$ of size $|\bM| = \frac{n}{2}$ uniformly at random and let $\bm_1, \bm_2 \sim \bM$ be two distinct indices.
	\item We let $\bT \sim \calE(\bM \setminus \{ \bm_1, \bm_2\})$ (which we describe next). $\bT$ is a sequence of terms $(\bT_i : i \in [N])$ which is used to defined a multiplexer map $\Gamma_{\bT} \colon \{0, 1\}^n \to [N] \cup \{ 0^*, 1^*\}$. 
	\item We sample $\bA \subset \obM$ of size $|\bA| =\frac{n}{2}$ and define a graph as:
	\[ \bG = K_{\bA} \cup K_{\obA}. \]
	\item We now define the distribution over sub-functions $\bH = (\bh_i : i \in [N]) \sim \calH(\bm_1, \bm_2, \bG)$. For each function $\bh_i \colon \{0, 1\}^{n} \to \{0, 1\}$, we generate $\bh_i$ independently:
	\begin{itemize}
		\item When $i \leq 3N / 4$, we sample $\bj \sim \{ \bm_1, \bm_2\}$ and we let:
		\[ \bh_i(x) = \left\{ \begin{array}{cc} x_{\bj} & \bj = \bm_1 \\
		\neg x_{\bj} & \bj = \bm_2 \end{array} \right. . \]
		\item Otherwise, if $i > 3N / 4$, we sample an edge $(\bj_1, \bj_2) \sim \bG$ and an index $\bj_3 \sim \{ \bm_1, \bm_2\}$ we let:
		\[ \bh_i(x) =\left\{\begin{array}{cc} x_{\bj_1} \oplus x_{\bj_2} \oplus x_{\bj_3} & \bj_3 = \bm_1 \\
		\neg x_{\bj_1} \oplus x_{\bj_2} \oplus x_{\bj_3} & \bj_3 = \bm_2 \end{array} \right. .\]
	\end{itemize}
\end{enumerate}
The function $\boldf \colon \{0, 1\}^n \to \{0, 1\}$ is given by $\boldf(x) = f_{\bT, \bA, \bH}(x)$ where:
\begin{align} f_{\bT, \bA, \bH}(x) &=\left\{ \begin{array}{ll} 1 & |x_{|\bM}| > \frac{n}{4} + \sqrt{n} \\
0 & |x_{|\bM}| < \frac{n}{4} - \sqrt{n} \\
1 & \Gamma_{\bT}(x) = 1^* \\
0 & \Gamma_{\bT}(x) = 0^* \\
\bh_{\Gamma_{\bT}(x)}(x)& \textrm{otherwise} \end{array} \right. . \label{eq:unate-dist} 
\end{align}
We now turn to define the distribution $\calE(M)$ supported on terms $\bT$, as well as the multiplexer map $\Gamma_{\bT} \colon \{0, 1\}^n \to [N]$. As mentioned above, $\bT \sim \calE(M)$ will be a sequence of $N$ terms $(\bT_i : i \in [N])$, where each $\bT_i$ is given by a DNF term:
\[ \bT_i(x) = \bigwedge_{j \in \bT_i} x_j, \]
where the set $\bT_i \subset M$ is a uniformly random $\sqrt{n}$-element subset. Given the sequence of terms $\bT$, we let:
\[ \Gamma_{\bT}(x) = \left\{ \begin{array}{ll} 0^* & \forall i \in [N], \bT_i(x) = 0 \\
1^* & \exists i_1 \neq i_2 \in [N], \bT_{i_1}(x) = \bT_{i_2}(x) = 1 \\
i     & \bT_i(x) = 1 \text{ for a unique $i \in [N]$} \end{array} \right. .\]
It remains to define the distribution $\Dyes$ supported on Boolean functions. The function $\boldf \sim \Dyes$ will be defined almost exactly the same. We still have $\boldf = f_{\bT, \bA, \bH}$ as defined above, however, the graph $\bG$ will be different. In particular, we will let:
\[ \bG = K_{\bA, \obA}. \]

\ignore{
	\subsection{A high level intuition for $\Dyes$ and $\Dno$}
	
	Below, we discuss a few properties of the above-mentioned distributions. Consider a fixed set $M \subset [n] \setminus \{ m_1, m_2\}$ of size $\frac{n}{2}$, and consider $T \in \calE(M)$ such that the set 
	\[ X = \{ x \in \{0, 1\}^n : \Gamma_T(x) \in [N] \} \]
	has $|X| = \Omega(2^n)$ points (which occurs for many such terms). Let $\gamma = \frac{|X|}{2^n}$, so $\gamma = \Omega(1)$. In addition, since variables in $M$ are all monotone, any function $f \in \Dyes \cup \Dno$ with the set $M$ and $T$ specified above will satisfy the following property: if $g$ is the closest unate function to $f$, then $f(x) = g(x)$ for $x \notin X$. Thus, the distance of $f$ to unateness will be dictated by values in $X$. Note the following simple facts:
	\begin{itemize}
		\item The variable $m_1$ is monotone, and there exists a matching of roughly $\frac{1}{4} \cdot \frac{\gamma}{2} \cdot 2^n$ monotone edges in direction $m_1$. This is because roughly a fourth of all sub-cubes in $X$ are a dictator in a direction $m_1$. 
		\item Likewise, $m_2$ is anti-monotone, and there exists a matching of roughly $\frac{1}{4} \cdot \frac{\gamma}{2} \cdot 2^n$ anti-monotone edges in direction $m_2$.
	\end{itemize}
	Therefore, if $g$ is the closest unate function, and $\dist(f, g) \leq \eps \lsim \gamma$, then $g$ must be monotone in $x_{m_1}$ and anti-monotone in $x_{m_2}$. Consider the points $X' \subset X$ with
	\[ X' = \{ x \in \{0, 1\}^n : \Gamma_T(x) > N/2 \}. \]
	And note $X' = \frac{\gamma}{2} \cdot 2^n$ and assume for simplicity that $f(x) = g(x)$ for $x \notin X'$. Thus, the distance of $f$ to $g$ depends on how the variables in $A \cup \oA$ are oriented, either monotone or anti-monotone. 
	
	Consider a particular $i > N/2$, and let $h_i \colon \{0, 1\}^n \to \{0, 1\}$ be the function defined by $h_i(x) = \neg x_{j_1} \oplus x_{j_2} \oplus x_{m_2}$, where $(j_1, j_2)$ were sampled from some underlying graph $G$ (see Figure~\ref{fig:gadget} for a representation of $h_i$). The case when $j_3 = m_1$ will be symmetric. We now consider the following two cases: 1) variables $j_1$ and $j_2$ have one being monotone, and the other being anti-monotone, and 2) variables $j_1$ and $j_2$ are both monotone or both anti-monotone in $g$. 
	
	\begin{figure}
		\centering
		\begin{picture}(150, 170)
		\put(0,0){\includegraphics[width=0.4\linewidth]{gadget.pdf}}
		\put(133, 10){$j_1$}
		\put(140, 30){$m_2$}
		\put(170, 10){$j_2$}
		\end{picture}
		\caption{Representation of a function $h_i \colon \{0, 1\}^n \to \{0, 1\}$ with $i > N/2$ and sample $(j_1, j_2)$ from $G$ and $x_{m_2}$. Shaded circles represent $1$-values of the function, and unshaded circles represent $0$-values of the function.}
	\end{figure}
	
	\begin{enumerate}
		\item Consider the first case, and suppose variables $j_1$ and $j_2$ are in opposite directions, i.e., one is monotone, and the other is anti-monotone. Then, for the points $x \in X'$ where $\Gamma_{T}(x) = i$, we will have that $f$ and $g$ will differ in at most $\frac{1}{4}$-th fraction of the points. Since there exists a way to change one fourth of the points so variables $j_1$ and $j_2$ are oriented in opposite directions (see Figure~\ref{fig:gadget-mon-antimon}).
		\item Now consider the second case, when variables $j_1$ and $j_2$ are both monotone or both anti-monotone. Then for the points $x \in X'$ for which $\Gamma_{T}(x) = i$, the crucial claim is that since the variable $n$ must be monotone, $f$ and $g$ must disagree on  $\frac{3}{8}$-ths fraction of the points in $h_i$ (see Figure~\ref{fig:gadget-switch}).
	\end{enumerate}
	\begin{figure}\label{fig:gadget-mon}
		\begin{picture}(600, 200)
		\put(0,0){\includegraphics[width=0.4\linewidth]{gadget-all-mon-violations.pdf}}
		\put(133, 10){$j_1$}
		\put(121, 21){$+$}
		\put(157, 40){$+$}
		\put(180, 20){$+$}
		\put(147, 30){$n$}
		\put(170, 10){$j_2$}
		\put(270, 0){\includegraphics[width=0.4\linewidth]{gadget-all-mon-fix.pdf}}
		\put(403, 10){$j_1$}
		\put(391, 21){$+$}
		\put(427, 40){$+$}
		\put(450, 20){$+$}
		\put(417, 30){$n$}
		\put(440, 10){$j_2$}
		\end{picture}
		\caption{The sub-function $h_i \colon \{0, 1\}^n \to \{0, 1\}$ is represented with $h_i(x) = \neg x_{j_1}\oplus x_{j_2} \oplus x_n$, and we consider the case that variables $j_1, j_2$, and $n$ are monotone. On the left-hand side, we draw the red arrows to denote \emph{violating} edges, i.e., an anti-monotone edge in one of the directions $j_1$, $j_2$, and $n$. On the right-hand side, we draw a function which is monotone in directions $j_1$, $j_2$ and $n$ which changes only $\frac{1}{4}$ of the points in $h_i$.}
	\end{figure}
	
	\begin{figure}\label{fig:gadget-anti}
		\begin{picture}(600, 200)
		\put(0,0){\includegraphics[width=0.4\linewidth]{gadget-all-antimon-violations.pdf}}
		\put(133, 10){$j_1$}
		\put(121, 21){$-$}
		\put(157, 40){$+$}
		\put(180, 20){$-$}
		\put(147, 30){$n$}
		\put(170, 10){$j_2$}
		\put(270, 0){\includegraphics[width=0.4\linewidth]{gadget-all-antimon-fix.pdf}}
		\put(403, 10){$j_1$}
		\put(391, 21){$-$}
		\put(427, 40){$+$}
		\put(450, 20){$-$}
		\put(417, 30){$n$}
		\put(440, 10){$j_2$}
		\end{picture}
		\caption{Similarly to Figure~\ref{fig:gadget-mon}, the sub-function $h_i \colon \{0, 1\}^n \to \{0, 1\}$ is represented with $h_i(x) = \neg x_{j_1}\oplus x_{j_2} \oplus x_n$, and we consider the case that variables $j_1, j_2$ are anti-monotone, and $n$ is monotone. On the left-hand side, we draw the red arrows to denote \emph{violating} edges, i.e., anti-monotone edges in direction $n$ and monotone edges in direction $j_1$ and $j_2$. On the right-hand side, we draw a function which is anti-monotone in directions $j_1$, $j_2$ and monotone in direction $n$ which changes only $\frac{1}{4}$ of the points in $h_i$.}
	\end{figure}
	
	\begin{figure}\label{fig:gadget-switch}
		\centering
		\begin{picture}(200, 200)
		\put(0,0){\includegraphics[width=0.4\linewidth]{gadget-mon-antimon-violations.pdf}}
		\put(133, 10){$j_1$}
		\put(121, 21){$+$}
		\put(157, 40){$+$}
		\put(180, 20){$-$}
		\put(147, 30){$n$}
		\put(170, 10){$j_2$}
		\end{picture}
		\caption{The sub-function $h_i \colon \{0, 1\}^n \to \{0, 1\}$ is represented with $h_i(x) = \neg x_{j_1}\oplus x_{j_2} \oplus x_n$, and we consider the case that variables $j_1$ and $n$ are monotone, while variable $j_2$ is anti-monotone. We draw the red arrows to denote \emph{violating} edges, i.e., an anti-monotone edges in direction $j_1$ and $n$, and monotone edges in direction $j_2$. Note that the violating edges form an undirected cycle of 6 edges, therefore, in order to ``fix'' these edges, we must make at least $3$ changes.}
	\end{figure}
	
	Intuitively, a unate function $g$ has no choice but to disagree with some points in $f$; however, whenever $j_1$ and $j_2$ are both monotone or both anti-monotone, the function $g$ will disagree in $\frac{1}{8}$-fraction more for each sub-function $h_i$ where $(j_1, j_2)$ is the edge sampled. In other words, if we consider a cut $S \subset \overline{M} \setminus \{ m_1, m_2\}$, then suppose $g$ has variables in $S$ monotone and variables in $\oS$ anti-monotone. The distance between $f$ and $g$ will be at least 
	\[ \frac{\gamma}{2} \left( \frac{1}{4} + \frac{1}{8} \cdot \chi_S\right), \]
	where $\chi_S$ is the fractional value of the cut $S$ in the graph $G$, i.e., the number of edges crossing the cut divided by the total number of edges. Finally, we note that there is a gap in the minimum cut value containing $A_1$ and excluding $A_c$ in $G$ generated from $\Dyes$ and $\Dno$. In particular, $\Dyes$ contains a cut of value zero, given by $S = A_1 \cup A_2 \cup \dots \cup A_{c/2}$, on the other hand, any cut in $G$ generated by $\Dno$ will have cut value $\Omega(1)$, since $c = O(1)$. 
	
}

Fix any set $M \subset [n]$ of size $\frac{n}{2}$ and let $m_1, m_2 \in M$ be two distinct indices and $M' = M \setminus \{m_1, m_2\}$. For any $\bT \sim \calE(M')$, let $\bX\subset\{0,1\}^n$ be the subset of points indexed to some subfunction $\bh_i$:
\[ \bX\eqdef\left\{ x \in \{0, 1\}^n : |x_{|M}|\in [n/4-\sqrt{n},n/4 +\sqrt n]\;\;\text{and  }\Gamma_T(x) \in [N] \right\},\]
and define $\gamma \in (0, 1)$ be the parameter:
\[ \gamma \eqdef \Ex_{\bT \sim \calE(M')}\left[ \dfrac{|\bX|}{2^n} \right]. \]

\begin{claim}\label{cl:size-X}
	With probability at least $1-\exp\left(-\Omega(N/n^2)\right)$ over the draw $\bT \sim \calE(M)$ the set $\bX$ has size $|\bX| = 2^n \gamma (1 \pm \frac{1}{n})$, where $\gamma = \Omega(1)$.
\end{claim}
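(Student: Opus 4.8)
The plan is to write $|\bX| = \sum_{x \in \{0,1\}^n} \ind[x \in \bX]$ and analyze its expectation and concentration. Writing $B = \{x : |x_{|M}| \in [n/4 - \sqrt n, n/4 + \sqrt n]\}$ for the ``balanced ball'' condition (which depends only on $x$, not on $\bT$), we have $x \in \bX$ iff $x \in B$ and $\Gamma_\bT(x) \in [N]$, i.e. exactly one of the $N$ terms $\bT_i$ is satisfied by $x$. For a fixed $x \in B$, note that $\Prx_{\bT}[\bT_i(x) = 1] = \binom{|x_{|M'}|}{\sqrt n}/\binom{n/2}{\sqrt n}$, and since $|x_{|M}| \approx n/4$ (and $m_1, m_2$ contribute at most $2$ to the weight), $|x_{|M'}| = n/4 \pm O(\sqrt n)$, so this probability is $p_x = (1/2)^{\sqrt n}(1 \pm o(1)) = \Theta(N^{-1})$ uniformly over $x \in B$; in particular $Np_x = \Theta(1)$. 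Since the $N$ terms are drawn independently, the number of satisfied terms is a sum of $N$ independent Bernoulli$(p_x)$ variables, so $\Prx_\bT[\Gamma_\bT(x) \in [N]] = Np_x(1-p_x)^{N-1}$, which is a constant bounded away from $0$ and $1$. Therefore $\gamma = \Exx_\bT[|\bX|]/2^n = \frac{|B|}{2^n}\cdot \Theta(1)$; since $|B|/2^n = \Omega(1)$ by a standard binomial anti-concentration estimate (the weight $|x_{|M}|$ is $\mathrm{Bin}(n/2, 1/2)$ and we ask it to lie within $O(\sqrt n)$ of its... wait — its mean is $n/4$, so this is an $\Omega(1)$-probability event), we get $\gamma = \Omega(1)$.

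For the concentration, I would bound the variance of $|\bX|$ over $\bT \sim \calE(M')$ and then apply Chebyshev, or better, exhibit enough independence for a Chernoff-type bound. The cleanest route: condition on $\bT$ and note that $|\bX| = \sum_i |\{x \in B : \Gamma_\bT(x) = i\}|$, but the natural independent structure is across the $N$ terms. Writing $Y_i = \ind[\bT_i \text{ is the unique satisfied term for some fixed } x]$ does not decouple across $x$. Instead, I would use the fact that $|\bX|$ is a function $\phi(\bT_1, \dots, \bT_N)$ of the $N$ independent terms, and changing one term $\bT_i$ changes $\Gamma_\bT(x)$ only for those $x$ with $\bT_i(x) = 1$; the expected number of such $x$ (over the choice of $\bT_i$) is $2^n p$ with $p = \Theta(N^{-1})$, i.e. $\Theta(2^n/N)$, which is much larger than $\sqrt{N}$-type bounds would need. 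So a bounded-differences (McDiarmid) argument gives concentration only to within $\pm O(2^n/N) \cdot \sqrt N = O(2^n/\sqrt N)$, which is not good enough to get the $\pm \frac1n$ relative error claimed since $\gamma 2^n / n \ll 2^n/\sqrt{N}$ is false — actually $N = 2^{\sqrt n}$ so $2^n/\sqrt N = 2^{n - \sqrt n/2} \ll 2^n/n$, so McDiarmid with the right per-coordinate bound \emph{does} suffice. The per-coordinate effect is at most $O(2^n \sqrt n / N) = O(2^n \sqrt n \cdot 2^{-\sqrt n})$ in the worst case (a term can be satisfied by up to $\binom{n/2}{\sqrt n}/2^{?}$... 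I'd instead bound the typical effect and truncate the rare bad terms), and $\sum$ of squares of these is $N \cdot (2^n \sqrt n / N)^2 = 2^{2n} n / N \ll (2^n/n)^2$, giving the stated $\exp(-\Omega(N/n^2))$-type tail after tracking constants.

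The main obstacle I anticipate is the bounded-differences step: a single term $\bT_i$ can in the worst case be satisfied by an atypically large set of points (when $\bT_i$ happens to be a subset of low-weight coordinates... but terms live in $M'$ and we only care about $x \in B$, so weights are pinned near $n/4$), so I should first argue that with probability $1 - \exp(-\Omega(N/n^2))$ \emph{every} term is satisfied by at most $O(2^n/N \cdot \poly(n))$ points of $B$ — this itself is a union bound over $N$ terms of a Chernoff bound on $\mathrm{Bin}(|B|, p)$ — and then apply McDiarmid on this good event with the controlled per-coordinate bounds. An alternative that avoids bounded differences entirely is a second-moment computation: $\Varx_\bT[|\bX|] = \sum_{x, x' \in B} (\Prx[\Gamma_\bT(x), \Gamma_\bT(x') \in [N]] - \Prx[\Gamma_\bT(x)\in[N]]\Prx[\Gamma_\bT(x')\in[N]])$, where the covariance term is nonzero only through shared terms; since two distinct points are simultaneously satisfying a common random term with probability $O(N^{-1})$ lower order, the pairwise correlations are $O(1/N)$ and summing gives $\Varx = O(2^{2n}/N)$, whence Chebyshev gives $|\bX| = 2^n\gamma(1 \pm t)$ with $t = O(1/(\gamma\sqrt{N}))$ except with probability $O(1/(t^2 \gamma^2 N))$ — and to upgrade the failure probability to $\exp(-\Omega(N/n^2))$ one needs the higher-moment / McDiarmid version anyway, so I would ultimately go with the truncated McDiarmid argument sketched above.
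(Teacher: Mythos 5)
Your overall plan—compute $\Ex_\bT[|\bX|]$ to show $\gamma = \Omega(1)$, then apply a bounded-differences (McDiarmid) inequality over the $N$ independent terms—is exactly the paper's approach, and the numbers you eventually reach ($\sum c_i^2 \approx 2^{2n}/N$ giving $\exp(-\Omega(N/n^2))$) are the right ones. Two points, one cosmetic and one substantive.

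On the expectation: writing $p_x = (1/2)^{\sqrt n}(1 \pm o(1))$ is off. For $|x_{|M'}| \approx n/4$ the ratio $\binom{|x_{|M'}|}{\sqrt n}/\binom{n/2 - 2}{\sqrt n}$ is $2^{-\sqrt n}$ times an extra multiplicative constant (roughly $e^{-1}$, coming from the second-order terms in $\prod_k \frac{n/4 - k}{n/2 - k}$). This doesn't affect the conclusion $Np_x = \Theta(1)$, but stated as you have it the claim is false.

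The substantive point is the worry you raise in the last paragraph, which is a red herring and is where your argument gets tangled. You propose to bound the worst-case effect of changing a single term by $O(2^n \sqrt n / N)$, speculate that some term might be satisfied by ``an atypically large set of points,'' and suggest a truncation step to handle this. None of this is needed: every term $\bT_i$ has fixed size $|\bT_i| = \sqrt n$, so the number of $x \in \{0,1\}^n$ (and a fortiori the number of $x \in B$) satisfying $\bT_i$ is \emph{deterministically} at most $2^{n - \sqrt n} = 2^n/N$, independent of which $\sqrt n$ coordinates of $M'$ make up $\bT_i$. (In fact, by a symmetry argument the count of $x \in B$ satisfying $T_i$ is the same for every $T_i$, not just bounded.) Replacing $\bT_j$ by $\bT_j'$ can only change the $\bX$-membership of points that satisfy $T_j$ or $T_j'$, so the per-coordinate difference is bounded by $2 \cdot 2^n / N$ with no $\sqrt n$ factor, no truncation, and no union bound over terms. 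Plugging this into McDiarmid gives $\sum c_i^2 = O(2^{2n}/N)$ and the stated tail $\exp(-\Omega(N/n^2))$ directly. Your extra $\sqrt n$ in the per-coordinate bound, if carried through, would in fact only yield $\exp(-\Omega(N/n^3))$, slightly weaker than claimed; but since the $\sqrt n$ is not actually there, this is moot.

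You correctly observe that a plain second-moment/Chebyshev argument cannot reach an exponential tail, so it is right to go with McDiarmid; just drop the truncation machinery.
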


\begin{proof} 
	Note that:
	\[ \Ex_{\bT\sim\calE(M)}\left[|\bX|\right]=\sum_{\stackrel{x\in\{0,1\}^n :}{n/4-\sqrt{n}\le|x_{|M}|\le n/4+\sqrt{n}}}\Prx_{\bT\sim\calE(M)}[x\in \bX]\;.\]
	Fix $x\in\{0,1\}^n$ such that $n/4-\sqrt{n}\le|x_{|M}|\le n/4+\sqrt{n}$. We can view the probability on the right hand side as a sequence of $N$ disjoint events. Every event $j\in[N]$ correspond to the case where $x$ satisfies the unique term $\bT_j$. The probability of each such event is: 
	\begin{align*} 
	\Prx_{\bT \sim \calE(M)}[\Gamma_{\bT}(x) = i] &\geq \left(\frac{1}{(n/2-2)^{\sqrt{n}}} \prod_{k=0}^{\sqrt{n}-1} (|x_{|M}| - k - 2) \right) \cdot \left(1-\left(\frac{|x_{|M}|}{n/2-2}\right)^{\sqrt{n}}\right)^{N-1} \\
	& \ge \left(\frac{n/4-2\sqrt{n}}{{n/2}}\right)^{\sqrt{n}}\cdot \left(1-\left(\frac{n/4+\sqrt{n}}{n/2-2}\right)^{\sqrt{n}}\right)^{N-1} = \Omega(1/N).
	\end{align*}
	Therefore, the probability that $x\in \bX$ is at least $\Omega(1)$. Summing up all the $x$ with $|x_{|M}| \approx \frac{n}{4} \pm \sqrt{n}$ gives $\Ex_{\bT\sim\calE(M)}[|\bX|]=\Omega(2^n)$, so $\gamma = \Omega(1)$. In order to show that the random variable $|\bX|$ is concentrated around the mean, let $\Omega$ be the space of all possible $\sqrt{n}$-sized terms with variables in $M \setminus \{m_1, m_2\}$, and let $c \colon \Omega^{N} \to \Z^{\geq 0}$ be the function on the independent terms which computes the size of $\bX$:
	\[ c(\bT_1, \dots, \bT_N) = |\bX|. \]
	For every $j\in[N]$ and $T_1,\ldots,T_N, T_j' \in \Omega$ 
	\[\left| c(T_1,\ldots,T_j',\ldots,T_N)-c(T_1,\ldots,T_j,\ldots,T_N)\right| \le \frac{2^{n}}{N}\;,\] 
	so by McDiarmid's inequality:
	\[ \Prx_{\bT\sim\calE(M)}\left[ \left| |\bX| - \gamma 2^n \right| \geq 2^{n}/n \right]\le \exp\left(-\frac{\Omega(2^{2n}/n^2)}{\sum_{i=1}^N 2^{2n}/N^2}\right)=\exp\left(-\Omega(N/n^2)\right)\;.\]
\end{proof}

In addition, let $X_i \subset X$ be the subset of points $x \in X$ with $\Gamma_{T}(x) = i$, and note that the subsets $X_1, \dots, X_{N}$ partition $X$, where each $|X_i| \leq 2^{n-\sqrt{n}}$. With a similar argument as Claim~\ref{cl:size-X}, we conclude that with probability $1 - o(1)$ over the draw of $\bT \sim \calE(M)$, we have:
\begin{align} \sum_{i=1}^{3N/4} |\bX_i| = 2^n\cdot\frac{3\gamma}{4} \left( 1\pm \frac{1}{n}\right) \qquad \text{and}\qquad \sum_{i=3N/4+1}^N |\bX_i| = 2^n \cdot \frac{\gamma}{4} \left( 1 \pm \frac{1}{n}\right).  \label{eq:size-X-i}
\end{align}
Thus, we only consider functions $\boldf \sim \Dyes$ (or $\sim \Dno$) where the sets $M$, and $T$ satisfy (\ref{eq:size-X-i}).

We consider any set $A \subset \oM$ of size $\frac{n}{4}$. Now, consider any graph $G$ defined over vertices in $\oM$, and we let:
\[ \chi(G) = \min \left\{ \dfrac{E_G(S, S) + E_{G}(\oS, \oS)}{E_{G}(\oM, \oM)} : S \subset \oM \right\}. \]
In other words, we note that $\chi(G)$ is one minus the fractional value of the maximum cut, and the value of $\chi(G)$ is minimized for the set $S$ achieving the maximum cut of $G$. The following lemma relates the distance to unateness of a function $\boldf = f_{T, A, \bH}$ with $\bH \sim \calH(m_1, m_2, G)$, where $G$ is an underlying graph defined on vertices in $\oM$.

\begin{lemma}
	Let $G$ be any graph defined over vertices in $\oM$. If $\boldf = f_{T, A, \bH}$ where $\bH \sim \calH(m_1, m_2, G)$, then
	\[ \frac{\gamma}{16} \left(1 + \frac{1}{2}\cdot \chi(G)\right) - o(1) \leq \dist(\boldf, \Unate) \leq \frac{\gamma}{16} \left( 1+ \frac{1}{2} \cdot \chi(G) \right) + o(1). \]
	with probability $1 - o(1)$.
\end{lemma}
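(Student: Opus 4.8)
The plan is to prove the two bounds separately, exactly as in the junta distance lemma: the upper bound by exhibiting an explicit unate function, the lower bound by a case analysis over the orientation of an arbitrary unate function. Throughout, condition on the probability-$(1-o(1))$ event that $\bT$ obeys the conclusions of Claim~\ref{cl:size-X} and~(\ref{eq:size-X-i}). The workhorse is a small gadget estimate: for the three–bit parity $\pi(y_1,y_2,y_3)=y_1\oplus y_2\oplus y_3$, the function closest to $\pi$ that is unate with a prescribed orientation $(\epsilon_1,\epsilon_2,\epsilon_3)\in\{+,-\}^3$ of $(y_1,y_2,y_3)$ disagrees with $\pi$ on exactly $\tfrac14$ of the eight points when $\epsilon_1\epsilon_2\epsilon_3={-}$, and on exactly $\tfrac38$ of the points when $\epsilon_1\epsilon_2\epsilon_3={+}$ (in the latter case the violated edges of $\pi$ form a $6$‑cycle, which forces any fixing function to be constant on six of the eight points); the same holds for $\neg\pi$ with the two cases interchanged. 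This is exactly what Figures~\ref{fig:crossing-cut-1} and~\ref{fig:not-crossing-cut-1} (and the remaining figures of this subsection) depict. I also use a structural observation: since every term $\bT_i$ uses only variables of $M'=M\setminus\{m_1,m_2\}$, the set $\{x:\Gamma_\bT(x)=i\}$ is a union of subcubes in the coordinates $\oM\cup\{m_1,m_2\}$, and intersecting with the window $|x_{|M}|\in[n/4-\sqrt n,\,n/4+\sqrt n]$ only cuts across directions $m_1,m_2$. Hence, writing $(\bj_1^i,\bj_2^i)$ for the edge sampled by $\bh_i$ and $\bj_3^i\in\{m_1,m_2\}$ for its third variable, $\bX_i$ is --- up to an $O(1/\sqrt n)$ fraction of ``boundary'' points at $M$‑levels exactly $n/4\pm\sqrt n$ --- a disjoint union of $3$‑subcubes in directions $\bj_1^i,\bj_2^i,\bj_3^i$ on each of which $\bh_i$ equals $\pi$ (if $\bj_3^i=m_1$) or $\neg\pi$ (if $\bj_3^i=m_2$); for $i\le 3N/4$, $\bX_i$ likewise decomposes into edges in direction $m_1$ or $m_2$ on which $\boldf$ is the (anti‑)dictator $x_{m_1}$ or $\neg x_{m_2}$.

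For the upper bound, let $S^{*}\subseteq\oM$ attain the minimum defining $\chi(G)$ --- equivalently, realize the maximum cut of $G$ --- so that a uniformly random edge of $G$ crosses $(S^{*},\overline{S^{*}})$ with probability $1-\chi(G)$. Let $\sigma$ orient every variable of $M'\cup\{m_1\}$ and of $S^{*}$ monotone, and $m_2$ and every variable of $\oM\setminus S^{*}$ anti‑monotone. Build $g$: set $g=\boldf$ on $\{x:\Gamma_\bT(x)\in\{0^*,1^*\}\}\cup\{x:|x_{|M}|\notin[n/4-\sqrt n,n/4+\sqrt n]\}$ (there $\boldf$ is $\sigma$‑monotone since the number of satisfied terms and $|x_{|M}|$ are monotone in the $M'$‑ and $m_1$‑coordinates) and on $\bX_i$ for $i\le 3N/4$ (already $\sigma$‑consistent), and on each $3$‑subcube of $\bX_i$ for $i>3N/4$ apply the $\sigma$‑unate fixing function from the gadget estimate. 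A direct check shows the only edges joining two of these pieces lie in $M'\cup\{m_1,m_2\}$‑directions, and that $g$ is $\sigma$‑monotone across all of them --- the sole obstruction, along direction $m_2$ at the extreme levels $n/4\pm\sqrt n$, is removed by overriding $g$ to the forced value on those $O(2^n/\sqrt n)$ points. Then
\[ \dist(\boldf,g)\;\le\;\frac{1}{2^n}\sum_{i>3N/4}\Big(\tfrac14\,\mathbf{1}[(\bj_1^i,\bj_2^i)\text{ crosses }S^{*}]+\tfrac38\,\mathbf{1}[\text{it does not}]\Big)|\bX_i|\;+\;o(1). \]
Since the edges $(\bj_1^i,\bj_2^i)$ are i.i.d.\ from $G$ and independent of $\bT$, each $|\bX_i|\le 2^{n-\sqrt n}$, and $\sum_{i>3N/4}|\bX_i|=\tfrac\gamma4 2^n(1\pm\tfrac1n)$, McDiarmid's inequality shows the right side equals $\big(\tfrac14(1-\chi(G))+\tfrac38\chi(G)\big)\tfrac\gamma4+o(1)=\tfrac{\gamma}{16}\big(1+\tfrac12\chi(G)\big)+o(1)$ with probability $1-o(1)$.

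For the lower bound, fix any unate $g$ with sign vector $\epsilon$. If $\epsilon$ is not \emph{conforming} (i.e.\ not $\epsilon_{m_1}={+},\epsilon_{m_2}={-}$), then, by the concentration of how many $\bh_i$ pick each of $m_1,m_2$, at least a $\tfrac12-o(1)$ fraction (by $|\bX_i|$‑weight) of $\{i\le 3N/4\}$ have $\bh_i$ a dictator/anti‑dictator in the ``wrong'' direction for $g$, so $g$ disagrees with $\boldf$ on at least half of each such $\bX_i$ up to its boundary; summing with $\sum_{i\le 3N/4}|\bX_i|=\tfrac{3\gamma}4 2^n(1\pm\tfrac1n)$ gives $\dist(\boldf,g)\ge \tfrac{3\gamma}{16}-o(1)\ge \tfrac{\gamma}{16}\big(1+\tfrac12\chi(G)\big)-o(1)$, using $\chi(G)\le 1$. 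If $\epsilon$ is conforming, let $S_\epsilon=\{j\in\oM:\epsilon_j={+}\}$. By the gadget estimate, for each $i>3N/4$ and each $3$‑subcube of $\bX_i$, $g$ disagrees with $\bh_i$ there on at least $\tfrac14$ of the points, and on at least $\tfrac38$ when $(\bj_1^i,\bj_2^i)$ does not cross $S_\epsilon$; since the $\bX_i$ are pairwise disjoint,
\[ \dist(\boldf,g)\;\ge\;\frac{1}{2^n}\sum_{i>3N/4}\Big(\tfrac14+\tfrac18\,\mathbf{1}[(\bj_1^i,\bj_2^i)\text{ does not cross }S_\epsilon]\Big)|\bX_i|\;-\;o(1). \]
Conditioning on $\bT$, this is a sum of independent bounded terms with mean at least $\big(\tfrac14+\tfrac18\chi(G)\big)\tfrac\gamma4 2^n$ by the definition of $\chi(G)$ as a minimum over cuts, so McDiarmid plus a union bound over the $\le 2^{n/2}$ possible sets $S_\epsilon$ yields $\dist(\boldf,g)\ge \tfrac{\gamma}{16}\big(1+\tfrac12\chi(G)\big)-o(1)$ with probability $1-o(1)$.

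The main obstacle, and the bulk of the work, is twofold: verifying that the explicit $g$ of the upper bound is genuinely unate --- in particular the interaction along directions $m_1,m_2$ between the gadget region and the constant region, including the forced overrides at the extreme $M$‑levels and the verification that these cost only $o(1)$ --- and pushing the concentration estimates through uniformly over all $2^{|\oM|}$ orientations of the $\oM$‑variables, which is precisely what lets $\chi(G)$, an infimum over exponentially many cuts, land on the correct side of the inequality in both bounds.
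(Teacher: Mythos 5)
Your argument follows the paper's proof almost line-for-line: the upper bound constructs a unate $g$ from the max-cut set $S^{*}$, the lower bound case-splits on whether the sign vector of an arbitrary unate $g$ has the forced orientation on $m_1$ and $m_2$, both cases invoke the $1/4$-vs.-$3/8$ gadget distances per subfunction $\bh_i$, and one concludes by concentration (you use McDiarmid where the paper uses a Chernoff bound, an immaterial difference) and a union bound over the $2^{|\oM|}$ possible cuts $S_\epsilon$. If anything you are a touch more careful than the paper: you explicitly flag that the constructed $g$ is not automatically unate along direction $m_2$ at the extreme levels $|x_{|M}| = n/4 \pm \sqrt{n}$ and repair it by overriding the $o(2^n)$ boundary points, and that the three-subcube decomposition of each $\bX_i$ only holds up to a similar boundary correction --- both details the paper leaves implicit, and each indeed costs only $o(1)$.
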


\begin{proof}
	We first show that $\dist(\boldf, \Unate) \leq \frac{\gamma}{16} \left( 1 + \frac{1}{2} \cdot \chi(G)\right) + o(1)$ with high probability. Consider the set $S \subset \oM$ which achieves the minimum of $\chi(G)$, i.e.,
	\[ \chi(G) = \dfrac{E(S, S) + E(\oS, \oS)}{E(\oM, \oM)}, \]
	and let $\bg \colon \{0, 1\}^n \to \{0, 1\}$ be the unate function which makes variables in $M$ monotone, $m_1$ monotone, $m_2$ anti-monotone, $S$ monotone, and $\oM \setminus S$ anti-monotone. We defined $\bg$ as follows:
	\begin{align*}
	\bg(x) &= \left\{ \begin{array}{cc} 1 & |x_{|M}| > \frac{n}{4} + \sqrt{n} \\
	0 & |x_{|M}| < \frac{n}{4} - \sqrt{n} \\
	1 & \Gamma_{T}(x) = 1^* \\
	0 & \Gamma_{T}(x) = 0^* \\
	\bh_{\Gamma_{T}(x)}'(x) & \text{otherwise} \end{array} \right. ,
	\end{align*}
	where we define $\bh_{i}' \colon \{0, 1\}^n \to \{0, 1\}$ as a Boolean function which depends on $\bh_i$. In particular, if $i \leq 3N/4$, we let $\bh_i' = \bh_i$. Otherwise, suppose $\bh_i$ is defined with respect to $(j_1, j_2, j_3)$. There are two cases:
	\begin{itemize}
		\item (Directions of $j_1$ and $j_2$ disagree) If $j_1 \in S$ and $j_2 \notin S$, or $j_1 \notin S$ and $j_2 \in S$, then we let $\bh_i'$ be the function on variables $x_{j_1}, x_{j_2}$ and $x_{j_3}$ with $\dist(\bh_i, \bh_i') = \frac{1}{4}$ (see Figure~\ref{fig:crossing-cut-1} for an example with $j_3 = m_1$ which needs to be monotone, $j_1 \in S$ and $j_2 \in \oS$; Figure~\ref{fig:crossing-cut-2} and Figure~\ref{fig:crossing-cut-2-2} give the symmetric constructions when $j_1$ and $j_2$ are flipped, and when variable $m_2$ is used instead of $m_1$, respectively). 
		\item (Directions of $j_1$ and $j_2$ agree) If $j_1 \in S$ and $j_2 \in S$, or $j_1 \notin S$ and $j_2 \notin S$, then we let $\bh_i'$ be the function on variables $x_{j_1}, x_{j_2}$ and $x_{j_3}$ with $\dist(\bh_i, \bh_i') = \frac{3}{8}$ (see Figure~\ref{fig:not-crossing-cut-1} for an example with $j_3 = m_1$ which needs to be monotone, $j_1 \in S$ and $j_2 \in S$; Figure~\ref{fig:not-crossing-cut-2} gives the violating edges of the symmetric examples when variable $m_2$ is used, and either both $j_1$ and $j_2$ are monotone, or both anti-monotone).
	\end{itemize}
	Therefore, we define the indicator random variable $\bC_i$ for each $i \in \{ 3N/4 + 1, \dots, N \}$ by 
	\[ \bC_i = \left\{ \begin{array}{cc} 1 & (\bj_1, \bj_2)\text{ from $\bh_i$ is not cut by $S$} \\
	0 & \text{ otherwise } \end{array} \right.,  \]
	and we note that all $\bC_i$ are independent and $\Prx_{\bH}[\bC_i] = \chi(G)$. By the two cases displayed above, we have that:
	\[ \dist(\boldf, \bg) = \dfrac{1}{2^n} \sum_{i=3N/4+1}^{N} |X_i| \left(\frac{1}{4} + \bC_i \cdot \frac{1}{8} \right) \leq \frac{\gamma}{16} \left( 1 + \frac{1}{2} \cdot \chi(G)\right) + o(1/n), \]
	with probability at least $1 - \exp\left(-\Omega(N/n^2) \right)$ over the draw of all $\bC_i$.
	
	For the lower bound, consider any function $g \colon \{0, 1\}^n \to \{0, 1\}$ which is unate. Suppose variable $x_{m_1}$ is anti-monotone in $g$, then let $\bC_i$ for $i \in [3N/4]$ be the indicator random variable
	\[ \bC_i = \left\{ \begin{array}{cc} 1 & \bh_i(x) = x_{m_1} \\
	0 & \bh_i(x) = \neg x_{m_2} \end{array} \right. .\]
	We note that if $\bC_i = 1$, then $\boldf$ and $g$ differ on at least $|X_i|/2$ from $X_i$. Thus, we have $\dist(\boldf, g) \geq \frac{3\gamma}{8}\left(1 - \frac{1}{n}\right) - o(1)$ with high probability over the draw of $\bC_i$. Likewise, we may say that if $x_{m_2}$ is monotone, then $\dist(\boldf, g) \geq \frac{3\gamma}{8} \left(1 - \frac{1}{n} \right) - o(1)$. Thus, we may consider functions $g \colon \{0, 1\}^n \to \{0, 1\}$ with $x_{m_1}$ being monotone and $x_{m_2}$ being anti-monotone. In this case, consider a set $S \subset \oM$, then if $g$ is any unate function with variables in $S$ being monotone and variables in $\oM \setminus S$ being anti-monotone, then we note that for each $i \in \{ 3N/4+1, \dots, N\}$, if $\bh_i$ sampled an edge $(j_1, j_2)$ which is cut by $S$, then $X_i$ must differ on $\frac{1}{4}$th of the points in $X_i$ (see Figure~\ref{fig:crossing-cut-1} for an example of the violating edges if $j_1$ and $j_2$ are oriented in opposite directions). On the other hand, if $(j_1, j_2)$ is not cut by $S$, then $X_i$ must differ on $\frac{3}{8}$ths of the points in $X_i$ (see Figure~\ref{fig:not-crossing-cut-1} to see how the violating edges require $\frac{3}{8}$ths of the points being different). Thus, if we let the indicator random variable $\bC_i$ be 
	\[ \bC_i = \left\{ \begin{array}{cc} 1 & (\bj_1, \bj_2) \text{ from $\bh_i$ is not cut by $S$} \\
	0 & \text{ otherwise} \end{array} \right. ,\]
	we may write:
	\[ \dist(\boldf, g) \geq \frac{1}{2^n} \sum_{i=3N/4+1}^{N} |X_i| \left( \frac{1}{4} + \frac{1}{8}\cdot \bC_i \right)\geq \frac{\gamma}{16} \left( 1 + \frac{1}{2} \cdot \chi(G)\right) + O(1/n), \]
	with probability $1 - \exp\left( -\Omega(N/n^2)\right)$ over the draw of $\bC_i$, since $\Prx[\bC_i = 1] \geq \chi(G)$. Thus, we may union bound over all $2^{n/2}$ subsets $S \subset \oM$ to conclude the claim.
\end{proof}

\begin{figure}\label{fig:crossing-cut-2}
	\centering
	\begin{picture}(400, 160)
	\put(0,0){\includegraphics[width=0.4\linewidth]{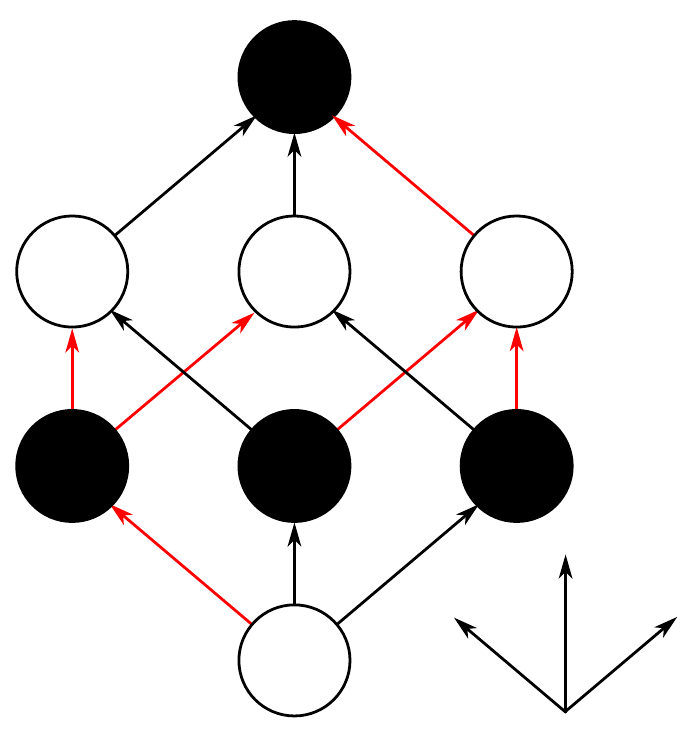}}
	\put(132, 10){$j_1$}
	\put(120, 20){$-$}
	\put(168, 10){$j_2$}
	\put(180, 20){$+$}
	\put(157, 30){$m_1$}
	\put(157, 40){$+$}
	\put(200, 100){$\longrightarrow$}
	\put(250, 0){\includegraphics[width=0.4\linewidth]{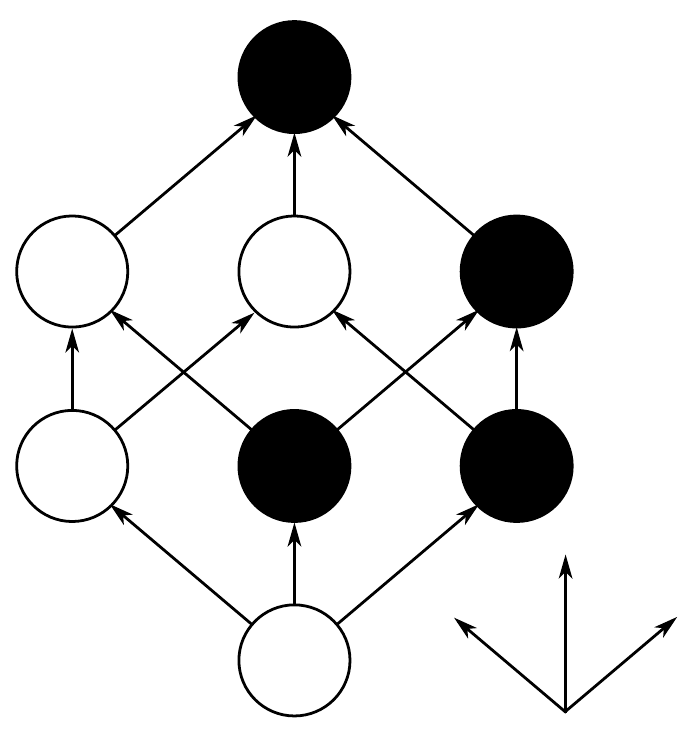}}
	\put(382, 10){$j_1$}
	\put(370, 20){$-$}
	\put(418, 10){$j_2$}
	\put(430, 20){$+$}
	\put(407, 30){$m_1$}
	\put(407, 40){$+$}
	\end{picture}
	\caption{Similarly to Figure~\ref{fig:crossing-cut-1}, this is an example of a function $\bh_i \colon \{0, 1\}^n \to \{0, 1\}$ with $\bh_i(x) = x_{j_1} \oplus x_{j_2} \oplus x_{m_1}$ variables $j_1$ (which ought to be anti-monotone), $j_2$ (which ought to be monotone), and $m_1$ (which is always monotone) being ``fixed'' into a function $\bh_i' \colon \{0, 1\}^n \to \{0, 1\}$ defined on the right-hand side.}
\end{figure}

\begin{figure}\label{fig:crossing-cut-2-2}
	\centering
	\begin{picture}(400, 160)
	\put(0,0){\includegraphics[width=0.4\linewidth]{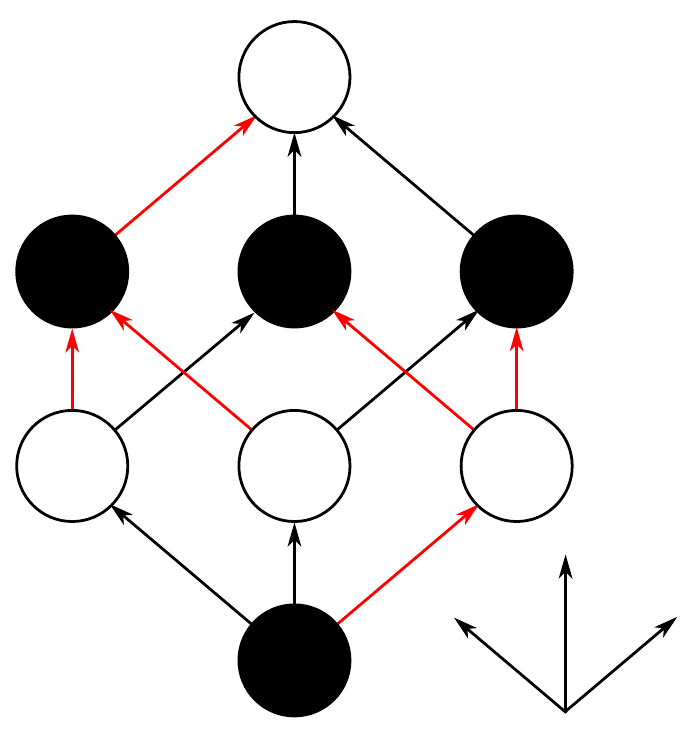}}
	\put(132, 10){$j_1$}
	\put(120, 20){$-$}
	\put(168, 10){$j_2$}
	\put(180, 20){$+$}
	\put(157, 30){$m_2$}
	\put(157, 40){$-$}
	\put(200, 100){$\longrightarrow$}
	\put(250, 0){\includegraphics[width=0.4\linewidth]{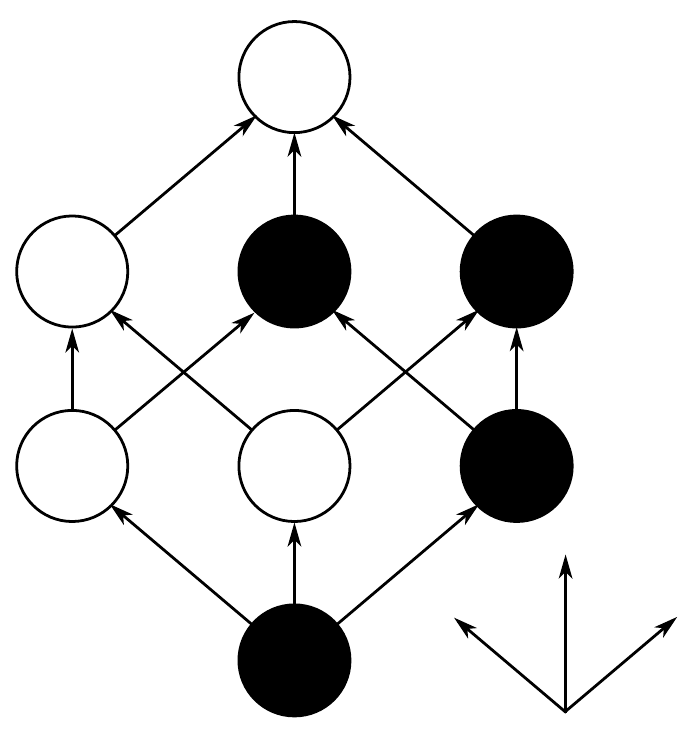}}
	\put(382, 10){$j_1$}
	\put(370, 20){$-$}
	\put(418, 10){$j_2$}
	\put(430, 20){$+$}
	\put(407, 30){$m_2$}
	\put(407, 40){$-$}
	\end{picture}
	\caption{Similarly to Figure~\ref{fig:crossing-cut-1}, this is an example of a function $\bh_i \colon \{0, 1\}^n \to \{0, 1\}$ with $\bh_i(x) = \neg x_{j_1} \oplus x_{j_2} \oplus x_{m_2}$ variables $j_1$ (which ought to be anti-monotone), $j_2$ (which ought to be monotone), and $m_2$ (which is always anti-monotone) being ``fixed'' into a function $\bh_i' \colon \{0, 1\}^n \to \{0, 1\}$ defined on the right-hand side.}
\end{figure}

\begin{figure}\label{fig:not-crossing-cut-2}
	\centering
	\begin{picture}(400, 160)
	\put(0,0){\includegraphics[width=0.4\linewidth]{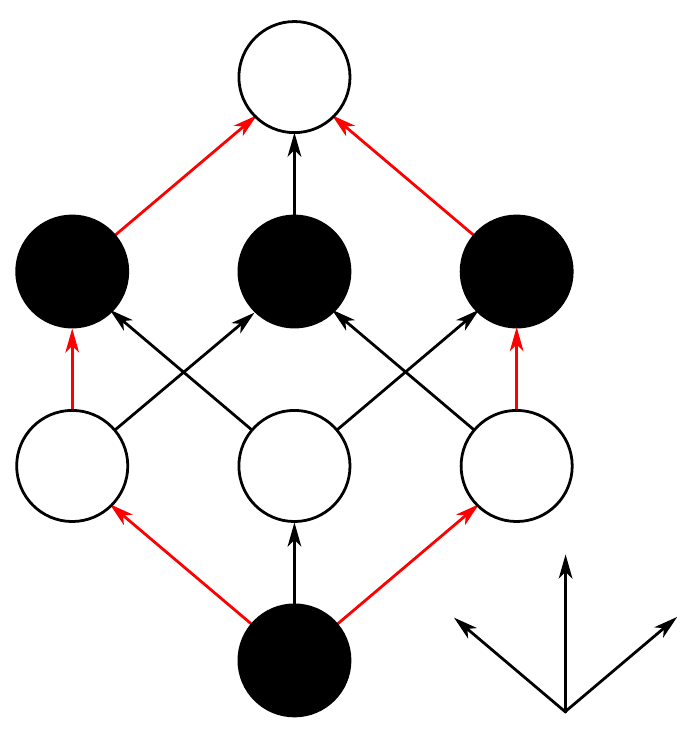}}
	\put(132, 10){$j_1$}
	\put(120, 20){$+$}
	\put(168, 10){$j_2$}
	\put(180, 20){$+$}
	\put(157, 30){$m_2$}
	\put(157, 40){$-$}
	\put(250, 0){\includegraphics[width=0.4\linewidth]{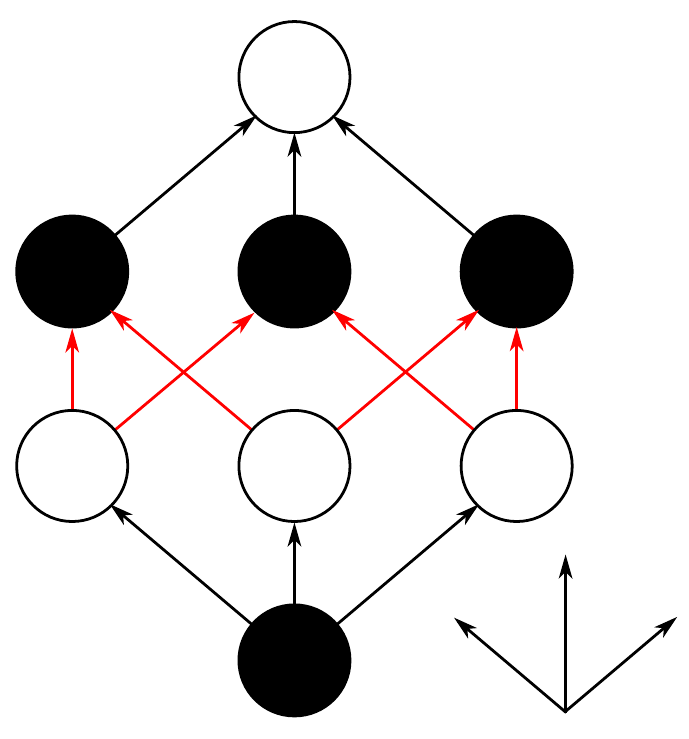}}
	\put(382, 10){$j_1$}
	\put(370, 20){$-$}
	\put(418, 10){$j_2$}
	\put(430, 20){$-$}
	\put(407, 30){$m_2$}
	\put(407, 40){$-$}
	\end{picture}
	\caption{Examples of functions $\bh_i \colon \{0, 1\}^n \to \{0, 1\}$ with orientations on the variables and violating edges. On the left-hand side, $\bh_i(x) = \neg x_{j_1} \oplus x_{j_2} \oplus x_{m_2}$ with variables $j_1$ and $j_2$  (which ought to be monotone), and $m_2$ (which is always anti-monotone). On the right-hand side, $\bh_i(x) = \neg x_{j_1} \oplus x_{j_2} \oplus x_{m_2}$ with variables $j_1$ and $j_2$ (which ought to be anti-monotone), and $m_2$ (which is always anti-monotone). We note that the violating edges form a cycle of length $6$, so any unate function whose orientations on $j_1$ and $j_2$ are as indicated (both monotone on the left-hand side, and both anti-monotone on the right-hand side) must disagree on a $\frac{3}{8}$-fraction of the points.}
\end{figure}

We consider the constants 
\[ \eps_0  = \frac{\gamma}{16} \qquad \text{ and }\qquad \eps_1 = \frac{5\gamma}{64}. \]
\begin{corollary}
	We have that $\boldf \sim \Dyes$ has $\dist(\boldf, \Unate) \leq \eps_0 + o(1)$ with high probability, and $\boldf \sim \Dno$ has $\dist(\boldf, \Unate) \geq \eps_1 - o(1)$ with high probability.
\end{corollary}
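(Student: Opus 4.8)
The plan is to simply invoke the preceding lemma, which for a graph $G$ on the vertices of $\oM$ sandwiches $\dist(\boldf,\Unate)$ between $\frac{\gamma}{16}\bigl(1+\tfrac12\chi(G)\bigr)\pm o(1)$ with probability $1-o(1)$, and then to evaluate $\chi(G)$ for the two specific graphs that arise: $\bG=K_{\bA,\obA}$ in the case of $\Dyes$ and $\bG=K_{\bA}\cup K_{\obA}$ in the case of $\Dno$. Here $\chi(G)=\min_{S\subseteq\oM}\frac{E_G(S,S)+E_G(\oS,\oS)}{E_G(\oM,\oM)}$ is exactly one minus the fractional max-cut value of $G$, so both evaluations are elementary graph calculations.

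For $\boldf\sim\Dyes$ the graph $\bG=K_{\bA,\obA}$ is complete bipartite with parts $\bA,\obA$ of size $\tfrac n4$, so taking $S=\bA$ gives $E_{\bG}(S,S)=E_{\bG}(\oS,\oS)=0$ and hence $\chi(\bG)=0$; plugging into the lemma yields $\dist(\boldf,\Unate)\le\frac{\gamma}{16}+o(1)=\eps_0+o(1)$ with probability $1-o(1)$. For $\boldf\sim\Dno$ the graph $\bG=K_{\bA}\cup K_{\obA}$ is a disjoint union of two cliques on $\tfrac n4$ vertices each; the only edges a set $S$ can cut lie inside one of the two cliques, and for a clique on $t$ vertices with $s$ of them in $S$ the number of cut edges is $s(t-s)\le t^2/4$ while the total number of edges is $\binom t2$, so at least a $1-\frac{t^2/4}{\binom t2}=\tfrac12-O(1/t)$ fraction of that clique's edges is uncut. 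Summing over the two equal-sized cliques gives $\chi(\bG)\ge\tfrac12-O(1/n)$, so the lemma yields $\dist(\boldf,\Unate)\ge\frac{\gamma}{16}\bigl(1+\tfrac14\bigr)-o(1)=\frac{5\gamma}{64}-o(1)=\eps_1-o(1)$ with probability $1-o(1)$.

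I would also remark that Claim~\ref{cl:size-X} guarantees $\gamma=\Omega(1)$, so $\eps_0=\frac{\gamma}{16}$ and $\eps_1=\frac{5\gamma}{64}$ are genuine absolute constants satisfying $0<\eps_0<\eps_1$, and that the bounds of the preceding lemma, together with (\ref{eq:size-X-i}), hold on events of probability $1-o(1)$, so a union bound over these finitely many $o(1)$-probability failure events leaves the claimed conclusions intact. The only step with any content is the lower bound $\chi(K_{\bA}\cup K_{\obA})\ge\tfrac12-o(1)$ — that no cut of a clique removes more than (essentially) half of its edges — but this reduces to a one-variable optimization of $s\mapsto s(t-s)$, so I do not expect a real obstacle; the proof of the corollary is a short corollary-style argument mirroring the analogous corollary in Section~\ref{sec:junta}.
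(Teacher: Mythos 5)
Your proof is correct and takes essentially the same approach as the paper: invoke the preceding lemma relating $\dist(\boldf,\Unate)$ to $\chi(\bG)$, then evaluate $\chi(K_{\bA,\obA})=0$ and $\chi(K_{\bA}\cup K_{\obA})\to\tfrac12$. The paper states the two $\chi$-values without computation; you supply the short optimization over cut sizes that justifies the $\tfrac12-o(1)$ lower bound for the two-clique graph, which is a harmless elaboration of the same argument.
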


\begin{proof}
	We simply note that when $\bG = K_{\bA, \obA}$ (as is the case in $\Dyes$), we have $\chi(\bG) = 0$, and when $\bG = K_{\bA} \cup K_{\obA}$, we have $\chi(\bG) \to \frac{1}{2}$ as $n \to \infty$. 
\end{proof}
\newcommand{\obL}{\overline{\bL}}

\subsection{Reducing from Rejection Sampling}

The goal of this section is to prove the following two lemmas.

\begin{lemma} \label{lem:adaptive-reduct}
	Suppose there exists a deterministic algorithm $\Alg$ making $q$ queries to Boolean functions $f \colon \{0, 1\}^{2n} \to \{0, 1\}$. Then, there exists a deterministic non-adaptive algorithm $\Alg'$ making rejection sampling queries to an $n$-vertex graph with $\cost(\Alg') = q n$ such that:
	\begin{align*}
	\Prx_{\boldf \sim \Dyes}[\Alg(\boldf) \text{ ``accepts''}] &= \Prx_{\bG \sim \calG_2}[\Alg'(\bG) \text{ outputs ``$\calG_2$''}], \qquad\text{and} \\
	\Prx_{\boldf \sim\Dno}[\Alg(\boldf) \text{ ``accepts''}] &= \Prx_{\bG \sim \calG_1}[\Alg'(\bG) \text{ outputs ``$\calG_2$''}].
	\end{align*}
\end{lemma}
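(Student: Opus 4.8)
The idea is that $\Alg'$ will produce the hard function $\boldf$ entirely on its own, using its private randomness for every ingredient that is distributed identically under $\Dyes$ and $\Dno$, and using the rejection sampling oracle only to obtain the graph edges on which the two distributions differ. First I would have $\Alg'$ sample $\bM\subset[2n]$ with $|\bM|=n$, distinct $\bm_1,\bm_2\in\bM$, a term sequence $\bT\sim\calE(\bM\setminus\{\bm_1,\bm_2\})$, and a set $\bA\subset\obM$ — exactly as in the generation of a function from $\Dyes$, which agrees with $\Dno$ on all of these. We identify $\obM$ with $[n]$ (it has $n$ elements), so that the unknown graph $\bG$, which is $K_{\bA,\obA}\sim\calG_2$ when $\boldf\sim\Dyes$ and $K_{\bA}\cup K_{\obA}\sim\calG_1$ when $\boldf\sim\Dno$, is a graph on $\obM$. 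Then $\Alg'$ issues the $q$ non-adaptive queries $\bL_1=\cdots=\bL_q=\obM$; each costs $|\obM|=n$, so $\cost(\Alg')=qn$, and because every query set contains all vertices, each oracle response $\bv_\ell$ is a complete edge of $\bG$, with $\bv_1,\dots,\bv_q$ i.i.d.\ uniform over the edges of $\bG$.

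Next, $\Alg'$ would run $\Alg$ step by step, answering each of its (adaptively chosen) queries $z\in\{0,1\}^{2n}$ as $\boldf(z)$ would be answered. The point is that in (\ref{eq:unate-dist}) every branch other than ``$\bh_{\Gamma_\bT(x)}(x)$'' — the two threshold cases on $|z_{|\bM}|$ and the $0^*/1^*$ cases of $\Gamma_\bT$ — depends only on $\bM$ and $\bT$, which $\Alg'$ already holds, so those answers require no oracle access. When $\Gamma_\bT(z)=i\in[N]$ and $|z_{|\bM}|$ is in the prescribed window, the answer is $\bh_i(z)$, and $\Alg'$ completes the description of $\bh_i$ the first time part $i$ is encountered: if $i\le 3N/4$ it flips a fresh fair coin to choose $\bj\sim\{\bm_1,\bm_2\}$; if $i>3N/4$ it attaches to part $i$ the next as-yet-unused oracle edge $\bv_\ell$ as $(\bj_1,\bj_2)$ and flips a fresh fair coin to choose $\bj_3\sim\{\bm_1,\bm_2\}$. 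With $\bh_i$ thus pinned down, $\Alg'$ returns $\bh_i(z)$; subsequent queries landing in part $i$ reuse the same $\bh_i$. When $\Alg$ halts, $\Alg'$ outputs ``$\calG_2$'' if $\Alg$ accepted and ``$\calG_1$'' otherwise.

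To establish the two stated identities I would condition on $(\bM,\bm_1,\bm_2,\bT,\bA)$, which has the same law in $\Dyes$, in $\Dno$, and in $\Alg'$'s simulation. Conditioned on these, in the $\Dyes$ (resp.\ $\Dno$) world $\bG$ is exactly $\calG_2$ (resp.\ $\calG_1$), the $\bh_i$ are mutually independent, and the only randomness inside each $\bh_i$ is an independent uniform choice from $\{\bm_1,\bm_2\}$ together with, for $i>3N/4$, an independent uniform edge of $\bG$ — which is precisely what the simulation feeds in. Since $\Alg$ makes only $q$ queries it encounters at most $q$ parts with index $>3N/4$, so the $q$ oracle edges never run out and untouched parts are irrelevant. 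Hence the simulated transcript $(\boldf(z_1),\dots,\boldf(z_q))$ under $\bG\sim\calG_2$ has exactly the distribution of the transcript of $\Alg$ run on $\boldf\sim\Dyes$, and under $\bG\sim\calG_1$ the distribution under $\boldf\sim\Dno$; the identities follow because $\Alg'$ outputs ``$\calG_2$'' precisely when $\Alg$ accepts.

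The one genuinely delicate point — the part I would write most carefully — is squaring $\Alg$'s adaptivity with the requirement that $\Alg'$ be non-adaptive: $\Alg'$ does not know in advance which parts $\Alg$ will probe, so it pre-fetches one oracle edge per query and assigns them greedily in order of first encounter. This is legitimate because the responses $\bv_\ell$ are i.i.d.\ and the identity of the $\ell$-th newly-encountered high-index part is a function only of $\Alg$'s transcript up to that moment — hence only of $\bv_1,\dots,\bv_{\ell-1}$ and the coins already flipped — so $\bv_\ell$ remains fresh and uniform given that decision; the standard ``reveal i.i.d.\ variables one at a time'' argument then shows the edges actually attached to touched parts are i.i.d.\ uniform edges of $\bG$. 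Unlike the junta reduction (Lemma~\ref{lem:juntarejustion}), there is no Hamming-closeness bad event to control, precisely because $\Alg'$ queries all of $\obM$ every time and pays $n$ per query.
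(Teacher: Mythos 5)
Your proof is correct and takes essentially the same route as the paper: pre-fetch $q$ i.i.d.\ edges by querying $\obM$ each time (cost $qn$), have $\Alg'$ generate $\bM,\bm_1,\bm_2,\bT,\bA$ from private randomness since their law is shared by $\Dyes$ and $\Dno$, greedily attach a fresh edge (and a fresh coin for $\bj_3$ or $\bj$) to each newly-encountered part while simulating $\Alg$, and invoke deferred decisions to justify the exact distributional match. The only cosmetic difference is that the paper eagerly pre-samples all the $\{\bm_1,\bm_2\}$-coins alongside the oracle queries and tracks assignments with the tuples $p_1,p_2$, whereas you flip those coins lazily at first encounter; you also spell out the exchangeability/adaptivity justification that the paper compresses into a one-line appeal to deferred decisions.
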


\begin{lemma}\label{lem:nonadaptive-reduct-unate}
	Suppose there exists a deterministic non-adaptive algorithm $\Alg$ making $q$ queries to Boolean functions $f\colon \{0, 1\}^{2n} \to \{0, 1\}$ where $q \leq \frac{n^{3/2}}{\log^8 n}$. Then, there exists a deterministic non-adaptive algorithm $\Alg'$ making rejection sampling queries to an $n$-vertex graph such that:
	\begin{align*}
	\Prx_{\boldf \sim \Dyes}[\Alg(\boldf) \text{ ``accepts''}] &\approx \Prx_{\bG \sim \calG_2}[\Alg'(\bG) \text{ outputs ``$\calG_2$''}] \pm o(1), \qquad\text{and} \\
	\Prx_{\boldf \sim\Dno}[\Alg(\boldf) \text{ ``accepts''}] &\approx \Prx_{\bG \sim \calG_1}[\Alg'(\bG) \text{ outputs ``$\calG_2$''}] \pm o(1).
	\end{align*}
	and has $\cost(\Alg') \leq q \sqrt{n} \log n$ with probability $1 - o(1)$ over the randomness in $\Alg'$.   
\end{lemma}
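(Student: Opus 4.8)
The plan is to follow the junta reduction (Lemma~\ref{lem:juntarejustion}) as closely as possible. The two new difficulties are that (i) two queries of $\Alg$ landing in the same part of the multiplexer $\Gamma_{\bT}$ can be $\Theta(\sqrt n)$ apart in Hamming distance rather than $O(\log n)$ apart, and (ii) the unate gadget $\bh_i(x)=x_{\bj_1}\oplus x_{\bj_2}\oplus x_{\bj_3}$ carries no free uniform bit to absorb an unknown global offset. First, $\Alg'$ samples internally all graph-independent randomness of the construction: $\bM$, the indices $\bm_1,\bm_2$, the terms $\bT\sim\calE(\bM\setminus\{\bm_1,\bm_2\})$, the coins selecting $x_{\bm_1}$ versus $\neg x_{\bm_2}$ for the parts $i\le 3N/4$, and, for each of the at most $q$ parts $i>3N/4$ containing a query, the index $\bj_3\in\{\bm_1,\bm_2\}$ and its sign. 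Any query $z_a$ with $|z_{a|\bM}|\notin[\tfrac n4-\sqrt n,\tfrac n4+\sqrt n]$, or with $\Gamma_{\bT}(z_a)\in\{0^*,1^*\}$, or with $\Gamma_{\bT}(z_a)=i\le 3N/4$, is answered directly at no cost. Grouping the remaining queries by part, for $i>3N/4$ write $Q_i=\{z^i_1,\dots,z^i_{m_i}\}$ and let $\bL_i\eqdef\{k\in\obM:\exists a,b\ (z^i_a)_k\ne(z^i_b)_k\}$; then $\Alg'$ makes the rejection-sampling queries $\bL_1,\dots,\bL_t$.

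\emph{Cost.} I would establish the structural claim that with probability $1-o(1)$ over $\bM,\bT$, any two queries landing in a common part $i\in[N]$ have Hamming distance at most $C\sqrt n\log n$. Fix $z_a,z_b$ with $\|z_a-z_b\|_1=\delta$. Over the random split $\bM$, hypergeometric concentration (Theorem~\ref{thm:HG-Chernoff}) gives that $z_a,z_b$ disagree on $\Omega(\delta)$ coordinates of $\bM$ except with probability $n^{-\omega(1)}$; conditioned on that, a single random $\sqrt n$-element term is satisfied by both $z_a$ and $z_b$ with probability at most $(\tfrac12-\Omega(\delta/n))^{\sqrt n}$, so some of the $N=2^{\sqrt n}$ terms is satisfied by both with probability at most $(1-\Omega(\delta/n))^{\sqrt n}=n^{-\Omega(1)}$, which is $o(1/q^2)$ once $\delta\ge C\sqrt n\log n$ for a large enough constant $C$ (here we use $q\le n^{3/2}/\log^8 n$). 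A union bound over the $\binom q2$ pairs finishes the claim, and then $|\bL_i|\le\sum_{a=2}^{m_i}\|z^i_1-z^i_a\|_1\le(m_i-1)C\sqrt n\log n$, so $\cost(\Alg')=\sum_i|\bL_i|\le Cq\sqrt n\log n$ with probability $1-o(1)$.

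\emph{Correctness.} Exactly as in Lemma~\ref{lem:juntarejustion}, from $\bv_i=\{\bj_1,\bj_2\}\cap\bL_i$ one reconstructs, for each $z\in Q_i$, the bit $z_{\bj_1}\oplus z_{\bj_2}$: if an endpoint lies in $\bL_i$ its identity is revealed, and if an endpoint lies outside $\bL_i$ then all of $Q_i$ shares a common value there. Combining with the internally chosen $\bj_3$, its sign, and the known values $z_{\bm_1},z_{\bm_2}$ yields $\boldr^i_z$. The genuinely new point is that, unlike the junta gadget, when $\bv_i$ reveals neither endpoint the reconstructed values $\{\bh_i(z):z\in Q_i\}$ are pinned down only up to a single global offset $\mathbf C=z_{\bj_1}\oplus z_{\bj_2}$ whose conditional law given $\bv_i$ and $\bG$ genuinely differs between $\calG_1$ and $\calG_2$. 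Here I would have $\Alg'$ draw $\mathbf C$ from a proxy law depending only on the queries — for instance $\Ber(2p_i(1-p_i))$, where $p_i$ is the fraction of the non-revealed $\obM$-coordinates on which $Q_i$'s common value is $1$ — and argue that, since the hidden bipartition $\bA$ of $\obM$ splits those coordinates almost evenly, this proxy matches the true law of $\mathbf C$ up to $o(1/n)$ per part under either graph distribution. Controlling the accumulation of these per-part errors over the joint randomness of the oracle is the main obstacle, and it is precisely here that the reduction is not black-box: I would invoke the lemma of Section~\ref{sec:LowerBound} bounding the statistical behaviour of the rejection-sampling oracle — namely that at cost $\widetilde o(n^2)$ the transcript of oracle responses, even augmented with the derived offset bits, has total-variation distance $o(1)$ between $\calG_1$ and $\calG_2$ — to conclude that $(\boldr_1,\dots,\boldr_q)$ is $o(1)$-close in distribution to $(f_{\bM,\bA,\bH}(z_1),\dots,f_{\bM,\bA,\bH}(z_q))$ for $f_{\bM,\bA,\bH}\sim\Dno$ when $\bG\sim\calG_1$, and for $f_{\bM,\bA,\bH}\sim\Dyes$ when $\bG\sim\calG_2$. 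Having $\Alg'$ output ``$\calG_2$'' exactly when $\Alg$ accepts then gives the two stated approximate identities.
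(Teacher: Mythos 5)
You have the right skeleton — sample $\bM,\bm_1,\bm_2,\bT$ internally, group queries by multiplexer part, make rejection-sampling queries on the explored coordinates $\bL_i$, and accept that the simulation cannot be exact because of an unrevealed offset, so that the reduction must lean on an estimate from Section~\ref{sec:LowerBound}. Your cost analysis via hypergeometric concentration and a union bound over the $2^{\sqrt n}$ terms is essentially the paper's Lemma~\ref{lem:cost-blowup}, and your proxy Bernoulli $\Ber(2p_i(1-p_i))$ for the empty-set response is exactly the paper's $p_\emptyset(\obL_i)=2|\obL_i^{(0)}||\obL_i^{(1)}|/|\obL_i|^2$.

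But there is a concrete gap: you treat only the case where the oracle returns $\emptyset$, and implicitly assume that a lone-vertex response (``$\bv_i=\{j_2\}$'') lets you reconstruct $\bh_i(z)$ exactly. It does not. The other endpoint $\bj_1$ is still hidden, all of $\bQ_{\ell_i}$ agree on $z_{\bj_1}$, and that common bit is unknown to $\Alg'$; its conditional law depends both on which side of $\bA$ the revealed $j_2$ lies and on whether $\bG\sim\calG_1$ or $\calG_2$. The paper handles this with a second proxy, $\Ber(p_v(\obL_i^{(w)}))$ where $p_v(\obL_i^{(w)})=|\obL_i^{(w)}|/|\obL_i|$ and $w=\neg(z_1^{(i)})_{j_2}$, and the resulting per-part error is $O(\log n/\sqrt n)$ — qualitatively larger than the $O(\log^2 n/n)$ error of the empty case. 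This is fatal if you allow up to $q\le n^{3/2}/\log^8 n$ lone-vertex parts: even a sum-of-squares (Roos-type) bound gives $\Theta(\sqrt q\cdot\log n/\sqrt n)=\Theta(n^{1/4}/\log^3 n)\to\infty$. The lemma actually imported from Section~\ref{sec:LowerBound} is Lemma~\ref{lem:few-lone} (a restatement of the $\calbE_F$ event), which says that an algorithm of cost at most $n^2/\log^6 n$ observes at most $n/\log^4 n$ non-$\emptyset$ responses with probability $1-o(1)$; only with this cap does the accumulated lone-vertex error become $o(1)$. Your statement of the borrowed lemma — ``the transcript, even augmented with the derived offset bits, has TV distance $o(1)$ between $\calG_1$ and $\calG_2$'' — is both not what that lemma asserts and, if it were available, would already be the whole bipartiteness lower bound, making the entire reduction redundant.

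Two smaller points. First, even in the empty-set case the per-part errors of size $O(\log^2 n/n)$ cannot be summed directly over $q$ parts (you would get $\tilde\Theta(\sqrt n)$); one needs the Poisson-approximation bound of \cite{Roos01} (roughly, TV is governed by $\sqrt{\sum\gamma_i^2/\tau_i(1-\tau_i)}$), which gives $o(1)$ because $\tau_i=\Omega(1)$ thanks to $|\obL_i^{(0)}|,|\obL_i^{(1)}|=\Omega(n)$ from Lemma~\ref{lem:good-split}. You acknowledge the accumulation issue but do not identify this fix. Second, the paper's $\Alg'$ queries $\obM$ outright whenever $|\bL_i|>n/\log n$; this guarantees an exact edge there and is what lets Lemma~\ref{lem:good-split} restrict attention to $|\bL_i|\le n/\log n$ and conclude $|\obL_i^{(b)}|=\Omega(n)$, which your proxy-law error estimate implicitly requires.
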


Combining Lemma~\ref{lem:adaptive-reduct} with Theorem~\ref{thm:bipartite}, we conclude Theorem~\ref{thm:unate}, and combining Lemma~\ref{lem:nonadaptive-reduct-unate} with Theorem~\ref{thm:bipartite}, we conclude Theorem~\ref{thm:unate-non}.

\subsection{Proof of Lemma~\ref{lem:adaptive-reduct}}

Consider an algorithm $\Alg$ making $q$ queries to a Boolean function which receives access to a Boolean function $\boldf = f_{\bT, \bA, \bH} \colon \{0, 1\}^{2n} \to \{0, 1\}$ (sampled from either $\Dyes$ or $\Dno$).

Since the values of $\bM, \bm_1, \bm_2$ and $\bT$ are distributed in the same way in $\Dyes$ and $\Dno$, a rejection sampling algorithm may generate $\bM, \bm_1, \bm_2$ and $\bT$, and utilize the randomness from rejection sampling to output values of $\bH$. In particular, for each query in $\Alg$, we will query the set $[n]$ in the rejection sampling algorithm. Then, given the edges sampled, as well as the values of $\bM$, $\bm_1$, $\bm_2$ and $\bT$, we will be able to simulate all the randomness in the construction of $\Dyes$ and $\Dno$. We give a formal description of a rejection sampling algorithm $\Alg'$ which assumes access to an algorithm $\Alg$ testing Boolean functions.

\begin{enumerate}
	\item We first sample $\bM \subset [2n]$ of size $n$, and let $\bm_1, \bm_2 \sim \bM$ be two distinct indices. Sample $\bT \sim \calE(\bM \setminus \{ \bm_1, \bm_2\})$. We may now view the hidden graph $\bG$ (from rejection sampling) as a graph on vertex set $\obM$.
	\item For each $t \in [q]$, perform the query $L_t = \obM$, which returns $(j_1^{(t)}, j_2^{(t)}) \in \bG$, we sample $\bj_3^{(t)} \sim \{ \bm_1, \bm_2 \}$ and $\bj^{(t)} \sim \{ \bm_1 , \bm_2 \}$. Intuitively, the values of $(j_1^{(t)}, j_2^{(t)}, \bj_{3}^{(t)})$ will generate the $t$-th accessed subfunction $\bh_i$ with $\Gamma_{\bT}(x) > 3N/4$, and $\bj^{(t)}$ will generate the $t$-th accessed subfunction $\bh_i$ with $\Gamma_{\bT}(x) \leq 3N/4$.
	\item We simulate $\Alg$ by maintaining two $q$-tuples $p_1, p_2 \in (\{0 \} \cup  [N])^q$, which is initially $p_1 = p_2 = (0, 0, \dots 0)$ which will record the indices of the subfunctions accessed. We proceed as follows, where we assume that $\Alg$ makes the query $z \in \{0, 1\}^{2n}$:
	\begin{itemize}
		\item Suppose $|z_{|\bM}| > \frac{n}{2} + \sqrt{2n}$, $|z_{|\bM}| < \frac{n}{2} - \sqrt{2n}$, $\Gamma_\bT(z) = 1^*$, or $\Gamma_\bT(z) = 0^*$, report to $\Alg$ the appropriate value of $\boldf(x)$.
		\item Otherwise, consider $\Gamma_{\bT}(z) = i \in [N]$. 
		\begin{itemize}
			\item Suppose $i \leq \frac{3N}{4}$ and $(p_1)_t = i$ (if $(p_1)_t \neq i$ for all $t$, then find the first $t \in [q]$ with $(p_1)_t = 0$ and write $(p_1)_t = i$). In this case, report $z_{\bj^{(t)}}$ if $\bj^{(t)} = \bm_1$ and $\neg z_{\bj^{(t)}}$ if $\bj^{(t)} = \bm_2$.
			\item If $i > \frac{3N}{4}$ and $(p_2)_t = i$ (again, if $(p_2)_t \neq i$ for all $t$, then find the first $t \in [q]$ with $(p_2)_t = 0$ and write $(p_2)_t = i$). In this case, we report $\neg z_{j_1^{(t)}} \oplus z_{j_2^{(t)}} \oplus z_{\bj_{3}^{(t)}}$ if $\bj_3^{(t)} = \bm_1$ and $z_{j_1^{(t)}} \oplus z_{j_2^{(t)}} \oplus z_{\bj_{3}^{(t)}}$ if $\bj_3^{(t)} = \bm_2$.
		\end{itemize}
	\end{itemize}
	\item If $\Alg$ outputs ``accept'', then $\Alg'$ outputs ``$\calG_2$'', if $\Alg$ outputs ``reject'', then $\Alg'$ outputs ``$\calG_1$''.
\end{enumerate}

Clearly, $\cost(\Alg') = q n$. In addition, we may view $\Alg'(\bG)$ as generating the necessary randomness for answering queries $\boldf(x)$ on the go, where $\bG$ will determine whether $\boldf \sim \Dyes$ or $\boldf \sim \Dno$. When $\bG = K_{\bA, \obA}$ (in the case $\bG \sim \calG_2$, the resulting function $\boldf$ is distributed as a function drawn from $\Dyes$; when $\bG = K_{\bA} \cup K_{\obA}$ (in the case $\bG \sim \calG_1$), the resulting function $\boldf$ is distributed as a function drawn from $\Dno$. Therefore, by the principle of deferred decisions, we have that $\Alg'(\bG)$ perfectly simulates queries to a Boolean function $\boldf \sim \Dyes$ (if $\bG \sim \calG_2$) or $\boldf \sim \Dno$ (if $\bG \sim \calG_1$). We conclude that
\begin{align*} 
\Prx_{\bG \sim \calG_1}[\Alg'(\bG) \text{ outputs ``$\calG_2$''} ] &= \Prx_{\boldf \sim \Dyes}[\Alg(\boldf) \text{ ``accepts''}],  \qquad\text{ and} \\
\Prx_{\bG \sim \calG_2}[\Alg'(\bG) \text{ outputs ``$\calG_2$''} ] &= \Prx_{\boldf \sim \Dno}[\Alg(\boldf) \text{ ``accepts''}].  
\end{align*}

\begin{remark}
	A close inspection of the proof of Lemma~\ref{lem:adaptive-reduct} reveals that the rejection sampling algorithm distinguishing $\calG_1$ and $\calG_2$ always makes queries $L_i = [n]$. This makes the lower bound simpler, as we can focus on proving lower bounds against algorithms which receive random edge samples. 
\end{remark}

\subsection{Proof of Lemma~\ref{lem:nonadaptive-reduct-unate}}

Similarly to the proof of Lemma~\ref{lem:adaptive-reduct}, we will proceed by generating the necessary randomness to generate the functions $\boldf$ from $\Dyes$ or from $\Dno$. However, unlike Lemma~\ref{lem:adaptive-reduct}, this will not be a black box reduction, since we will not be able to simulate $\boldf$ exactly.

Consider a deterministic non-adaptive algorithm $\Alg$ which makes queries to a Boolean function $\boldf \colon \{0, 1\}^{2n} \to \{0, 1\}$ sampled from $\Dyes$ or $\Dno$ and outputs ``accept'' if $\Alg$ believes $\boldf$ was sampled from $\Dyes$, and outputs ``reject'' if $\Alg$ believes $\boldf$ was sampled from $\Dno$. Since $\Alg$ is non-adaptive and deterministic, all queries are determined, so consider the queries $z_1, \dots, z_q \in \{0, 1\}^{2n}$, and let $\Alg \colon \{0, 1\}^{q} \to \{\text{``accept''}, \text{``reject''} \}$ be a function. 

We will now define a non-adaptive algorithm $\Alg'$ which makes rejection sampling queries to an unknown graph $\bG$ on $n$ vertices sampled from $\calG_1$ or from $\calG_2$. The algorithm $\Alg'$ proceeds as follows:
\begin{enumerate}
	\item Using some randomness and answers from rejection sampling queries to an unknown graph $\bG$, we generate a sequence of $r$ bits $(\boldr_1, \dots, \boldr_q)$ satisfying the following two conditions (we give the procedure to generate these random bits after)\footnote{With a slight abuse of notation, we let $\Alg'(\bG)$ correspond to to the output $(\boldr_1, \dots, \boldr_q)$ that $\Alg'$ produces with rejection sampling access to graph $\bG$.}:
	\begin{itemize}
		\item If $\bG \sim \calG_1$, then $(\boldr_1, \dots, \boldr_q)$ will be roughly distributed as $(\boldf(z_1), \dots, \boldf(z_q))$ where $\boldf$ is a Boolean function $\boldf \sim \Dno$.
		\item If $\bG \sim \calG_2$, then $(\boldr_1, \dots, \boldr_q)$ will be roughly distributed as a $(\boldf(z_1), \dots, \boldf(z_q))$ where $\boldf$ is a Boolean function $\boldf \sim \Dyes$. 
	\end{itemize}
	\item Finally, if $\Alg(\boldr_1, \dots, \boldr_q)$ outputs ``accept'', then $\Alg'$ outputs ``$\calG_2$'', and if $\Alg(\boldr_1, \dots, \boldr_q)$ outputs ``reject'', then $\Alg'$ outputs ``$\calG_1$''.
\end{enumerate}

In order to formalize the notion of ``roughly distributed as'' from above, let $\calV_{\text{yes}}$ and $\calV_{\text{no}}$ be the distributions supported on $\{0,1\}^{q}$ given by:
\begin{align*} 
\boldr &\sim \calV_{\text{yes}} \qquad\text{where}\qquad \forall i \in [q], \boldr_i = \boldf(z_i), \text{ and } \boldf \sim \Dyes.  \\
\boldr &\sim \calV_{\text{no}} \qquad \text{where}\qquad \forall i\in[q], \boldr_i = \boldf(z_i), \text{ and } \boldf \sim \Dno.
\end{align*}
Now, given the algorithm $\Alg'$, we let $\calU_{\text{yes}}, \calU_{\text{no}}$ be the distributions supported in $\{0, 1\}^{q}$ given by:
\begin{align*}
\boldr &\sim \calU_{\text{yes}} \qquad\text{where}\qquad \Alg'(\bG) \text{ outputs } (\boldr_1, \dots, \boldr_q) \text{ when } \bG \sim \calG_2 \\
\boldr &\sim \calU_{\text{no}} \qquad\text{where}\qquad \Alg'(\bG) \text{ outputs } (\boldr_1, \dots, \boldr_q) \text{ when } \bG \sim \calG_1
\end{align*}

The following lemma is a simple consequence will allow us to conclude Lemma~\ref{lem:nonadaptive-reduct-unate}.
\begin{lemma}\label{lem:distance-uv}
	Suppose $\calV_{\text{yes}}, \calV_{\text{no}}, \calU_{\text{yes}}$ and $\calU_{\text{no}}$ satisfy:
	\[ d_{TV}(\calV_{\text{yes}}, \calU_{\text{yes}}) = o(1) \qquad\text{and}\qquad d_{TV}(\calV_{\text{no}}, \calU_{\text{no}}) = o(1). \]
	Then, we have that:
	\begin{align*}
	\Prx_{\bG \sim\calG_1}[\Alg'(\bG) \text{ outputs ``$\calG_1$''}] &\approx \Prx_{\boldf \sim \Dno}[\Alg(\boldf) \text{ ``rejects''}] \pm o(1). \\
	\Prx_{\bG \sim \calG_2}[\Alg'(\bG) \text{ outputs ``$\calG_2$''} ] &\approx \Prx_{\boldf \sim \Dyes}[\Alg(\boldf) \text{ ``accepts''}] \pm o(1).
	\end{align*}
\end{lemma}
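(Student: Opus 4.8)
\textbf{Proof proposal for Lemma~\ref{lem:distance-uv}.}
The plan is to reduce everything to the elementary fact that two distributions which are $o(1)$-close in total variation distance assign probabilities differing by at most $o(1)$ to every event. The first step is to record that $\Alg$ is a \emph{deterministic} non-adaptive algorithm, so after the queries $z_1,\dots,z_q$ are fixed its output is a function $\Alg\colon\{0,1\}^q\to\{\text{``accept''},\text{``reject''}\}$ of the answer vector alone. Let $S_{\mathrm{acc}}=\Alg^{-1}(\text{``accept''})\subseteq\{0,1\}^q$ be the set of answer vectors on which $\Alg$ accepts, and let $S_{\mathrm{rej}}=\{0,1\}^q\setminus S_{\mathrm{acc}}$. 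By the definitions of $\calV_{\text{yes}},\calV_{\text{no}}$ and of $\calU_{\text{yes}},\calU_{\text{no}}$ we may rewrite the four quantities of interest purely as event probabilities under these distributions: $\Prx_{\boldf\sim\Dyes}[\Alg(\boldf)\text{ ``accepts''}]=\Prx_{\boldr\sim\calV_{\text{yes}}}[\boldr\in S_{\mathrm{acc}}]$, $\Prx_{\bG\sim\calG_2}[\Alg'(\bG)\text{ outputs ``}\calG_2\text{''}]=\Prx_{\boldr\sim\calU_{\text{yes}}}[\boldr\in S_{\mathrm{acc}}]$, and similarly $\Prx_{\boldf\sim\Dno}[\Alg(\boldf)\text{ ``rejects''}]=\Prx_{\boldr\sim\calV_{\text{no}}}[\boldr\in S_{\mathrm{rej}}]$ and $\Prx_{\bG\sim\calG_1}[\Alg'(\bG)\text{ outputs ``}\calG_1\text{''}]=\Prx_{\boldr\sim\calU_{\text{no}}}[\boldr\in S_{\mathrm{rej}}]$. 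Here I am using the correspondence baked into the definitions of $\calU_{\text{yes}},\calU_{\text{no}}$, namely $\calG_2\leftrightarrow\Dyes$ and $\calG_1\leftrightarrow\Dno$.

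The second step is a one-line application of the defining property of total variation distance: for any distributions $\mu,\nu$ on $\{0,1\}^q$ and any set $E\subseteq\{0,1\}^q$ one has $\bigl|\mu(E)-\nu(E)\bigr|\le d_{TV}(\mu,\nu)$. Applying this with $(\mu,\nu,E)=(\calV_{\text{yes}},\calU_{\text{yes}},S_{\mathrm{acc}})$ and the hypothesis $d_{TV}(\calV_{\text{yes}},\calU_{\text{yes}})=o(1)$ gives the first displayed approximate equality (after identifying both sides via the first step); applying it with $(\mu,\nu,E)=(\calV_{\text{no}},\calU_{\text{no}},S_{\mathrm{rej}})$ and $d_{TV}(\calV_{\text{no}},\calU_{\text{no}})=o(1)$ gives the second. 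The only minor bookkeeping point is to remember that ``rejects'' and ``outputs $\calG_1$'' correspond to the \emph{complement} event $S_{\mathrm{rej}}$, but this causes no difficulty since total variation distance controls the discrepancy on $S_{\mathrm{rej}}$ exactly as on $S_{\mathrm{acc}}$ (indeed $\mu(S_{\mathrm{rej}})-\nu(S_{\mathrm{rej}})=-(\mu(S_{\mathrm{acc}})-\nu(S_{\mathrm{acc}}))$).

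There is essentially no obstacle in this lemma itself: all of the substance lies in constructing the algorithm $\Alg'$ and in verifying the two total-variation hypotheses $d_{TV}(\calV_{\text{yes}},\calU_{\text{yes}})=o(1)$ and $d_{TV}(\calV_{\text{no}},\calU_{\text{no}})=o(1)$, which is the work carried out in the remainder of the proof of Lemma~\ref{lem:nonadaptive-reduct-unate}. Lemma~\ref{lem:distance-uv} is simply the observation that once those closeness bounds are in hand, the distinguishing advantage of $\Alg$ transfers to $\Alg'$ up to an additive $o(1)$, which suffices because the final contradiction with Theorem~\ref{thm:bipartite} only needs a constant gap.
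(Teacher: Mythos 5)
Your proof is correct and follows essentially the same route as the paper's: both reduce the claim to the fact that $d_{TV}$ bounds the discrepancy on the event ``$\Alg$ rejects'' (resp.\ accepts), viewing the output of $\Alg$ as a deterministic function of the answer vector in $\{0,1\}^q$. The only difference is cosmetic: you treat both displayed approximations explicitly, while the paper does one and notes the other is symmetric.
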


\begin{proof}
	We show the first inequality in the conclusion, as the argument is the same for the second inequality. Consider the set $R = \{ r \in \{0,1\}^{q} : \Alg(r) = \text{``reject''} \}$. Then, we have:
	\begin{align*}
	\Prx_{\boldf \sim \Dno}[\Alg(\boldf) \text{ ``rejects''}] &= \Prx_{\boldr \sim \calV_{\text{no}}}[ \boldr \in R] \\
	&\approx \Prx_{\boldr \sim \calU_{\text{no}}}[ \boldr \in R] \pm o(1) \\
	&\approx \Prx_{\bG\sim\calG_1}[\Alg'(\bG) \text{ outputs ``accept''}] \pm o(1).
	\end{align*}
\end{proof}

Given Lemma~\ref{lem:distance-uv}, it remains to describe the randomized procedure $\Alg'$ which given rejection sampling access to an unknown $n$-vertex graph $\bG$ from $\calG_1$ or $\calG_2$ outputs a bit-string of length $q$ such that:
\[ d_{TV}(\calV_{\text{yes}}, \calU_{\text{yes}}) = o(1) \qquad\text{and}\qquad d_{TV}(\calV_{\text{no}}, \calU_{\text{no}}) = o(1). \]
The procedure will work as follows:
\begin{enumerate}
	\item First, sample a random subset $\bM \subset [2n]$ of size $n$, and let $\bm_1, \bm_2 \sim \bM$ be two distinct random indices, and let $\bT \sim \calE(\bM \setminus\{\bm_1,\bm_2\})$. This defines an indexing function\footnote{Note that now, $N = 2^{\sqrt{2n}}$ since we are considering Boolean functions with $2n$ variables.} $\Gamma_{\bT} \colon \{0, 1\}^{2n} \to [N]$. We may view the unknown graph $\bG$ as being defined over vertices in $\obM$ \footnote{We may assume this by picking an arbitrary mapping of the indices in $\obM$ to $[n]$.}.
	\item We now consider partitioning the queries $z_1, \dots, z_q \in \{0, 1\}^{2n}$ into at most $t+4$ sets (where we will have $t \leq q$) $\bQ_{\bM}^{(+)}, \bQ_{\bM}^{(-)}, \bQ_{*}^{(0)}, \bQ_{*}{(1)}$ and $\bQ_{\ell_1}, \dots, \bQ_{\ell_t}$ non-empty sets where $\ell_{1}, \dots, \ell_{t} \subset [N]$:
	\begin{align*}
	\bQ_{\bM}^{(-)} &= \left\{ z_i : |(z_i)_{|\bM}| < \frac{n}{2} - \sqrt{2n} \right\}, \\\bQ_{\bM}^{(+)} &= \left\{z_i : |(z_i)_{|\bM}| > \frac{n}{2} + \sqrt{2n} \right\}, \\
	\bQ_{*}^{(0)} &= \left\{ z_i : \Gamma_{\bT}(z_i) = 0^* \wedge z_i \notin \bQ_{\bM}^{(-)} \cup \bQ_{\bM}^{(+)} \right\}, \\
	\bQ_{*}^{(1)} &= \left\{ z_i : \Gamma_{\bT}(z_i) = 1^* \wedge z_i \notin \bQ_{\bM}^{(-)} \cup \bQ_{\bM}^{(+)} \right\}, \\
	\bQ_{\ell} &= \left\{ z_i : \Gamma_{\bT}(z_i) = \ell \wedge z_i \notin \bQ_{\bM}^{(-)} \cup \bQ_{\bM}^{(+)} \right\}.
	\end{align*}
	\item If $z_i \in \bQ_M^{(-)}$, we let $\boldr_i = 0$, if $z_i \in \bQ_{M}^{(+)}$, we let $\boldr_i = 1$. If $z_i \in \bQ_{*}^{(0)}$, we let $\boldr_i = 0$, and if $z_i \in \bQ_{*}^{(1)}$, we let $\boldr_i = 1$. We may thus only consider the queries in $\bQ_{\ell_1}, \dots, \bQ_{\ell_t}$, and for simplicity in the notation, we re-index the queries to let:
	\[ \bQ_{\ell_i} = \left\{ z^{(i)}_{1}, z_{2}^{(i)}, \dots, z_{|\bQ_{\ell_i}|}^{(i)} \right\}  \]
	for each $i \in [t]$, and the corresponding bits $\boldr_{1}^{(i)}, \boldr_{2}^{(i)}, \dots, \boldr_{|\bQ_{\ell_i}|}^{(i)}$.
	\item We thus consider each $i \in [t]$ and independently set the values of $\boldr_{1}^{(i)}, \dots, \boldr_{|\bQ_{\ell_i}|}^{(i)}$ as follows:
	\begin{enumerate}
		\item If $\ell_i \leq 3N/4$, sample some $\bj \sim \{\bm_1, \bm_2\}$, and for every $\alpha \in [|\bQ_{\ell_i}|]$, let:
		\[ \boldr_{\alpha}^{(i)} = \left\{ \begin{array}{cc} (z_{\alpha}^{(i)})_{\bj} & \bj = \bm_1 \\
		\neg (z_{\alpha}^{(i)})_{\bj} & \bj = \bm_2 \end{array} \right. .\]
		\item Otherwise, if $\ell_i > 3N/4$, consider the following sets
		\[ \bL_i = \left\{ k \in \obM : \exists \alpha, \beta \in [|\bQ_{\ell_i}|], (z_{\alpha}^{(i)})_{k} \neq (z_{\beta}^{(i)})_{k} \right\}, \]
		and,
		\[ \obL_i^{(0)} = \left\{ k \in \obM \setminus \bL_i : z \in \bQ_{\ell_i}, z_{k} = 0 \right\}\qquad \obL_{i}^{(1)} = \left\{ k \in \obM \setminus \bL_i : z \in \bQ_{\ell_i}, z_k = 1 \right\} \;.\]
		
		We make the query $\bL_i$ if $|\bL_i| \leq \frac{n}{\log n}$ and $\obM$ otherwise to the rejection sampling oracle and obtain a response $\bv \in (\obM \times \obM) \cup \obM \cup \{\emptyset\}$. In addition, sample $\bj_3 \sim \{ \bm_1, \bm_2 \}$. We now consider three cases:
		\begin{enumerate}
			\item If $\bv = (\bj_1, \bj_2) \in \obM \times \obM$ is an edge, then for each $\alpha \in [|\bQ_{\ell_i}|]$, we let:
			\[ \boldr_{\alpha}^{(i)} = \left\{ \begin{array}{cc} (z_{\alpha}^{(i)})_{\bj_1} \oplus (z_{\alpha}^{(i)})_{\bj_2} \oplus (z_{\alpha}^{(i)})_{\bj_3} & \bj_3 = \bm_1 \\
			\neg (z_{\alpha}^{(i)})_{\bj_1} \oplus (z_{\alpha}^{(i)})_{\bj_2} \oplus (z_{\alpha}^{(i)})_{\bj_3} & \bj_3 = \bm_2 \end{array} \right. .\]
			\item If $\bv = \bj_2 \in \obM$ is a lone vertex, then let $w = \neg (z_1^{(i)})_{\bj_2}$ and $p_v(\obL_{i}^{(w)}) = \frac{|\obL_i^{(w)}|}{|\obL_i|}$, we sample $\bb \sim \Ber(p_v(\obL_i^{(w)}))$ and for each $\alpha \in [|\bQ_{\ell_i}|]$, we let:
			\[ \boldr_{\alpha}^{(i)} \oplus (z_{\alpha}^{(i)})_{\bj_2} \oplus (z_{\alpha}^{(i)})_{\bj_3} = \left\{ \begin{array}{cc} \bb \oplus (z_{1}^{(i)})_{\bj_2} & \bj_3 = \bm_1 \\
			\neg \bb \oplus (z_1^{(i)})_{\bj_2} & \bj_3 = \bm_2 \end{array} \right. .\]
			\item Lastly, if $\bv = \emptyset$ is the empty set, then let $p_{\emptyset}(\obL_i) = \frac{2|\obL_i^{(0)}| |\obL_i^{(1)}|}{|\obL_i|^2}$ and sample $\bb \sim \Ber(p(\obL_i))$ and for each $\alpha \in [|\bQ_{\ell_i}|]$, we let:
			\[ \boldr_{\alpha}^{(i)} \oplus (z_{\alpha}^{(i)})_{\bj_3} = \left\{ \begin{array}{cc} \bb & \bj_3 = \bm_1 \\
			\neg \bb & \bj_3 = \bm_2 \end{array} \right. .\]
		\end{enumerate}
	\end{enumerate} 
\end{enumerate}

\begin{remark}
	The procedure described above does not exactly simulate queries to a $\boldf \sim \Dyes$ or $\Dno$ (in the case of $\bG \sim \calG_2$ or $\bG \sim \calG_1$, respectively) as in the reductions of Lemma~\ref{lem:adaptive-reduct} and Lemma~\ref{lem:juntarejustion}). Let us briefly explain why this happens by giving an illuminating example. Consider a one-query algorithm which makes query $z \in \{0, 1\}^n$ and suppose $|z_{\bM}| \approx \frac{n}{2} \pm \sqrt{2n}$ and $\Gamma_{\bT}(z) = i > \frac{3N}{4}$ with $z_{m_1} = 0$ and $z_{m_2} = 1$. Then, the value $\boldf(z) = \bh_i(z)$ will be $0$ if $z_{\bj_1} = z_{\bj_2}$, and $1$ if $z_{\bj_1} \neq z_{\bj_2}$, where $(\bj_1, \bj_2) \sim \bG$ is the edge sampled for subfunction $\bh_i$.
	
	We note that this probability is slightly different for $\bG \sim \calG_1$ and $\bG \sim \calG_2$ and depends on how $\bA$ partitions the $0$-variables and $1$-variables of $z$. Despite this difference, $\Alg'$ always observes $\emptyset$ from the rejection sampling oracle, so the output bit $\boldr \in \{0, 1\}$ which $\Alg'$ produces will not simulate $\boldf(z)$ exactly. The bulk of the argument shows that $\Alg'$ can sample a random bit whose distribution is close to $\boldf(z)$ in total variation distance, so that $\Alg$ cannot exploit the fact that the simulation is not exact.
\end{remark}

We first note the following lemma.
\begin{lemma}\label{lem:cost-blowup}
	With probability $1 - o(1)$ over the draw of $\bM \subset [n]$, $\bm_1, \bm_2$ and $\bT \sim \calE(\bM \setminus \{ \bm_1, \bm_2\})$, we have that for all $i \in [t]$, 
	\[ |\bL_i| \leq |\bQ_{\ell_i}| \cdot 90 \sqrt{n} \log n. \]
\end{lemma}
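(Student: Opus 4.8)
The plan is to reduce Lemma~\ref{lem:cost-blowup} to a pairwise Hamming-distance statement and then finish with two union bounds. Call a query $z \in \{0,1\}^{2n}$ \emph{balanced} if $|z_{|\bM}| \in [\tfrac{n}{2} - \sqrt{2n}, \tfrac{n}{2} + \sqrt{2n}]$, and observe that every query appearing in some $\bQ_{\ell_i}$ is balanced and is mapped by $\Gamma_{\bT}$ to a genuine index in $[N]$. I would show that with probability $1 - o(1)$ over $\bM, \bm_1, \bm_2$ and $\bT \sim \calE(\bM \setminus \{\bm_1, \bm_2\})$ the following \emph{good event} holds: no two queries $z_a, z_b$ with $\|z_a - z_b\|_1 > 40 \sqrt{n} \log n$ are simultaneously balanced and satisfy $\Gamma_{\bT}(z_a) = \Gamma_{\bT}(z_b) \in [N]$. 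Granting the good event, fix $i \in [t]$ and a reference query $z_1^{(i)} \in \bQ_{\ell_i}$; every $z_\alpha^{(i)} \in \bQ_{\ell_i}$ is balanced and lies in part $\ell_i = \Gamma_{\bT}(z_1^{(i)})$, so $\|z_\alpha^{(i)} - z_1^{(i)}\|_1 \le 40 \sqrt{n}\log n$, and since $k \in \bL_i$ forces some $z_\alpha^{(i)}$ to disagree with $z_1^{(i)}$ at coordinate $k$, we get $\bL_i \subseteq \bigcup_{\alpha=2}^{|\bQ_{\ell_i}|}\{k \in \obM : (z_\alpha^{(i)})_k \ne (z_1^{(i)})_k\}$ and hence $|\bL_i| \le \sum_{\alpha=2}^{|\bQ_{\ell_i}|} \|z_\alpha^{(i)} - z_1^{(i)}\|_1 \le |\bQ_{\ell_i}| \cdot 90 \sqrt{n} \log n$.

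To bound the probability that the good event fails, I would fix a pair $z_a, z_b$ with $d \eqdef \|z_a - z_b\|_1 > 40 \sqrt{n}\log n$ (if $d \le 40\sqrt n\log n$ this pair can never violate the good event) and argue that $\Prx[z_a, z_b \text{ balanced and } \Gamma_{\bT}(z_a) = \Gamma_{\bT}(z_b) \in [N]] \le 2 n^{-5}$; since $q \le n^{3/2}/\log^8 n$ there are at most $n^3$ pairs, so a union bound then closes the argument. Let $\bd_{\bM}$ count the coordinates on which $z_a$ and $z_b$ disagree that fall inside $\bM$. Because $\bM$ is a uniform $n$-subset of $[2n]$, $\bd_{\bM} \sim \mathrm{HG}(2n, d, n)$ has mean $d/2$, so Theorem~\ref{thm:HG-Chernoff} gives $\Prx[\bd_{\bM} < d/4] \le \exp(-d^2/(8n)) \le \exp(-200 \log^2 n)$. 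Conditioning on any $\bM, \bm_1, \bm_2$ for which $z_a, z_b$ are both balanced and $\bd_{\bM} \ge d/4 \ge 10 \sqrt{n}\log n$, and writing $W_a, W_b \subseteq \bM$ for the $1$-coordinate sets of $z_a, z_b$ inside $\bM$, balancedness gives $|W_a|, |W_b| \le \tfrac{n}{2} + \sqrt{2n}$, so $|W_a \cap W_b| = \tfrac12(|W_a| + |W_b| - \bd_{\bM}) \le \tfrac{n}{2} + \sqrt{2n} - 5 \sqrt{n}\log n$. Since $\Gamma_{\bT}(z_a) = \Gamma_{\bT}(z_b) \in [N]$ requires some term $\bT_\ell \subseteq W_a \cap W_b$, and the $N$ terms are independent uniform $\sqrt{2n}$-subsets of the $(n-2)$-element set $\bM \setminus \{\bm_1, \bm_2\}$, a fixed term lands inside $W_a \cap W_b$ with probability at most $\binom{|W_a\cap W_b|}{\sqrt{2n}}/\binom{n-2}{\sqrt{2n}} \le (|W_a\cap W_b|/(n-2))^{\sqrt{2n}} \le 2^{-\sqrt{2n}}(1 - 4\log n/\sqrt{n})^{\sqrt{2n}} \le 2^{-\sqrt{2n}} n^{-5}$ for large $n$; a union bound over the $N = 2^{\sqrt{2n}}$ terms then gives $n^{-5}$, which combined with the tail bound on $\bd_{\bM}$ yields the per-pair bound $2n^{-5}$.

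The main obstacle — and the reason both the weight filter and the split of coordinates between $\bM$ and $\obM$ must be used — is that there is essentially no slack in the term count: a single random $\sqrt{2n}$-term is satisfied by a fixed balanced point with probability about $2^{-\sqrt{2n}}$, which exactly cancels the number $N = 2^{\sqrt{2n}}$ of terms. The polynomial savings needed to union-bound over $\Theta(q^2)$ pairs must come entirely from the observation that two far-apart queries have a common $1$-set inside $\bM$ that lies below $\tfrac{n}{2}$ by roughly $\tfrac12 \bd_{\bM} \approx \tfrac14 d$; making this quantitative (a multiplicative shrinkage $1 - \Theta(\log n/\sqrt{n})$ that, raised to the power $\sqrt{2n}$, beats every fixed polynomial once $d > 40\sqrt{n}\log n$) is the one genuinely delicate calculation. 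The hypergeometric concentration of $\bd_{\bM}$, the passage from $|\bL_i|$ to a sum of pairwise distances through a reference query, and the two union bounds are all routine.
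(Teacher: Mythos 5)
Your proof is correct and follows essentially the same strategy as the paper's: reduce to a pairwise statement (any two queries landing in the same $\bQ_{\ell_i}$ must be close in Hamming distance), use hypergeometric concentration of the number of disagreeing coordinates that fall into $\bM$, exploit the balancedness filter $|z_{|\bM}| \approx \tfrac{n}{2}\pm\sqrt{2n}$, and kill the term-collision probability before union-bounding over pairs. The one place where you diverge slightly is the last step: the paper conditions on $\bT_{\ell_i}$ being the term satisfied by $z_a$ and bounds the conditional probability that $\bT_{\ell_i}$ avoids the set $\bP = (W_a\setminus W_b)\cap\bM$ of coordinates where $z_a=1,z_b=0$, which never introduces an $N$-fold union bound; you instead compute the unconditional probability that a fixed term is contained in $W_a\cap W_b$ and union-bound over all $N$ terms, relying on the $2^{-\sqrt{2n}}$ in the per-term probability to cancel $N=2^{\sqrt{2n}}$. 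Both give the required polynomial savings, and your accounting of the delicate part — that the polynomial gain comes entirely from $|W_a\cap W_b|$ sitting a multiplicative $1-\Theta(\log n/\sqrt n)$ below $n/2$ — is exactly right.
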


\begin{proof}
	We will prove this by showing that for any two $z, z' \in \bQ_{\ell_i}$, $\|z - z'\|_{1} \leq 90\sqrt{n} \log n$ with probability $1 - \frac{1}{n^{10}}$, so that we may union bound over all possible pairs. More specifically, consider two queries $z, z' \in \{0, 1\}^{2n}$ which differ by more than $90\sqrt{n} \log n$ indices. Note that the distribution of the random variable $\|(z-z')|_\bM\|_1\sim \mathrm{HG}(2n,|z-z'|,n)$.   Then using Theorem~\ref{thm:HG-Chernoff} we have that with probability at least $1 - \frac{1}{n^{10}}$ over the draw of $\bM$,  $\|z_{|\bM} - z_{|\bM}'\|_{1} \geq 30{\sqrt{n}\log n}$. 
	
	Next, if $|z_{|\bM}| \approx \frac{n}{2} \pm \sqrt{2n}$ and $|z'_{|\bM}| \approx \frac{n}{2} \pm \sqrt{2n}$ (if either of these conditions do not hold, then we know the strings are not in $\bQ_{\ell_i}$ for any $i$), then there exists a set $\bP \subset \bM$ with $|\bP| = 15{\sqrt{n} \log n}$ such that for all $k \in \bP$, $z_{k} = 1$ and $z'_{k} = 0$. Thus, we have that:
	\[ \Prx_{\bT}[\exists i \in [t], z, z' \in \bQ_{\ell_i}] \leq \Prx_{\bT}[ z' \in \bQ_{\ell_i} \mid z \in \bQ_{\ell_i}] \leq \Prx_{\bT_{\ell_i}}[\bT_{\ell_i} \cap \bP = \emptyset] \leq \left( 1 - \frac{15\log n}{\sqrt{n}} \right)^{\sqrt{n}} \ll \frac{1}{n^{10}}. \]
	So we may union bound over all pairs of queries to conclude that if $z, z' \in \bQ_{\ell_i}$, then $\|z - z'\|_1 \leq 90\sqrt{n}\log n$ with high probability, which gives the desired claim.
\end{proof}

Thus, given Lemma~\ref{lem:cost-blowup} as well as the fact that we query $[n]$ when $|\bL_i| \geq \frac{n}{\log n}$, we conclude that if $\Alg$ makes $q$ queries, then $\Alg'$ has complexity at most $q\cdot O(\sqrt{n} \log^2 n)$ in the rejection sampling model.

\begin{lemma}\label{lem:good-split}
	If $q\le \frac{n^{3/2}}{\log^{8}n }$, then with probability $1 - o(1)$ over the draw of $\bM \subset [n], \bm_1, \bm_2$, $\bT$, and $\bA \subset \obM$, we have that for every $i \in [t]$ where $|\bL_i| \leq \frac{n}{\log n}$, the sets $|\obL_i^{(0)}|, |\obL_i^{(1)}| $ 
	satisfy the following
	\[|\obL_i^{(0)}|, |\obL_i^{(1)}| = \Omega(n)\;, \]
	\[ \left|\bA \cap \obL_{i}^{(0)}\right| \approx \dfrac{\left|\obL_{i}^{(0)}\right|}{2} \pm \sqrt{n} \log n \qquad\text{and}\qquad \left| \bA \cap \obL_{i}^{(1)}\right| \approx \dfrac{\left| \obL_{i}^{(1)} \right|}{2} \pm \sqrt{n}\log n. \]
\end{lemma}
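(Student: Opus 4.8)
The plan is to prove the lemma by revealing the randomness in the order it is generated: first the fixed deterministic queries $z_1,\dots,z_q\in\{0,1\}^{2n}$ of $\Alg$, then $\bM$ (hence $\obM=[2n]\setminus\bM$, with $|\obM|=n$) together with $\bm_1,\bm_2,\bT$, and only afterwards the hidden set $\bA\subset\obM$ of size $n/2$ underlying $\bG$. The sets $\bL_i,\obL_i^{(0)},\obL_i^{(1)}$ are determined by $\bM,\bT$ and the queries, whereas the partition of $\obL_i^{(0)},\obL_i^{(1)}$ by $\bA$ is a further independent source of randomness; everything then reduces to three concentration estimates. \emph{Step 1 (every query surviving the weight filter has total Hamming weight $\approx n$):} for a fixed query $z_j$ we have $|(z_j)_{|\bM}|\sim\mathrm{HG}(2n,|z_j|,n)$ with mean $|z_j|/2$, so if $\big||z_j|-n\big|>3\sqrt n\log n$ then for $|(z_j)_{|\bM}|$ to land in $[\tfrac n2-\sqrt{2n},\tfrac n2+\sqrt{2n}]$ it must deviate from its mean by at least $\tfrac32\sqrt n\log n-\sqrt{2n}\ge\sqrt n\log n$, which by Theorem~\ref{thm:HG-Chernoff} has probability at most $2\exp(-2\log^2 n)=2n^{-2\log n}\ll q^{-1}$. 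A union bound over the $q\le n^{3/2}/\log^8 n$ queries gives that, with probability $1-o(1)$ over $\bM$, every query $z_j\notin\bQ_{\bM}^{(-)}\cup\bQ_{\bM}^{(+)}$ — in particular every query lying in some $\bQ_{\ell_i}$ — has $|z_j|=n\pm3\sqrt n\log n$, hence $|(z_j)_{|\obM}|=|z_j|-|(z_j)_{|\bM}|=\tfrac n2\pm4\sqrt n\log n$.

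\emph{Step 2 ($|\obL_i^{(0)}|,|\obL_i^{(1)}|=\Omega(n)$):} condition on the good event of Step 1, fix $i\in[t]$ with $|\bL_i|\le n/\log n$, and take a representative $z=z_1^{(i)}\in\bQ_{\ell_i}$. Since $\obL_i^{(0)}$ and $\obL_i^{(1)}$ partition $\obM\setminus\bL_i$ according to the common value of the relevant coordinate, $|\obL_i^{(1)}|=|(z)_{|\obM}|-|(z)_{|\bL_i}|$, which lies in $[\,|(z)_{|\obM}|-|\bL_i|,\,|(z)_{|\obM}|\,]\subseteq[\tfrac n2-4\sqrt n\log n-\tfrac n{\log n},\ \tfrac n2+4\sqrt n\log n]\subseteq[\tfrac n3,\tfrac{2n}3]$ for $n$ large; then $|\obL_i^{(0)}|=(n-|\bL_i|)-|\obL_i^{(1)}|\in[\tfrac n4,\tfrac{2n}3]$. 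In particular both are $\Omega(n)$, and this step is purely deterministic given Step 1 and the hypothesis $|\bL_i|\le n/\log n$.

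\emph{Step 3 ($\bA$ splits $\obL_i^{(0)}$ and $\obL_i^{(1)}$ nearly in half):} conditioning on $\bM,\bm_1,\bm_2,\bT$ fixes the sets $\obL_i^{(0)},\obL_i^{(1)}\subseteq\obM$, and since $\bA$ is independent of them and uniform among $\tfrac n2$-subsets of $\obM$, we have $|\bA\cap\obL_i^{(0)}|\sim\mathrm{HG}(n,\tfrac n2,|\obL_i^{(0)}|)$ with mean $|\obL_i^{(0)}|/2$. Applying Theorem~\ref{thm:HG-Chernoff} with deviation parameter $\tau=\sqrt n\log n/|\obL_i^{(0)}|$ (legitimate since $|\obL_i^{(0)}|\ge 2\sqrt n\log n$ by Step 2) bounds the probability that $|\bA\cap\obL_i^{(0)}|$ differs from $|\obL_i^{(0)}|/2$ by more than $\sqrt n\log n$ by $2\exp(-2n\log^2 n/|\obL_i^{(0)}|)\le 2\exp(-2\log^2 n)=2n^{-2\log n}$, using $|\obL_i^{(0)}|\le n$; the same bound holds for $\obL_i^{(1)}$. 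As the number of parts satisfies $t\le q\le n^{3/2}$, a union bound over all $i\in[t]$ and over the two sets contributes only $4n^{3/2}\cdot n^{-2\log n}=o(1)$, and a final union bound with the failure event of Step 1 completes the proof.

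I expect no genuine obstacle here; the only point requiring care is that the failure event of Step 1 must be quantified over the randomness of $\bM$ alone, \emph{before} one knows into which part $\bQ_{\ell_i}$ each query falls, so that it remains a bona fide union-boundable event over the $q$ queries (rather than over the random partition). The hypothesis $q\le n^{3/2}/\log^8 n$ is far more generous than this particular lemma needs — the $n^{-2\log n}$-size tails would survive a union bound over superpolynomially many queries — and is presumably forced by the later steps of the non-adaptive reduction that actually exhaust the cost budget.
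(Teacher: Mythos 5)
Your proof is correct and follows the same three-step structure as the paper's: show via hypergeometric concentration over $\bM$ that surviving queries have weight $\approx n$, deduce deterministically that $|\obL_i^{(0)}|,|\obL_i^{(1)}|=\Omega(n)$, then apply hypergeometric concentration over $\bA$ with a union bound over $i\in[t]$. The only differences from the paper are in the choice of constants (e.g., $3\sqrt{n}\log n$ vs.\ $50\sqrt{2n}\log n$ in Step~1), which are immaterial, and your closing observation that $q\le n^{3/2}/\log^8 n$ is more than this lemma needs is accurate.
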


\begin{proof}
	We first claim that with probability $1-o(1)$ over the choice of $\bM$, all the  queries $z\in\{0,1\}^{2n}$ that are mapped to some $\bQ_{\ell_i}$ are such  that $|z|\approx n\pm 50\sqrt{2n}\log n$. Assume $z\in\{0,1\}^{2n}$ is such that $|z|> n+50\sqrt{2n}\log n$, and consider the random variable $|z|_\bM|$. Note that the distribution of $|z|_\bM|$ is hyper-geometric with parameters $(2n,|z|,n)$. By using Theorem~\ref{thm:HG-Chernoff} on the tail bounds for hyper-geometric random variable, we get that for any $t>0$ \[ \Prx_{\bM}\left[|z|_{\bM}|<\left(\frac{|z|}{2n}-t\right)n\right] \le e^{-2t^2n} \;. \]
	By choosing  $t=\frac{50\log n}{\sqrt{2n}}-\frac{\sqrt{2}}{\sqrt{n}}$, and considering the complement event, we have that 
	\[ \Prx_{\bM}\left[ |z|_{\bM}|\ge \frac{|z|}{2} - \frac{50\sqrt{n}\log n}{\sqrt{2}}+\sqrt{2n} \right] \ge 1-\frac{1}{n^{50}}\;.\] 
	Combining this with the fact that $|z|>n+50\sqrt{2n}\log n$, we get that the probability that $|z|_\bM|>n/2+\sqrt{2n}$ is at least $1-1/n^{50}$. 
	
	Similarly, we get that when $|z|< n-50\sqrt{2n}\log n$ , we have that with probability $1-1/n^{50}$ over the choice of $\bM$, $|z|_\bM|<n/2-\sqrt{2n}$. By using a union bound on the number of queries we get that with probability $1-o(1)$ over the choice of $\bM$, all the  queries $z\in\{0,1\}^{2n}$ that are mapped to some $\bQ_{\ell_i}$  are such that $|z|\approx n\pm 50\sqrt{2n}\log n$. 
	
	We henceforth condition on such $\bM=M$. Consider any $T\sim\calE(M)$ and all the indices $i\in[t]$ such that $|L_i|\le\frac{n}{\log n}$. By definition, if $z\in\{0,1\}^{2n}$ is mapped to some $Q_{\ell_i}$, then $|z|_M| \approx n/2\pm\sqrt{2n}$, which implies that $|z|_{\oM}|\approx n/2\pm 49\sqrt{2n}\log n$. Therefore, by the fact that all queries in $Q_{\ell_i}$ must agree on all of the coordinates in $\overline L_{i}$, we can conclude that $|\overline L_{i}^{(0)}|$ and $|\overline L_{i}^{(1)}|$ are $\Omega(n)$. 
	
	Next, consider the random variable $|\bA\cap \overline L_{i}^{(1)}|$, and note that its distribution is hyper-geometric with parameters $(n,|\overline{L}_i^{(1)}|,n/2)$. By using tail bounds for hyper-geometric random variable, we get that with probability at least $1-o(1)$ over the choice of $\bA$ \[|\bA\cap \overline L_i^{(1)}|\approx \frac{|L_i^{(1)}|}{2}\pm \sqrt{n}\log n \;. \]
	Using the same argument, we also get that with probability $1-o(1)$ over the choice of $\bA$ we have that \[|\bA\cap \overline L_i^{(0)}|\approx \frac{|L_i^{(0)}|}{2}\pm \sqrt{n}\log n \;. \]
	
	By applying a union bound over all indices $i\in[t]$ the lemma follows.
\end{proof}

\begin{lemma}\label{lem:few-lone}
	If $\cost(\Alg') \leq \frac{n^{2}}{\log^{6} n}$ which occurs with high probability over $\bM$, with probability $1 - o(1)$ over the draw of $\bv$ in Step 4(b), there are at most $\frac{n}{\log^4 n}$ responses $\bv \in \obM$ which are lone vertices of case (ii).
\end{lemma}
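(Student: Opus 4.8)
The plan is a first-moment argument. First I would observe that a lone-vertex response of case (ii) can only occur for an index $i \in [t]$ with $\ell_i > 3N/4$ \emph{and} $|\bL_i| < \frac{n}{\log n}$: when $|\bL_i| \ge \frac{n}{\log n}$ the algorithm queries $\obM$, and since $\bG$ is a graph on the vertex set $\obM$, the edge $(\bj_1,\bj_2)$ sampled by the oracle satisfies $\{\bj_1,\bj_2\}\cap \obM = \{\bj_1,\bj_2\}$, so the response is always an edge, never a lone vertex. Hence it suffices to control $\sum_i \bY_i$, where $\bY_i$ is the indicator that the response to query $\bL_i$ is a lone vertex and the sum ranges over those $i$ with $\ell_i > 3N/4$ and $|\bL_i| < \frac{n}{\log n}$.

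Next I would bound $\Prx_{\bv}[\bY_i = 1]$ for a fixed such $i$ after conditioning on $\bM, \bm_1, \bm_2, \bT, \bA$ (hence on the query sets $\bL_i$ and on the graph $\bG$ itself), so that the only remaining randomness is the oracle's edge sample. The response is a lone vertex precisely when the sampled edge has exactly one endpoint in $\bL_i$; the number of such edges is at most $\sum_{v \in \bL_i} \deg_{\bG}(v) \le |\bL_i| \cdot \tfrac{n}{2}$, since every graph in $\supp(\calG_1) \cup \supp(\calG_2)$ has maximum degree at most $\tfrac{n}{2}$, while $|E(\bG)| \ge \tfrac{n^2}{8}$ in both cases. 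Therefore $\Prx_{\bv}[\bY_i = 1] \le \frac{|\bL_i|\cdot n/2}{n^2/8} = \frac{4|\bL_i|}{n}$. The key point is that this bound is independent of which of $\calG_1, \calG_2$ the graph $\bG$ was drawn from, so the estimate is oblivious to the hidden case.

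Then, since each such query $\bL_i$ contributes exactly $|\bL_i|$ to $\cost(\Alg')$, linearity of expectation gives
\[ \Ex_{\bv}\Big[\sum_i \bY_i\Big] \;\le\; \frac{4}{n}\sum_i |\bL_i| \;\le\; \frac{4\,\cost(\Alg')}{n} \;\le\; \frac{4n}{\log^6 n}, \]
using the hypothesis $\cost(\Alg') \le \frac{n^2}{\log^6 n}$. Markov's inequality then yields
\[ \Prx_{\bv}\Big[\sum_i \bY_i > \frac{n}{\log^4 n}\Big] \;\le\; \frac{4n/\log^6 n}{n/\log^4 n} \;=\; \frac{4}{\log^2 n} \;=\; o(1), \]
which is the claim. (Since the oracle draws an independent edge per query, the $\bY_i$ are in fact independent conditioned on $\bG$ and the $\bL_i$, so a Chernoff bound would give an exponentially small failure probability; but Markov already suffices.) I do not expect a genuine obstacle here — the lemma is essentially a counting-plus-Markov estimate. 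The only care needed is in setting up the conditioning so that the per-query lone-vertex probability bound holds uniformly over the unknown graph, which is exactly what makes the argument go through without knowing whether $\bG \sim \calG_1$ or $\bG \sim \calG_2$.
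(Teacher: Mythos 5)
Your proof is correct and is essentially the same first-moment-plus-Markov argument the paper uses (the paper defers to Lemma~\ref{lem:calE_F}, which bounds the expected number of non-$\emptyset$ responses by $O(\cost/n)$ and applies Markov; you bound the narrower set of lone-vertex responses, and correctly note that queries promoted to $\obM$ can never produce a lone vertex since both endpoints of any sampled edge lie in $\obM$). The per-query probability estimate, the observation that it is uniform over both $\calG_1$ and $\calG_2$, and the final Markov step all match the paper's reasoning.
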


As discussed earlier, the proof of the above lemma is given in the lower bound for distinguishing $\calG_1$ and  $\calG_2$ in Section~\ref{sec:LowerBound} (Lemma~\ref{lem:calE_F}). We assume its correctness for the rest of this section.
\medskip

We note that since $\bM, \bm_1, \bm_2$ and $\bT$ are distributed in the same way in $\boldf \sim \Dyes$ and in Step 1 of $\Alg$, we may consider the distribution $\calV_{\text{yes}}(M, m_1, m_2, T)$ denoting $\calV_{\text{yes}}$ conditioned on $\bM = M, \bm_1 = m_1, \bm_2 = m_2$ and $\bT = T$, and we analogously define $\calU_{\text{yes}}(M, m_1, m_2, T)$, $\calV_{\text{no}}(M, m_1, m_2, T)$ and $\calU_{\text{no}}(M, m_1, m_2, T)$. In addition, we may denote the event $\calbE_{A}$ to denote the event that the hidden subset $\bA$ sampled in $\boldf$ or in the graph $\bG$ satisfies the conditions of Lemma~\ref{lem:good-split}, and the event $\calbE_{V}$ to be the event that there are at most $\frac{n}{\log^4 n}$ responses which are lone vertices from Lemma~\ref{lem:few-lone}. We thus consider a fixed set $M, m_1, m_2,$ and $T$ satisfying the following conditions of Lemma~\ref{lem:cost-blowup} and consider the distribution $\calV_{\text{yes}}'$ to be the distribution given by sampling $\boldr \sim \calV_{\text{yes}}(M, m_1, m_2, T)$ conditioned on events $\calbE_{A}$ and $\calbE_{V}$. We analogously define $\calV_{\text{no}}'$, $\calU_{\text{yes}}'$ and $\calU_{\text{no}}'$. We note it suffices to show $d_{TV}(\calV_{\text{yes}}', \calU_{\text{yes}}') = o(1)$ and $d_{TV}(\calV_{\text{no}}', \calU_{\text{no}}') = o(1)$. 

We now note that conditioned on $M, m_1, m_2$ and $T$, the sets $\bQ_{\bM}^{(+)}, \bQ_{\bM}^{(-)}, \bQ_{*}^{(1)}$ and $\bQ_{*}^{(0)}$, as well as all $\bQ_{\ell_1}, \dots, \bQ_{\ell_t}$ are no longer random. Furthermore, when $z \in \bQ_{\bM}^{(+)} \cup \bQ_{\bM}^{(-)} \cup \bQ_{*}^{(1)} \cup \bQ_{*}^{(0)}$ the values of $f_{T, \bA, \bH}(z)$ from $\Dyes$ (and from $\Dno$) are fixed to their corresponding values according to (\ref{eq:unate-dist}), which match their settings in $\calU_{\text{yes}}'$ and $\calU_{\text{no}}'$. Likewise, when $z \in \bQ_{\ell_i}$ with $\ell_i \leq \frac{3N}{4}$, $f_{T, \bA, \bH}(z)$ is determined by a dictator or anti-dictator in $\{ m_1, m_2\}$; by the principle of deferred decisions, the values of $f_{T, \bA, \bH}(z)$ can be simulated exactly. Therefore, it remains to consider the values of $\boldr_{\alpha}^{(i)}$ corresponding to $f_{T, \bA, \bH}(z_{\alpha}^{(i)})$ for each $i \in [t]$, where $\ell_i > \frac{3N}{4}$, so for simplicity, assume that every $\ell_i > \frac{3N}{4}$. 

Consider a function $v \colon [t] \to \{ \text{``edge''}, \text{``lone vertex''}, \text{``empty set''} \}$ which indicates whether the response of the $i$th rejection sampling query sampled in Step 4(b) falls into case (i) (when $\bv_i$ is an edge), or case (ii) (when $\bv_i$ is a lone vertex), or case (iii) (when $\bv_i$ is $\emptyset$). In other words, 
\[ v(i) =\left\{\begin{array}{cc} \text{``edge''} & \bv_i \in \obM \times \obM \\
\text{``lone vertex''} & \bv_i \in \obM \\
\text{``empty set''} & \bv_i = \emptyset \end{array} \right. \]

We thus consider one fixed function $v \colon [t] \to \{\text{``edge''}, \text{``lone vertex''}, \text{``empty set''} \}$ and condition on the fact that $v$ specifies the three cases of Step 4(b) (in the case of $\calU_{\text{yes}}$ and $\calU_{\text{no}}$) and whether the edge sampled $(\bj_1, \bj_2) \sim \bG$ in the fourth step of generating $\Dyes$ and $\Dno$ for $\bh_{\ell_i}$ either intersects $\bL_i$ fully (in the case of an edge), or partially (in the case of a lone vertex), or it does not intersect at all (in the case of the empty set). Thus, again, we may consider the distributions conditioned on the edges sampled are specified correctly by $v$.

The following three lemmas give the distribution of $\boldr_1^{(i)} \sim \calV_{\text{yes}}'$ and $\boldr_1^{(i)} \sim \calV_{\text{no}}'$ in the cases when $\bv_i$ is an edge, or a lone vertex, or the empty set. We note that the three lemmas indicate how to generate the bits $\boldr_{\alpha}^{(i)}$ in Step 4(b) of $\Alg'$.
\begin{lemma}
	For every $i \in [t]$ with $v(i) = \text{``edge''}$, we have that every $\alpha \in [|\bQ_{\ell_i}|]$ has $\boldr_\alpha^{(i)}$ generated from $\Alg'$ is distributed exactly as $\boldf(z_{\alpha}^{(i)})$.
\end{lemma}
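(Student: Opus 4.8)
The plan is to show that in the ``edge'' case the simulation carried out by $\Alg'$ is not merely close in total variation but \emph{exact}, purely by unwinding the definitions. Fix $i \in [t]$; as in the surrounding discussion we may assume $\ell_i > 3N/4$ (the subfunctions indexed $\le 3N/4$ are dictators or anti-dictators in $\{\bm_1, \bm_2\}$ and are reproduced exactly by deferred decisions, and the queries in $\bQ_{\bM}^{(-)} \cup \bQ_{\bM}^{(+)} \cup \bQ_{*}^{(0)} \cup \bQ_{*}^{(1)}$ receive fixed values that agree on both sides). First I would record that for every $\alpha$, since $z_\alpha^{(i)} \in \bQ_{\ell_i}$ we have $\Gamma_{\bT}(z_\alpha^{(i)}) = \ell_i \in [N]$ and $|{(z_\alpha^{(i)})}_{|\bM}|$ in the admissible window, so the definition of $f_{\bT,\bA,\bH}$ in (\ref{eq:unate-dist}) (in the $2n$-variable setting) gives $\boldf(z_\alpha^{(i)}) = \bh_{\ell_i}(z_\alpha^{(i)})$; and by the definition of $\calH(\bm_1, \bm_2, \bG)$ the subfunction $\bh_{\ell_i}$ is determined by a uniformly random edge $(\bj_1, \bj_2) \sim \bG$ and an independent uniform $\bj_3 \sim \{\bm_1, \bm_2\}$, with $\bh_{\ell_i}(z_\alpha^{(i)})$ equal to $(z_\alpha^{(i)})_{\bj_1} \oplus (z_\alpha^{(i)})_{\bj_2} \oplus (z_\alpha^{(i)})_{\bj_3}$ if $\bj_3 = \bm_1$ and its negation if $\bj_3 = \bm_2$. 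The key observation is that this formula is \emph{identical} to the rule $\Alg'$ applies in case (i) of Step 4(b) to produce $\boldr_\alpha^{(i)}$ from the returned edge $\bv = (\bj_1,\bj_2)$ and the index $\bj_3$ that $\Alg'$ samples on its own.

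Given that, the entire argument reduces to checking that the triple $(\bj_1, \bj_2, \bj_3)$ has the same law on the two sides, under the conditioning in force (fixed $\bM, \bm_1, \bm_2, \bT, \bA$, the events $\calbE_A, \calbE_V$, and a fixed type vector $v$ with $v(i) = \text{``edge''}$; write $G$ for the resulting graph, either $K_{\bA, \obA}$ or $K_{\bA} \cup K_{\obA}$). When $|\bL_i| \le n/\log n$: $\Alg'$ queries $\bL_i$, the oracle draws a uniform edge of $G$ and returns it iff both endpoints lie in $\bL_i$, so conditioning on $v(i) = \text{``edge''}$ makes the returned edge a uniform edge of $G$ with both endpoints in $\bL_i$, and the separately-drawn $\bj_3$ is independent uniform. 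On the simulated-$\boldf$ side, $v(i) = \text{``edge''}$ means precisely that the edge drawn for $\bh_{\ell_i}$ lands fully in $\bL_i$, so $(\bj_1,\bj_2)$ again has the law of a uniform edge of $G$ with both endpoints in $\bL_i$, with $\bj_3$ independent uniform --- the very same law. When $|\bL_i| > n/\log n$: $\Alg'$ queries $\obM$, so $v(i) = \text{``edge''}$ holds automatically and imposes no restriction on either side, and both laws are simply a uniform edge of $G$ together with an independent uniform $\bj_3$. In every case the triples agree in distribution, and the tuples $(\boldr_\alpha^{(i)})_{\alpha}$ and $(\boldf(z_\alpha^{(i)}))_{\alpha}$ are the \emph{same} deterministic function of the triple, so the two tuples are identically distributed; the argument never used which of $K_{\bA, \obA}$, $K_{\bA} \cup K_{\obA}$ the graph $G$ is, so it covers the $\Dyes$/$\calG_2$ and $\Dno$/$\calG_1$ cases alike.

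The one point I would take care to justify is that conditioning on the whole vector $v$, rather than just $v(i)$, is harmless: conditioned on $\bM, \bm_1, \bm_2, \bT, \bA$ the subfunctions $\bh_{\ell_j}$ are mutually independent across $j$, the edges the oracle draws for distinct rejection-sampling queries are independent, and the $\bj_3$'s are independent, so fixing $v(j)$ for $j \ne i$ does not disturb the conditional law of $(\bj_1,\bj_2,\bj_3)$ at index $i$; likewise $\calbE_A$ only asserts that the already-fixed $\bA$ is ``good'' and $\calbE_V$ is merely a constraint on the fixed $v$, so neither affects the conditional distribution we care about. I do not anticipate a genuine obstacle here --- this is exactly the clean case of Step 4(b), whose role is to certify that a genuine edge response is reproduced exactly so that all total-variation slack is pushed into the lone-vertex and empty-set cases handled by the next two lemmas; the only thing demanding attention is stating the conditioning precisely.
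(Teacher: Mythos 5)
Your argument is correct and takes essentially the same approach as the paper, which simply invokes the principle of deferred decisions: $\Alg'$ generates all the randomness needed to simulate a query to $\boldf$ indexing into $\bh_{\ell_i}$, and the edge case is the one in which this simulation is exact. Your write-up is a careful unwinding of what that one-line justification means (identical evaluation formulas, matching conditional law of $(\bj_1,\bj_2,\bj_3)$ given the type vector, independence across $i$), but it is not a different route.
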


\begin{proof}
	This simply follows from the principle of deferred decisions, since $\Alg'$ generates all the necessary randomness to simulate a query to a function $\boldf \sim \Dyes$ or $\boldf \sim \Dno$ which indexes to the sub-function $\bh_{\ell_i}$.
\end{proof}

\begin{lemma}\label{cl:empty-set-val-dist}
	For every $i \in [t]$ with $v(i) = \text{``empty set''}$, there exists $|\gamma_{\text{yes}}|, |\gamma_{\text{no}}| = O(\tfrac{\log^2 n}{n})$ such that for $\boldr \sim \calV_{\text{yes}}'$ satisfies 
	\[ \boldr_{1}^{(i)} \oplus (z_1^{(i)})_{\bj_3} \sim \left\{ \begin{array}{cc} \Ber\left(p_{\emptyset}(\obL_i) + \gamma_{\text{yes}}\right) & \bj_{3} = m_1 \\
	\Ber(1 - p_{\emptyset}(\obL_i) - \gamma_{\text{yes}}) & \bj_3 = m_2 \end{array}\right. ,\] 
	and $\boldr \sim \calV_{\text{no}}'$ satisfies 
	\[ \boldr_{1}^{(i)} \oplus (z_{1}^{(i)})_{\bj_3} \sim \left\{ \begin{array}{cc} \Ber\left(p_{\emptyset}(\obL_i) + \gamma_{\text{no}}\right) & \bj_3 = m_1 \\
	\Ber\left(1 - p_{\emptyset}(\obL_i) - \gamma_{\text{no}}\right) & \bj_3 = m_2 \end{array} \right. . \]
\end{lemma}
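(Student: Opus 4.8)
The plan is to unwind the definitions so that the whole statement collapses to estimating a single probability, and then to carry out that estimate using the balanced-split event $\calbE_A$ of Lemma~\ref{lem:good-split}. Fix $i \in [t]$ with $v(i) = \text{``empty set''}$; by definition of the profile $v$, this is the event that the edge $(\bj_1,\bj_2)\sim\bG$ feeding the sub-function $\bh_{\ell_i}$ has both endpoints in $\obM \setminus \bL_i$. Since all queries in $\bQ_{\ell_i}$ agree with $z_1^{(i)}$ on every coordinate in $\obM \setminus \bL_i$, and in particular on $\bj_1$ and $\bj_2$, equation (\ref{eq:unate-dist}) gives
\[ \boldf(z_1^{(i)}) \oplus (z_1^{(i)})_{\bj_3} = \left\{ \begin{array}{cc} \bb^* & \bj_3 = m_1 \\ \neg \bb^* & \bj_3 = m_2 \end{array}\right. , \qquad \bb^* \eqdef (z_1^{(i)})_{\bj_1} \oplus (z_1^{(i)})_{\bj_2}, \]
where $\bb^* = 1$ iff exactly one of $\bj_1,\bj_2$ lies in $\obL_i^{(1)}$, and $\bb^*$ is independent of $\bj_3$. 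Since $\Alg'$ in Step 4(b), case (iii), sets $\boldr_1^{(i)} \oplus (z_1^{(i)})_{\bj_3}$ equal to $\bb$ when $\bj_3 = m_1$ and to $\neg\bb$ when $\bj_3 = m_2$, with $\bb \sim \Ber(p_\emptyset(\obL_i))$, it suffices to show $|\Prx[\bb^*=1] - p_\emptyset(\obL_i)| = O(\log^2 n/n)$, where $(\bj_1,\bj_2)$ is uniform over the edges of $\bG$ avoiding $\bL_i$; one then takes $\gamma_{\text{yes}} = \Prx[\bb^*=1] - p_\emptyset(\obL_i)$ when $\bG = K_{\bA,\obA}$ (the $\Dyes$ graph) and $\gamma_{\text{no}} = \Prx[\bb^*=1] - p_\emptyset(\obL_i)$ when $\bG = K_{\bA}\cup K_{\obA}$ (the $\Dno$ graph), the negated versions giving the $\bj_3 = m_2$ lines. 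I would first record, as a routine consequence of the independence of the $\bh_j$'s and of $v(i)$ depending only on $(\bj_1,\bj_2)$ and on the fixed set $\bL_i$, that after fixing $M, m_1, m_2, T$ and conditioning on $\calbE_A$, $\calbE_V$ and $v$, this edge is — conditionally on $\bA$ — exactly a uniform edge of $\bG$ avoiding $\bL_i$, independently across the relevant $i$.

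To carry out the estimate I would write $a_0 = |\obL_i^{(0)}|$, $a_1 = |\obL_i^{(1)}|$, $S = a_0 + a_1 = n - |\bL_i|$, so $p_\emptyset(\obL_i) = 2a_0a_1/S^2$; here $|\bL_i| \leq n/\log n$, and conditioning on $\calbE_A$ (Lemma~\ref{lem:good-split}) gives $a_0,a_1,S = \Omega(n)$ together with $|\bA \cap \obL_i^{(0)}| = a_0/2 + \delta_0$ and $|\bA \cap \obL_i^{(1)}| = a_1/2 + \delta_1$ for some $|\delta_0|,|\delta_1| \leq \sqrt n \log n$, hence $|\obA \cap \obL_i^{(0)}| = a_0/2 - \delta_0$ and $|\obA \cap \obL_i^{(1)}| = a_1/2 - \delta_1$. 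For $\bG = K_{\bA,\obA}$, an edge avoiding $\bL_i$ is a uniform pair of $(\bA \setminus \bL_i)\times(\obA \setminus \bL_i)$, so a short expansion in which the first-order terms in $\delta_0,\delta_1$ cancel yields
\[ \Prx[\bb^*=1] = \frac{|\bA \cap \obL_i^{(0)}|\,|\obA \cap \obL_i^{(1)}| + |\bA \cap \obL_i^{(1)}|\,|\obA \cap \obL_i^{(0)}|}{|\bA \setminus \bL_i|\,|\obA \setminus \bL_i|} = \frac{a_0a_1/2 - 2\delta_0\delta_1}{S^2/4 - (\delta_0+\delta_1)^2}. \]
For $\bG = K_{\bA}\cup K_{\obA}$, an edge avoiding $\bL_i$ is a uniform pair inside $\bA \setminus \bL_i$ or inside $\obA \setminus \bL_i$, and the same cancellation yields
\[ \Prx[\bb^*=1] = \frac{|\bA \cap \obL_i^{(0)}|\,|\bA \cap \obL_i^{(1)}| + |\obA \cap \obL_i^{(0)}|\,|\obA \cap \obL_i^{(1)}|}{\binom{|\bA \setminus \bL_i|}{2} + \binom{|\obA \setminus \bL_i|}{2}} = \frac{a_0a_1/2 + 2\delta_0\delta_1}{S^2/4 + (\delta_0+\delta_1)^2 - S/2}. \]
In both cases $|\delta_0\delta_1|,(\delta_0+\delta_1)^2 = O(n\log^2 n)$ while $a_0a_1, S^2 = \Omega(n^2)$ and $S = \Omega(n)$, so $\Prx[\bb^*=1] = \tfrac{2a_0a_1}{S^2}(1 + O(\log^2 n/n)) = p_\emptyset(\obL_i) + O(\log^2 n/n)$, which is the bound claimed; the statement is per-$i$, so no union bound is needed here.

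I expect the only genuine obstacle to be squeezing the error down to $O(\log^2 n/n)$: bounding the two ratios above term by term gives only $O(\log n/\sqrt n)$, which is far too weak for the downstream total-variation argument. The point is the exact cancellation of the first-order error terms $\pm a_0\delta_1/2 \mp a_1\delta_0/2$ in the numerators, which is precisely what the \emph{symmetric} split of Lemma~\ref{lem:good-split} provides: $\bA$ and $\obA$ each receive half of every $\obL_i^{(c)}$ up to the same $\sqrt n \log n$ additive slack, so the two ``misclassified'' ways of realizing $\bb^* = 1$ enter with opposite signs. Everything else — matching $\bb^*$ and $\neg\bb^*$ to the bit generated by $\Alg'$, and verifying that the conditioning on $M, m_1, m_2, T, \calbE_A, \calbE_V$ and $v$ does not distort the edge distribution — is bookkeeping.
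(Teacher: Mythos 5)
Your proposal is correct and follows essentially the same route as the paper's proof: reduce to the single bit $\bb^* = (z_1^{(i)})_{\bj_1} \oplus (z_1^{(i)})_{\bj_2}$, compute $\Pr[\bb^*=1]$ exactly under each graph as a ratio of intersection counts, substitute the balanced-split bounds from Lemma~\ref{lem:good-split}, and observe the cancellation of the first-order $\delta_0,\delta_1$ terms to land at $O(\log^2 n/n)$. You also correctly flag the key point — naive term-by-term bounding only gives $O(\log n/\sqrt n)$, and the exact cancellation is what rescues the estimate — which is precisely what the paper's expansion exploits.
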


\begin{proof}
	We recall that $\bh_{\ell_i}$ is determined by $(\bj_1, \bj_2) \sim \bG$ and $\bj_3 \sim \{m_1, m_2\}$ in the fourth step of generating $\boldf \sim \Dyes$ or $\Dno$. Consider the case when $\bj_3 = m_1$, and the case when $(z_{1}^{(i)})_{m_1} = 0$ (since the case $(z_1^{(i)})_{m_1} = 1$ is symmetric, except we flip the answer). 
	
	Recall that we condition on the fact that the edge $(\bj_1, \bj_2) \sim \bG$ satisfies $\bL_i \cap \{ \bj_1, \bj_2\} = \emptyset$, as well as the conclusions from Lemma~\ref{lem:good-split}, so we may write:
	\begin{align}
	&\Prx_{\boldr \sim \calV_{\text{yes}} }\left[\boldr_{1}^{(i)} = 1 \mid v(i) = \text{``empty set''}\right] \nonumber \\
	&= \Prx_{\substack{\bG \sim \Dno \\ (\bj_1, \bj_2)}}\left[ \left(\bj_1 \in \bA \cap \obL_i^{(0)} \wedge \bj_2 \in \obA \cap \obL_i^{(1)} \right) \vee \left(\bj_1 \in \bA \cap \obL_i^{(1)} \wedge \bj_2 \in \obA \cap \obL_i^{(0)} \right) \mid v(i) = \text{``empty set''} \right], \nonumber \\
	&= \dfrac{1}{|\bA \cap \obL_i| \cdot |\obA \cap \obL_i|} \cdot \left(|\bA \cap \obL_i^{(0)}| \cdot |\obA \cap \obL_i^{(1)}| + |\bA \cap \obL_i^{(1)}| \cdot |\obA \cap \obL_i^{(0)}| \right) \label{eq:yes-empty-prob} 
	\end{align}
	since the value of $\boldf(z_{1}^{(i)})$ in the case of $\bj_3 = m_1$ will be a parity of the end points, so this parity will be 1 when the values of the variables $\bj_1$ and $\bj_2$ under $z_{1}^{(i)}$ disagree. In order to see this, we recall that $\bG$ is the complete bipartite graph (in the case when $\boldr \sim \calV_{\text{yes}}$) with sides $\bA$ and $\obA$, so the edge $(\bj_1, \bj_2) \in \bA \times \obA$ must have $(z_{1}^{(i)})_{\bj_1} \neq (z_{1}^{(i)})_{\bj_2}$, and $\bj_1, \bj_2 \in \obL_i$. 
	
	Since $v(i) = \text{``empty set''}$, we note that $|\bL_i| \leq \frac{n}{\log n}$, so $|\obL_i| = \Omega(n)$. In addition, by Lemma~\ref{lem:good-split}, let:
	\begin{align} 
	|\bA \cap \obL_{i}^{(0)}| = \frac{|\obL_i^{(0)}|}{2} + \xi_{0} \qquad\text{and}\qquad |\bA \cap \obL_i^{(1)}| = \frac{|\obL_i^{(1)}|}{2} + \xi_1, \label{eq:A-L-bound}
	\end{align}
	where $|\xi_0|, |\xi_1| \leq \sqrt{n} \log n$, which in turn, implies:
	\begin{align} |\obA \cap \obL_i^{(0)}| = \frac{|\obL_i^{(0)}|}{2} - \xi_0 \qquad\text{and}\qquad |\obA \cap \obL_i^{(1)}| = \frac{|\obL_i^{(1)}|}{2} - \xi_1. \label{eq:barA-L-bound}
	\end{align}
	Therefore, combining (\ref{eq:yes-empty-prob}) with (\ref{eq:A-L-bound}) and (\ref{eq:barA-L-bound}),
	\begin{align*}
	&\Prx_{\boldr \sim \calV_{\text{yes}} }\left[\boldr_{1}^{(i)} = 1 \mid v(i) = \text{``empty set''}\right]\\
	& = \dfrac{1}{\left(\frac{|\obL_i|}{2} + \xi_0 + \xi_1 \right) \left(\frac{|\obL_i|}{2} - \xi_0 - \xi_1 \right)} \left( \left(\frac{|\obL_i^{(0)}|}{2} + \xi_0 \right) \left(\frac{|\obL_{i}^{(1)}|}{2} - \xi_1 \right) + \left(\frac{|\obL_{i}^{(1)}|}{2} + \xi_1 \right)\left(\frac{|\obL_i^{(0)}|}{2} - \xi_0 \right)\right) \\
	&= \dfrac{2|\obL_i^{(0)}| \cdot |\obL_i^{(1)}| - 8\xi_0 \xi_1}{|\obL_i|^2 - 4\xi_0^2 - 4\xi_1^2 - 8\xi_0 \xi_1} = \dfrac{2|\obL_i^{(0)}|\cdot |\obL_i^{(1)}|}{|\obL_i|^2} + \gamma_{\text{yes}},
	\end{align*}
	where $|\gamma_{\text{yes}}| \leq O(\tfrac{\log^2 n}{n})$, since $|\obL_i|, |\obL_{i}^{(0)}|, |\obL_i^{(1)}| = \Omega(n)$.
	
	The case when $\boldr \sim \calV_{\text{no}}$ is analogous, except that now the underlying graph is the union of two cliques at $\bA$ and $\obA$, so:
	\begin{align*}
	&\Prx_{\boldr \sim \calV_{\text{no}} }\left[\boldr_{1}^{(i)} = 1 \mid v(i) = \text{``empty set''}\right] \\
	&= \Prx_{\substack{\bG \sim \calG_1 \\ (\bj_1, \bj_2)}}\left[ \left(\bj_1 \in \bA \cap \obL_i^{(0)} \wedge \bj_2 \in \bA \cap \obL_i^{(1)} \right) \vee \left(\bj_1 \in \obA \cap \obL_i^{(0)} \wedge \bj_2 \in \obA \cap \obL_i^{(1)} \right) \mid v(i) = \text{``empty set''} \right], \\
	&= \dfrac{1}{\binom{|\bA \cap \obL_i|}{2} + \binom{|\obA \cap \obL_i|}{2}} \cdot \left(|\bA \cap \obL_i^{(0)}| \cdot |\bA \cap \obL_i^{(1)}| + |\obA \cap \obL_i^{(0)}| \cdot |\obA \cap \obL_i^{(1)}| \right) \\
	&= \dfrac{1}{\binom{\frac{|\obL_i|}{2} + \xi_0 + \xi_1}{2} + \binom{\frac{|\obL_i|}{2} -\xi_0 - \xi_1}{2}} \left(\left(\frac{|\obL_i^{(0)}|}{2} + \xi_0 \right)\left(\frac{|\obL_i^{(1)}|}{2} + \xi_1 \right) + \left(\frac{|\obL_i^{(0)}|}{2} - \xi_0 \right)\left(\frac{|\obL_i^{(1)}|}{2} - \xi_1 \right) \right) \\
	&= \dfrac{2|\obL_i^{(0)}| \cdot |\obL_i^{(1)}|}{|\obL_i|^2} + \gamma_{\text{no}},
	\end{align*}
	were again, $|\gamma_{\text{no}}| \leq O(\tfrac{\log^2 n}{n})$.
\end{proof}

\begin{lemma}\label{cl:lone-vertex-val-dist}
	For every $i \in [t]$ with $v(i) = \text{``lone vertex''}$, let $\bj_2 \in \oM$ be the lone vertex observed and let $w = \neg (z_1^{(i)})_{\bj_2}$. There exists $|\gamma_{\text{yes}}'|, |\gamma_{\text{no}}'| \leq O(\tfrac{\log n}{\sqrt{n}})$ such that for $\boldr \sim \calV'_{\text{yes}}$ satisfies
	\[ \boldr_1^{(i)} \oplus (z_1^{(i)})_{\bj_3} \sim\left\{ \begin{array}{cc} \Ber(p_v(\obL_i^{(w)}) + \gamma_{\text{yes}}') & \bj_3 = m_1 \\
	\Ber(1 - p_v(\obL_i^{(w)}) - \gamma_{\text{yes}}') & \bj_3 = m_2 \end{array} \right. ,\]
	and $\boldr \sim \calV'_{\text{no}}$ satisfies
	\[ \boldr_1^{(i)} \oplus (z_1^{(i)})_{\bj_3} \sim\left\{ \begin{array}{cc} \Ber(p_v(\obL_i^{(w)}) + \gamma_{\text{no}}') & \bj_3 = m_1 \\
	\Ber(1 - p_v(\obL_i^{(w)}) - \gamma_{\text{no}}') & \bj_3 = m_2 \end{array}\right.. \]
\end{lemma}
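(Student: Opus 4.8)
The plan is to follow exactly the structure of the proof of Lemma~\ref{cl:empty-set-val-dist}; the only change is that the edge $(\bj_1,\bj_2)\sim\bG$ defining the subfunction $\bh_{\ell_i}$ now meets $\bL_i$ in exactly one vertex. First I would set up the combinatorics. Conditioning on $v(i)=\text{``lone vertex''}$ means that exactly one endpoint of $(\bj_1,\bj_2)$ lies in $\bL_i$; this endpoint is what the oracle returns, which (abusing notation as the algorithm does) we call $\bj_2$, while the hidden endpoint $\bj_1$ lies in $\obM\setminus\bL_i=\obL_i^{(0)}\cup\obL_i^{(1)}$. In particular this forces $|\bL_i|\le n/\log n$ — otherwise $\Alg'$ queried $\obM$ and never observes a lone vertex — so the estimates of Lemma~\ref{lem:good-split} apply: $|\obL_i^{(0)}|,|\obL_i^{(1)}|=\Omega(n)$ and $|\bA\cap\obL_i^{(b)}|=\tfrac{|\obL_i^{(b)}|}{2}\pm\sqrt n\log n$ (hence also $|\obA\cap\obL_i^{(b)}|=\tfrac{|\obL_i^{(b)}|}{2}\mp\sqrt n\log n$). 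Since all queries in $\bQ_{\ell_i}$ agree on $\obM\setminus\bL_i$ and $\bj_1\notin\bL_i$, the bit $(z_\alpha^{(i)})_{\bj_1}$ takes a common value $c\in\{0,1\}$ over all $\alpha$, and $c=b$ precisely when $\bj_1\in\obL_i^{(b)}$. By the definitions of $\Dyes$ and $\Dno$, for $\bj_3=\bm_1$ we have $\boldf(z_\alpha^{(i)})=c\oplus(z_\alpha^{(i)})_{\bj_2}\oplus(z_\alpha^{(i)})_{\bj_3}$ and for $\bj_3=\bm_2$ its negation; in both $\calV_{\text{yes}}'$ and $\calV_{\text{no}}'$ the whole block $(\boldr_1^{(i)},\dots,\boldr_{|\bQ_{\ell_i}|}^{(i)})$ is therefore a deterministic function of $c$, so it suffices to pin down the law of $c$.

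Second I would compute $\Prx\bigl[\,c=w\,\bigr]$ for $w=\neg(z_1^{(i)})_{\bj_2}$, working inside the event $\calbE_A$ so that $\bA$ is fixed and satisfies Lemma~\ref{lem:good-split}. For $\calV_{\text{yes}}'$ the graph is $K_{\bA,\obA}$: given the observed endpoint $\bj_2$, the hidden endpoint $\bj_1$ is uniform over the neighbors of $\bj_2$ avoiding $\bL_i$, i.e.\ over $(\obA\cap\obL_i^{(0)})\cup(\obA\cap\obL_i^{(1)})$ if $\bj_2\in\bA$, and over $(\bA\cap\obL_i^{(0)})\cup(\bA\cap\obL_i^{(1)})$ if $\bj_2\in\obA$. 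In either case, plugging in the estimates of Lemma~\ref{lem:good-split},
\[
\Prx\bigl[\,c=w\,\bigr]\;=\;\Prx\bigl[\,\bj_1\in\obL_i^{(w)}\,\bigr]\;=\;\frac{|\obL_i^{(w)}|}{|\obL_i|}+\gamma_{\text{yes}}'\;=\;p_v(\obL_i^{(w)})+\gamma_{\text{yes}}',\qquad |\gamma_{\text{yes}}'|=O\!\left(\tfrac{\log n}{\sqrt n}\right),
\]
the point being that numerator and denominator differ from their ideal values $|\obL_i^{(w)}|/2$ and $|\obL_i|/2$ by an additive $O(\sqrt n\log n)$, which is a relative error of $O(\log n/\sqrt n)$ since both are $\Omega(n)$. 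For $\calV_{\text{no}}'$ the graph is $K_{\bA}\cup K_{\obA}$, so the hidden endpoint is instead uniform over $(\bA\cap\obL_i^{(0)})\cup(\bA\cap\obL_i^{(1)})$ if $\bj_2\in\bA$ and over $(\obA\cap\obL_i^{(0)})\cup(\obA\cap\obL_i^{(1)})$ if $\bj_2\in\obA$; the identical computation gives $\Prx\bigl[\,c=w\,\bigr]=p_v(\obL_i^{(w)})+\gamma_{\text{no}}'$ with $|\gamma_{\text{no}}'|=O(\log n/\sqrt n)$.

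Finally I would translate the law of $c$ into the stated law of $\boldr_1^{(i)}\oplus(z_1^{(i)})_{\bj_3}$. For $\bj_3=\bm_1$, $\boldr_1^{(i)}\oplus(z_1^{(i)})_{\bj_3}=c\oplus(z_1^{(i)})_{\bj_2}$, which equals $1$ exactly when $c=\neg(z_1^{(i)})_{\bj_2}=w$; for $\bj_3=\bm_2$, $\boldr_1^{(i)}=\neg(z_1^{(i)})_{\bj_1}\oplus(z_1^{(i)})_{\bj_2}\oplus(z_1^{(i)})_{\bj_3}$ gives $\boldr_1^{(i)}\oplus(z_1^{(i)})_{\bj_3}=\neg\bigl(c\oplus(z_1^{(i)})_{\bj_2}\bigr)$, which equals $1$ exactly when $c\neq w$. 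Substituting the two probabilities from the previous paragraph yields $\Ber(p_v(\obL_i^{(w)})+\gamma_{\text{yes}}')$ resp.\ $\Ber(1-p_v(\obL_i^{(w)})-\gamma_{\text{yes}}')$ for $\boldr\sim\calV_{\text{yes}}'$, and the analogue with $\gamma_{\text{no}}'$ for $\boldr\sim\calV_{\text{no}}'$, which is the claim. The main obstacle is purely bookkeeping: one must be careful that the conditioning event is ``the defining edge of $\bh_{\ell_i}$ meets $\bL_i$ in exactly the vertex $\bj_2$'', correctly identify the conditional distribution of the hidden endpoint $\bj_1$ in each of $K_{\bA,\obA}$ and $K_{\bA}\cup K_{\obA}$ (this is where the two sub-cases $\bj_2\in\bA$ and $\bj_2\in\obA$ enter and must both be checked to give the same answer up to the $O(\log n/\sqrt n)$ error), and track the negation introduced when $\bj_3=\bm_2$; all the concentration needed is supplied by Lemma~\ref{lem:good-split}.
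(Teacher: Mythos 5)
Your proposal is correct and follows essentially the same route as the paper: condition on the observed lone vertex, split into the sub-cases $\bj_2\in\bA$ versus $\bj_2\in\obA$ and $K_{\bA,\obA}$ versus $K_\bA\cup K_{\obA}$, invoke Lemma~\ref{lem:good-split} to control $|\bA\cap\obL_i^{(b)}|$, and translate back through the $\bj_3=m_1/m_2$ negation. Your abstraction of the argument through the single bit $c=(z_\alpha^{(i)})_{\bj_1}$ is a clean repackaging of what the paper does directly with $\Prx[\boldr_1^{(i)}=1]$.
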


\begin{proof}
	We follow a similar strategy to Lemma~\ref{cl:empty-set-val-dist}, where we know that we sample an edge $(\bj_1, \bj_2) \sim \bG$ whose value of $\bj_2 \in \bL_i$, and $\bj_1 \notin \bL_i$. Consider for simplicity the case when $\bG$ is a complete bipartite graph with sides $\bA$ and $\obA$, and $\bj_3 = m_1$ and $(z_1^{(i)})_{m_1} = 0$. 
	
	Similarly to (\ref{eq:yes-empty-prob}), we have that in order for $\boldr_1^{(i)} = 1$, we must have $(z_{1}^{(i)})_{\bj_1} \neq (z_{1}^{(i)})_{\bj_2}$. Suppose that $\bj_2 \in \bA$ and $w = \neg (z_{1}^{(i)})_{\bj_2}$, then in order for $\boldr_1^{(i)} = 1$,  $\bj_1$ must have been sampled from $\obA \cap \obL_{i}^{(w)}$. Using Lemma~\ref{lem:good-split}, we have that there exists $|\xi_0|, |\xi_1| \leq \sqrt{n} \log n$ so:
	\begin{align*} 
	\Prx_{\boldr \sim \calV_{\text{yes}}'}[\boldr_{1}^{(i)} = 1 \mid v(i) = \text{``lone vertex''}] = \dfrac{|\obA \cap \obL_i^{(w)}|}{|\obA \cap \obL|} = \dfrac{|\obL_i^{(w)}|/2 - \xi_{w}}{|\obL_i|/2 - \xi_0 - \xi_1} \approx \frac{|\obL_i^{(w)}|}{|\obL_i|} \pm O(\tfrac{\log n}{\sqrt{n}}),
	\end{align*}
	where we used the fact that $|\bL_i|, |\bL_i^{(w)}| = \Omega(n)$. If $\bj_2 \in \obA$, then
	\begin{align*} 
	\Prx_{\boldr \sim \calV_{\text{yes}}'}[\boldr_{1}^{(i)} = 1 \mid v(i) = \text{``lone vertex''}] = \dfrac{|\bA \cap \obL_i^{(w)}|}{|\bA \cap \obL|} = \dfrac{|\obL_i^{(w)}|/2 + \xi_{w}}{|\obL_i|/2 + \xi_0 + \xi_1} \approx \frac{|\obL_i^{(w)}|}{|\obL_i|} \pm O(\tfrac{\log n}{\sqrt{n}}).
	\end{align*}
	In both cases, we have that $\boldr_{1}^{(i)} \sim \Ber(p_{v}(\obL_i^{(w)}) \pm O(\tfrac{\log n}{\sqrt{n}}))$, and when we have $(z_1^{(i)})_{m_1} = 1$, we simply flip the answer. Likewise, when $\bj_3 = m_2$, we flip the answer once more.
	
	In the case of $\bG$ being the union of two cliques at $\bA$ and $\obA$, when $\bj_3 = m_1$ and $(z_1^{(i)})_{m_1} = 0$, we have that when $\bj_2 \in \bA$, 
	\begin{align*}
	\Prx_{\boldr \sim \calV_{\text{no}}'}[\boldr_{1}^{(i)} = 1 \mid v(i) = \text{``lone vertex''}] = \dfrac{|\bA \cap \obL_i^{(w)}|}{|\bA \cap \obL|} = \dfrac{|\obL_i^{(w)}|/2 + \xi_{w}}{|\obL_i|/2 + \xi_0 + \xi_1} \approx \frac{|\obL_i^{(w)}|}{|\obL_i|} \pm O(\tfrac{\log n}{\sqrt{n}}),
	\end{align*}
	and when $\bj_2 \in \obA$, 
	\begin{align*}
	\Prx_{\boldr \sim \calV_{\text{no}}'}[\boldr_{1}^{(i)} = 1 \mid v(i) = \text{``lone vertex''}] = \dfrac{|\obA \cap \obL_i^{(w)}|}{|\obA \cap \obL|} = \dfrac{|\obL_i^{(w)}|/2 - \xi_{w}}{|\obL_i|/2 - \xi_0 - \xi_1} \approx \frac{|\obL_i^{(w)}|}{|\obL_i|} \pm O(\tfrac{\log n}{\sqrt{n}}),
	\end{align*}
	so we obtain the analogous conclusion.
\end{proof}

We note that after defining $\boldr_1^{(i)}$ in the cases with $v(i) = \text{``empty set''}$, we have that all values $\boldr_{\alpha}^{(i)}$ are determined by flipping the answer when $(z_{\alpha}^{(i)})_{\bj_3} \neq (z_{1}^{(i)})_{\bj_3}$. Likewise, after defining $\boldr_1^{(i)}$ in the cases with $v(i) = \text{``lone vertex''}$, we have that all values $\boldr_{\alpha}^{(i)}$ are determined by flipping the answer when $(z_{\alpha}^{(i)})_{\bj_3} \neq (z_1^{(i)})_{\bj_3}$ and when $(z_{\alpha}^{(i)})_{\bj_2} \neq (z_{1}^{(i)})_{\bj_2}$. 

Finally, consider the indices $i \in [t]$ of responses $\boldr_\alpha^{(i)}$ with $v(i) = \text{``empty set''}$, and call these $E$. We have that for all $i \in E$, $\calU_{\text{yes}}'$ and $\calU_{\text{no}}'$ outputs bits which equal 1 with probability $\tau_i$ where $\tau_i = \Omega(1)$, and $\calV_{\text{yes}}'$ and $\calV_{\text{no}}'$ outputs bits which equal 1 with probability $\tau_i \pm O(\tfrac{\log^2 n}{n})$. Since these groups are independent and there at at most $q \ll n^{1.5}$ groups, we have that the bits $(\boldr_1^{(i)})_{i\in E} \sim \calU_{\text{yes}}'$ (and also $\calU_{\text{no}}'$) satisfy:
\[ (\boldr_1^{(i)})_{i \in E} \sim \prod_{i \in E} \Ber(\tau_i), \]
and for each $i \in E$, there exists $\gamma_{i,\text{yes}}$ and $\gamma_{i, \text{no}}$ with $|\gamma_{i, \text{yes}}|, |\gamma_{i, \text{no}}| = O(\frac{\log^2 n}{n})$ such that $(\boldr_1^{(i)})_{i \in E} \sim \calV_{\text{yes}}'$ satisfies
\[ (\boldr_1^{(i)})_{i \in E} \sim \prod_{i \in E} \Ber(\tau_i + \gamma_{i, \text{yes}}), \]
and if $(\boldr_1^{(i)})_{i \in E} \sim \calV_{\text{no}}'$ satisfies
\[ (\boldr_1^{(i)})_{i \in E} \sim \prod_{i \in E} \Ber(\tau_i + \gamma_{i, \text{no}}). \]
Thus, by \cite{Roos01}, we have that the distance in total variation between these two distributions is at most $o(1)$. 

Similarly, we consider the indices $i \in [t]$ with $v(i) = \text{``lone vertex''}$, and call these $V$. By Lemma~\ref{lem:few-lone}, we have that $|V| \leq \frac{n}{\log^4 n}$ with probability $1 - o(1)$ if the cost of the rejection sampling algorithm is less than $\frac{n^2}{\log^{6} n}$. So similarly to the case with the groups in $E$, these can only incur at most $o(1)$ in distance in total variation.

\section{A lower bound for distinguishing $\calG_1$ and $\calG_2$ with rejection samples}\label{sec:LowerBound}

In this section, we derive a lower bound for distinguishing $\calG_1$ and $\calG_2$ with rejection samples. 

\begin{lemma}\label{lem:lb}
	Any deterministic non-adaptive algorithm $\Alg$ with $\cost(\Alg) \leq \frac{n^{2}}{\log^6 n}$, has:
	\[ \Prx_{\bG \sim \calG_1}[\Alg \text{ outputs ``$\calG_1$''}] \leq (1 + o(1)) \Prx_{\bG \sim \calG_2}[\Alg\text{ outputs ``$\calG_1$''}] + o(1). \]
\end{lemma}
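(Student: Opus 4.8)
The plan is to analyze what a deterministic non-adaptive algorithm $\Alg$ with query sets $L_1, \dots, L_q$ (with $\sum |L_i| \le n^2/\log^6 n$) can learn from the rejection sampling oracle when $\bG$ is drawn from $\calG_1$ versus $\calG_2$. The key observation is that the response $\bv_i$ to a query $L_i$ depends only on how the random edge $(\bj_1, \bj_2) \sim \bG$ intersects $L_i$: we get an \emph{edge} (both endpoints in $L_i$), a \emph{lone vertex} (exactly one endpoint in $L_i$), or $\emptyset$ (neither endpoint in $L_i$). I would first classify the queries by size. For ``large'' queries (say $|L_i| \ge n/\log n$ or so), they are expensive, so there can be at most $\widetilde{o}(n)$ of them — but these are precisely the queries that are likely to return an actual edge; the plan is to show that even observing all the edges from such queries yields only a forest of small components (no odd cycle, hence no way to distinguish bipartite from non-bipartite), essentially because the number of edge-responses is $\widetilde o(n)$ and a random sparse subgraph of $K_{n/2} \cup K_{n/2}$ or $K_{n/2,n/2}$ looks like a union of tiny trees. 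For ``small'' queries, the probability of getting an edge back is $O(|L_i|^2/n^2)$, so the expected total number of edge-responses across small queries is $O(\sum |L_i|^2 / n^2)$, which by the cost bound is $\widetilde o(n)$; a Chernoff/union bound argument pins this down with probability $1-o(1)$.

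The heart of the argument is handling the lone-vertex responses — this is exactly the content of the promised Lemma~\ref{lem:calE_F} referenced earlier (the ``Lemma~\ref{lem:few-lone}'' hook), and I expect this to be the main obstacle. Here the plan is: when $L_i$ is split by the hidden partition $\bA, \obA$ into $L_i \cap \bA$ and $L_i \cap \obA$, a lone-vertex response reveals one endpoint of a random edge together with the information that its partner lies outside $L_i$. The worry is that accumulating many such lone vertices could let the algorithm estimate $|L_i \cap \bA|$ and thereby learn the partition. I would argue that with a cost budget of $n^2/\log^6 n$, the total number of lone-vertex responses is at most $n/\log^4 n$ with probability $1-o(1)$: for a query of size $\ell$, the probability of a lone-vertex response is $O(\ell/n)$ (in both $\calG_1$ and $\calG_2$, since the partition splits $L_i$ roughly evenly), so the expected number is $O(\sum_i |L_i|/n) = O(n/\log^6 n)$, and concentration (negative correlation / Chernoff) gives the high-probability bound. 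Then I would show that conditioned on the lone vertices observed, the posterior on $\bA$ is essentially the same under $\calG_1$ and $\calG_2$ — intuitively because each lone vertex contributes only a vanishing amount of information about the cut structure, and $o(n)$ of them cannot collectively distinguish a balanced partition realized by two cliques from one realized by a complete bipartite graph.

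To assemble these pieces, I would couple the two experiments: condition on the realization of $\bM$ (irrelevant here), the partition $\bA$, and the ``type vector'' recording for each query whether the sampled edge intersects $L_i$ fully, partially, or not at all. Under this conditioning, the distribution of the \emph{actual responses} is governed only by (a) which endpoints land in $L_i$ and (b) the graph structure. For empty-set and lone-vertex responses I would show the response distributions in $\calG_1$ and $\calG_2$ are within $o(1)$ total variation in aggregate — using the bound on the number of lone vertices, the fact that empty-set responses carry no information beyond the type vector itself, and a Roos-type bound (as cited in the excerpt) to control the product of slightly-perturbed Bernoullis. The only responses that could in principle distinguish are edges, and those are ruled out by the forest argument above. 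Putting it together, the total variation distance between the transcript distributions is $o(1)$ on an event of probability $1-o(1)$, which yields $\Prx_{\bG\sim\calG_1}[\Alg\text{ outputs ``}\calG_1\text{''}] \le (1+o(1))\Prx_{\bG\sim\calG_2}[\Alg\text{ outputs ``}\calG_1\text{''}] + o(1)$.

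The main obstacle, as flagged, is the lone-vertex analysis: one must rule out the subtle ``strategy 3'' from the introduction, where the algorithm uses lone-vertex responses to reverse-engineer how each query set $L_i$ is split by the hidden partition. The delicate point is that the responses across different queries are \emph{not} independent (they all depend on the single hidden partition $\bA$), so a naive union bound over queries is not enough; I would need to argue that the \emph{joint} distribution of lone-vertex responses is close in the two cases, likely by first conditioning on $\bA$ and the split sizes $|L_i \cap \bA|$ (which concentrate by the hypergeometric tail bounds, Theorem~\ref{thm:HG-Chernoff}, and are distributed identically in $\calG_1$ and $\calG_2$), and then observing that conditioned on these splits, each lone-vertex response is a nearly-uniform sample from $\obM$-side-or-not with the same bias in both cases.
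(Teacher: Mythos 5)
Your high-level plan matches the paper's skeleton in most respects — ruling out cycles (your ``forest of small components'' is the paper's event $\calE_T$, proved via union bounds over potential cycles/large trees in Claims~\ref{lem:no-cycles} and~\ref{lem:no-large}), bounding the number of non-empty responses by $n/\log^4 n$ (the event $\calE_F$, Lemma~\ref{lem:calE_F}), and then conditioning on $\bA$ and the ``type vector'' so the remaining responses become independent. The paper also argues outcome-by-outcome via a likelihood-ratio bound on good outcomes rather than a TV bound on the whole transcript (Lemmas~\ref{lem:consistency} and~\ref{lem:good}); that is essentially equivalent here, since the paper in fact proves that on the good event the per-outcome ratio is $1+o(1)$, which (restricted to a set of total mass $1-o(1)$) would give your TV claim too.

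However, there is a genuine gap in how you propose to handle the lone-vertex responses, and it is exactly where the paper introduces its most delicate device. You write that, conditioned on the split sizes $|L_i\cap\bA|$, ``each lone-vertex response is a nearly-uniform sample from $\obM$-side-or-not \emph{with the same bias in both cases}.'' This is false: writing $\bY_i$ for the indicator that the observed lone vertex lies in $\bA$, the per-response likelihood picks up a factor $1+\bZ_i$ in $\calG_1$ versus $1+\bS_i$ in $\calG_2$, where $\bZ_i \approx (-1)^{\bY_i}(|L_i\cap\bA|-|L_i\cap\obA|)/n$ and $\bS_i = -\bZ_i$ — the biases have \emph{opposite sign}. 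This opposite-sign structure is precisely what makes ``strategy 3'' dangerous, and it is why conditioning on balanced splits alone is not enough: even under $\calbE_Q$ one has $|\bZ_i|=O(\sqrt{|L_i|}\log n/n)$, and summing $\sum_{i\in V_L}|\bZ_i|$ over $|V_L|\le n/\log^4 n$ queries of total cost $n^2/\log^6 n$ gives, by Cauchy--Schwarz, a bound of order $\sqrt{n}/\polylog(n)$, so $\prod_i(1+\bZ_i)$ would a priori blow up to $e^{\tilde\Omega(\sqrt n)}$. What saves the argument is the random signs $(-1)^{\bY_i}$: the paper introduces the event $\calbE_B$ that the \emph{signed} sum $\bB=\sum_{i\in\bV_L}(-1)^{\bY_i}(|L_i\cap\bA|-|L_i\cap\obA|)$ is $O(n/\log n)$, proves this holds w.h.p.\ by a concentration argument conditioned on $\bA$ (Lemma~\ref{lem:calE_Q_L}), and then uses $\prod(1+\bZ_i)\le e^{\sum\bZ_i}\le e^{O(1/\log n)}=1+o(1)$ on $\calbE_B$. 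Without identifying this cancellation-in-sign phenomenon and the corresponding event $\calbE_B$, the likelihood ratio cannot be controlled, so your plan as stated does not close. Relatedly, the paper's Consistency Lemma~\ref{lem:consistency} does not argue about a ``posterior on $\bA$'' at all; it compares $\Prx[\calbE_{C,\text{yes}}]$ against $\Prx[\calbE_{C,\text{no}}]$ directly via a counting argument over balanced partitions extending the observed forest, which is a cleaner route than trying to compare posteriors.
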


We assume $\Alg$ is a deterministic non-adaptive algorithm with $\cost(\Alg) \leq \frac{n^2}{\log^6 n}$. $\Alg$ makes queries $L_1, \dots, L_t \subset [n]$ and the oracle returns $\bv_1, \dots, \bv_t$, some of which are edges, some are lone vertices, and some are $\emptyset$. Let $\bG_o \subset \bG$ be the graph observed by the algorithm by considering all edges in $\bv_1, \dots, \bv_t$. We let $|\bG_o|$ be the number of edges.

Before going on to prove the lower bound, we use the following simplification. First, we assume that any algorithm $\Alg$ has all its queries $L_1, \dots, L_t$ satisfying that either $|L_i| \leq \frac{n}{\log n}$, or $L_i = [n]$. Thus, it suffices to show for this restricted class of algorithms, the cost must be at least $\frac{n^2}{\log^5 n}$.

\subsection{High Level Overview}

In this subsection, we will give a high level overview of the proof of Lemma~\ref{lem:lb}.

The idea is that we will argue outcome-by-outcome; i.e., we consider the possible ways the algorithm can act, which depends on the responses to the queries the algorithm gets. Consider some responses $v_1, \dots, v_t \in [n] \cup \left([n]\times[n]\right) \cup \{\emptyset\}$, where each $v_i$ may be either a lone vertex, an edge, or $\emptyset$. Suppose that upon observing this outcome, the algorithm outputs ``$\calG_1$''. There will be two cases:
\begin{itemize}
	\item The first case is when the probability of observing this outcome from $\calG_2$ is not too much lower than the probability of observing this outcome from $\calG_1$. In these outcomes, we will not get too much advantage in distinguishing $\calG_1$ and $\calG_2$.
	\item The other case is when the probability of observing this outcome from $\calG_2$ is substantially lower than the probability of observing this outcome from $\calG_1$. These cases do help us distinguish between $\calG_1$ and $\calG_2$; thus, we will want to show that collectively, the probability that we observe these outcomes from $\calG_1$ is $o(1)$. 
\end{itemize}
We will be able to characterize the outcomes which fall into the first case and the second case by considering a sequence of events. In particular we define five events which depend on $v_1, \dots, v_t$, as well as the random choice of $\bA$. Consider the outcome $v_1,\dots, v_t$ which together form components $C_1, \dots, C_{\alpha}$. The events are the following\footnote{We note that the first two event are not random and depends on the values $v_1,\ldots,v_t$, and the rest are random variables depending on the partition $\bA$ and the oracle responses $\bv_1,\ldots,\bv_t$. }:
\begin{enumerate}
	\item  $\calE_{T}$ (Observe small trees): this is the event where the values of $v_1, \dots, v_t$ form components $C_1,\dots, C_{\alpha}$ which are all trees of size at most $\log n$.  
	\item $\calE_{F}$ (Observe few non-empty responses): this is the event where the values of $v_1, \dots, v_t$ have at most $\frac{n}{\log^4 n}$ non-$\emptyset$ responses. This event implies that the total number of vertices in the responses $v_1, \dots, v_t$ is at most $\frac{n}{\log^4 n}$. 
	\item $\calbE_{C, \text{yes}}$ and $\calbE_{C,\text{no}}$ (Consistency condition of the components observed): these are the events where $\bA \subset [n]$ partitions the components $C_1, \dots, C_{\alpha}$ in a manner consistent with $\calG_1$ in $\calbE_{C,\text{yes}}$ or $\calG_2$ in $\calbE_{C, \text{no}}$. See Definition~\ref{def:event-cons} for a formal definition of this event. These events are random variables that depend only on $\bA$. It will become clear that in order to observe the outcome $v_1, \dots, v_t$ in $\calG_1$, event $\calbE_{C, \text{yes}}$ must be triggered, and in $\calG_2$, event $\calbE_{C, \text{no}}$ must be triggered. See Figure~\ref{Fig:consistent} for an illustration.
	
	\begin{figure}[ht!] 
		\centering
		\begin{picture}(400, 160)
		\put(10,-170){\includegraphics[width=0.8\linewidth]{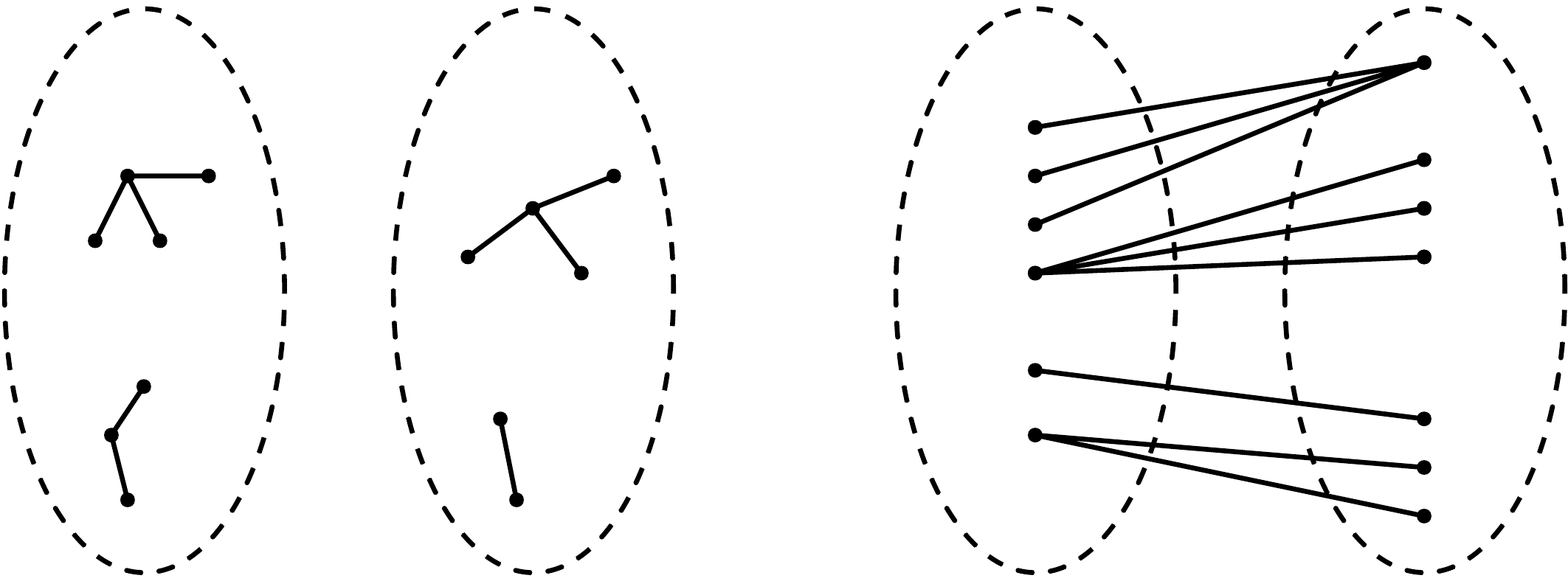}}
		\put(40, 150){$\bA$}
		\put(135, 150){$\obA$}
		\put(252, 150){$\bA$}
		\put(349, 150){$\obA$}
		\put(25, 106){$C_1$}
		\put(20, 30){$C_3$}
		\put(122, 98){$C_2$}
		\put(140, 30){$C_4$}
		\put(357, 123){$C_1$} 
		\put(240, 70){$C_2$}
		\put(243, 25){$C_3$}
		\put(357, 38){$C_4$}
		\end{picture}
		\caption{\label{Fig:consistent} $\bA$ consistently partition of the components $C_1, C_2, C_3$ and $C_4$ according to $\calG_1$ (on the left) and $\calG_2$ (on the right). }
	\end{figure}
	
	\item $\calbE_O$ (Observe specific responses) : this event is over the randomness in $\bA$, as well as the randomness in the responses of the oracle $\bv_1, \dots, \bv_t$. The event is triggered when the responses of the oracle are exactly those dictated by $v_1,\dots, v_t$; i.e., for all $i \in [t]$, $\bv_i = v_i$.
	\item $\calbE_{B}$ (Balanced lone vertices condition) : this event is over the randomness in $\bA$, as well as the responses $\bv_1, \dots, \bv_t$. The event occurs when a particular quantity which depends on $\bA$ and $\bv_1, \dots, \bv_t$ is bounded by some predetermined value. See Definition~\ref{def:event-l} for a formal definition.
\end{enumerate}
\newcommand{\Ysample}{\substack{\bG\sim\calG_1 \\ \bv_1, \dots, \bv_t}}
\newcommand{\Nsample}{\substack{\bG\sim\calG_2 \\ \bv_1, \dots, \bv_t}}
Having defined these events, the lower bound follows by the following three lemmas. The first lemma says that for any outcomes satisfying $\calE_T$ and $\calE_F$, the probability over $\bA$ of being consistent in $\calG_1$ cannot be much higher than in $\calG_2$. The second lemma says that the outcomes satisfying the events described above do not help in distinguishing $\calG_1$ and $\calG_2$. The third lemma says that good outcomes occur with high probability over $\calG_1$.
\begin{lemma}[Consistency Lemma]\label{lem:consistency}
	Consider a fixed $v_1, \dots, v_t \in [n] \cup \left([n]\times[n]\right) \cup \{\emptyset\}$ forming components $C_1,\dots, C_{\alpha}$ where events $\calE_{T}$ and $\calE_F$ are satisfied. Then, we have:
	\[ \Prx_{\Ysample}[\calbE_{C,\text{yes}}] \leq (1 + o(1)) \Prx_{\Nsample}[\calbE_{C,\text{no}}]. \]
\end{lemma}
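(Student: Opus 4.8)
The plan is to reduce the inequality to an elementary comparison of two sums of binomial coefficients. Since $\calbE_{C,\text{yes}}$ and $\calbE_{C,\text{no}}$ depend only on the random set $\bA$, and $\bA$ is a uniformly random size-$\tfrac n2$ subset of $[n]$ whether $\bG\sim\calG_1$ or $\bG\sim\calG_2$, it suffices to prove $\Prx_{\bA}[\calbE_{C,\text{yes}}]\le(1+o(1))\,\Prx_{\bA}[\calbE_{C,\text{no}}]$ for a single uniformly random $\bA$. Write $V=C_1\cup\cdots\cup C_\alpha$, $n_i=|C_i|$, and $M=|V|=\sum_i n_i$. By $\calE_F$ we have $M=O(n/\log^4 n)$, and by $\calE_T$ each $C_i$ is a tree of size $O(\log n)$, so $\sum_i n_i^2\le(\max_i n_i)\cdot M=O(n/\log^3 n)$; this variance-type bound is used twice.

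The first step is to write both probabilities exactly. The event $\calbE_{C,\text{yes}}$, that each $C_i$ lies entirely inside $\bA$ or entirely inside $\obA$, is the disjoint union over sign vectors $\sigma\in\{-1,+1\}^\alpha$ of the events ``$C_i\subseteq\bA$ if $\sigma_i=+1$, $C_i\subseteq\obA$ if $\sigma_i=-1$''; such an event forces exactly $\tfrac M2+\tfrac12\sum_i\sigma_i n_i$ vertices of $V$ into $\bA$, hence has probability $\binom{n}{n/2}^{-1}\binom{n-M}{\,n/2-M/2-\tfrac12\sum_i\sigma_i n_i\,}$. Because each $C_i$ is a tree it has a unique proper $2$-colouring, with colour-class sizes $p_i\ge q_i$ and $p_i+q_i=n_i$; analogously, $\calbE_{C,\text{no}}$ is the disjoint union over $\tau\in\{-1,+1\}^\alpha$ of the events orienting these colourings relative to $\bA$, with probability $\binom{n}{n/2}^{-1}\binom{n-M}{\,n/2-M/2-\tfrac12\sum_i\tau_i(p_i-q_i)\,}$. (All the displayed binomial arguments are integers, being counts of vertices forced into $\bA$.) Writing $N=n-M$ and $g(s)=\binom{N}{N/2-s}/\binom{N}{N/2}$ (automatically even in $s$, and strictly decreasing in $|s|$ since $g(s+1)/g(s)=\tfrac{N/2-s}{N/2+s+1}<1$), these identities give
\[ \frac{\Prx_{\bA}[\calbE_{C,\text{yes}}]}{\Prx_{\bA}[\calbE_{C,\text{no}}]}=\frac{\Ex_{\sigma}\!\big[g\big(\tfrac12\sum_i\sigma_i n_i\big)\big]}{\Ex_{\tau}\!\big[g\big(\tfrac12\sum_i\tau_i(p_i-q_i)\big)\big]}, \]
with $\sigma,\tau$ uniform on $\{-1,+1\}^\alpha$. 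The point I would stress is that the two events have the same number $2^\alpha$ of ``configurations'', each contributing a binomial coefficient near the central one $\binom{N}{N/2}$; this is exactly why the ratio is $1+o(1)$.

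It then remains to show each of these two expectations is $1-o(1)$. Both are $\le 1$ since $0\le g\le 1$. For the matching lower bound, a standard estimate gives $g(s)=\exp(-(2+o(1))s^2/N)$ uniformly for $|s|=o(N^{2/3})$ (using $\ln(1-x)\le -x$ for the one-sided bound needed below, and $\ln(1-x)\ge -x-x^2$ for the other direction); in particular $g(T)=1-o(1)$ whenever $T=o(\sqrt n)$, say $T=\sqrt n/\log n$. The signed sum $\bS=\sum_i\sigma_i n_i$ is a sum of independent mean-zero terms of variance $\sum_i n_i^2=O(n/\log^3 n)$, so Chebyshev gives $\Prx[|\bS|>T]=o(1)$ for $T=\sqrt n/\log n$, whence $\Ex_\sigma[g(\tfrac12\bS)]\ge g(\tfrac12 T)\,\Prx[|\bS|\le T]=1-o(1)$. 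The identical argument applies on the $\calbE_{C,\text{no}}$ side, using $|p_i-q_i|\le n_i$ so that $\sum_i(p_i-q_i)^2\le\sum_i n_i^2=O(n/\log^3 n)$. Combining, the displayed ratio is $1+o(1)$, which proves the lemma with room to spare.

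The conceptual heart is the second step — recognizing both events as sums over $2^\alpha$ near-central binomial terms, with the configuration count common to both — after which only routine concentration remains. The one point that genuinely needs care is that $s$ ranges up to $\Theta(M)$, which need not be $o(\sqrt N)$, so one cannot simply claim $g(s)=1-o(1)$ for all relevant $s$; the plan sidesteps this by only using $g\le 1$ to discard the tail $|\bS|>T$ and the estimate $g(T)=1-o(1)$ at the small threshold $T=\sqrt n/\log n$, neither of which touches the extreme values of $s$. The remaining bookkeeping — the precise relation between the count of non-$\emptyset$ responses and $M$, parities of $N$, and checking that the binomial arguments lie in $[0,n-M]$ — is harmless.
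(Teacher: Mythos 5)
Your proof is correct and rests on the same underlying decomposition as the paper's — each event is a disjoint union over $2^\alpha$ root/sign assignments, and the comparison comes down to ratios of near-central binomial coefficients — but your execution is different and cleaner. The paper introduces an auxiliary balancedness event $\calbE_W$ on the no side, proves a term-by-term comparison given $\calbE_W$ (upper-bounding each yes-term crudely by the central binomial $\binom{n-V}{n/2-V/2}$, lower-bounding each no-term by the extreme binomial permitted by $\calbE_W$), and then separately shows $\calbE_W$ holds with probability $1-o(1)$ via a Chernoff bound for negatively correlated indicators. You dispense with $\calbE_W$ entirely: after pulling out the common normalization $2^\alpha\binom{N}{N/2}/\binom{n}{n/2}$, both probabilities become expectations $\Ex[g(\text{signed sum})]$ with $g(s)=\binom{N}{N/2-s}/\binom{N}{N/2}\in[0,1]$ decreasing in $|s|$, and a single Chebyshev estimate — using $\sum_i n_i^2\le(\max_i n_i)\cdot M=O(n/\log^3 n)$, which is exactly where $\calE_T$ and $\calE_F$ are consumed — shows each expectation is $1-o(1)$. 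This is symmetric between the two sides, avoids the crude max-binomial bound, and gives the two-sided conclusion $\Prx[\calbE_{C,\text{yes}}]=(1\pm o(1))\Prx[\calbE_{C,\text{no}}]$ for free. The integrality and parity bookkeeping you flag as harmless is indeed harmless, and the step $\Ex[g(\tfrac12\bS)]\ge g(\tfrac12 T)\Prx[|\bS|\le T]$ is airtight given the monotonicity of $g$ in $|s|$ and $g\ge 0$ on the tail.
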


\begin{lemma}[Good Outcomes Lemma]\label{lem:good}
	Consider a fixed $v_1, \dots ,v_{t} \in [n] \cup\left( [n] \times [n]\right) \cup \{ \emptyset\}$ forming components $C_1,\dots, C_{\alpha}$ where events $\calE_T$ and $\calE_F$ are satisfied. Then, we have:
	\[ \Prx_{\Ysample}[\calbE_O \wedge \calbE_B \mid \calbE_{C,\text{yes}}] \leq (1 + o(1)) \Prx_{\Nsample}[\calbE_O \mid \calbE_{C,\text{no}}]. \]
\end{lemma}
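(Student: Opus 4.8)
The plan is to write both probabilities as expectations over the hidden set $\bA$ and compare the per-query contributions, using that the small discrepancy between the number of edges of $\bG$ under $\calG_1$ and under $\calG_2$ is exactly cancelled by the leading correction in the per-query response probabilities. Since each query draws an independent uniform edge of $\bG$, conditioning on $\bG$ --- equivalently on $\bA$ --- makes the responses independent and makes $\calbE_O$ depend on $\bA$ alone; and on $\calbE_O$ the event $\calbE_B$ becomes a deterministic predicate $B(\bA)$ of $\bA$ and the fixed $v_1,\dots,v_t$. Writing $p_i^{(1)}(\bA)$, $p_i^{(2)}(\bA)$ for the probability that query $L_i$ returns $v_i$ when $\bG=K_\bA\cup K_{\obA}$ and when $\bG=K_{\bA,\obA}$, the statement is
\[ \Ex_{\bA}\Big[\,B(\bA)\prod_{i=1}^t p_i^{(1)}(\bA)\ \Big|\ \calbE_{C,\text{yes}}\,\Big]\ \le\ (1+o(1))\ \Ex_{\bA}\Big[\,\prod_{i=1}^t p_i^{(2)}(\bA)\ \Big|\ \calbE_{C,\text{no}}\,\Big]. \]
Using the simplification from the start of this section ($|L_i|\le n/\log n$ or $L_i=[n]$ for every $i$), I would record closed forms in terms of $a_i:=|\bA\cap\overline{L_i}|=\tfrac n2-|\bA\cap L_i|$; writing $m_i:=n-|L_i|$ and $\delta_i:=a_i-\tfrac{m_i}{2}=\tfrac{|L_i|}{2}-|\bA\cap L_i|$ (so $|\delta_i|\le|L_i|/2$): if $v_i=\emptyset$ then $p_i^{(1)}=\big(\binom{a_i}{2}+\binom{m_i-a_i}{2}\big)/e_1=(\tfrac{m_i^2}{4}-\tfrac{m_i}{2}+\delta_i^2)/e_1$ and $p_i^{(2)}=a_i(m_i-a_i)/e_2=(\tfrac{m_i^2}{4}-\delta_i^2)/e_2$; if $v_i=\{j\}$ is a lone vertex then $p_i^{(1)},p_i^{(2)}$ equal $(\tfrac{m_i}{2}\pm\delta_i)/e_1$ and $(\tfrac{m_i}{2}\mp\delta_i)/e_2$ with the sign set by whether $j\in\bA$; and if $v_i$ is an edge then $p_i^{(1)}=1/e_1$, $p_i^{(2)}=1/e_2$, where $e_1:=2\binom{n/2}{2}$, $e_2:=(n/2)^2$, so that $e_2/e_1=1+2/n+O(n^{-2})$.

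The crux is a cancellation among the $\emptyset$-responses. Their joint contribution to $\prod_i p_i^{(1)}/\prod_i p_i^{(2)}$ is $(e_2/e_1)^{\#\{\emptyset\}}\cdot\prod_{i:v_i=\emptyset}\frac{m_i^2/4-m_i/2+\delta_i^2}{m_i^2/4-\delta_i^2}$; since only small queries yield $\emptyset$, there $m_i=(1-o(1))n$, and this product equals $\prod_i(1-2/m_i)\cdot\exp\!\big(O(n^{-2}\sum_i\delta_i^2)\big)$, so the factor $\prod_i(1-2/m_i)\approx e^{-2\#\{\emptyset\}/n}$ cancels $(e_2/e_1)^{\#\{\emptyset\}}\approx e^{2\#\{\emptyset\}/n}$, leaving only $\exp\!\big(O(n^{-2}\sum_i\delta_i^2)\big)$. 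It then suffices to show $\sum_{i:v_i=\emptyset}\delta_i^2=o(n^2)$ off an $o(1)$-probability set of $\bA$. For a uniform $\bA$, $|\bA\cap L_i|$ is hypergeometric with $\Ex[\delta_i^2]=\Theta(|L_i|)$, so $\Ex[\sum_i\delta_i^2]=O(\sum_i|L_i|)=O(\cost)=O(n^2/\log^6 n)$ and Markov gives $\sum_i\delta_i^2=o(n^2)$ with probability $1-o(1)$; the conditioning on $\calbE_{C,\text{yes}}$ is then incorporated using that, by $\calE_F$ and $\calE_T$, it fixes $\bA$ on only $O(n/\log^4 n)$ vertices, all lying in size-$\le\log n$ tree components --- so the conditional law of $\bA$ on the unobserved coordinates is again essentially uniform, the induced biases in $\Ex[a_i]$ cancel to leading order between $p_i^{(1)}$ and $p_i^{(2)}$, and the extra variance coming from the observed vertices is absorbed into the same cost budget.

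For the edge responses there are at most $\cost/n+n/\log^4 n=o(n)$ of them (at most $\cost/n$ queries equal $[n]$, and any other edge response is one of the $\le n/\log^4 n$ non-$\emptyset$ responses counted by $\calE_F$), so $(e_2/e_1)^{\#\{\text{edge}\}}=1+o(1)$ and each such query contributes the same count ($1$) under both models. The genuinely delicate case is the lone-vertex responses: each contributes $(e_2/e_1)\cdot\big(a_i/(m_i-a_i)\big)^{\pm1}=(1+O(1/n))\big(1+O(\delta_i/n)\big)^{\pm1}$ with the sign matching the side of the lone vertex, and since $\sum_i|\delta_i|\le\tfrac12\sum_i|L_i|=O(\cost)$ can far exceed $n$, multiplying these over up to $n/\log^4 n$ lone vertices would blow up unless the $\pm$ signs cancel; this is exactly what the balanced-lone-vertices event $\calbE_B$ (Definition~\ref{def:event-l}) is engineered to ensure, namely that the signed sum $\sum_{i:\,j\in\bA}\log\frac{a_i}{m_i-a_i}-\sum_{i:\,j\in\obA}\log\frac{a_i}{m_i-a_i}$ over lone-vertex queries is $o(1)$ (which, since $\delta_i$ is mean-zero with $\sum_i\Var(\delta_i)=O(\cost)$, holds with probability $1-o(1)$). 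Finally, to pass between the two conditional laws of $\bA$ I would couple them: since both events constrain $\bA$ only on the $O(n/\log^4 n)$ observed vertices (in size-$\le\log n$ trees) and leave singleton components free, one can couple so that the two copies of $\bA$ agree on every singleton vertex and differ on an $O(n/\log^4 n)$-size symmetric difference, which moves each $a_i$ by $O(n/\log^4 n)$ and hence changes each surviving $1+o(1)$ factor negligibly. Combining the three cases under this coupling gives, off an $o(1)$-probability exceptional set, the pointwise bound $B(\bA)\prod_i p_i^{(1)}(\bA)\le(1+o(1))\prod_i p_i^{(2)}(\bA')$, and taking expectations proves the lemma.

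I expect the lone-vertex analysis, together with the precise choice of $\calbE_B$, to be the main obstacle: unlike the $\emptyset$-responses these factors do not automatically cancel the $e_2/e_1$ blow-up, so one must define $\calbE_B$ strongly enough to pin the signed lone-vertex sum at $o(1)$ yet weakly enough that discarding $\neg\calbE_B$ --- handled in the companion lemma showing good outcomes are typical under $\calG_1$ --- costs only $o(1)$. A secondary technical difficulty is the careful second-moment bookkeeping needed to keep $\sum_i\delta_i^2$ and the conditioning-induced biases under control once one conditions on $\calbE_{C,\text{yes}}$, in particular when the observed tree components are unbalanced as bipartite graphs.
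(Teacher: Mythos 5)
You take a genuinely different route from the paper. The paper decomposes $\calY=\Ex_\bA[\calY_E\calY_L\calY_T\cdot\calbE_B\mid\calbE_{C,\text{yes}}]$ and $\calN=\Ex_\bA[\calN_E\calN_L\calN_T\mid\calbE_{C,\text{no}}]$ and then bounds each of the three pieces with one-sided inequalities that hold for every admissible $\bA$ on its own conditioning (Claims~\ref{clm:Y_E},~\ref{clm:Y_T},~\ref{clm:LoneUpper},~\ref{clm:LoneLower}), so it never needs to relate the two conditional laws of $\bA$ to each other. You instead compare the products $\prod_i p_i^{(1)}(\bA)$ and $\prod_i p_i^{(2)}(\bA')$ outcome-by-outcome under a coupling. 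Your per-query arithmetic is correct, and in fact you are more careful than the paper in the $\emptyset$ case: the $\delta_i^2$ term you isolate (equivalently, the $E_G(L_i,L_i)$ edges inside $L_i$) is exactly what the paper drops when it writes ``$\Prx[\bv_i=\emptyset\mid Y(A)]=1-\tfrac{2|L_i|}{n-2}$,'' and that equality is not right; one can cook up an $A$ with $\delta_i^2$ near $(|L_i|/2)^2$ where $\calY_T>\calN_T$. So identifying the $\exp\!\big(O(n^{-2}\sum_i\delta_i^2)\big)$ excess is to your credit.

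There is, however, a genuine gap in how you dispose of that excess. You control $\sum_i\delta_i^2$ via Markov, obtaining a pointwise bound only off an $o(1)$-probability exceptional set of $\bA$. That yields $\calY\le(1+o(1))\calN+o(1)$, not $\calY\le(1+o(1))\calN$. The lemma is applied once per outcome $\ell\in\Lambda_G$ in the proof of Lemma~\ref{lem:lb}, and there are exponentially many outcomes, so an additive $o(1)$ per outcome is fatal: the sum $\sum_\ell\big[(1+o(1))\Prx[\calbE_{O,\ell}\mid\calbE_{C,\text{no}}]+o(1)\big]$ blows up. Moreover, on the exceptional set the ratio can really be huge (worst case $\sum_i\delta_i^2\le\tfrac14\sum_i|L_i|^2$ can be $\Theta(n^3/\log^7 n)$, giving an $e^{\Theta(n/\log^7 n)}$ factor), and $\calbE_B$ as defined controls a signed sum of $\delta_i$ over lone-vertex queries only, so it does not exclude this set. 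The repair is structural rather than a calculation: the event ``$\sum_{i:\,v_i=\emptyset}\delta_i^2=o(n^2)$'' (or the stronger $\calbE_Q$ of Definition~\ref{def:event-l}'s companion, $||L_i\cap\bA|-|L_i\cap\obA||=O(\sqrt{|L_i|}\log n)$ for all $i$) must be adjoined to $\calbE_B$ inside the conditioning of the lemma, with its complement charged in the Bad Outcomes Lemma, exactly as is done for $\calbE_B$ itself.

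A secondary issue is that the coupling is not actually constructed. Under $\calbE_{C,\text{yes}}$ and $\calbE_{C,\text{no}}$ the observed components place different (random) numbers of vertices in $\bA$, so, because $|\bA|=n/2$ is fixed, the singleton vertices cannot agree exactly; the symmetric difference on singletons is of order $|\bW_{A,\text{yes}}-\bW_{A,\text{no}}|$, which is random and up to $\Theta(n/\log^4 n)$. Such a shift moves $|\bA\cap L_i|$, and hence $\delta_i$, by up to $\min(|L_i|,\,n/\log^4 n)$ in the worst case, which is far larger than the typical $O(\sqrt{|L_i|}\log n)$ size of $\delta_i$, so the claim that the coupling ``changes each surviving $1+o(1)$ factor negligibly'' needs a real argument (a random rather than adversarial placement of the coupling discrepancy, uniform over the singletons, and a union bound over $i$). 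The paper sidesteps both of these obstacles by never coupling the two laws of $\bA$.
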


\begin{lemma}[Bad Outcomes Lemma]\label{lem:bad}
	We have that:
	\[ \Prx_{\Ysample}[\neg \calbE_{T} \vee \neg \calbE_{F} \vee \neg \calbE_{B} ] = o(1). \]
\end{lemma}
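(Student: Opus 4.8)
For Lemma~\ref{lem:bad}, the plan is to bound the three failure events separately, so that $\Prx_{\Ysample}[\neg\calbE_T\vee\neg\calbE_F\vee\neg\calbE_B]\le\Prx_{\Ysample}[\neg\calbE_F]+\Prx_{\Ysample}[\neg\calbE_T]+\Prx_{\Ysample}[\neg\calbE_B]$, and to show each summand is $o(1)$. I will use the stated simplification (every query has $|L_i|\le n/\log n$ or $L_i=[n]$), the bound $\cost(\Alg)\le n^2/\log^5 n$ of the restricted setting, and the facts that for every $\bG\sim\calG_1$ one has $|E(\bG)|=(1+o(1))n^2/4$ and maximum degree below $n/2$; every estimate below holds conditionally on any such $\bG$, using that responses to distinct queries are independent given $\bG$. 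For $\neg\calbE_F$, a first-moment bound suffices: a query $L_i$ produces a non-empty response only if the uniform sampled edge is incident to $L_i$, which has probability at most $|L_i|(n/2)/|E(\bG)|=O(|L_i|/n)$ (and at most $1$ when $L_i=[n]$), so $\Ex[\#\{i:\bv_i\neq\emptyset\}]\le\#\{i:L_i=[n]\}+\sum_i O(|L_i|/n)=O(\cost(\Alg)/n)=O(n/\log^5 n)$, and Markov's inequality gives $\Prx[\#\{i:\bv_i\neq\emptyset\}>n/\log^4 n]=O(1/\log n)=o(1)$.

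For $\neg\calbE_T$ I will estimate the expected number of ``bad'' connected subgraphs of the observed graph $\bG_o$. The structural point is that if $\bG_o$ contains a tree on $k$ vertices, then there are distinct query-indices $j_1,\dots,j_{k-1}$ whose responses, taken in a suitable order, successively grow this tree one new vertex at a time; by independence the probability of a fixed such growth pattern factors, and the probability that step $m$ attaches to the $m$ vertices already present is at most $m\cdot\min(|L_{j_m}|,n/2)/|E(\bG)|$. Summing over the free index gives $\sum_{j_m} m\min(|L_{j_m}|,n/2)/|E(\bG)|\le m\cost(\Alg)/|E(\bG)|=O(m/\log^5 n)$, and likewise $\sum_{j_1}\Prx[\bv_{j_1}\text{ is an edge}]=O(n/\log^5 n)$. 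Multiplying over $m=1,\dots,k-1$ and summing over orderings, $\Ex[\#\text{ trees on }k\text{ vertices in }\bG_o]=O(n)(k-1)!\,O(1)^k/\log^{5(k-1)}n$, which is $o(1)$ once $k=\lceil\log n\rceil$, since $(\log n)^{\log n}$ exceeds every fixed power of $n$. This rules out all components of size at least $\log n$, hence all cycles of length at least $\log n$; appending one further factor $\Prx[\bv_{j_\ell}=e]\le 1/|E(\bG)|$ with $\sum_{j_\ell}1/|E(\bG)|\le t/|E(\bG)|=O(1/\log^5 n)$ shows $\Ex[\#\text{ cycles of length }\ell\text{ in }\bG_o]=o(1)$ for every $3\le\ell<\log n$, and the series over $\ell$ remains $o(1)$ since it decays super-geometrically. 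Hence $\Prx[\neg\calbE_T]=o(1)$.

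For $\neg\calbE_B$, by Definition~\ref{def:event-l} the event $\calbE_B$ requires a fixed nonnegative quantity built from $\bA$ and the lone-vertex responses to stay below a prescribed threshold; I will deduce it from the purely combinatorial event that the uniform balanced set $\bA$ splits every query set $L_i$ into halves of size $|L_i|/2\pm\widetilde{O}(\sqrt{|L_i|})$, which depends only on $\bA$ and the (fixed, non-adaptive) query sets. Since $|L_i\cap\bA|\sim\mathrm{HG}(n,n/2,|L_i|)$, Theorem~\ref{thm:HG-Chernoff} with deviation $\Theta(\sqrt{|L_i|\log n})$ fails for a fixed $i$ with probability $n^{-\omega(1)}$, so a union bound over the at most $t=\poly(n)$ queries establishes this event with probability $1-o(1)$; combined with $\calbE_F$ (at most $n/\log^4 n$ lone-vertex responses, each arising from a query of size at most $n/\log n$), the threshold defining $\calbE_B$ is met, and therefore $\Prx[\neg\calbE_B]=o(1)$.

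The main obstacle is the structural estimate in the $\calbE_T$ step: the tester is non-adaptive but otherwise arbitrary and may make as many as $\Theta(n^2/\log^5 n)$ queries, so a naive union bound over $t^k$ index tuples is hopeless; the argument survives only because every attachment probability can be charged against $\cost(\Alg)/|E(\bG)|=O(1/\log^5 n)$ rather than against $t$, and this is precisely what forces the expected counts of both large trees and short cycles to $0$. The $\calbE_B$ step is comparatively routine once its definition is unpacked, but some care is needed so that conditioning on the identity of the lone-vertex responses does not spoil the concentration input, which is why I route $\calbE_B$ through an event depending only on $\bA$.
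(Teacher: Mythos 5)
Your decomposition into a union bound over $\neg\calbE_T,\neg\calbE_F,\neg\calbE_B$ and your Markov argument for $\neg\calbE_F$ match the paper, and your growth-process bound for large tree components is a valid alternative to the paper's labeled-rooted-tree count. However there are two genuine gaps.

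\textbf{Short cycles.} Your bound on $\Ex[\#\text{ cycles of length }\ell]$ is $O(n)\cdot(O(1/\log^5 n))^{\ell}$ (up to a factorial you can discard for paths): the leading $O(n)$ comes from the first step, $\sum_{j_1}\Prx[\bv_{j_1}\text{ is an edge}]=O(n/\log^5 n)$, which pays for the freedom in choosing the first vertex $u_1$. For a triangle this is $\Theta(n/\log^{15}n)$, which is \emph{not} $o(1)$; the same is true for every constant $\ell$. The paper avoids the free $n$ by using the cyclic structure: every vertex $u_j$ of the cycle must lie in $L_{i_j}\cap L_{i_{j+1}}$, so the count over vertex assignments is $\prod_j|L_{i_j}\cap L_{i_{j+1}}|\le\prod_j|L_{i_j}|$, and summing this against $(1/|E|)^\ell$ over ordered $\ell$-tuples of query indices yields $(\cost/|E|)^\ell=(O(1/\log^5 n))^\ell$ with no factor of $n$. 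Your path-then-close accounting never charges the constraint $u_1\in L_{j_\ell}$ (you only bound the closing step by $t/|E|$), which is precisely where the $n$ should cancel. This gap is not cosmetic: the statement of $\calbE_T$ excludes \emph{all} cycles, including constant-length ones, and those are exactly where your bound is $\omega(1)$.

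\textbf{The event $\calbE_B$.} You claim $\calbE_B$ follows deterministically from (i) the per-query balance $||L_i\cap\bA|-|L_i\cap\obA||=\widetilde O(\sqrt{|L_i|})$ and (ii) $\calbE_F$. It does not: with $|V_L|$ as large as $n/\log^4 n$ and each term as large as $\widetilde O(\sqrt{n})$, the worst-case value of $\bB$ under your two conditions is $\widetilde O(n^{3/2})$, far exceeding the $O(n/\log n)$ threshold of Definition~\ref{def:event-l}. The point of the signs $(-1)^{\by_i}$ is that they cancel: conditioned on $\bA$ (satisfying the balance event) and on $\bV_L$, each $\by_i$ is an independent Bernoulli with bias $\tfrac12\pm O(\log n/\sqrt n)$, so $\bB$ is a sum of bounded independent centered (up to $O(n/\log^2 n)$) random variables whose total squared range is $O(\log^2 n\cdot\cost)=O(n^2/\log^3 n)$; a Chernoff bound then gives $|\bB|=O(n/\log n)$ with high probability. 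Your instinct to ``route through an event depending only on $\bA$'' is exactly the wrong move here, because the cancellation lives in the randomness of the $\by_i$'s, which depend on the oracle responses, not just on $\bA$; the right fix is to condition on $\bA$ \emph{and} $\bV_L$ and then apply concentration, which is what the paper does in Lemma~\ref{lem:calE_Q_L}.
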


Assuming the above three lemmas, we may prove Lemma~\ref{lem:lb}.

\begin{proof}
	Let $\Lambda$ be the set of outcomes of the algorithm which output ``$\calG_1$.'' Each outcome is a collection of responses $v_1, \dots, v_t$. We let 
	\[ \Lambda_{G} = \{ \ell \in \Lambda : \text{ responses } v_1, \dots, v_t \text{ satisfy } \calE_{T} \wedge \calE_{F}\},  \]
	and $\calbE_{O,\ell}$ be the event that responses $\bv_1, \dots, \bv_t$ result in outcome $\ell$.
	We have:
	\begin{align*}
	\Prx_{\Ysample}[\Alg \text{ outputs ``$\calG_1$''}] &\leq \sum_{\ell \in \Lambda_G}  \Prx_{\Ysample}[\ell \text{ is observed by } \Alg \mid \calbE_{C,\text{yes}}] \Prx_{\Ysample}[\calbE_{C, \text{yes}}] + \Prx_{\Ysample}[\neg \calbE_{T} \vee \neg \calbE_{F}] \\
	&\leq \sum_{\ell \in \Lambda_{G}}  \Prx_{\Ysample}\left[ \calbE_{O,\ell} \wedge \calbE_B \mid \calbE_{C,\text{yes}}\right] \Prx_{\Ysample}[\calbE_{C, \text{yes}}] + \Prx_{\Ysample}[\neg \calbE_{T} \vee \neg\calbE_{F} \vee \neg\calbE_{B}] \\
	&\leq (1 + o(1)) \sum_{\ell \in \Lambda_{G}} \Prx_{\Nsample}\left[\calbE_{O,\ell} \mid \calbE_{C,\text{no}} \right] \Prx_{\Nsample}[\calbE_{C, \text{no}}] + o(1) \\
	&\leq (1 + o(1)) \Prx_{\Nsample}[\Alg\text{ outputs ``$\calG_1$''}] + o(1),
	\end{align*}
	where we used Lemma~\ref{lem:consistency}, Lemma~\ref{lem:good}, and Lemma~\ref{lem:bad} from the second to third line. 
\end{proof}

\subsection{Proof of the Consistency Lemma: Lemma~\ref{lem:consistency}}

We now turn to proving Lemma~\ref{lem:consistency}. We first give the formal definitions of events $\calbE_{C,\text{yes}}$ and $\calbE_{C,\text{no}}$. Next, we set up some definitions necessary for the proof and give two claims which imply the lemma. For the remainder of the section, we consider fixing the responses $v_1, \dots, v_{t}\in [n] \cup \left([n]\times[n]\right)\cup \{ \emptyset\}$. We assume the responses form the components $C_1, \dots, C_{\alpha}$ which satisfy events $\calE_{T}$ and $\calE_{F}$. For each $i \in [\alpha]$, let $u_i$ be the minimum vertex in $C_i$ with respect to the natural ordering of $[n]$, and consider rooting the trees $C_i$ at $u_i$, forming a layered tree with at most $\log n$ layers. Namely, $u_i$ will be in the first layer, all its neighbors in $C_i$ will be in the second layer, and so on.  We let $C_i(\text{even})$ be the set of vertices in even layers, and $C_i(\text{odd})$ be the set of vertices in odd layers.

\begin{definition}\label{def:event-cons}
	We let $\calbE_{C, \text{yes}}$ be the event that $\bA \subset [n]$ is consistent with the observations $v_1, \dots, v_{t}$ when $\bG = K_{\bA} \cup K_{\obA}$, and $\calbE_{C, \text{no}}$ be the event that $\bA \subset [n]$ is consistent with the observations $v_1, \dots, v_t$ when $\bG = K_{\bA, \obA}$. In other words,
	\begin{itemize}
		\item In $\calbE_{C, \text{yes}}$: for all $i \in [\alpha]$, either $C_{i} \subset \bA$ or $C_i \subset \obA$. 
		\item In $\calbE_{C, \text{no}}$: for all $i \in [\alpha]$, either $C_{i}(\text{odd}) \subset \bA$ and $C_{i}(\text{even}) \subset \obA$, or $C_{i}(\text{odd}) \subset \obA$ and $C_i(\text{even}) \subset \bA$.
	\end{itemize}
\end{definition}

For each $i \in [\alpha]$, let $\bY_i$ be the indicator random variable for $u_i \in \bA$. Let:
\[ \bW_{A, \text{yes}} = \sum_{i=1}^{\alpha} \bY_i\cdot |C_i| \qquad \bW_{A, \text{no}} = \sum_{i=1}^{\alpha} \left(\bY_i \cdot |C_i(\text{odd})| + (1 - \bY_i) \cdot |C_i(\text{even})|\right) \qquad V = \sum_{i=1}^{\alpha} |C_i|.\] 

\begin{definition}
	We let $\calbE_{W}$ be the event where:
	\[ \frac{V}{2} - \sqrt{V} \log n \leq \bW_{A, \text{no}} \leq \frac{V}{2} + \sqrt{V}\log n. \]
\end{definition}

Lemma~\ref{lem:consistency} follows from the next two claims.

\begin{claim}\label{cl:when_ew}
	For $v_1, \dots, v_{t}$ satisfying events $\calE_T$ and $\calE_F$, we have:
	\[ \Prx_{\substack{\bG \sim \calG_1 \\ \bv_1, \dots, \bv_t}}[\calbE_{C, \text{yes}} \wedge \calbE_{W}] \leq (1 + o(1)) \Prx_{\substack{\bG \sim \calG_2 \\ \bv_1, \dots, \bv_t}}[\calbE_{C, \text{no}}]. \]
\end{claim}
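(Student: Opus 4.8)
The plan is to peel off all the oracle randomness, reduce the inequality to a comparison of two sums of binomial coefficients indexed by $\{0,1\}^{\alpha}$, and then trap both sides between $(1\pm o(1))\,2^{\alpha}p_c$ for a single central–binomial quantity $p_c$.

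First I would observe that the three events $\calbE_{C,\text{yes}}$, $\calbE_{W}$ and $\calbE_{C,\text{no}}$ are determined by the random partition $\bA$ alone: the graph $\bG$ is a deterministic function of $\bA$ in both $\calG_1$ and $\calG_2$, and none of these events mentions the oracle answers (and each is symmetric under $\bA\leftrightarrow\obA$). Hence the randomness in $\bv_1,\dots,\bv_t$ marginalises out and it suffices to show $\Prx_{\bA}[\calbE_{C,\text{yes}}\wedge\calbE_{W}]\le(1+o(1))\Prx_{\bA}[\calbE_{C,\text{no}}]$. Write $W=C_1\cup\cdots\cup C_{\alpha}$, $V=|W|=\sum_i|C_i|\le 2n/\log^{4}n$ (by $\calE_{F}$), $o_i=|C_i(\mathrm{odd})|$, $e_i=|C_i(\mathrm{even})|$, and for $y\in\{0,1\}^{\alpha}$ set $\bW_{A,\text{yes}}(y)=\sum_i y_i|C_i|$ and $\bW_{A,\text{no}}(y)=\sum_i\bigl(y_io_i+(1-y_i)e_i\bigr)=\sum_ie_i+\sum_iy_i(o_i-e_i)$ (these are the values taken by $\bW_{A,\text{yes}},\bW_{A,\text{no}}$ once $\bY_i$ is read as $y_i$). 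Because $\bA$ is uniform of size $n/2$, for any fixed $S\subseteq W$ of size $s$ we have $\Prx_{\bA}[\bA\cap W=S]=\binom{n-V}{n/2-s}/\binom{n}{n/2}=:p(s)$, depending on $S$ only through $s$; and $\calbE_{C,\text{yes}}$ (resp.\ $\calbE_{C,\text{no}}$) holds exactly when $\bA\cap W$ equals one of the $2^{\alpha}$ pairwise distinct sets $\bigcup_{i:y_i=1}C_i$ of size $\bW_{A,\text{yes}}(y)$ (resp.\ $\bigcup_{i:y_i=1}C_i(\mathrm{odd})\cup\bigcup_{i:y_i=0}C_i(\mathrm{even})$ of size $\bW_{A,\text{no}}(y)$). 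Therefore $\Prx_{\bA}[\calbE_{C,\text{yes}}]=\sum_{y}p(\bW_{A,\text{yes}}(y))$ and $\Prx_{\bA}[\calbE_{C,\text{no}}]=\sum_{y}p(\bW_{A,\text{no}}(y))$.

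For the upper bound I would discard $\calbE_{W}$ (this only decreases the probability) and use that a binomial coefficient is maximised at its centre: for every integer $s$ with $0\le s\le V$ one has $0\le n/2-s\le n-V$, so $\binom{n-V}{n/2-s}\le\binom{n-V}{\lfloor(n-V)/2\rfloor}$ and $p(s)\le p_c:=\binom{n-V}{\lfloor(n-V)/2\rfloor}/\binom{n}{n/2}$; hence $\Prx_{\bA}[\calbE_{C,\text{yes}}\wedge\calbE_{W}]\le\sum_{y}p(\bW_{A,\text{yes}}(y))\le 2^{\alpha}p_c$. For the lower bound I would restrict $\sum_{y}p(\bW_{A,\text{no}}(y))$ to the $y$ with $|\bW_{A,\text{no}}(y)-V/2|\le\lambda$ for $\lambda:=\sqrt{V}\,\log^{3/2}n$, and combine two estimates. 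First, the quantitative central–binomial bound $\binom{N}{N/2-t}/\binom{N}{N/2}\ge 1-O(t^2/N)$ (valid for $t^2=o(N)$), applied with $N=n-V$ and $|t|\le\lambda+O(1)$, gives $p(\bW_{A,\text{no}}(y))\ge(1-o(1))p_c$ for all admitted $y$, since $\lambda^{2}/n=O(V\log^{3}n/n)=O(1/\log n)=o(1)$ by $\calE_{F}$. Second, McDiarmid's inequality applied to $\by\mapsto\bW_{A,\text{no}}(\by)$ on the uniform cube (mean $V/2$; one coordinate flip changes the value by $|o_i-e_i|\le|C_i|\le\log n$, which is where $\calE_{T}$ enters; and $\sum_i(o_i-e_i)^2\le\log n\sum_i|o_i-e_i|\le V\log n$) yields $\Prx_{\by}[|\bW_{A,\text{no}}(\by)-V/2|>\lambda]\le 2\exp(-2\lambda^2/(V\log n))=2\exp(-2\log^{2}n)=o(1)$, so a $(1-o(1))$-fraction of $y$ is admitted. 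Multiplying, $\Prx_{\bA}[\calbE_{C,\text{no}}]\ge(1-o(1))2^{\alpha}p_c$, and comparison with the upper bound gives $\Prx_{\bA}[\calbE_{C,\text{yes}}\wedge\calbE_{W}]\le 2^{\alpha}p_c\le(1+o(1))\Prx_{\bA}[\calbE_{C,\text{no}}]$, which is the claim.

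The reduction to $\bA$, the bijection of consistency patterns with $\{0,1\}^{\alpha}$, and the binomial monotonicity are all routine; the one point that will need care is the choice of the window half-width $\lambda$ in the lower bound, which must be simultaneously (a) small enough that every $\bW_{A,\text{no}}(y)$ within $\lambda$ of $V/2$ still sits in the asymptotically flat centre of $\binom{n-V}{\cdot}$, so that $p\approx p_c$, and (b) large enough compared with the $\Theta(\sqrt{V\log n})$ fluctuation scale of $\bW_{A,\text{no}}$ for McDiarmid to capture almost all of the cube. Both requirements can be met only because $\calE_{F}$ gives $V=O(n/\log^{4}n)$ and $\calE_{T}$ gives $|C_i|\le\log n$ — these are exactly the places where the two hypotheses are used — and, as the argument shows, $\calbE_{W}$ itself is not actually needed for this particular claim beyond being harmlessly dropped.
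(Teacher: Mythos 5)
Your proof is correct, and it takes a genuinely different route from the paper's. The paper makes a \emph{term-by-term} comparison: it restricts to $y\in Y_G$ (the patterns compatible with $\calbE_W$), upper-bounds each $\Prx[\calbE_{C,\text{yes}}\wedge\calbE_y]$ by the central binomial coefficient, lower-bounds each $\Prx[\calbE_{C,\text{no}}\wedge\calbE_y]$ using the $\calbE_W$ constraint $\bW_{A,\text{no}}(y)\approx V/2$, and directly bounds the ratio of these two binomial coefficients by $1+o(1)$. You instead make an \emph{aggregate} comparison of the two sums over $y\in\{0,1\}^{\alpha}$: the yes-side is trivially at most $2^{\alpha}p_c$, and the no-side is at least $(1-o(1))2^{\alpha}p_c$ via a concentration argument (McDiarmid on $y\mapsto\bW_{A,\text{no}}(y)$, using $\calE_T$ for the bounded differences and $\calE_F$ to place the window inside the flat part of the binomial). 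The interesting consequence of your approach is that $\calbE_W$ is never used: you prove the stronger inequality $\Prx[\calbE_{C,\text{yes}}]\le(1+o(1))\Prx[\calbE_{C,\text{no}}]$ outright, which would make the auxiliary event $\calbE_W$ and the companion Claim~\ref{cl:no_ew} unnecessary for the Consistency Lemma. The price you pay is importing a McDiarmid step and a quantitative central-binomial estimate ($\binom{N}{N/2-t}/\binom{N}{N/2}\ge 1-O(t^2/N)$), whereas the paper's ratio manipulation is more elementary once $\calbE_W$ is in hand; the concentration work you do on the no-side is essentially the same work the paper hides inside Claim~\ref{cl:no_ew}, just deployed differently (to control the bulk of the $y$'s on the no-side rather than to show the yes-side rarely escapes the window).
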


\begin{claim}\label{cl:no_ew}
	For $v_1, \dots, v_{t}$ satisfying events $\calE_T$ and $\calE_F$, we have:
	\[ \Prx_{\substack{\bG\sim \calG_1 \\ \bv_1, \dots ,\bv_t}}[\neg \calbE_{W} \mid \calbE_{C, \text{yes}}] = o(1). \]
\end{claim}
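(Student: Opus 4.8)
The plan is to condition on the event $\calbE_{C,\text{yes}}$ and show that under this conditioning, the random variable $\bW_{A,\text{no}}$ concentrates tightly around $V/2$. Recall that conditioning on $\calbE_{C,\text{yes}}$ means that $\bA$ assigns each component $C_i$ entirely to one side; equivalently, it fixes the indicator variables so that each $\bY_i$ corresponds to ``$C_i \subset \bA$'' and all vertices of $C_i$ go together. Crucially, under the distribution $\calG_1$ conditioned on $\calbE_{C,\text{yes}}$, the $\bY_i$ are not quite independent — $\bA$ is a uniform random set of size $n/2$ and we are conditioning on it containing or excluding each component wholesale — but they are \emph{negatively correlated} in the sense of the definition in the preliminaries, since the total number of vertices landing in $\bA$ is constrained. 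So I would first argue that the $(\bY_i)_{i\in[\alpha]}$, together with their complements $(1-\bY_i)$, satisfy the negative-correlation hypothesis (this is the standard fact that indicator variables of a uniform fixed-size subset are negatively correlated, and it is preserved under the ``block'' structure imposed by the components).

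Second, I would compute $\Ex[\bW_{A,\text{no}}]$. Since each $u_i$ (hence each component) is about equally likely to land on either side, $\Ex[\bY_i] \approx 1/2 \pm o(1)$ — more precisely, $\Ex[\bY_i] = \frac{n/2 - (\text{vertices already committed elsewhere})}{\cdots}$, but because event $\calE_F$ guarantees at most $n/\log^4 n$ total vertices are involved, all these probabilities are $\tfrac12 \pm O(1/\log^4 n)$. Therefore $\Ex[\bW_{A,\text{no}}] = \sum_i \big(\tfrac12 |C_i(\text{odd})| + \tfrac12 |C_i(\text{even})|\big)(1 \pm o(1)) = \tfrac{V}{2}(1\pm o(1))$. (One has to be slightly careful here: the negative-correlation-preserving conditioning can shift the mean by a lower-order term, but the $\calE_F$ bound keeps this under control.)

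Third, apply the Chernoff bound for negatively correlated variables (Theorem~1.16 from \cite{doerr2011analyzing}, quoted in the preliminaries) to $\bW_{A,\text{no}} = \sum_i a_i \bX_i$ where the $\bX_i$ run over the $\bY_i$ and $1-\bY_i$ with coefficients $a_i = |C_i(\text{odd})|/V$ or $|C_i(\text{even})|/V$ in $[0,1]$ after rescaling by $V$. Taking deviation $\delta = \Theta(\sqrt{V}\log n / V) = \Theta(\log n/\sqrt V)$, the bound gives failure probability $\exp(-\Omega(\delta^2 \Ex[\bW_{A,\text{no}}/V]\cdot V)) = \exp(-\Omega(\log^2 n))$, which is $o(1)$ even after a union bound is unnecessary since we have a single event. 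This establishes $\Prx_{\calG_1}[\neg\calbE_W \mid \calbE_{C,\text{yes}}] = o(1)$.

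The main obstacle I anticipate is \textbf{handling the conditioning correctly}: conditioning a fixed-size random subset on swallowing whole components does not yield independent $\bY_i$, so I cannot use the textbook (independent) Chernoff bound directly, and I must verify the negative-correlation property survives this conditioning and that the coefficients $|C_i(\text{odd})|, |C_i(\text{even})|$ (each at most $\log n$ by $\calE_T$) are small enough relative to $V$ that the variance is genuinely $O(V \cdot \text{polylog})$ rather than dominated by one huge block. The event $\calE_T$ (all components are trees of size $\le \log n$) is exactly what rules out a single component carrying a constant fraction of $V$, so the boundedness of the coefficients $a_i \le \log n$ is the key quantitative input; combined with $\calE_F$ to control the means, the concentration goes through.
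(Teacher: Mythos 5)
Your overall strategy matches the paper's — condition on $\calbE_{C,\text{yes}}$, exploit the $\bA\leftrightarrow\obA$ symmetry, establish negative correlation of the indicators, and close with the Chernoff bound from the preliminaries. Two steps need repair, though.

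First, the negative-correlation hypothesis is \emph{not} satisfied by the mixed collection $\{\bY_i\}_i \cup \{1-\bY_j\}_j$ that you feed into the Chernoff bound. Take $i\neq j$ and consider the variables $\bY_i$ and $1-\bY_j$: we would need $\Prx[\bY_i=1,\ \bY_j=0] \le \Prx[\bY_i=1]\Prx[\bY_j=0]$, but the fixed-size constraint on $\bA$ (conditioned on $\calbE_{C,\text{yes}}$) makes ``$C_i\subset\bA$ and $C_j\subset\obA$'' \emph{more} likely than independence would predict, so the inequality goes the wrong way. Negative correlation holds within $\{\bY_i\}_i$, and separately within $\{1-\bY_j\}_j$, but not across. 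This is exactly why the paper splits $\bW_{A,\text{no}}$ into $\bW_{A,\text{no}}^{(O)} = \sum_i \bY_i|C_i(\text{odd})|$ and $\bW_{A,\text{no}}^{(E)} = \sum_i (1-\bY_i)|C_i(\text{even})|$ and applies Chernoff to each separately, then union-bounds; your plan must do the same.

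Second, the exponent computation has a slip. After rescaling by $V$ you have $\bX=\bW_{A,\text{no}}/V$ with $\Ex[\bX]=1/2$, and the cited bound gives failure probability $\exp(-\delta^2\Ex[\bX]/3) = \exp(-\Omega(\log^2 n/V))$, which tends to $1$ (not $0$) since $V$ can be as large as $n/\log^4 n$. The expression you wrote, $\exp(-\Omega(\delta^2\Ex[\bW_{A,\text{no}}/V]\cdot V))$, has an extra factor of $V$ in the exponent that the theorem does not provide. The fix is what you yourself flag at the end of your note: rescale by $\log n$ (the $\calE_T$ bound on component sizes), not by $V$. With $\bX = \bW_{A,\text{no}}/\log n$, coefficients in $[0,1]$, $\Ex[\bX]=V/(2\log n)$, and relative deviation $\delta = 2\log n/\sqrt V$, the exponent is $\delta^2 \Ex[\bX]/3 = \Omega(\log n)$, giving failure probability $n^{-\Omega(1)} = o(1)$. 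Finally, a small clean-up: under $\calG_1$ conditioned on $\calbE_{C,\text{yes}}$ the symmetry $\bA\leftrightarrow\obA$ is exact, so $\Prx[\bY_i=1\mid\calbE_{C,\text{yes}}]=1/2$ exactly, not merely $1/2\pm o(1)$ — there is no $\calE_F$-dependent drift in the means to worry about.
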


Given Claim~\ref{cl:when_ew} and Claim~\ref{cl:no_ew}, we proceed to proving Lemma~\ref{lem:consistency}.

\begin{proofof}{Lemma~\ref{lem:consistency}}
	We simply compute the respective probabilities.
	\begin{align}
	\Prx_{\Ysample}[\calbE_{C,\text{yes}}] &= \Prx_{\Ysample}[\calbE_{C, \text{yes}} \wedge \calbE_{W}] + \Prx_{\Ysample}[\neg \calbE_W \mid \calbE_{C, \text{yes}}] \Prx_{\Ysample}[\calbE_{C, \text{yes}}] \nonumber \\
	&\leq (1 + o(1)) \Prx_{\Nsample}[\calbE_{C,\text{no}}] + o(1) \Prx_{\Ysample}[\calbE_{C,\text{yes}}] \label{eq:hehe},
	\end{align}
	Where we applied both Claim~\ref{cl:when_ew} and Claim~\ref{cl:no_ew} in Line~(\ref{eq:hehe}). Finally, this implies:
	\[ (1 - o(1)) \Prx_{\Ysample}[\calbE_{C,\text{yes}}] \leq (1+o(1)) \Prx_{\Nsample}[\calbE_{C,\text{no}}], \]
	which finishes the proof.
\end{proofof}

We now proceed to proving Claim~\ref{cl:when_ew}, followed by the proof of Claim~\ref{cl:no_ew}.

\begin{proofof}{Claim~\ref{cl:when_ew}}
	Note that $V \leq \frac{n}{\log^4 n}$ since event $\calE_F$ is satisfied. Let $y \in \{0, 1\}^\alpha$ be an assignment of $u_1, \dots ,u_{\alpha}$ to $\bA$; more formally, for a fixed $y \in \{0, 1\}^{\alpha}$, we let $\calbE_{y}$ be the event that for each $i \in [\alpha]$, $u_i \in \bA$ if $y_i = 1$, and $u_{i} \in \obA$ if $y_i = 0$. Additionally, let 
	\[ Y_G = \{ y \in \{0, 1\}^{\alpha} : \text{ if $\bA$ satisfies $\calbE_y$, then $\calbE_W$ is satisfied} \}. \]
	Then,
	\[ \Prx_{\Ysample}[\calbE_{C,\text{yes}} \wedge \calbE_{W}] = \sum_{y \in Y_{G}} \Prx_{\Ysample}[\calbE_{C,\text{yes}} \wedge \calbE_y]. \]
	It suffices to show that for $y \in Y_G$:
	\[ \Prx_{\Ysample}[\calbE_{C,\text{yes}} \wedge \calbE_y ] \leq (1 + o(1)) \Prx_{\Nsample}[\calbE_{C,\text{no}} \wedge \calbE_y]. \]
	Note that if $\bA$ satisfies $\calbE_y$ and $\calbE_{C,\text{yes}}$ is satisfied, there is precisely one choice for assigning each vertex in $C_1, \dots, C_{\alpha}$ to $\bA$ or $\obA$. Likewise, if $\bA$ satisfied $\calbE_y$ and $\calbE_{C,\text{no}}$, there is precisely one choice for assigning each vertex in $C_1, \dots, C_{\alpha}$ to $\bA$ or $\obA$. The remaining vertices may be placed in $\bA$ or $\obA$ so the resulting set $\bA$ contains half of all vertices, therefore, we have:
	\[ \Prx_{\Ysample}[\calbE_{C,\text{yes}} \wedge \calbE_y] \leq \dfrac{\binom{n-V}{\frac{n}{2} - \frac{V}{2}}}{\binom{n}{n/2}} \qquad \Prx_{\Nsample}[\calbE_{C,\text{no}} \wedge \calbE_y] \geq \dfrac{\binom{n-V}{\frac{n}{2} - \frac{V}{2} - \sqrt{V} \log n}}{\binom{n}{n/2}}. \]
	Taking the ratio, we have:
	\begin{align*}
	\dfrac{\Prx[\calbE_{C, \text{yes}} \wedge \calbE_y]}{\Prx[\calbE_{C,\text{no}} \wedge \calbE_y]} &\leq \dfrac{\binom{n - V}{\frac{n}{2} - \frac{V}{2}}}{\binom{n}{n/2}} \cdot \dfrac{\binom{n}{n/2}}{\binom{n - V}{\frac{n}{2} - \frac{V}{2} - \sqrt{V}\log n}} \leq \left( \dfrac{\frac{n}{2} - \frac{V}{2} + \sqrt{V} \log n}{\frac{n}{2} - \frac{V}{2} - \sqrt{V}\log n}\right)^{\sqrt{V}\log n} \\
	&\leq \left( 1 + O\left(\frac{1}{\sqrt{n} \log n}\right)\right)^{\sqrt{n}/\log n} = 1 + o(1). 
	\end{align*}
\end{proofof}

\begin{proofof}{Claim~\ref{cl:no_ew}}
	We let:
	\[ \bW_{A,\text{no}}^{(O)} = \sum_{i=1}^{\alpha} \bY_i \cdot |C_i(\text{odd})| \qquad \text{and} \qquad \bW_{A,\text{no}}^{(E)} = \sum_{i=1}^{\alpha} (1 - \bY_i) \cdot |C_i(\text{even})|. \]
	where $\bW_{A,\text{no}}^{(O)} + \bW_{A,\text{no}}^{(E)} = \bW_{A, \text{no}}$ specifies the number of vertices in $\cup_{i\in[\alpha]} C_i$ assigned to $\bA$. Conditioning on event $\calbE_{C,\text{yes}}$, $\bA$ and $\obA$ can be interchanged, so
	\[ \Prx_{\Ysample}[\bY_i = 1 \mid \calbE_{C,\text{yes}} ] = \Prx_{\Ysample}[\bY_i = 0 \mid \calbE_{C,\text{yes}}] = \frac{1}{2}.\]
	So, 
	\[ \Ex_{\Ysample }[\bW^O_{A,\text{no}}\mid \calbE_{C,\text{yes}}] =\frac{1}{2}\sum_{i\in[\alpha]}|C_i(\text{odd})| \qquad\text{and}\qquad \Ex_{\Ysample }[\bW^E_{A,\text{no}}\mid \calbE_{C,\text{yes}}]=\frac{1}{2}\sum_{i\in[\alpha]}|C_i(\text{even})|.\]
	Additionally, for any set of indices $I \subset [\alpha]$, 
	\[ \Prx_{\Ysample}[\forall i \in I, \bY_i = 1 \mid \calbE_{C,\text{yes}}] \leq \frac{1}{2^{|I|}} \qquad \text{and}\qquad \Prx_{\Ysample}[\forall i \in I, \bY_i = 0 \mid \calbE_{C, \text{no}} ] \leq \frac{1}{2^{|I|}},\]
	which implies that the variables $\bY_i$, as well as the variables in $1 -\bY_i$ are negatively correlated. We may apply Chernoff bounds (for negatively correlated variables) to obtain deviation bounds for $\bW_{A,\text{no}}^{(O)}$ and $\bW_{A,\text{no}}^{(E)}$. Then, a union bound gives the desired result for $\bW_{A,\text{no}}$.
\end{proofof}

\subsection{Proof of the Bad Outcomes Lemma: Lemma~\ref{lem:bad}}

In this section, we give a proof of Lemma~\ref{lem:bad}, which says that the probability over $\bG \sim \calG_1$ and $\bv_1, \dots ,\bv_t$ of not satisfying events $\calbE_{T}$, $\calbE_{F}$, as well as $\calbE_{B}$ is $o(1)$. In order to prove this, we will show that individually, the probability of not satisfying each event is $o(1)$ and conclude with a union bound.

\subsubsection{$\calbE_{T}$: components observed are small trees}

The goal of this section is to show that with high probability, the algorithm only sees edges which form various components of small trees. 

\begin{definition}
	We let $\calbE_{T}$ be the event that observed responses $\bv_1, \dots, \bv_t$ generate components $\bC_1, \dots, \bC_{\alpha}$ which are all trees of size less than $\log n$. 
\end{definition}

\begin{lemma}\label{lem:calE_T}
	We have that:
	\[ \Prx_{\substack{\bG\sim\calG_1 \\ \bv_1, \dots, \bv_t}}[\calbE_T] \geq 1 - o(1). \]
\end{lemma}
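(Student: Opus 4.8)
The plan is to show that under $\calG_1$ (a union of two cliques on $\bA$ and $\obA$ each of size $n/2$), the subgraph $\bG_o$ observed by a non-adaptive algorithm of cost at most $n^2/\log^6 n$ is, with probability $1-o(1)$, a disjoint union of trees each of size less than $\log n$. There are really two things to control: (1) the observed subgraph has no cycles, and (2) no connected component grows to size $\log n$ or more. Both follow from the fact that the number of edges the algorithm collects is small — at most $O(\cost(\Alg)) = O(n^2/\log^6 n)$ — while each individual returned edge is a uniformly random edge of a graph with $\Theta(n^2)$ edges, so the edges behave almost like independent uniform random edges of $\bG$.

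First I would reduce to the ``canonical'' query structure already set up in the section: each $L_i$ is either all of $[n]$ or has $|L_i|\le n/\log n$. Only queries with $L_i=[n]$ (or, more precisely, only responses that are actual edges rather than lone vertices or $\emptyset$) contribute edges to $\bG_o$, and the number of such edge-responses is at most $t\le \cost(\Alg)\le n^2/\log^6 n$. Write $M$ for this bound on $|\bG_o|$. Each returned edge, conditioned on $\bG$, is a uniformly random edge of $\bG$; since $\bG$ has $2\binom{n/2}{2}=\Theta(n^2)$ edges, a union bound over the at most $M^2$ pairs of returned edges shows the probability that two returned edges share a vertex is $O(M^2 \cdot n / n^2) = O(M^2/n)$. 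Wait — that is not $o(1)$ for our range of $M$, so a cruder pair bound is not enough; instead I would bound the expected number of cycles directly. For a potential cycle of length $k$ formed by $k$ of the returned edges, the probability (over the oracle's randomness) that $k$ specified response-slots return exactly the edges of a fixed $k$-cycle in $\bG$ is at most $(2/|E(\bG)|)^k \le (C/n^2)^k$, and the number of $k$-cycles in $\bG$ is at most $n^k$, and the number of ways to choose which $k$ response slots realize them is at most $M^k$; summing over $k\ge 3$ gives expected number of cycles $\le \sum_{k\ge 3}(C M / n)^k = o(1)$ provided $M = o(n)$. The same computation works for paths: a path (tree) on $\ell$ vertices within $\bG_o$ uses $\ell-1$ returned edges and $\ell$ vertices of $\bG$, contributing $\le M^{\ell-1} n^{\ell} (C/n^2)^{\ell-1}= n (CM/n)^{\ell-1}$; summing over $\ell\ge \log n$ gives $n \cdot (CM/n)^{\log n - 1}$, which is $o(1)$ once $CM/n \le 1/2$, i.e. $M\le n/(2C)$.

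The issue is that $M = n^2/\log^6 n$ is \emph{not} $o(n)$ — it is polynomially larger. So the argument above as stated fails, and the real content is to observe that the relevant quantity is not the total number of returned edges but the number of \emph{distinct} edges, and more importantly that the ``effective'' number of samples landing in any small vertex set is tiny. The right move, and I expect this to be the technical heart, is to fix a small set $U$ of vertices (the candidate support of a would-be large component or cycle, of size $k$ or $\ell$) and note that the probability a single returned edge lies inside $U\cup N$ for a fixed vertex set of size $O(\log n)$ is $O((\log n)^2/n^2)\cdot |E| / |E|$ — i.e. $O((\log n/n)^2)$ — so even across all $M$ returned edges the expected number landing in such a set is $O(M (\log n/n)^2) = O(1/\log^4 n)$, and then a first-moment / Chernoff argument over which $\le \log n$ vertices could form the bad structure, combined with the crucial fact that the algorithm is \emph{non-adaptive} so the query sets (hence the relevant vertex sets) are fixed in advance, controls the union bound. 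Concretely: condition on $\bA$; the returned edges are i.i.d.\ uniform edges of $\bG$; a component of size $\ge \log n$ in $\bG_o$ would require $\ge \log n - 1$ returned edges to form a connected subgraph, and the probability that $\log n$ uniform random edges of $\bG$ happen to span only $\log n$ vertices is vanishingly small — bounded by $(\text{number of such vertex sets})\times(\text{prob.\ all chosen edges lie in the set})$ which is $\binom{n}{\log n}\cdot\binom{M}{\log n}\cdot((\log n)^2/n^2)^{\log n - 1}$; since $\binom{n}{\log n}\binom{M}{\log n}\le (nM)^{\log n}\le n^{4\log n}$ and $((\log n)^2/n^2)^{\log n -1} \le n^{-(2-o(1))\log^2 n/\log n}$... this needs care, but the point is that the $n^2$ in the denominator beats the $n$ (or $n^4$) in the numerator once we are choosing $\log n$ vertices, because each edge ``costs'' a factor $1/n^2$ while each new vertex only ``buys'' a factor $n$.

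So the plan in order is: (i) reduce to the canonical query form and bound $|\bG_o|\le M = n^2/\log^6 n$; (ii) condition on $\bA$ and treat the returned edges as i.i.d.\ uniform edges of $\bG$ (this is exactly the structure of $\calG_1$ and the oracle); (iii) bound $\Prx[\neg\calbE_T]$ by a union over the at most $n^{O(\log n)}$ candidate bad vertex sets of size $\le\log n$, of the probability that enough of the $M$ returned edges fall inside that set, using that each such event has probability $O((\log n)^2/n^2)$ per edge — a Chernoff/Poisson tail on a $\mathrm{Bin}(M, O(\log^2 n/n^2))$ variable with mean $O(1/\log^4 n)$ exceeding $\log n - 1$ is super-polynomially small, comfortably beating the $n^{O(\log n)}$ union bound; (iv) conclude $\Prx_{\bG\sim\calG_1,\bv_1,\dots,\bv_t}[\calbE_T]\ge 1-o(1)$. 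The main obstacle is step (iii): making the union bound quantitatively beat the number of vertex subsets, which forces us to exploit that forming a large/cyclic component requires \emph{many} coincidences (one $1/n^2$ factor per edge) rather than few, and to be careful that non-adaptivity is what lets us fix the candidate sets before seeing the samples. No macro beyond those already in the file is needed.
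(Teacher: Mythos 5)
Your decomposition into ``no cycles'' and ``no component of size $\geq \log n$'' matches the paper (Claims 6.11 and 6.12), and you correctly identify the key structural fact: conditioned on $\bA$, the returned edges are i.i.d.\ uniform edges of a graph with $\Theta(n^2)$ edges. But the first-moment bookkeeping you propose does not close. Your step (iii) union bound over vertex subsets $U$ of size $\log n$ fails numerically: there are $\binom{n}{\log n} = \exp(\Theta(\log^2 n))$ candidate sets, while the tail probability you compute for a fixed $U$ is $\Prx\bigl[\mathrm{Bin}(t, O(\log^2 n/n^2)) \geq \log n - 1\bigr] \leq (tp)^{\log n - 1} = (O(1/\log^3 n))^{\log n - 1} = \exp(-\Theta(\log n \log\log n))$, which is far too weak to beat $\exp(\Theta(\log^2 n))$. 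Your heuristic that ``each edge costs $1/n^2$ while each vertex only buys $n$'' is correct in isolation, but once you multiply by the number of ways to choose which of the $t \leq n^2/\log^5 n$ response slots realize the $\ell-1$ tree edges (a factor up to $t^{\ell-1}$), the bound $n^{\ell} \cdot t^{\ell-1} \cdot (1/n^2)^{\ell-1}$ diverges again, and there is no way to ``be careful'' within this enumeration scheme to rescue it.

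The missing idea is to constrain the vertices by the query sets rather than enumerating them freely. If response slot $i_j$ returns the edge $(u_{j-1},u_j)$, then \emph{both} $u_{j-1}$ and $u_j$ must lie in $L_{i_j}$ (that is what it means for the oracle to return an edge rather than a lone vertex or $\emptyset$). So after choosing a tree shape ($\exp(O(\ell))$ choices) and a tuple of response slots $S = \{i_1,\dots,i_\ell\}\subset[t]$, the number of admissible vertex placements is at most $n\cdot\prod_{j}|L_{i_j}|$ (the free factor $n$ is for the root), not $n^{\ell+1}$. Summing over $S$ gives the elementary symmetric polynomial $e_\ell(|L_1|,\dots,|L_t|) \leq \frac{1}{\ell!}\bigl(\sum_i|L_i|\bigr)^\ell$ by Maclaurin. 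Multiplying by the per-slot probability $\bigl(1/(2\binom{n/2}{2})\bigr)^\ell = (1/\Omega(n^2))^\ell$ and using $\sum_i|L_i|\leq n^2/\log^5 n$ yields a bound of the form $n\cdot\exp(O(\ell))\cdot\bigl(O(1/\log^5 n)\bigr)^\ell$ per length $\ell$; this is $o(1)$ when summed over $\ell\geq 3$ for cycles, and $o(1)$ when summed over $\ell\geq\log n$ for trees because $(1/\log^5 n)^{\log n} = n^{-5\log\log n}$ annihilates the prefactor $n$. This is precisely the role that ``non-adaptivity so the query sets are fixed in advance'' plays: it lets one charge each edge of the candidate bad structure to a specific fixed query set $L_{i_j}$ and invoke the bound on $\sum_i|L_i|$, rather than to an unconstrained pair of vertices in $[n]$.
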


We prove the above lemma by showing the following two claims.

\begin{claim}\label{lem:no-cycles}
	With probability $1 - o(1)$ over the draw of $\bG \sim \calG_1$ and the draw of $\bv_1, \dots, \bv_t$, $\bG_o$ has no cycles.
\end{claim}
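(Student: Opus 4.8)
I want to show that with probability $1-o(1)$ over $\bG \sim \calG_1$ and the oracle responses $\bv_1,\dots,\bv_t$, the observed subgraph $\bG_o$ is acyclic. The key structural fact about $\calG_1$ is that $\bG$ is a disjoint union of two cliques $K_{\bA}$ and $K_{\obA}$, each on $n/2$ vertices, so $\bG$ has $N_{\text{edges}} \eqdef 2\binom{n/2}{2} = \Theta(n^2)$ edges. Each non-empty response of the oracle reveals one uniformly random edge of $\bG$, and we observe at most $t \le \cost(\Alg) \le n^2/\log^6 n$ of them (in fact, since we have restricted to queries that are either small or all of $[n]$, the number of distinct edges observed is at most $\cost(\Alg)$, but even bounding it by the trivial $\cost(\Alg)$ suffices). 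So the plan is: condition on the set $\bA$ (hence on $\bG$), and view $\bG_o$ as the union of at most $s \eqdef n^2/\log^6 n$ edges sampled i.i.d.\ uniformly (with replacement) from the $\Theta(n^2)$ edges of $\bG$. This is exactly an Erd\H{o}s--R\'enyi-type random subgraph, and I want to argue it is acyclic whp.

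\textbf{Key steps.} First, fix $\bA$; all bounds will be uniform in $\bA$, so it suffices to work inside one clique $K$ on $m = n/2$ vertices (the two cliques are symmetric and independent, and a union of two acyclic graphs on disjoint vertex sets is acyclic). Inside $K$ we sample at most $s$ edges uniformly from the $\binom{m}{2} = \Theta(n^2)$ edges. Second, bound the expected number of cycles: for each $k \ge 3$, the expected number of (labeled) $k$-cycles appearing among the sampled edges is at most
\[
\binom{m}{k} \cdot \frac{k!}{2k} \cdot \left( \frac{s}{\binom{m}{2}} \right)^{k} \le m^k \cdot \left( \frac{O(1)}{\log^6 n} \right)^{k} = \left( \frac{O(m)}{\log^6 n} \right)^{k},
\]
wait --- this is not $o(1)$ because $m/\log^6 n \to \infty$. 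I need to be more careful: the sampling is \emph{with replacement} and I should instead reason about when a fixed potential $k$-cycle, as a set of $k$ specified edges, is realized. The probability that $k$ prescribed edges all appear among $s$ samples is at most $\binom{s}{k} \left(\binom{m}{2}^{-1}\right)^k \le (s/\binom{m}{2})^k \le (O(1)/\log^6 n)^k$, and the number of potential $k$-cycles is at most $m^k$, so the expected count is at most $\big(O(m)/\log^6 n\big)^k$ --- still bad for $k$ constant. The correct observation is that one should not expect \emph{long} cycles but should rule out \emph{short} cycles differently: a random graph with $m$ vertices and $s$ edges where $s \ll m$ is whp a forest, but here $s \gg m$. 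So $\bG_o$ is \emph{not} a forest in general --- but the claim only asserts $\bG_o$ is acyclic, which would be false. Let me reconsider.

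\textbf{Reconciliation.} The resolution is that $\bv_i$ returns $\{\bj_1,\bj_2\}\cap L_i$, so a non-empty \emph{edge} response is only returned when $L_i = [n]$ (the only queries of size $> n/\log n$), and the number of such full queries is at most $\cost(\Alg)/n \le n/\log^6 n$. Hence the number of observed edges is at most $O(n/\log^6 n) \ll n$, so $\bG_o$ (restricted to edge-responses) genuinely is a sparse random subgraph with $s = O(n/\log^6 n) = o(n)$ edges on $m = \Theta(n)$ vertices. Now the first-moment computation works: the expected number of $k$-cycles is at most $\sum_{k\ge 3} m^k (s/\binom{m}{2})^k = \sum_{k \ge 3} (2s/(m-1))^k = \sum_{k\ge3} (O(1/\log^6 n))^k = O(1/\log^{18} n) = o(1)$, so by Markov $\bG_o$ has no cycle whp, uniformly in $\bA$, and averaging over $\bA$ gives the claim. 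The main obstacle is just being careful that edge-responses come only from the $o(n)$ full queries --- once that is pinned down, the first-moment bound over cycle lengths is routine. (Claim~\ref{lem:no-cycles} as stated should implicitly be using exactly this; one must also handle that among the lone-vertex and $\emptyset$ responses no edges are revealed, which is immediate.)
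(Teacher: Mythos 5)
Your proposal has a genuine gap. Its ``reconciliation'' step rests on the claim that \emph{a non-empty edge response is only returned when $L_i = [n]$}, and this is simply false. The oracle returns a full edge whenever both endpoints of the sampled edge land inside $L_i$; a query of size $|L_i| = n/\log n$ (the largest allowed ``small'' query after the paper's simplification) returns an edge with probability $\Theta(1/\log^2 n)$, and there can be $\Omega(n/\log^4 n)$ such queries within the budget, contributing many edges. So the number of observed edges is not bounded by the number of $[n]$-queries, and the premise of your fix collapses.

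There is a second, independent problem: even if one could certify that only $s = o(n)$ edges are observed, those edges are \emph{not} i.i.d.\ uniform over the $\binom{m}{2}$ edges of a clique. Conditioned on query $L_i$ returning an edge, that edge is uniform only over edges \emph{inside} $L_i$, a much smaller set. Treating $\bG_o$ as $s$ uniform i.i.d.\ samples and computing the expected number of $k$-cycles as $m^k (s/\binom{m}{2})^k$ therefore does not bound the true first moment; the correct count must weight each potential cycle $(u_1,\dots,u_k)$ and each assignment of its edges to queries $i_1,\dots,i_k$ by the constraints $u_j \in L_{i_{j-1}} \cap L_{i_j}$. That is exactly what the paper's proof does: it writes $\Pr[\exists\, k\text{-cycle}]$ as a sum over $k$-subsets $S \subset [t]$ and compatible vertex tuples, uses $\Pr[\bv_i = e] \le 1/(2\binom{n/2}{2})$ pointwise, and then controls $\sum_{S}\prod_{j}|L_{i_j}|$ via the elementary-symmetric/Maclaurin bound $e_\ell(|L_1|,\dots,|L_t|) \le \binom{t}{\ell}(\cost/t)^\ell$, giving $(O(1/\log^5 n))^\ell$. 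Your argument has no analogue of the $\prod_j |L_{i_j}\cap L_{i_{j+1}}|$ factor, which is the mechanism that makes small queries contribute so little; without it, the first-moment bound is not justified, even in the regime you were hoping to reduce to.
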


\begin{proof}
	Recall that $L_1, \dots, L_t$ are the set queries made, and let $\calbE_{\circ, \ell}$ be the event that $\bG_o$ has a cycle of length $\ell$. We have:
	\begin{align}
	\Prx_{\substack{\bG \sim \calG_1 \\ \bv_1,\dots, \bv_t}}[\calbE_{\circ, \ell}] &\leq\sum_{\substack{S \subset [t] \\ S = \{ i_1, \dots, i_\ell\}}} \Prx_{\substack{\bG \sim \calG_1 \\ \bv_1,\dots, \bv_t}}[\bv_{i_1}, \dots, \bv_{i_\ell} \text{ form cycle}] \nonumber \\
	&\leq \sum_{\substack{S \subset [t] \\ S = \{ i_1, \dots, i_\ell\}}} \sum_{\substack{U \subset [n] \\ U = \{ u_1, \dots, u_{\ell}\} \\
			u_j \in L_{i_j} \cap L_{i_{j+1}}}} \Prx_{\substack{\bG \sim \calG_1 \\ \bv_1, \dots, \bv_t}}[\forall j \in [\ell], \bv_{i_j} = (u_{j}, u_{j+1})], \label{eq:haha}
	\end{align}
	where we think $j+1 = 1$ when $j = \ell$. The above restriction of $u_j \in L_{i_j} \cap L_{i_{j+1}}$ is necessary if edges $\bv_{i_j}$ and $\bv_{i_{j+1}}$ will be the edges of the cycle incident on node $u_j$. Additionally, we may upper bound (\ref{eq:haha}) by disregarding the effect of the partition $\bA$ and $\obA$; in fact, the presence of $\bA$ and $\obA$ make it harder to achieve a cycle, since if $u_j \in \bA$ and $u_{j+1} \in \obA$, the probability of $\bv_{i_j} = (u_j, u_{j+1})$ is 0. For any $S = \{ i_1, \dots, i_{\ell} \}$, once we fix a set $U = \{ u_1, \dots, u_{\ell} \}$ where $u_j \in L_{i_j} \cap L_{i_j+1}$, 
	\[ \Prx_{\Ysample}[\forall j \in [\ell], \bv_{i_j} = (u_j, u_{j+1})] \leq \left( \dfrac{1}{2\binom{n/2}{2}}\right)^{\ell}. \]
	Thus, we have:
	\begin{align*}
	\Prx_{\substack{\bG \sim \calG_1 \\ \bv_1,\dots, \bv_t}}[\calbE_{\circ, \ell}] &\leq \sum_{\substack{S \subset [t] \\ S = \{ i_1, \dots, i_\ell\}}} \left(  \prod_{j=1}^{\ell} |L_{i_{j}} \cap L_{i_{j+1}}| \right)
	\left(\dfrac{1}{2\binom{\frac{n}{2}}{2}}\right)^{\ell} \\
	&\leq \left(\dfrac{1}{\Omega(n)}\right)^{2\ell} \sum_{\substack{S \subset [t] \\ S = \{ i_1, \dots, i_\ell\}}} \prod_{j=1}^{\ell} |L_{i_{j}}| \\
	&\leq \left( \frac{1}{\Omega(n)}\right)^{2\ell} \left( \sum_{i=1}^t |L_i|\right)^\ell \left( \frac{1}{t}\right)^{\ell}\binom{t}{\ell} \leq \left( O\left(\frac{1}{\log^5 n} \right)\right)^{\ell}.
	\end{align*}
	where we used the fact that $\sum_S \prod_{j=1}^{\ell} |L_{i_j}|$ is the elementary symmetric polynomial of degree $\ell$, and $\sum_{i=1}^t |L_{i}| \leq \frac{n^2}{\log^5 n}$. Thus, we obtain:
	\[ \Prx_{\substack{\bG\sim\calG_1 \\ \bv_1, \dots, \bv_t}}[\bG_o \text{ contains a cycle}] \leq \sum_{\ell=1}^{t} \left(  O\left( \frac{1}{\log^5 n}\right)\right)^{\ell} = o(1). \]
\end{proof}

\begin{claim}\label{lem:no-large}
	With probability $1 - o(1)$ over the draw of $\bG \sim \calG_1$ and the draw of $\bv_1, \dots, \bv_t$, we have $\bG_o$ has all components of size at most $\log n$. 
\end{claim}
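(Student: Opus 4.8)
The plan is to union bound over all ``large subtrees'' that could appear inside $\bG_o$. Set $\ell := \lceil \log n\rceil$. If $\bG_o$ has a component with at least $\ell$ vertices, then taking a spanning tree of that component and then a connected subtree on exactly $\ell$ of its vertices, there exist a set $U \subset [n]$ with $|U| = \ell$, a tree $T$ on vertex set $U$, and an injection $\phi \colon E(T) \to [t]$ with $\bv_{\phi(e)} = e$ (as an unordered pair) for every $e \in E(T)$; the injection exists because distinct edges of $T$, being distinct values, are returned by disjoint sets of queries. Hence
\[ \Prx_{\substack{\bG \sim \calG_1 \\ \bv_1, \dots, \bv_t}}\!\left[\bG_o \text{ has a component of size} \geq \ell\right] \leq \sum_{(U, T, \phi)} \Prx_{\substack{\bG \sim \calG_1 \\ \bv_1, \dots, \bv_t}}\!\left[ \forall e \in E(T) : \bv_{\phi(e)} = e\right], \]
and it remains to bound this sum.

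For the per-witness probability I would use that, conditioned on $\bG$, the responses $\bv_1, \dots, \bv_t$ are independent, so (since $\phi$ is injective) the events $\{\bv_{\phi(e)} = e\}$ are conditionally independent; moreover, for a single edge $e = \{u, u'\}$ one has $\Prx[\bv_{\phi(e)} = e \mid \bG] \leq \ind[u, u' \in L_{\phi(e)}] / |E(\bG)|$, and $|E(\bG)| = 2\binom{n/2}{2} \geq n^2/5$ for every $\bG$ in the support of $\calG_1$. This yields $\Prx[\forall e : \bv_{\phi(e)} = e] \leq \big(2\binom{n/2}{2}\big)^{-(\ell-1)} \prod_{e \in E(T)} \ind[\text{both endpoints of } e \text{ lie in } L_{\phi(e)}]$, so it remains to count the triples $(U, T, \phi)$ for which this product of indicators equals $1$.

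I would count those triples by an exploration argument: root $T$ at its minimum-labeled vertex and expose its edges in breadth-first order $e_1, \dots, e_{\ell-1}$, so that $e_1$ has two fresh endpoints and, for $j \geq 2$, $e_j$ attaches one fresh vertex to one of the at most $j$ previously exposed vertices; for the product of indicators to be $1$, both endpoints of $e_1$ must lie in $L_{\phi(e_1)}$ and, for $j \geq 2$, both endpoints of $e_j$ must lie in $L_{\phi(e_j)}$. The number of choices at the first step is $\sum_{i=1}^t \binom{|L_i|}{2} \leq \tfrac{n}{2}\sum_i |L_i|$ (using $|L_i| \leq n$), and at step $j \geq 2$ it is at most $\sum_i j\,|L_i| = j \sum_i |L_i|$ (at most $j$ exposed vertices to attach to, at most $|L_i|$ choices for the fresh vertex). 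Since every valid triple arises from at least one such build, writing $M := \sum_i |L_i| \leq n^2/\log^5 n$ there are at most $\tfrac{n}{2} M^{\ell - 1}(\ell - 1)!$ of them, and combining with the per-witness bound gives
\[ \Prx_{\substack{\bG \sim \calG_1 \\ \bv_1, \dots, \bv_t}}\!\left[\bG_o \text{ has a component of size} \geq \ell\right] \leq \frac{n}{2}(\ell - 1)! \left(\frac{M}{2\binom{n/2}{2}}\right)^{\ell-1} \leq \frac{n}{2}(\ell - 1)!\left(\frac{5}{\log^5 n}\right)^{\ell - 1}. \]
Taking logarithms, this upper bound has logarithm at most $\ln n + \ell\ln\ell - 5(\ell-1)\ln\log n + O(\ell)$, which tends to $-\infty$ since $\ell = \Theta(\log n)$ makes $5(\ell-1)\ln\log n$ dominate the remaining terms; hence the bound is $o(1)$, which proves the claim.

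The step I expect to be the main obstacle is the counting of triples $(U, T, \phi)$: one must extract a $\Theta(\log^5 n)$-factor of savings per exposed tree edge, enough to overcome both the $(\ell - 1)!$ cost of choosing how the tree grows and the leading factor $n$ from the free choice of root. This is exactly why taking $\ell = \Theta(\log n)$ — in fact exactly $\lceil\log n\rceil$, since a component of size $\geq \log n$ already contains a connected subtree on that many vertices — is the right threshold: $(\ell-1)!$ is of the form $2^{\Theta(\log n \log\log n)}$, which is swamped by $(\log^5 n)^{\ell - 1} = 2^{(5-o(1))\log n\log\log n}$, and the leftover $n$ is then killed by the resulting super-polynomial decay. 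Note that this argument does not rely on Claim~\ref{lem:no-cycles}.
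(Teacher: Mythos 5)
Your proof is correct, and it reaches the same quantitative conclusion by a genuinely different counting route. The paper enumerates \emph{unlabeled} rooted tree topologies, invoking the fact that there are at most $\exp(O(\ell))$ of them, then pays $n$ for placing the root, pays $|L_{i_j}|$ for each edge's ``away'' vertex, and controls the sum over which responses $\{i_1,\dots,i_\ell\}\subset[t]$ realize the edges via the elementary symmetric polynomial bound $\sum_{|S|=\ell}\prod_{i\in S}|L_i| \leq \binom{t}{\ell}(M/t)^\ell$. You instead enumerate \emph{labeled} witness triples $(U,T,\phi)$ directly by a BFS exposure, which produces an explicit $(\ell-1)!$ from the attachment-point choices rather than hiding the tree-shape cost in $\exp(O(\ell))$; you then verify via Stirling-type asymptotics that this factorial is dominated by the $(\Theta(\log^5 n))^{\ell-1}$ per-edge savings, which is exactly the computation the paper's $\exp(O(\ell))$ bound sidesteps. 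Your route is more self-contained --- it does not rely on knowing the enumeration of rooted trees --- at the cost of a slightly more delicate final estimate. Both arguments rest on the same two quantitative facts: $\sum_i|L_i|\le n^2/\log^5 n$ and $|E(\bG)|=2\binom{n/2}{2}=\Theta(n^2)$. Your remark that the argument is independent of Claim~\ref{lem:no-cycles} is accurate and a useful clarification; the paper's phrasing (``a tree of $\ell$ edges'') implicitly relies on the same spanning-tree/subtree observation you make explicit, so neither proof logically depends on first ruling out cycles.
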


\begin{proof}
	This proof is very similar to the one above. Let $\calbE_{T, \ell}$ be the event there exists a tree of $\ell$ edges. We note that there are at most $\exp(O(\ell))$ rooted trees of $\ell$ edges and $\ell+1$ vertices. We consider first picking a rooted tree, and we pick an arbitrary vertex to be the root of the tree. We then pick the $\ell$ edges of the tree to some responses, $\bv_{i_1}, \dots, \bv_{i_{\ell}}$. We select the vertex on query of the edge going away from the root; this leaves the root, which we choose arbitrarily from $[n]$.
	
	So we have:
	\begin{align*}
	\Prx_{\substack{\bG\sim\calG_1 \\ \bv_1, \dots, \bv_t} } \left[ \calbE_{T, \ell}\right] &\leq \exp(O(\ell)) \sum_{\substack{S \subset [t] \\ S = \{i_1, \dots, i_{\ell}\}}} \left( n  \prod_{j=1}^{\ell} |L_{i_j}| \right) \left( \dfrac{1}{2\binom{n/2}{2}}\right)^{\ell} \\
	&\leq n \cdot \left( O\left(\frac{1}{\log^5 n}\right) \right)^{\ell} = \left( O\left( \frac{1}{\log^5 n}\right)\right)^{\ell},  
	\end{align*}
	when $\ell \geq \log n$. Thus, we sum over all $\ell \geq \log $ to get that there exists a tree of size $\log n$ or greater with probability $o(1)$. 
\end{proof}

\subsubsection{$\calbE_{F}$: few vertices are observed}

The goal of this section is to show that the algorithm does not observe too many vertices from the responses $\bv_1, \dots \bv_t$ with high probability.
\begin{definition}
	We let $\calbE_{F}$ be the event that the responses $\bv_1, \dots ,\bv_t$ contain at most $\frac{n}{\log^4 n}$ values which are not $\emptyset$.
\end{definition}

\begin{lemma}\label{lem:calE_F}
	We have:
	\[ \Prx_{\substack{\bG \sim \calG_1 \\ \bv_1, \dots, \bv_t}}[\calbE_F] \geq 1 - o(1) \qquad \text{and}\qquad \Prx_{\substack{\bG \sim \calG_2 \\ \bv_1, \dots, \bv_t}}[\calbE_{F}] \geq 1 -o(1). \]
	In other words, any rejection sampling algorithm with cost less than $\frac{n^2}{\log^6 n}$ will observe at most $\frac{n}{\log^4 n}$ non-$\emptyset$ responses in both $\calG_1$ and $\calG_2$ with high probability.
\end{lemma}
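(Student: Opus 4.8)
The plan is a one-line first-moment argument. Fix the queries $L_1,\dots,L_t\subset[n]$ of $\Alg$, which satisfy $\sum_{i=1}^t|L_i|=\cost(\Alg)\le \frac{n^2}{\log^6 n}$, and for each $i\in[t]$ let $\bZ_i\in\{0,1\}$ be the indicator that the oracle's response $\bv_i$ to query $L_i$ is not $\emptyset$; equivalently, $\bZ_i=1$ exactly when the random edge sampled for $L_i$ is incident to a vertex of $L_i$. I would bound $\Ex[\bZ_i]$ by $O(|L_i|/n)$ uniformly over both distributions, and then apply Markov's inequality to $\sum_i\bZ_i$.

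For the per-query bound, the key observation is that every graph in the support of $\calG_1$ or $\calG_2$ has maximum degree at most $\frac n2$ (exactly $\frac n2-1$ in the union-of-cliques case and $\frac n2$ in the bipartite case) while having $\Theta(n^2)$ edges: precisely $2\binom{n/2}{2}=\frac{n^2}{4}-\frac n2$ edges for $\calG_1$ and $(\frac n2)^2=\frac{n^2}{4}$ for $\calG_2$, both at least $\frac{n^2}{5}$ once $n$ exceeds an absolute constant. Hence, conditioning on the underlying graph $\bG$, the number of edges incident to $L_i$ is at most $\sum_{v\in L_i}\deg_{\bG}(v)\le |L_i|\cdot\frac n2$, so
\[ \Prx[\bv_i\neq\emptyset \mid \bG] \;\le\; \frac{|L_i|\cdot n/2}{\#\{\text{edges of }\bG\}} \;\le\; \frac{3|L_i|}{n}. \]
Averaging over $\bG\sim\calG_1$ (resp.\ $\calG_2$) gives $\Ex[\bZ_i]\le 3|L_i|/n$ in both cases, and by linearity $\Ex[\sum_{i=1}^t\bZ_i]\le \frac 3n\sum_{i=1}^t|L_i|\le \frac{3n}{\log^6 n}$. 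Markov's inequality then gives $\Prx[\sum_i\bZ_i\ge \frac{n}{\log^4 n}]\le \frac{3n/\log^6 n}{n/\log^4 n}=\frac{3}{\log^2 n}=o(1)$, which is exactly $\Prx[\calbE_F]\ge 1-o(1)$, and the identical computation applies to $\calG_2$.

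I do not expect a real obstacle here: the lemma is genuinely a routine first-moment estimate, and its whole content is the observation that a single vertex ever only captures an $O(1/n)$ fraction of the edge distribution, so a cost budget of $\frac{n^2}{\log^6 n}$ buys only $O(n/\log^6 n)\ll \frac{n}{\log^4 n}$ non-empty responses in expectation. Note in particular that this argument needs neither the independence of the responses across queries nor the earlier reduction to queries of size $\le \frac n{\log n}$ or $L_i=[n]$ — the bound $\Ex[\bZ_i]\le 3|L_i|/n$ holds verbatim for an arbitrary query set, and it is only the aggregate cost constraint that is used. The one place to be mildly careful is the constant in "$\#\{\text{edges}\}\ge n^2/5$," which is a trivial verification for $n$ past a fixed threshold.
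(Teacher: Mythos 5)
Your proof is correct and follows essentially the same route as the paper's: bound the probability that a single query returns a non-empty response by (degree bound $\times |L_i|$) divided by the total edge count, which gives $O(|L_i|/n)$, then sum over queries using linearity of expectation and apply Markov. The paper writes this per-query probability explicitly as $\frac{|L_i|\cdot n/2}{2\binom{n/2}{2}}$ for $\calG_1$ and $\frac{|L_i|\cdot n/2}{n^2/4}$ for $\calG_2$ and then invokes Markov exactly as you do.
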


\begin{proof}
	Simply note that for a query $L_i$, and any $G \in \calG_1$, the probability of observing a response which is not $\emptyset$ is at most $\dfrac{|L_i| \cdot \frac{n}{2}}{2\binom{n/2}{2}} = O(|L_i|/n)$ (in the case of $\calG_1$, and $\frac{|L_i| \cdot \frac{n}{2}}{n^2/4}$ in the case of $\calG_2$). Therefore, the expected number of responses which are not $\emptyset$ is at most $O(n/\log^5 n)$, and via a Markov bound, we have the desired result. 
\end{proof}

\subsubsection{$\calbE_{B}$: vertices observed do not prefer any side too much}

We now formally define the event $\calbE_{B}$, and prove the event occurs with high probability over the draw of $\bG \sim \calG_1$ and $\bv_1,\dots, \bv_t$. 
\begin{definition}\label{def:event-l}
	Let $\bV_L \subset [t]$ be the random variable corresponding to the set of indices of responses $\bv_1, \dots, \bv_t$ which correspond to observing lone vertices, and for $i \in \bV_L$, we let $\by_i$ be the indicator random variable for $\bv_i \in \bA$. Let $\calbE_{B}$ be the event where:
	\[ \bB = \sum_{i \in \bV_L} (-1)^{\by_i} \left(|L_i \cap \bA| - |L_i \cap \obA| \right) = O\left( \frac{n}{\log n}\right). \]
\end{definition}

We start by giving some intuition. Fix some query $L_i$ such that $|L_i|\le \frac{n}{\log n}$. By using Chernoff bound we have that $||L_i\cap\bA|-|L_i\cap\obA||=O(\sqrt{|L_i|}\log n)$ with high probability. Now assume that \emph{every} query we make is skewed toward $\obA$. This bad event will create a gap in the probabilities to see a lone vertex between the two distributions, and the algorithm might use it in order to distinguish $\calG_1$ and $\calG_2$. 
Hence, we would like to claim that \emph{collectively} the probability of observing such bad events is extremely small. More precise details follows.

\begin{definition}
	Let $\calbE_{Q}$ be the event that all queries $L_1, \dots, L_{t}$ satisfy:
	\[ \left| |L_i \cap \bA| - |L_i \cap \obA| \right| = O\left(\sqrt{|L_i|} \log n \right). \]
\end{definition}

\begin{claim}\label{cl:tbalance}
	We have:
	\[ \Prx_{\substack{\bG\sim\calG_1}}[\calbE_Q] \geq 1- o(1).\]
\end{claim}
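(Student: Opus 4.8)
The plan is to fix a single query $L_i$ with $|L_i| \le \frac{n}{\log n}$ (recall we have reduced to algorithms whose queries are either of this size or equal to $[n]$, and queries equal to $[n]$ trivially satisfy the bound since then $|L_i \cap \bA| = |L_i \cap \obA| = \frac{n}{2}$ up to the usual $\pm 1$), and to control the deviation of $|L_i \cap \bA| - |L_i \cap \obA|$ using concentration of a hypergeometric random variable, then union bound over all $t \le \cost(\Alg) \le \frac{n^2}{\log^6 n}$ queries. Concretely, $\bA$ is a uniform random $\frac{n}{2}$-subset of $[n]$, so $|L_i \cap \bA| \sim \mathrm{HG}(n, \frac{n}{2}, |L_i|)$ with mean $\mu |L_i| = \frac{|L_i|}{2}$. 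Since $|L_i \cap \bA| + |L_i \cap \obA| = |L_i|$, we have $\big| |L_i \cap \bA| - |L_i \cap \obA| \big| = 2\big| |L_i \cap \bA| - \frac{|L_i|}{2}\big|$, so it suffices to bound the one-sided deviation of the hypergeometric variable from its mean.

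First I would apply Theorem~\ref{thm:HG-Chernoff} with $n$ replaced by the sample size $|L_i|$ and $\mu = \frac12$: taking the deviation parameter $t = \frac{c \log n}{\sqrt{|L_i|}}$ for a suitable absolute constant $c$, we get
\[
\Prx_{\bA}\!\left[ \Big| |L_i \cap \bA| - \tfrac{|L_i|}{2}\Big| \ge c \sqrt{|L_i|}\,\log n \right] \le 2\exp\!\left(-2 t^2 |L_i|\right) = 2\exp\!\left(-2 c^2 \log^2 n\right) = o(n^{-10}),
\]
for $c$ a large enough constant. (If $|L_i|$ is so small that $c\sqrt{|L_i|}\log n \ge |L_i|$, the bound $\big||L_i \cap \bA| - |L_i \cap \obA|\big| \le |L_i| \le c\sqrt{|L_i|}\log n$ holds deterministically.) Hence each query individually fails the bound $\big||L_i\cap\bA| - |L_i\cap\obA|\big| = O(\sqrt{|L_i|}\log n)$ with probability $o(n^{-10})$.

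Then I would union bound over the $t$ queries: since $\cost(\Alg) \le \frac{n^2}{\log^6 n}$ and each query has size at least $1$, we have $t \le \frac{n^2}{\log^6 n} \le n^2$, so
\[
\Prx_{\bG \sim \calG_1}[\neg \calbE_Q] \le \sum_{i=1}^{t} \Prx_{\bA}\!\left[ \Big||L_i\cap\bA| - |L_i\cap\obA|\Big| = \omega(\sqrt{|L_i|}\log n)\right] \le n^2 \cdot o(n^{-10}) = o(1),
\]
noting that the event $\calbE_Q$ depends only on $\bA$, which has the same distribution in $\calG_1$ and $\calG_2$ (so the choice $\bG \sim \calG_1$ is immaterial). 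This gives $\Prx_{\bG \sim \calG_1}[\calbE_Q] \ge 1 - o(1)$, as claimed. I do not anticipate a serious obstacle here — the only mild subtlety is handling the regime where $|L_i|$ is tiny relative to $\log^2 n$, which is dispatched by the deterministic bound above; the bulk of the argument is a direct invocation of the hypergeometric tail bound (Theorem~\ref{thm:HG-Chernoff}) followed by a union bound, and the constant $c$ is chosen once and for all to beat the $n^2$ factor from the union bound with room to spare.
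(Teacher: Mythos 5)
Your proposal is correct and follows essentially the same route as the paper: the paper applies the Chernoff bound for negatively correlated indicators $\bY_k = \mathbf{1}[k \in \bA]$ to each $|L_i \cap \bA| = \sum_{k \in L_i} \bY_k$, while you invoke the hypergeometric tail bound (Theorem~\ref{thm:HG-Chernoff}) directly on $|L_i \cap \bA| \sim \mathrm{HG}(n, n/2, |L_i|)$ --- interchangeable tools here, since the hypergeometric variable is precisely such a sum of negatively correlated indicators. Both arguments then union-bound over all queries with a per-query failure probability that is $n^{-\omega(1)}$, with your remarks about queries $L_i = [n]$ and tiny $|L_i|$ being harmless extra care.
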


\begin{proof}
	This simply follows from a union bound over $2t$ applications of the Chernoff bound for negatively correlated random variables. In particular, for all $k \in [n]$, let $\bY_k$ be the indicator random variable for $k \in \bA$. Then we note that for each $i \in [t]$,
	\[ |L_i \cap \bA| = \sum_{k \in L_i} \bY_k \qquad\text{and}\qquad |L_i \cap \obA| = \sum_{k \in L_i} (1 - \bY_k). \]
	In a similar way to the proof of Claim~\ref{cl:no_ew}, we note that all $\bY_k$ are negatively correlated, and all $(1 - \bY_i)$ are negatively correlated, thus, we have that with probability at least $1 - n^{-10}$, 
	\[ |L_i \cap \bA| \leq \frac{|L_i|}{2} + \sqrt{|L_i|}\log n \qquad\text{and}\qquad |L_i \cap \obA| \leq \frac{|L_i|}{2} + \sqrt{|L_i|}\log n. \]
	Thus, we may union bound over all $2t$ events, for the desired result.
\end{proof}

\begin{lemma}\label{lem:calE_Q_L}
	We have that:
	\[ \Prx_{\Ysample}[\neg \calbE_{B} \wedge \calbE_F] = o(1). \]
\end{lemma}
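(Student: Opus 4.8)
The plan is to derive the bound from the near-balancedness event $\calbE_Q$ of Claim~\ref{cl:tbalance} together with a Bernstein-type concentration inequality; since throwing in $\calbE_F$ only shrinks the event, it suffices to show $\Prx_{\Ysample}[\neg\calbE_B] = o(1)$. First I would condition on $\bG \sim \calG_1$, i.e.\ on a uniform random $\bA$, and (using Claim~\ref{cl:tbalance}, which costs only $o(1)$) restrict to $\bA$ satisfying $\calbE_Q$. Recall the simplification that every query has $|L_i| \le n/\log n$ or $L_i = [n]$; in the latter case $d_i \eqdef |L_i \cap \bA| - |L_i \cap \obA| = 0$ and no lone vertex is ever returned, so only the small queries contribute, and for those $\calbE_Q$ gives $|d_i| = O(\sqrt{|L_i|}\log n) = O(\sqrt{n\log n})$. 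Setting $\bZ_i \eqdef d_i\,(-1)^{\by_i}$ when $\bv_i$ is a lone vertex (with $\by_i$ the indicator of $\bv_i \in \bA$ as in Definition~\ref{def:event-l}) and $\bZ_i \eqdef 0$ otherwise, we have $\bB = \sum_{i=1}^t \bZ_i$. Conditioned on $\bA$, the responses $\bv_1, \dots, \bv_t$ come from independent uniform edge samples of $\bG$, so the $\bZ_i$ are independent, with $|\bZ_i| \le |d_i|$ and $\sum_i \Var(\bZ_i \mid \bA) \le \sum_i d_i^2 \le O(\log^2 n)\sum_i |L_i| = O(\log^2 n)\cdot\cost(\Alg) = O(n^2/\log^3 n)$.

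The crucial step is bounding $\Ex[\bB \mid \bA]$. In $\bG = K_{\bA}\cup K_{\obA}$, an edge producing a lone vertex in $\bA \cap L_i$ is precisely an edge of the $\bA$-clique joining $L_i \cap \bA$ to $\bA \setminus L_i$, of which there are $|L_i\cap\bA|\,(\tfrac n2 - |L_i\cap\bA|)$, and symmetrically for $\obA$. Dividing by the total number $\tfrac n2(\tfrac n2 - 1)$ of edges of $\bG$ and taking the signed difference, a short simplification gives
\[ \Ex[\bZ_i \mid \bA] = d_i\Bigl( \Prx[\bv_i \in \obA \cap L_i] - \Prx[\bv_i \in \bA \cap L_i]\Bigr) = -\,\frac{d_i^2\,\bigl(\tfrac n2 - |L_i|\bigr)}{\tfrac n2\,\bigl(\tfrac n2 - 1\bigr)} = O\!\left(\frac{d_i^2}{n}\right). \]
The intuition is that a lone vertex from $L_i$ lands on the larger of the two sides with probability only $\tfrac12 + \tfrac{d_i}{2|L_i|}(1 \pm o(1))$, so its signed contribution has conditional expectation $\approx -d_i^2/|L_i|$, and the factor $|L_i|$ cancels against the $O(|L_i|/n)$ probability of observing any lone vertex at all. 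Summing, $|\Ex[\bB\mid\bA]| = O(\tfrac1n)\sum_i d_i^2 = O(n/\log^3 n) = o(n/\log n)$.

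Finally I would apply Bernstein's inequality to the independent bounded variables $\bZ_i$, conditioned on $\bA$: with $\sigma^2 = O(n^2/\log^3 n)$ and uniform bound $M = O(\sqrt{n\log n})$, and since $M \cdot \tfrac{n}{\log n} = O(n^{3/2}/\sqrt{\log n}) = o(\sigma^2)$, the deviation term is dominated by the variance term, so
\[ \Prx\bigl[\, |\bB - \Ex[\bB\mid\bA]| \ge n/\log n \,\bigr] \le 2\exp\bigl(-\Omega(\log n)\bigr) = o(1). \]
Combined with $|\Ex[\bB\mid\bA]| = o(n/\log n)$ this gives $\Prx[\neg\calbE_B \mid \bA] = o(1)$ for every $\bA$ in $\calbE_Q$; averaging over such $\bA$ and adding $\Prx[\neg\calbE_Q] = o(1)$ yields $\Prx_{\Ysample}[\neg\calbE_B] = o(1)$, hence $\Prx_{\Ysample}[\neg\calbE_B \wedge \calbE_F] = o(1)$. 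I expect the main obstacle to be the mean estimate of the second paragraph: one must genuinely exploit the \emph{quadratic}-in-$d_i$ cancellation, because the cruder bound $|\Ex[\bZ_i\mid\bA]| = O(|d_i|/n)$ would only give $|\Ex[\bB\mid\bA]| = O(\tfrac1n \sum_i \sqrt{|L_i|}\log n)$, which need not be $o(n/\log n)$.
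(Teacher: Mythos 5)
Your proof is correct and follows essentially the same route as the paper's: condition on $\calbE_Q$ via Claim~\ref{cl:tbalance}, compute the (signed) conditional expectation of $\bB$ exploiting the quadratic-in-$d_i$ cancellation, and finish with a concentration inequality for independent bounded variables. The only cosmetic difference is that you fold the lone-vertex indicator into $\bZ_i$ and condition solely on $\bA$ rather than on $(\bA,\bV_L)$ as the paper does, which lets you skip the $\calbE_F$ bound on $|V_L|$; your explicit $\Ex[\bZ_i\mid\bA]=-d_i^2(\tfrac n2-|L_i|)/\bigl(\tfrac n2(\tfrac n2-1)\bigr)$ is the same $O(\log^2 n)$-per-term bound the paper obtains by writing $p_i-\tfrac12=O(\log n/\sqrt n)$ and multiplying by $|d_i|$.
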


\begin{proof}
	We first note that because of Claim~\ref{cl:tbalance}, we have:
	\begin{align*}
	\Prx_{\Ysample}[\neg \calbE_{B} \wedge \calbE_F] &= \sum_{\substack{A \subset [n] \\ \calE_Q \text{ satisfied}}} \Prx_{\Ysample}[\bA = A]\Prx_{\Ysample}[\neg \calbE_{B} \wedge \calbE_F \mid \bA = A] + o(1).
	\end{align*}
	So consider a fixed set $A \subset [n]$ which satisfies event $\calE_Q$. Additionally, we have:
	\[ \Prx_{\Ysample}[\neg \calbE_{B} \wedge \calbE_F \mid \bA = A] = \sum_{\substack{V_L \subset [t] \\ |V_L| \leq \frac{n}{\log^4 n}}} \Prx_{\Ysample}[\bV_L = V_L \mid \bA = A]\Prx_{\Ysample}[\neg \calbE_{B} \mid \bA = A, \bV_L = V_L] \]
	Thus, it suffices to prove that for all $A \subset [n]$ which satisfy $\calE_{Q}$ and $V_L \subset [t]$ of size at most $\frac{n}{\log^4 n}$, $\Pr[\neg \calbE_{B} \mid \bA = A, \bV_L = V_L] = o(1)$. In fact, once we condition on $\bA = A$ and $\bV_L = V_L$, we have:
	\[ \bB = \sum_{i \in V_L} (-1)^{\by_i} \left(|L_i \cap A| - |L_i \cap \oA| \right) ,\]
	which is a sum of independent random variables. Additionally, since $\by_i$ is the indicator random variable for $\bv_i \in A$ conditioned on $\bv_i$ being a lone vertex, we have each $\by_i$ is independent and is $1$ with probability $p_i$, where:
	\[ p_i = \dfrac{|L_i \cap A| \left(\frac{n}{2} - |L_i \cap A|\right)}{|L_i|\cdot \frac{n}{2} - |L_i \cap A|^2 - |L_i \cap \oA|^2} = \frac{1}{2} \pm O\left( \frac{\log n}{\sqrt{n}} \right). \]
	Thus, we have:
	\[ \Ex_{\Ysample}[\bB \mid \bA = A, \bV_L = V_L] = |V_L| \cdot O(\log^2 n) = O\left(\frac{n}{\log^2 n}\right). \]
	Additionally, each variable can contribute $O(\sqrt{|L_i|} \log n)$ to the sum, so via a standard Chernoff bound, noting the fact that $\sum_{i \in V_L} |L_i| \log^2 n \leq \frac{n^2}{\log^3 n}$, we have that $\calbE_{B}$ is satisfied with high probability.
\end{proof}

\subsection{Proof of the Good Outcomes Lemma: Lemma~\ref{lem:good}}

We may divide $v_1, \dots, v_t$ into three sets: 1) $V_E$ contain the indices $i \in [t]$ whose responses $v_i$ which are edges, 2) $V_{L}$ contain the indices $i \in [t]$ whose responses $v_i$ are vertices, and 3) $V_T$ contain the indices $i \in [t]$ whose responses $v_i$ are $\emptyset$. We let:
\[ \Prx_{\Ysample}\left[ \calbE_{O} \wedge \calbE_B  \mid \calbE_{C,\text{yes}}\right] = \calY \qquad \Prx_{\Nsample}\left[\calbE_{O}  \mid \calbE_{C,\text{no}} \right] = \calN. \]

We note that for a fixed $\bA$ the values of $\bv_i$ are independent. Therefore, we may write:
\begin{align*}
\calY &= \Ex_{\bA}\left[ \calY_E \cdot \calY_L \cdot \calY_T \cdot \calbE_{B} \mid \calbE_{C, \text{yes}}\right]\qquad \qquad \qquad   &\calN = \Ex_{\bA}\left[ \calN_E \cdot \calN_L \cdot \calN_T  \mid \calbE_{C,\text{no}}\right]  \\
\calY_E &= \prod_{i \in V_{E}} \Prx_{\bv_i}[\bv_i = v_i \mid Y(\bA)]  \qquad  &\calN_E = \prod_{i \in V_{E}} \Prx_{\bv_i}[\bv_i = v_i \mid N(\bA)] \\
\calY_L &= \prod_{i \in V_{L}} \Prx_{\bv_i}[\bv_i = v_i \mid Y(\bA)] \qquad  &\calN_L = \prod_{i \in V_{L}} \Prx_{\bv_i}[\bv_i = v_i \mid N(\bA)] \\
\calY_{T} &= \prod_{i \in V_T} \Prx_{\bv_i}[\bv_i = \emptyset \mid Y(\bA)] \qquad  &\calN_{T} = \prod_{i \in V_T} \Prx_{\bv_i}[\bv_i = \emptyset \mid N(\bA)]
\end{align*}
where we slightly abused notation to let $\Prx_{\bv_i}[\bv_i = v_i \mid Y(\bA)]$ denote the probability that the sampled response $\bv_i$ is $v_i$ conditioned on the graph $\bG$ being from $\calG_1$ with partition $\bA$; i.e., $\bG = K_{\bA} \cup K_{\obA}$. Likewise, $\Prx_{\bv_i}[\bv_i = v_i \mid N(\bA)]$ denotes the probability that the sampled response $\bv_i$ is $v_i$ conditioned on the graph $\bG$ being from $\calG_2$ with partition $\bA$; i.e., $\bG = K_{\bA, \obA}$. We now simply go through the three products in to show each is at most $1 + o(1)$. 
We shall prove the following claims:
\begin{claim} \label{clm:Y_E}For any $\bA$ for which $\calbE_{C,\text{yes}}$ occurs, we have $\calY_E\le(1+o(1))\calN_E$.
	
\end{claim}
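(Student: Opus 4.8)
The plan is to observe that $\calY_E$ is the \emph{same number} for every $\bA$ satisfying $\calbE_{C,\text{yes}}$, that $\calN_E$ is likewise a fixed number for every partition satisfying $\calbE_{C,\text{no}}$, and then to bound the ratio of these two numbers directly. First I would unpack what an edge response is: for $i\in V_E$ the observed value $v_i$ is an unordered pair $\{a_i,b_i\}$ with $a_i,b_i\in L_i$, and this pair is an edge of one of the components $C_1,\dots,C_\alpha$. Since $\calbE_{C,\text{yes}}$ holds, every component lies entirely inside $\bA$ or entirely inside $\obA$, so $\{a_i,b_i\}$ is a genuine edge of $\bG=K_{\bA}\cup K_{\obA}$. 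That graph has exactly $2\binom{n/2}{2}$ edges, each returned by the oracle with equal probability, and since $\{a_i,b_i\}\subseteq L_i$ the response $\bv_i$ equals $v_i$ exactly when the sampled edge is $\{a_i,b_i\}$ (no other two-element edge can intersect $L_i$ in that pair). Hence $\Prx_{\bv_i}[\bv_i=v_i\mid Y(\bA)]=\tfrac{1}{2\binom{n/2}{2}}$ for each $i\in V_E$, so $\calY_E=\bigl(2\binom{n/2}{2}\bigr)^{-|V_E|}$, independent of the particular consistent $\bA$.

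Second, I would run the symmetric computation for $\calN_E$. Rooting each component tree $C_j$ at its minimum vertex (as in the definition of $\calbE_{C,\text{no}}$), every edge of $C_j$ joins an odd layer to an even layer, so under $\calbE_{C,\text{no}}$ it crosses the bipartition and is therefore an edge of $\bG=K_{\bA,\obA}$, which has $(n/2)^2$ edges. The same reasoning gives $\Prx_{\bv_i}[\bv_i=v_i\mid N(\bA)]=\tfrac{1}{(n/2)^2}$, so $\calN_E=\bigl((n/2)^2\bigr)^{-|V_E|}$, again independent of which $\calbE_{C,\text{no}}$-consistent partition we use.

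Finally I would combine the two:
\[ \frac{\calY_E}{\calN_E}=\left(\frac{(n/2)^2}{2\binom{n/2}{2}}\right)^{|V_E|}=\left(1+\frac{2}{n-2}\right)^{|V_E|}\le \exp\!\left(\frac{2|V_E|}{n-2}\right), \]
and invoke the standing hypotheses $\calE_T$ and $\calE_F$ of the Good Outcomes Lemma: $\calE_F$ bounds the number of non-$\emptyset$ responses, hence $|V_E|$, by $n/\log^4 n$, so the exponent above is $O(1/\log^4 n)$ and the ratio is $1+o(1)$. The only genuinely delicate point — and the one I would write out carefully — is the bookkeeping that each component edge is a legitimate edge of $\bG$ under the relevant consistency event (monochromatic under $\calbE_{C,\text{yes}}$, bichromatic under $\calbE_{C,\text{no}}$) and that exactly one sampled edge produces the response $v_i$; everything after that is the elementary estimate above, whose crucial input is simply that $|V_E|\ll n$ so that $(1+2/n)^{|V_E|}$ does not blow up.
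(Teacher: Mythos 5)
Your proof is correct and follows essentially the same route as the paper: compute the per-edge probabilities $\frac{1}{2\binom{n/2}{2}}$ and $\frac{1}{(n/2)^2}$ for the yes and no cases respectively, take the ratio $1+\frac{2}{n-2}$, and use $|V_E|\leq n/\log^4 n$ (from $\calbE_F$) to conclude the product is $1+o(1)$. Your additional bookkeeping about why observed edges are monochromatic under $\calbE_{C,\text{yes}}$ and bichromatic under $\calbE_{C,\text{no}}$ is a nice explicit check of something the paper leaves implicit, but it is the same argument.
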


\begin{proof}
	Note that for any choice of $A$ for which $\calbE_{C,\text{yes}}$ occurs, since the $v_i$'s are specific edges:
	\[ \Prx_{\bv_i}[\bv_i = v_i \mid Y(A)] = \frac{1}{2\binom{n/2}{2}} \] 
	and for any choice of $A$ for which $\calbE_{C,\text{no}}$ occurs,
	\[ \Prx_{\bv_i}[\bv_i = v_i \mid N(A)] = \frac{1}{(n/2)^2}. \]
	Thus, 
	\[ \frac{\Prx_{\bv_i}[\bv_i = v_i \mid Y(A)]}{\Prx_{\bv_i}[\bv_i = v_i \mid N(A)]}=\dfrac{n^2}{4} \cdot \dfrac{4}{n^2 - 2n} = 1 + O\left(\frac{1}{n}\right), \]
	and since $|V_E| \leq \frac{n}{\log^4 n}$, we get that $\dfrac{\calY_E}{\calN_E} = 1 + o(1)$.
\end{proof}

\begin{claim} \label{clm:Y_T}For any $\bA$ for which $\calbE_{C,\text{yes}}$ occurs, we have $\calY_T\le\calN_T$.
\end{claim}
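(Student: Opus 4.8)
The plan is to compute each factor $\Prx_{\bv_i}[\bv_i = \emptyset \mid Y(\bA)]$ and $\Prx_{\bv_i}[\bv_i = \emptyset \mid N(\bA)]$ explicitly, form the ratio, and then multiply over $i \in V_T$. Write $m = n/2$, and for a query $L_i$ put $a_i = |L_i \cap \bA|$, $b_i = |L_i \cap \obA|$; since a query equal to $[n]$ never returns $\emptyset$, every $i \in V_T$ has $|L_i| = a_i + b_i \le n/\log n$. When $\bG = K_{\bA} \cup K_{\obA}$, the $2\binom{m}{2}$ edges lie inside one of the two cliques and return $\emptyset$ exactly when both endpoints avoid $L_i$, so $\Prx_{\bv_i}[\bv_i = \emptyset \mid Y(\bA)] = \big(\binom{m - a_i}{2} + \binom{m - b_i}{2}\big)\big/\big(2\binom{m}{2}\big)$; when $\bG = K_{\bA,\obA}$ each of the $m^2$ edges has one endpoint per side, so $\Prx_{\bv_i}[\bv_i = \emptyset \mid N(\bA)] = (m - a_i)(m - b_i)/m^2$. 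Writing $x_i = m - a_i$, $y_i = m - b_i$ and dividing, a short manipulation gives the clean form
\[ \frac{\Prx_{\bv_i}[\bv_i = \emptyset \mid Y(\bA)]}{\Prx_{\bv_i}[\bv_i = \emptyset \mid N(\bA)]} = \frac{m}{m-1}\left(1 + \frac{(a_i - b_i)^2}{2x_iy_i} - \frac{x_i + y_i}{2x_iy_i}\right) =: r_i . \]

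The first instinct is to argue factor by factor, i.e. to show $r_i \le 1$; this is equivalent to $m\,a_i(a_i-1) + m\,b_i(b_i-1) \le 2(m-1)a_ib_i$, which holds whenever $\bA$ splits $L_i$ nearly in half (e.g. $a_i = b_i$, or $|a_i - b_i|\le 1$) but genuinely fails for very lopsided splits such as $a_i = 2,\ b_i = 0$, where $r_i = 1 + \Theta(1/m^2)$. So the comparison cannot be purely local, and I would instead bound $\log(\calY_T/\calN_T) = \sum_{i \in V_T}\log r_i$. From the displayed formula, $\log r_i = -\log(1 - \tfrac1m) + \log\!\big(1 + \tfrac{(a_i-b_i)^2 - (x_i+y_i)}{2x_iy_i}\big)$; using $\log(1+u)\le u$, $x_i, y_i \ge m/2$ (valid since $|L_i|\le n/\log n$), $x_i+y_i = 2m - |L_i|$, and $-\log(1-\tfrac1m) \le \tfrac1m + \tfrac1{m^2}$, this yields $\log r_i \le \tfrac{|L_i| + 2}{2m^2} + \tfrac{2(a_i - b_i)^2}{m^2}$.

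Summing, $\log(\calY_T/\calN_T) \le \tfrac{1}{m^2}\big(\sum_{i\in V_T}|L_i| + 2|V_T| + 2\sum_{i\in V_T}(a_i - b_i)^2\big)$. The first two terms are controlled by the cost budget: $\sum_i |L_i| \le \cost(\Alg) \le n^2/\log^6 n$ and $|V_T| \le \cost(\Alg)$, so together they contribute $O(1/\log^6 n) = o(1)$. For the last term I would invoke the balance event $\calbE_Q$ — which we may assume of $\bA$, since by Claim~\ref{cl:tbalance} it fails with probability only $o(1)$ over $\bG \sim \calG_1$ — giving $(a_i - b_i)^2 = O(|L_i|\log^2 n)$, hence $\sum_{i\in V_T}(a_i - b_i)^2 = O(\log^2 n)\cdot\cost(\Alg) = O(n^2/\log^4 n)$ and a contribution of $O(1/\log^4 n) = o(1)$. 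Thus $\calY_T \le (1 + o(1))\,\calN_T$, which is all the enclosing proof of Lemma~\ref{lem:good} needs. The main obstacle is exactly the non-locality just flagged: because the per-query comparison can be violated on lopsided queries, one must pass to the product and combine two independent global facts — the total query cost, and the typicality of $\bA$ with respect to \emph{every} query (event $\calbE_Q$) — and obtaining the clean inequality $\calY_T \le \calN_T$ precisely as stated appears to require folding $\calbE_Q$ into the conditioning event.
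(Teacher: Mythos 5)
Your proposal is correct, and it actually exposes a flaw in the paper's own proof of this claim. The paper's displayed identity
\[ \Prx_{\bv_i}[\bv_i=\emptyset\mid Y(A)] = \frac{2\binom{n/2}{2}-|L_i|\,\tfrac{n}{2}}{2\binom{n/2}{2}} \]
is not an equality: writing $m=n/2$, $a_i=|L_i\cap\bA|$, $b_i=|L_i\cap\obA|$, the number of edges of $K_{\bA}\cup K_{\obA}$ that meet $L_i$ is $|L_i|(m-1)-\binom{a_i}{2}-\binom{b_i}{2}$, which is strictly smaller than $|L_i|m$, so the paper's formula \emph{under}-estimates $\Prx[\bv_i=\emptyset\mid Y(A)]$ --- the wrong direction for an upper bound on $\calY_T$. (By contrast, the paper's $N(A)$ formula is exact; it carries the correction $+a_ib_i$, while the $Y(A)$ line omits the analogous $\binom{a_i}{2}+\binom{b_i}{2}$ term.) Your computation of the true per-query ratio is right, and your counterexample $a_i=2,b_i=0$ (giving $r_i=\frac{m^2-3m+3}{m^2-3m+2}>1$) shows that the claim $\calY_T\le\calN_T$ is in fact false pointwise, so it cannot be established factor-by-factor as the paper attempts.

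Your global workaround is sound: bounding $\sum_{i\in V_T}\log r_i$ via the cost budget $\sum_i|L_i|\le n^2/\log^6 n$ together with the balance event $\calbE_Q$ (from Claim~\ref{cl:tbalance}) yields $\calY_T\le(1+o(1))\calN_T$, which is all the enclosing proof of Lemma~\ref{lem:good} actually uses, since it already tolerates $(1+o(1))$ slack from $\calY_E/\calN_E$. Two points to tighten: (i) you need $\calbE_Q$ to hold with probability $1-o(1)$ conditional on $\calbE_{C,\text{yes}}$, not just unconditionally as Claim~\ref{cl:tbalance} gives; this does hold, since conditioning on $\calbE_{C,\text{yes}}$ only fixes $O(n/\log^4 n)$ vertices and the same negative-correlation Chernoff bound applies to the remaining uniformly split vertices, but the conditional version should be spelled out. (ii) The cleanest packaging, as you suggest, is to add $\calbE_Q$ to the list of bad events handled in Lemma~\ref{lem:bad} and state the present claim as $\calY_T\le(1+o(1))\calN_T$ for $\bA$ satisfying $\calbE_{C,\text{yes}}\wedge\calbE_Q$ --- exactly parallel to the way $\calbE_B$ is already folded in for $\calY_L$.
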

\begin{proof}
	Here, we have that for any set $A$ which satisfies $\calbE_{C, \text{yes}}$, we have
	\[ \Prx_{\bv_i}[\bv_i = \emptyset \mid Y(A)] = \dfrac{2\binom{n/2}{2} - |L_i|\frac{n}{2}}{2\binom{n/2}{2}} = 1 - \frac{2|L_i|}{n-2} \]
	and similarly, for any set $A$ which satisfies $\calbE_{C,\text{no}}$, we have
	\[ \Prx_{\bv_i}[\bv_i = \emptyset \mid N(A)] =\dfrac{(n/2)^2 - |L_i|\frac{n}{2}+|A\cap L_i||\oA\cap L_i|}{(n/2)^2} \ge 1 - \frac{2|L_i|}{n}.\]
	Which finishes the proof.
\end{proof}

Thus, by Claims~\ref{clm:Y_E} and~\ref{clm:Y_T} we have:
\begin{align*} 
\dfrac{\Ex_{\bA}\left[ \calY_E \cdot \calY_L \cdot \calY_T \cdot \calbE_{B} \mid \calbE_{C, \text{yes}}\right]}{\Ex_{\bA}\left[ \calN_E \cdot \calN_L \cdot \calN_T \mid \calbE_{C,\text{no}}\right]} 
&\leq (1 + o(1))  \dfrac{\Ex_{\bA}[ \calY_L \cdot \calbE_{B}  \mid \calbE_{C, \text{yes}}]}{\Ex_{\bA}[\calN_L \mid \calbE_{C, \text{no}}]}.
\end{align*}
Therefore, it suffices to prove the following:
\[\dfrac{\Ex_{\bA}[ \calY_L \cdot \calbE_{B}  \mid \calbE_{C, \text{yes}}]}{\Ex_{\bA}[\calN_L \mid \calbE_{C, \text{no}}]}\le 1+o(1).\]

Suppose $\bA \subset [n]$ satisfies $\calbE_{C, \text{yes}}$, then if $v_i$ is a vertex response at query $L_i$. We have:
\begin{align*}
\Prx_{\bv_i}[\bv_i = v_i \mid Y(A)] &= \frac{2}{n-2}\left(1 - \frac{|L_i|}{n} + (-1)^{\bY_i} \left(\dfrac{|L_i \cap \bA| - |L_i \cap \obA|}{n} \right) \right)\\
&= \frac{2}{n-2} \left(1 - \frac{|L_i|}{n} \right) \left(1 + \bZ_i\right),
\end{align*}
where:
\[ \bZ_i =  c_i (-1)^{\bY_i} \left( \dfrac{|L_i \cap \bA| - |L_i \cap \obA|}{n} \right), \]
where $c_i = \dfrac{1}{1 - |L_i|/n} \leq 1 + o(1)$, since $|L_i| \ll \frac{n}{\log n}$, and $\bY_i$ is the indicator random variable for $v_i \in \bA$. Thus, we may simplify:
\begin{align*}
\Ex_{\bA}[\calY_L \cdot \calbE_{B}  \mid \calbE_{C,\text{yes}}] &= \left( \frac{2}{n-2}\right)^{|V_L|} \left( 1 - \frac{|L_i|}{n}\right)^{|V_L|} \Ex_{\bA} \left[ \calbE_{B} \prod_{i \in V_L} (1 + \bZ_i) \mid \calbE_{C,\text{yes}}\right]. 
\end{align*}
Similarly, suppose $\bA \subset [n]$ satisfies $\calbE_{C, \text{no}}$, then if $v_i$ is a vertex response at query $L_i$, we have:
\begin{align*}
\Prx_{\bv_i}[\bv_i = v_i \mid N(A)] &= \dfrac{2}{n} \left(1 - \frac{|L_i|}{n} \right) \left(1 + \bS_i \right),
\end{align*}
where we let $\bS_i$ be the random variable:
\[ \bS_i =  c_i (-1)^{\bY_i} \left( \dfrac{|L_i \cap \obA| - |L_i \cap \bA|}{n} \right), \]
Therefore, we have:
\[ \Ex_{\bA}[\calN_{L} \mid \calbE_{C,\text{no}}] = \left( \frac{2}{n}\right)^{|V_L|} \left(1 - \frac{|L_i|}{n} \right)^{|V_L|} \Ex_{\bA}\left[\prod_{i \in V_L} (1 + \bS_i) \mid \calbE_{C, \text{no}} \right]. \]

We note that since $|V_L|\le \frac{n}{\log^4 n }$, we finish off the proof with the following two claims.
\begin{claim}\label{clm:LoneUpper} \[\Ex_{\bA} \left[ \calbE_{B} \prod_{i \in V_L} (1 + \bZ_i) \mid \calbE_{C,\text{yes}}\right]\le 1+o(1). \]
\end{claim}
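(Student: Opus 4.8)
\noindent\textit{Plan.} I would bound $\prod_{i\in V_L}(1+\bZ_i)$ by an exponential of $\sum_{i\in V_L}\bZ_i$, show that on $\calbE_B$ this exponent is $o(1)$ up to a lower‑order correction, and then absorb the (exponentially unlikely) event on which the correction — or the quadratic slack in the exponential estimate — is large by pairing its probability against the crude worst‑case value of the product.

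\noindent\textit{Setting up the exponential bound.} Every query $L_i$ that can return a lone vertex has $|L_i|\le n/\log n$: a query $L_i=[n]$ always returns an edge, and by the initial simplification there are no queries of intermediate size. Hence $c_i=(1-|L_i|/n)^{-1}=1+O(1/\log n)$ and $|\bZ_i|\le c_i|L_i|/n=O(1/\log n)$, so $1+\bZ_i\in(0,2)$ and $1+x\le e^x$ give $\prod_{i\in V_L}(1+\bZ_i)\le\exp\!\big(\sum_{i\in V_L}\bZ_i\big)$. Writing $s_i=(-1)^{\bY_i}\big(|L_i\cap\bA|-|L_i\cap\obA|\big)$ for the $i$‑th summand of $\bB$ and using $c_i=n/(n-|L_i|)$, I would split
\[ \sum_{i\in V_L}\bZ_i \;=\; \frac1n\,\bB \;+\; \frac1n\sum_{i\in V_L}(c_i-1)\,s_i, \qquad c_i-1=\frac{|L_i|}{n-|L_i|}=O\!\left(\frac{|L_i|}{n}\right). \]
On $\calbE_B$ the first term is $O(1/\log n)=o(1)$ \emph{pointwise}, so it remains to show $\Ex_{\bA}\big[\exp(\tfrac1n\sum_i(c_i-1)s_i)\mid\calbE_{C,\text{yes}}\big]=1+o(1)$.

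\noindent\textit{Controlling the correction.} The crude bound $|s_i|\le|L_i|$ is not enough — it only gives $\tfrac1n\sum_i(c_i-1)|s_i|=O(n^{-2}\sum_i|L_i|^2)=O(n/\log^7 n)$ — so one must exploit cancellation. The device is to condition further on the colours assigned to the observed components $C_1,\dots,C_\alpha$: since each $\bv_i$ lies in a (possibly singleton) component, this fixes every $(-1)^{\bY_i}$, and it turns each $s_i$, and hence $\tfrac1n\sum_i(c_i-1)s_i$, into a \emph{bounded affine form} in the negatively correlated leftover indicators $\{\bY_k:k\notin\bigcup_jC_j\}$. Using $\sum_j|C_j|\le n/\log^4 n$ (few non‑empty responses) and $|C_j|\le\log n$ (small trees), one checks that for a typical colouring the conditional mean of this form is $O(1/\log^5 n)$, and the Chernoff bound for negatively correlated variables from the preliminaries shows the form stays within $o(1)$ of its mean except with probability $\exp(-\omega(n/\polylog n))$. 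On the complementary, exponentially unlikely event one uses $\sum_i|\bZ_i|=O(n/\log^5 n)$, so $\prod_i(1+\bZ_i)\le e^{O(n/\log^5 n)}$, which is overwhelmed by the failure probability. Combining the cases — and using the analogous, conditionally quadratic control of $\sum_{i\in V_L}\bZ_i^2$ (whose conditional expectation is $O(1/\log^6 n)$) to keep the estimate $\prod(1+\bZ_i)\le e^{\sum\bZ_i}$ sharp — yields the claimed $1+o(1)$.

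\noindent\textit{Main obstacle.} The hard part is exactly this final quantitative balance: the product can be as large as $e^{\Theta(n/\polylog n)}$, so the concentration of the correction term must fail with probability only $e^{-\omega(n/\polylog n)}$, not merely $o(1)$. This is what forces one to condition on the component colourings — so that the relevant sums become \emph{linear} in fresh randomness and the negatively correlated Chernoff bound applies — and it is where the specific polylogarithmic gaps built into the events $\calbE_T$, $\calbE_F$, $\calbE_B$ and into the cost budget $\cost(\Alg)\le n^2/\log^6 n$ are consumed. A secondary nuisance is that the worst‑case conditional mean of the correction over component colourings is itself too large, so one must additionally restrict to a typical colouring (another high‑probability event) before running the concentration argument.
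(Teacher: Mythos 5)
The paper's proof is a one-liner: $1+x\le e^x$ gives $\prod_{i\in V_L}(1+\bZ_i)\le e^{\sum_{i\in V_L}\bZ_i}$, and then the event $\calbE_B$ is invoked to bound $\sum_{i\in V_L}\bZ_i\le 1/\log n$ pointwise, which yields $e^{1/\log n}=1+o(1)$. You reproduce the first step, but instead of invoking $\calbE_B$ for the whole sum you peel off the $c_i$ factors, split $\sum\bZ_i=\bB/n+\mathrm{corr}$, and then try to control $\mathrm{corr}$ by a separate concentration argument. Your underlying observation --- that the definition of $\bB$ in Definition~\ref{def:event-l} literally carries no $c_i$ weight, so $\sum\bZ_i\neq\bB/n$ --- is correct, and is a genuine imprecision in the paper. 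But the intended reading is that $c_i=1+O(1/\log n)$, so $\bB$ should (and harmlessly can) be defined with the weights $c_i$; then the claim really is one line, and the proof of Lemma~\ref{lem:calE_Q_L} (that $\Prx[\neg\calbE_B\wedge\calbE_F]=o(1)$) goes through verbatim with the $1+O(1/\log n)$ factor absorbed into its constants. Your route replaces a definitional patch with substantial new machinery.

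Moreover, the concentration step as you have sketched it does not close. You claim the residual affine form $\mathrm{corr}=\sum_k a_k(2\bY_k-1)$ (after conditioning on the component colouring) deviates from its mean by $o(1)$ only with probability $\exp(-\omega(n/\polylog n))$, and you pair that against $\prod(1+\bZ_i)\le e^{O(n/\log^5 n)}$ on the bad event. For a Hoeffding/Chernoff bound to give failure probability $\exp(-\omega(n/\polylog n))$ at deviation $o(1)$, one needs $\sum_k a_k^2=o(\polylog n/n)$. But $a_k=\tfrac1n\sum_{i:k\in L_i}(c_i-1)\epsilon_i$, and a back-of-the-envelope estimate (for instance with $|V_L|\approx n/\log^4n$ lone-vertex queries each of size $n/\log n$, with balanced signs $\epsilon_i$) gives $\sum_k a_k^2=\Theta(1/\polylog n)$ for a typical colouring --- a factor of roughly $n$ too large. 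The resulting failure probability is only $\exp(-\Omega(\polylog n))$, vastly too large to absorb the $e^{O(n/\log^5 n)}$ worst case. The cancellation you need is exactly what the (properly weighted) event $\calbE_B$ supplies pointwise; once you drop its indicator from the correction term, you cannot recover it from concentration alone without further ideas.
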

\begin{claim}\label{clm:LoneLower}
	\[ \Ex_{\bA}\left[ \prod_{i \in V_L} (1 + \bS_i) \mid \calbE_{C, \text{no}}\right] \geq 1 - o(1) \]
\end{claim}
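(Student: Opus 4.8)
The plan is to bound the product from below by convexity, which conveniently needs no cutoff event (this is why Claim~\ref{clm:LoneLower}, unlike Claim~\ref{clm:LoneUpper}, carries no analogue of $\calbE_{B}$). First I would record the a priori estimate $|\bS_i| \le c_i|L_i|/n \le (1+o(1))/\log n$, valid because every lone-vertex response comes from a query with $|L_i| \le \frac{n}{\log n}$; in particular $|\bS_i| < 1/2$ for $n$ large, so $\log(1+\bS_i)$ is defined and $\log(1+\bS_i) \ge \bS_i - \bS_i^2$. Writing $\prod_{i\in V_L}(1+\bS_i) = \exp\!\big(\sum_{i\in V_L}\log(1+\bS_i)\big)$ and applying Jensen's inequality to $\exp(\cdot)$ under the conditional law of $\bA$ given $\calbE_{C,\text{no}}$ gives
\[ \Ex_{\bA}\Big[\prod_{i\in V_L}(1+\bS_i)\,\Big|\,\calbE_{C,\text{no}}\Big] \ge \exp\Big(\Ex_{\bA}\Big[\textstyle\sum_{i\in V_L}\bS_i \mid \calbE_{C,\text{no}}\Big] - \Ex_{\bA}\Big[\textstyle\sum_{i\in V_L}\bS_i^2 \mid \calbE_{C,\text{no}}\Big]\Big). \]
Thus it suffices to prove $\Ex_{\bA}\big[\sum_{i\in V_L}\bS_i \mid \calbE_{C,\text{no}}\big] = \pm o(1)$ and $\Ex_{\bA}\big[\sum_{i\in V_L}\bS_i^2 \mid \calbE_{C,\text{no}}\big] = o(1)$, after which $\Ex_{\bA}[\prod(1+\bS_i)\mid\calbE_{C,\text{no}}] \ge e^{-o(1)} = 1-o(1)$.

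For the second moment, fix $i \in V_L$, write $|L_i\cap\obA|-|L_i\cap\bA| = |L_i| - 2|L_i\cap\bA|$, and split the vertices of $L_i$ into those lying in some observed component (at most $V \le \frac{n}{\log^4 n}$ such vertices total, by $\calbE_F$, each component a tree of size $\le\log n$ by $\calbE_T$) and the remaining ``free'' vertices. Conditioned on $\calbE_{C,\text{no}}$, the free vertices of $L_i$ form a near-uniform $\tfrac12$-fraction subset of all free vertices, so their contribution to $|L_i\cap\bA|$ has variance $O(|L_i|)$ by the hypergeometric / negatively-correlated Chernoff bounds already in the paper (Theorem~\ref{thm:HG-Chernoff}); the committed vertices of $L_i$ contribute $\sum_j (-1)^{\bY_j}\cdot(\text{layer-imbalance of }C_j\cap L_i)$ with the orientation coins $\bY_j$ fair, hence variance $\sum_j(\text{imbalance})^2 \le (\log n)\,|L_i|$ since each imbalance is at most $|C_j|\le\log n$. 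Therefore $\Ex_{\bA}[\bS_i^2\mid\calbE_{C,\text{no}}] = O(|L_i|\log n/n^2)$, and summing over $i\in V_L$ using $\sum_i|L_i| \le \cost(\Alg) \le n^2/\mathrm{polylog}(n)$ yields $\Ex_{\bA}[\sum_{i\in V_L}\bS_i^2\mid\calbE_{C,\text{no}}] = O(1/\mathrm{polylog}(n)) = o(1)$.

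The main obstacle is the first moment, where the point is to average out the sign $(-1)^{\bY_i}$ with $\bY_i := \ind[v_i\in\bA]$. Writing $|L_i\cap\bA| = \bY_i + W_i$ for $W_i := |(L_i\setminus\{v_i\})\cap\bA|$ and using $\bY_i^2 = \bY_i$ gives the clean identity $\bS_i = \tfrac{c_i}{n}\big(D_i + 2\bY_i(1-D_i)\big)$ with $D_i := |L_i| - 2W_i$. Since $\calbE_{C,\text{no}}$ is invariant under $\bA\leftrightarrow\obA$, every vertex lies in $\bA$ with probability exactly $\tfrac12$ conditioned on $\calbE_{C,\text{no}}$, whence $\Ex[\bY_i]=\tfrac12$ and $\Ex[D_i]=1$ exactly, so $\Ex[\bS_i\mid\calbE_{C,\text{no}}] = \tfrac{2c_i}{n}\big(1 - \Ex[\bY_iD_i\mid\calbE_{C,\text{no}}]\big)$ and it remains to estimate $\Ex[\bY_iD_i\mid\calbE_{C,\text{no}}] = \tfrac12|L_i| - 2\Ex[\bY_iW_i\mid\calbE_{C,\text{no}}]$. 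I would compute $\Ex[\bY_iW_i\mid\calbE_{C,\text{no}}] = \sum_{u\in L_i\setminus\{v_i\}}\Pr[v_i,u\in\bA\mid\calbE_{C,\text{no}}]$ by distinguishing whether $u$ lies in $v_i$'s observed component (at most $\log n$ such $u$, each contributing a probability in $[0,\tfrac12]$) or not (in which case the two components' orientation coins, or the two free-set memberships, are near-independent and the probability is $\tfrac14 + O(1/n)$); this gives $\Ex[\bY_iW_i\mid\calbE_{C,\text{no}}] = \tfrac14(|L_i|-1) + O(\log n)$, hence $\Ex[\bY_iD_i\mid\calbE_{C,\text{no}}] = \tfrac12 + O(\log n)$ and $|\Ex[\bS_i\mid\calbE_{C,\text{no}}]| = O(\log n/n)$. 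Summing over the at most $\frac{n}{\log^4 n}$ indices in $V_L$ (using $\calbE_F$) gives $\Ex_{\bA}[\sum_{i\in V_L}\bS_i\mid\calbE_{C,\text{no}}] = O(1/\log^3 n) = o(1)$, which together with the second-moment bound and the Jensen inequality above completes the proof. The genuinely fiddly part is controlling, to within an additive $O(\mathrm{polylog}(n))$ per term, the effect of the committed vertices and of the global constraint $|\bA|=n/2$ on these moment computations — but there is a full power of $\log n$ of slack both in $\cost(\Alg)$ and in $|V_L|$, so crude bounds suffice.
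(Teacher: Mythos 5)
Your argument is essentially correct and takes a genuinely different route from the paper. The paper's proof invokes a black-box claim from \cite{CWX17} (their Claim~A.1) about lower-bounding products $\prod(1+\bS_i)$ via truncated first moments, whereas you replace that citation by the elementary inequality $\log(1+x)\ge x-x^2$ (valid here since $|\bS_i|\le c_i|L_i|/n=O(1/\log n)$ for lone-vertex queries) together with Jensen's inequality for $\exp$. This is more self-contained, and it cleanly explains why no $\calbE_B$-style cutoff is needed on this side. The price is that you must supply an explicit second-moment bound $\Ex[\sum\bS_i^2\mid\calbE_{C,\text{no}}]=o(1)$, which the paper never states because it is absorbed inside the cited claim. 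Your first-moment computation, after the clean rewrite $\bS_i=\tfrac{c_i}{n}\bigl(D_i+2\bY_i(1-D_i)\bigr)$, ends up at the same bottleneck as the paper's: the estimate $\Prx_{\bA}[u\in\bA\mid v_i\in\bA,\calbE_{C,\text{no}}]=\tfrac12\pm O(\log^2 n/n)$ for $u$ outside $v_i$'s component, which is exactly Lemma~\ref{lem:split}. Two details you gloss over but which do close with the stated slack: (1) the ``$\tfrac14+O(1/n)$'' in the first-moment step should really be $\tfrac14+O(\log^2 n/n)$; since $|L_i|\le n/\log n$, summing over $u\in L_i$ still gives the $O(\log n)$ additive error you claim. (2) In the variance bound you add the ``free-vertex'' and ``committed-vertex'' contributions as if independent; one needs the cross-covariances $\Cov(\bY_u,\bY_{u'})=O(\log^2 n/n)$ (again from Lemma~\ref{lem:split}), and then $|L_i|^2\cdot O(\log^2 n/n)=O(|L_i|\log n)$ using $|L_i|\le n/\log n$, so the $O(|L_i|\log n)$ variance bound does hold. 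With those two patches the Jensen route is airtight and, arguably, easier to verify than the paper's appeal to \cite{CWX17}.
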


\begin{proofof} {Claim~\ref{clm:LoneUpper}}
	\begin{align*}
	\Ex_{\bA} \left[ \calbE_{B}  \prod_{i \in V_L} (1 + \bZ_i) \mid \calbE_{C,\text{yes}}\right] &\leq \Ex_{\bA} \left[ \calbE_{B}  \cdot e^{\sum_{i \in V_L} \bZ_i} \mid \calbE_{C, \text{yes}}\right] \\
	&\leq e^{\frac{1}{\log n}} = 1 + o(1).
	\end{align*}
	Where the last inequality follows from the fact that $\calbE_{B} $ occurs.
\end{proofof}

\begin{proofof}{Claim~\ref{clm:LoneLower}}
	Recall that \[ \bS_i =  c_i (-1)^{\bY_i} \left( \dfrac{|L_i \cap \obA| - |L_i \cap \bA|}{n} \right), \]
	therefore, by Chernoff bound (for negative correlations) we have that with probability at least $1-\frac{1}{n^{10}}$, $|\bS_i|\le O\left(\frac{\log n}{\sqrt{n}}\right)$. 
	We let $\bS_i'$ be the random variable which is equal to $\bS_i$ when $|\bS_i| \leq O(\frac{\log n}{\sqrt{n}})$ and $-2n$ otherwise. Via a very similar analysis to Claim~A.1 from \cite{CWX17}, we have:
	\[ \Ex_{\bA}\left[ \prod_{i \in V_L} (1 + \bS_i) \mid \calbE_{C, \text{no}}\right] \geq (1 - o(1)) \left(1 + \sum_{i\in V_L}\Ex_{\bA}[\bS_i' \mid \calbE_{C, \text{no}}] \right). \]
	We now evaluate each $\Ex_{\bA}[\bS_i' \mid \calbE_{C, \text{no}}]$  for $i \in V_L$ individually. We have:
	\begin{align*}
	\Ex_{\bA}[\bS_i' \mid \calbE_{C, \text{no}}] &\ge\Ex_{\bA}\left[\bS_i\mid  \calbE_{C, \text{no}}\right]+(-2n-c_i) \Prx_{\bA}\left[|\bS_i|>O\left(\frac{\log n}{\sqrt{n}}\right)\mid \calbE_{C, \text{no}} \right]\\
	&\ge \Ex_{\bA}\left[\bS_i\mid  \calbE_{C, \text{no}}\right] -O\left(\frac{1}{n^9}\right).
	\end{align*}
	Assume that $v_i$ is in component $C_j$, and note that since $\bA$ and $\obA$ are inter-changeable, 
	\[ \Prx_{\bA}[v_i \in \bA \mid \calbE_{C, \text{no}}] = \Prx_{\bA}[ v_i \in \obA \mid \calbE_{C, \text{no}}] = \frac{1}{2}.\]
	Now we have that,
	\begin{align*}
	\Ex_{\bA}\left[\bS_i\mid  \calbE_{A, \text{no}}\right] &\geq \frac{c_i}{n} \sum_{k\in L_i\setminus C_j}\Ex_{\bA}\left[(-1)^{\bY_i}(-1)^{\bY_k}\mid  \calbE_{C, \text{no}}\right]-O\left(\frac{\log n}{n}\right)\\
	&= \frac{c_i}{n} \sum_{k\in L_i\setminus C_j}\left(2\Prx_{\bA}\left[\bY_k=1\mid \bY_i=1; \calbE_{C, \text{no}}\right]-1\right)-O\left(\frac{\log n}{n}\right),\\
	\end{align*}
	where we used the fact that $|C_i| \leq \log n$, as well as the fact that $\bA$ and $\obA$ are interchangeable. Since $|V_L| \leq \frac{n}{\log^4 n}$ and $|L_i| \leq \frac{n}{\log n}$ for each $i \in V_{L}$ (otherwise, we would have observed an edge), it suffices to prove that $\Prx_{\bA}[\bY_k=1\mid \bY_i=1;\calbE_{C, \text{no}} ]\ge \frac{1}{2}-\frac{\log^4 n}{n}$. This is indeed true, since $\sum_{i=1}^{\alpha} |C_i| \leq \frac{n}{\log^4 n}$ and $|C_i| \leq \log n$ (see Lemma~\ref{lem:split}). 
\end{proofof}

Putting everything together, we have:
\[ \dfrac{\Ex_{\bA}[\calY_L \cdot \calbE_{B} \cdot \calbE_{Q} \mid \calbE_{C,\text{yes}}]}{\Ex_{\bA}[\calN_{L} \mid \calbE_{C, \text{no}}]} \leq \left(\dfrac{n}{n-2}\right)^{|V_L|} \dfrac{1 + o(1)}{1 - o(1)} \leq 1 + o(1). \]

\section*{Acknowledgments}

We thank Eric Blais, Rocco Servedio and Xi Chen for countless discussions and suggestions. We also thank Cl\'ement Canonne, Nathan Harms, Dor Minzer and Sofya Raskhodnikova for useful comments on an earlier version of this manuscript. This work is supported in part by the NSF Graduate Research Fellowship under Grant No. DGE-16-44869, CCF-1703925, CCF-1563155,  CCF-1420349 and the David R. Cheriton Graduate Scholarship.

\begin{flushleft}
\bibliographystyle{alpha}
\bibliography{ms}

\newcommand{\etalchar}[1]{$^{#1}$}
\begin{thebibliography}{BGSMdW13}

\bibitem[ACCL07]{ACCL07}
Nir Ailon, Bernard Chazelle, Seshadhri Comandur, and Ding Liu.
\newblock Estimating the distance to a monotone function.
\newblock {\em Random Structures and Algorithms}, 31(3):371--383, 2007.

\bibitem[BB16]{BB16}
Aleksandrs Belovs and Eric Blais.
\newblock A polynomial lower bound for testing monotonicity.
\newblock In {\em Proceedings of the 48th {ACM} Symposium on the Theory of
  Computing ({STOC}~'2016)}, pages 1021--1032, 2016.

\bibitem[BCE{\etalchar{+}}18]{BCELR16}
Eric Blais, Cl{\'e}ment~L Canonne, Talya Eden, Amit Levi, and Dana Ron.
\newblock In {\em Proceedings of the 29th {ACM-SIAM} Symposium on Discrete
  Algorithms ({SODA}~'2018)}, pages 2113--2132. SIAM, 2018.

\bibitem[BCP{\etalchar{+}}17a]{BCPRS17b}
Roksana Baleshzar, Deeparnab Chakrabarty, Ramesh Krishnan~S. Pallavoor, Sofya
  Raskhodnikova, and C.~Seshadhri.
\newblock A lower bound for nonadaptive, one-sided error testing of unateness
  of boolean functions over the hypercube.
\newblock {\em arXiv preprint arXiv:1706.00053}, 2017.

\bibitem[BCP{\etalchar{+}}17b]{BCPRS17}
Roksana Baleshzar, Deeparnab Chakrabarty, Ramesh Krishnan~S. Pallavoor, Sofya
  Raskhodnikova, and C.~Seshadhri.
\newblock Optimal unateness testers for real-values functions: Adaptivity
  helps.
\newblock In {\em Proceedings of the 44th International Colloquium on Automata,
  Languages and Programming ({ICALP}~'2017)}, 2017.

\bibitem[BGSMdW13]{BGSMW13}
Harry Buhrman, David Garc{\i}a-Soriano, Arie Matsliah, and Ronald de~Wolf.
\newblock The non-adaptive query complexity of testing k-parities.
\newblock {\em Chicago Journal of Theoretical Computer Science}, 6:1--11, 2013.

\bibitem[Bla08]{B08}
Eric Blais.
\newblock Improved bounds for testing juntas.
\newblock In {\em Approximation, Randomization and Combinatorial Optimization.
  Algorithms and Techniques}, pages 317--330. Springer, 2008.

\bibitem[Bla09]{B09}
Eric Blais.
\newblock Testing juntas nearly optimally.
\newblock In {\em Proceedings of the 41st {ACM} Symposium on the Theory of
  Computing ({STOC}~'2009)}, pages 151--158, 2009.

\bibitem[BMR16]{BMR16}
Piotr Berman, Meiram Murzabulatov, and Sofya Raskhodnikova.
\newblock Tolerant testers of image properties.
\newblock In {\em Proceedings of the 43th International Colloquium on Automata,
  Languages and Programming ({ICALP}~'2016)}, pages 90:1--90:14, 2016.

\bibitem[BRY14]{BRY14b}
Piotr Berman, Sofya Raskhodnikova, and Grigory Yaroslavtsev.
\newblock {$L_p$}-testing.
\newblock In {\em Proceedings of the 46th {ACM} Symposium on the Theory of
  Computing ({STOC}~'2014)}, 2014.

\bibitem[CC16]{CS16}
Deeparnab Chakrabarty and Seshadhri Comandur.
\newblock An o(n) monotonicity tester for boolean functions over the hypercube.
\newblock {\em {SIAM} Journal on Computing}, 45(2):461--472, 2016.

\bibitem[CFGM12]{CFGM:12}
Sourav Chakraborty, Eldar Fischer, David Garc{\'\i}a{-}Soriano, and Arie
  Matsliah.
\newblock Junto-symmetric functions, hypergraph isomorphism and crunching.
\newblock In {\em Proceedings of the 27th Conference on Computational
  Complexity ({CCC}~'2012)}, pages 148--158. IEEE, 2012.

\bibitem[CFGM16]{CFGM16}
Sourav Chakraborty, Eldar Fischer, Yonatan Goldhirsh, and Arie Matsliah.
\newblock On the power of conditional samples in distribution testing.
\newblock {\em {SIAM} Journal on Computing}, 45(4):1261--1296, 2016.

\bibitem[CG04]{CG04}
Hana Chockler and Dan Gutfreund.
\newblock A lower bound for testing juntas.
\newblock {\em Information Processing Letters}, pages 301--305, 2004.

\bibitem[CGR13]{CGR13}
Andrea Campagna, Alan Guo, and Ronitt Rubinfeld.
\newblock Local reconstructors and tolerant testers for connectivity and
  diameter.
\newblock In {\em Approximation, Randomization, and Combinatorial Optimization.
  Algorithms and Techniques}, pages 411--424. Springer, 2013.

\bibitem[CRS15]{CRS15}
Cl\'{e}ment~L. Canonne, Dana Ron, and Rocco~A. Servedio.
\newblock Testing probability distributions using conditional samples.
\newblock {\em {SIAM} Journal on Computing}, 44(3):540--616, 2015.

\bibitem[CS16]{CS16b}
Deeparnab Chakrabarty and C.~Seshadhri.
\newblock A {$\widetilde{O}(n)$} non-adaptive tester for unateness.
\newblock {\em arXiv preprint arXiv:1608.06980}, 2016.

\bibitem[CST{\etalchar{+}}17]{CSTWX17}
Xi~Chen, Rocco~A. Servedio, Li-Yang Tan, Erik Waingarten, and Jinyu Xie.
\newblock Settling the query complexity of non-adaptive junta testing.
\newblock In {\em Proceedings of the 32nd Conference on Computational
  Complexity ({CCC}~'2017)}, 2017.

\bibitem[CWX17a]{CWX17}
Xi~Chen, Erik Waingarten, and Jinyu Xie.
\newblock Beyond talagrand functions: new lower bounds for testing monotonicity
  and unateness.
\newblock In {\em Proceedings of the 49th {ACM} Symposium on the Theory of
  Computing ({STOC}~'2017)}, 2017.

\bibitem[CWX17b]{CWX17b}
Xi~Chen, Erik Waingarten, and Jinyu Xie.
\newblock Boolean unateness testing with {{$\widetilde{O}(n^{3/4})$}} adaptive
  queries.
\newblock In {\em Proceedings of the 58th Annual {IEEE} Symposium on
  Foundations of Computer Science ({FOCS}~'2017)}, 2017.

\bibitem[DLM{\etalchar{+}}07]{diakonikolas2007testing}
Ilias Diakonikolas, Homin~K Lee, Kevin Matulef, Krzysztof Onak, Ronitt
  Rubinfeld, Rocco~A Servedio, and Andrew Wan.
\newblock Testing for concise representations.
\newblock In {\em Proceedings of the 48th Annual {IEEE} Symposium on
  Foundations of Computer Science ({FOCS}~'2007)}, pages 549--558. IEEE, 2007.

\bibitem[Doe11]{doerr2011analyzing}
Benjamin Doerr.
\newblock Analyzing randomized search heuristics: Tools from probability
  theory.
\newblock {\em Theory of randomized search heuristics}, 1:1--20, 2011.

\bibitem[FF06]{FF06}
Eldar Fischer and Lance Fortnow.
\newblock Tolerant versus intolerant testing for boolean properties.
\newblock {\em Theory of Computing}, 2(9):173?--183, 2006.

\bibitem[FKR{\etalchar{+}}04]{FKRSS:04}
Eldar Fischer, Guy Kindler, Dana Ron, Shmuel Safra, and Alex Samorodnitsky.
\newblock Testing juntas.
\newblock {\em Journal of Computer and System Sciences}, 68(4):753--787, 2004.

\bibitem[FLN{\etalchar{+}}02]{FLNRRS02}
Eldar Fischer, Eric Lehman, Ilan Newman, Sofya Raskhodnikova, Ronitt Rubinfeld,
  and Alex Samorodnitsky.
\newblock Monotonicity testing over general poset domains.
\newblock In {\em Proceedings of the 34th {ACM} Symposium on the Theory of
  Computing ({STOC}~'2002)}, pages 474--483, 2002.

\bibitem[FN07]{FN07}
Eldar Fischer and Ilan Newman.
\newblock Testing versus estimation of graph properties.
\newblock {\em {SIAM} Journal on Computing}, 37(2):482--501, 2007.

\bibitem[FR10]{FR10}
Shahar Fattal and Dana Ron.
\newblock Approximating the distance to monotonicity in high dimensions.
\newblock {\em {ACM} Transactions on Algorithms}, 6(3):52, 2010.

\bibitem[GGL{\etalchar{+}}00]{GGLRS00}
Oded Goldreich, Shafi Goldwasser, Eric Lehman, Dana Ron, and Alex Samordinsky.
\newblock Testing monotonicity.
\newblock {\em Combinatorica}, 20(3):301--337, 2000.

\bibitem[GGR98]{GGR98}
Oded Goldreich, Shafi Goldwasser, and Dana Ron.
\newblock Property testing and its connection to learning and approximation.
\newblock {\em Journal of the ACM}, 45(4):653--750, 1998.

\bibitem[Gol17]{G17}
Oded Goldreich.
\newblock {\em Introduction to property testing}.
\newblock Cambridge University Press, 2017.

\bibitem[GR05]{GR05}
Venkatesan Guruswami and Atri Rudra.
\newblock Tolerant locally testable codes.
\newblock In {\em Approximation, Randomization and Combinatorial Optimization.
  Algorithms and Techniques}, pages 306--317. Springer, 2005.

\bibitem[GR16]{GR16}
Oded Goldreich and Dana Ron.
\newblock On sample-based testers.
\newblock {\em {ACM} Transactions on Computation Theory}, 8(2), 2016.

\bibitem[Hoe63]{Hoef63}
Wassily Hoeffding.
\newblock Probability inequalities for sums of bounded random variables.
\newblock {\em Journal of the American statistical association},
  58(301):13--30, 1963.

\bibitem[KS09]{KS09}
Swastik Kopparty and Shubhangi Saraf.
\newblock Tolerant linearity testing and locally testable codes.
\newblock In {\em Approximation, Randomization, and Combinatorial Optimization.
  Algorithms and Techniques}, pages 601--614. Springer, 2009.

\bibitem[KS16]{KS16}
Subhash Khot and Igor Shinkar.
\newblock An {$\widetilde{O}(n)$} queries adaptive tester for unateness.
\newblock In {\em Approximation, Randomization and Combinatorial Optimization.
  Algorithms and Techniques}, pages 37:1--37:7, 2016.

\bibitem[MR09]{MR09}
Sharon Marko and Dana Ron.
\newblock Approximating the distance to properties in bounded-degree and
  general sparse graphs.
\newblock {\em {ACM} Transactions on Algorithms}, 5(2):22, 2009.

\bibitem[PRR06]{parnas2006tolerant}
Michal Parnas, Dana Ron, and Ronitt Rubinfeld.
\newblock Tolerant property testing and distance approximation.
\newblock {\em Journal of Computer and System Sciences}, 72(6):1012--1042,
  2006.

\bibitem[Ron08]{R08}
Dana Ron.
\newblock Property testing: A learning theory perspective.
\newblock {\em Foundations and Trends{\textregistered} in Machine Learning},
  1(3):307--402, 2008.

\bibitem[Ron10]{R10}
Dana Ron.
\newblock Algorithmic and analysis techniques in property testing.
\newblock {\em Foundations and Trends{\textregistered} in Theoretical Computer
  Science}, 5(2):73--205, 2010.

\bibitem[Roo01]{Roos01}
Bero Roos.
\newblock Binomial approximation to the poisson binomial distribution: The
  krawtchouk expansion.
\newblock {\em Theory of Probability \& Its Applications}, 45(2):258--272,
  2001.

\bibitem[STW15]{STW15}
Rocco~A Servedio, Li-Yang Tan, and John Wright.
\newblock Adaptivity helps for testing juntas.
\newblock In {\em Proceedings of the 30th Conference on Computational
  Complexity ({CCC}~'2015)}, pages 264--279, 2015.

\bibitem[Tal96]{T96}
Michel Talagrand.
\newblock How much are increasing sets positively correlated?
\newblock {\em Combinatorica}, 16(2):243--258, 1996.

\bibitem[Tel16]{Tel16}
Roei Tell.
\newblock A note on tolerant testing with one-sided error.
\newblock In {\em Electronic Colloquium on Computational Complexity (ECCC)},
  volume~23, page~32, 2016.

\end{thebibliography}
\end{flushleft}

\appendix

\section{A Useful Claim}

Consider any set of trees $C_1, \dots, C_{\alpha} \subset [n]$ with roots $u_1, \dots, u_{\alpha}$ satisfying the following conditions:
\begin{itemize}
	\item Each $|C_i| \leq \log n$ for $i \in [\alpha]$,
	\item We have $\sum_{i=1}^\alpha |C_i| \leq \frac{n}{\log^4 n}$.
\end{itemize}
Recall that $\calbE_{C, \text{no}}$ is the event that the components $C_1, \dots, C_{\alpha}$ is consistent with the partition $\bA \subset [n]$. More formally, for each $i \in [\alpha]$, we consider layering the tree $C_i$ with root $u_i$. We let $|C_i(\text{odd})|$ be the odd layers and $|C_i(\text{even})|$ be the even layers. Then, we have event $\calbE_{C, \text{no}}$ is satisfied if for each $i \in [\alpha]$, either $C_i(\text{odd}) \subset \bA$ and $C_i(\text{even}) \subset \obA$ or $C_i(\text{even}) \subset \bA$ and $C_i(\text{odd}) \subset \obA$. 

The following lemma is the last necessary step of Claim~\ref{clm:LoneLower}.

\begin{lemma}\label{lem:split}
	Then, for any two indices $j, k$, which do not lie in the same component, we have:
	\[ \Prx_{\bA}[k \in \bA \mid j \in \bA, \calbE_{C, \text{no}}] \geq \frac{1}{2} - \frac{\log^4 n}{n}. \]
\end{lemma}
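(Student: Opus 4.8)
The plan is to strip away everything except the ``flip bit'' of each observed component and then reduce the whole statement to a single covariance estimate. Write $V=\sum_{i=1}^\alpha|C_i|\le n/\log^4 n$, let $R=[n]\setminus\bigcup_i C_i$ (so $|R|=n-V$), and for $i\in[\alpha]$ recall $\bY_i=\ind[u_i\in\bA]$ and set $c_i(1)=|C_i(\text{odd})|$, $c_i(0)=|C_i(\text{even})|$ (so $0\le c_i(\cdot)\le|C_i|\le\log n$) and $W(y)=\sum_i c_i(y_i)$. Since $\bA$ is uniform of size $n/2$ and, under $\calbE_{C,\text{no}}$, the value $\bY=y$ forces the placement of every component vertex while the remaining $n-V$ vertices are uniform, we get $\Prx[\bY=y\mid\calbE_{C,\text{no}}]\propto\binom{n-V}{n/2-W(y)}=:\mu(y)$, and conditioned on $\bY=y$ the set $\bA\cap R$ is uniform of size $n/2-W(y)$. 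The complementation map $\bA\mapsto\obA$ preserves $\calbE_{C,\text{no}}$ and sends $\bY_i\mapsto 1-\bY_i$, $W\mapsto V-W$; hence under $\mu$ every $\bY_i$ is unbiased and $W(\bY)$ is symmetric about $V/2$, so $\Ex_\mu[W(\bY)]=V/2$.

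Now assume first that $j$ lies in a component $C_a$, so that $\{j\in\bA\}=\{\bY_a=y_a^*\}$ where $y_a^*$ is the layer‑parity of $j$ (the case $j$ in no component is handled below). Since $j,k$ are in different components there are two cases. If $k\in C_b$ with $b\ne a$, then $\{k\in\bA\}=\{\bY_b=\beta\}$ for the layer‑parity $\beta$ of $k$, so the target probability is $\Prx_\mu[\bY_b=\beta\mid\bY_a=y_a^*]$, which by unbiasedness of $\bY_b$ deviates from $\tfrac12$ by exactly $2|\Cov_\mu(\bY_a,\bY_b)|$. If instead $k\in R$, the target probability equals $\frac{n/2-\Ex_\mu[W(\bY)\mid\bY_a=y_a^*]}{n-V}$, and writing $c_i(\bY_i)=c_i(0)+(c_i(1)-c_i(0))\bY_i$ gives
\[
\Ex_\mu[W(\bY)\mid\bY_a=y_a^*]=c_a(y_a^*)+\tfrac12(V-|C_a|)+\sum_{i\ne a}(c_i(1)-c_i(0))\bigl(\Prx_\mu[\bY_i=1\mid\bY_a=y_a^*]-\tfrac12\bigr)\le\tfrac V2+\log n+2\sum_{i}|C_i|\,|\Cov_\mu(\bY_a,\bY_i)|.
\]
In both cases I am reduced to the bound $|\Cov_\mu(\bY_i,\bY_{i'})|=O(\log^2 n/n)$ for all pairs $i\ne i'$: the first case then yields $\tfrac12-O(\log^2 n/n)$, and the second yields $\frac{n/2-V/2-\log n-O(V\log^2 n/n)}{n-V}\ge\tfrac12-O(\log n/n)$, both of which are $\ge\tfrac12-\frac{\log^4 n}{n}$ for $n$ large (using $V\le n/\log^4 n$). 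The remaining case, where $j$ lies in no component, follows the same lines after deleting $j$ from the ground set (which leaves $\calbE_{C,\text{no}}$ untouched and replaces $(n,n/2)$ by $(n-1,n/2-1)$) and in fact gives $\tfrac12-O(1/n)$.

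The crux is therefore the covariance estimate. The cleanest route is to observe that $\mu$ is exactly the law of $\alpha$ independent fair bits $\bY_1,\dots,\bY_\alpha$ conditioned on $S:=\sum_{l=1}^{n-V}\bZ_l+\sum_{i=1}^\alpha c_i(\bY_i)=n/2$, where $\bZ_1,\dots,\bZ_{n-V}$ are further independent fair bits (indeed $\Prx[\bY=y,\,S=n/2]=2^{-\alpha}\binom{n-V}{n/2-W(y)}2^{-(n-V)}$). A direct computation gives $\Ex[S]=\tfrac{n-V}{2}+\tfrac V2=\tfrac n2$, so we are conditioning $S$ on its own mean, which is the ideal regime for a local‑CLT analysis: $S$ is a sum of $\ge n/2$ fair bits plus bounded terms, hence obeys a local limit theorem, and the standard estimate for conditioning a sum of independent summands on (a point within $O(\sqrt{\Var S})$ of) its mean yields $\Cov_\mu(\bY_i,\bY_{i'})=-\tfrac{\Cov(\bY_i,\,c_i(\bY_i))\,\Cov(\bY_{i'},\,c_{i'}(\bY_{i'}))}{\Var(S)}(1+o(1))$ together with lower‑order corrections; since $|\Cov(\bY_i,c_i(\bY_i))|=\tfrac14|c_i(1)-c_i(0)|\le\tfrac{\log n}{4}$ and $\Var(S)=\tfrac{n-V}{4}+\tfrac14\sum_i(c_i(1)-c_i(0))^2=\Theta(n)$, this is $O(\log^2 n/n)$, as needed. (Equivalently one can argue combinatorially: conditioning on the other coordinates, $\Cov_\mu(\bY_i,\bY_{i'})=\tfrac14(\Prx_\mu[\bY_i=\bY_{i'}]-\Prx_\mu[\bY_i\ne\bY_{i'}])$ becomes $\tfrac14$ times the average of a mixed second difference $g(x+s+u)+g(x)-g(x+s)-g(x+u)$ of $g(W)=\binom{n-V}{n/2-W}$ with $0\le s,u\le\log n$; this is a sum of $su\le\log^2 n$ unit second differences, and since under $\mu$ the argument stays within $O(\sqrt n)$ of the peak $W=V/2$ one gets the same $O(\log^2 n/n)$ bound.) I expect this local‑limit step — rigorously controlling the correlations induced by the ``balanced partition'' constraint, where the naive per‑coordinate bounds only give $|\Cov_\mu|=O(1/\polylog n)$, far too weak — to be the main obstacle; everything else is bookkeeping with the two elementary observations $\Ex_\mu[W]=V/2$ and $\Prx_\mu[\bY_i=1]=\tfrac12$ coming from complementation symmetry.
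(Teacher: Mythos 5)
Your reduction is a genuinely different route from the paper's. The paper compares the raw counts $X$ and $Y$ of consistent partitions directly, term by term over choices of root assignments $S$. You instead represent the conditional law of the root indicators $(\bY_i)_{i\le\alpha}$ under $\calbE_{C,\text{no}}$ as an explicitly tilted product measure $\mu(y)\propto\binom{n-V}{n/2-W(y)}$, use complementation symmetry to get $\Prx_\mu[\bY_i=1]=\tfrac12$ and $\Ex_\mu[W]=V/2$, and show that both cases of the lemma (whether or not $k$ lies in a component) reduce---correctly, the bookkeeping checks out---to the single covariance bound $|\Cov_\mu(\bY_i,\bY_{i'})|=O(\log^2 n/n)$. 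This modularization is clean, and your case analysis actually treats the ``free-index'' cases more carefully than the paper does (the paper only works through the two-component case and appeals to ``very similar arguments'' for the rest).

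The gap---which you flag yourself---is that the covariance bound is essentially the entire content of the lemma, and you do not prove it. The local-CLT sentence appeals to a formula for the covariance of coordinates conditioned on their sum that is not standard enough to invoke without a proof or a reference, and the combinatorial alternative correctly rewrites $\Cov_\mu$ as a weighted average of mixed second differences of $g(W)=\binom{n-V}{n/2-W}$ but then \emph{asserts}, rather than proves, that the argument of $g$ stays within $O(\sqrt n)$ of the mode $V/2$ so that $|D^2g|/g=O(1/n)$ there. That concentration-near-the-mode step is exactly where all the work lives: one needs (i) a tail bound showing the $g$-mass away from the mode is a negligible fraction of the normalizer $Z$, and (ii) the pointwise estimate $|D^2g(W)|/g(W)=O(1/n)+O((W-V/2)^2/n^2)$ on the bulk, combined via $|\Delta_{s,u}g|\le s\,u\,\max|D^2g|$. (For what it's worth, the paper's proof asserts a term-by-term ratio $P_S/P_S'=1\pm O(\log^2 n/n)$ that, as written, is only justified when $|S_A-S_{\oA}|=O(\log n)$, so it leans---less visibly---on the same concentration fact; your framing isolates the issue more honestly, but does not discharge it.) Carrying out the bulk/tail split along the combinatorial route is the missing step.
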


\begin{proof}
	The proof is very straight-forward, we simply count the number of possible partitions $\bA$ for which $j \in \bA$ and are consistent with $C_1, \dots, C_{\alpha}$ and divide by the total number of such partitions. For simplicity, assume that $j$ lies in $C_1(\text{odd})$ and $k$ lies in $C_2(\text{odd})$; the other cases, when $j \in C_1(\text{even})$ or $k \in C_2(\text{even})$ follow from very similar arguments.
	
	We let $X$ be the number of partitions $A \subset [n]$ of size $\frac{n}{2}$ which trigger event $\calbE_{C, \text{no}}$ and have $C_1(\text{odd}) \subset A$ and $C_2(\text{odd}) \subset A$. In order to count these, we first choose which roots $u_3, \dots, u_{\alpha}$ will be included in $A$, and then we pick from the remaining vertices to include in $A$. For a subset $S \subset \{3, \dots, \alpha \}$, we define the quantities:
	\begin{itemize}
		\item $Q = \sum_{i=3}^{\alpha} |C_i|$ is the total vertices assigned from components.
		\item $S_{A} = \sum_{i \in S} |C_i(\text{odd})| + \sum_{i \in [\alpha]\setminus S} |C_i(\text{even})|$ is the total vertices assigned from components to $A$ if we included the roots of components in $S$ in $A$.
		\item $S_{\oA} = Q - S_A$.
	\end{itemize}
	Note that for all subsets $S \subset \{3, \dots, \alpha\}$, we have $S_A \leq \frac{n}{\log^4 n}$. 
	
	Then we have:
	\[ X = \sum_{\ell = 0}^{\alpha - 2} \sum_{\substack{S \subset [3;\alpha] \\ |S|=\ell}} \dbinom{n - Q - |C_1| - |C_2|}{\frac{n}{2} - S_{A} - |C_1(\text{odd})| - |C_2(\text{odd})| }.  \]
	Let $Y$ be the number of partitions $A \subset [n]$ of size $\frac{n}{2}$ which trigger event $\calbE_{C, \text{no}}$ and have $C_1(\text{odd}) \subset A$ and $C_2(\text{even}) \subset A$. Similarly, we have:
	\[ Y = \sum_{\ell = 0}^{\alpha - 2} \sum_{\substack{S \subset [3;\alpha] \\ |S| = \ell}} \dbinom{n - Q - |C_1| - |C_2|}{\frac{n}{2} - S_A - |C_1(\text{odd})| - |C_2(\text{even})|}.\]
	For a particular fixed $S \subset [3;\alpha]$ of size $\ell$, we consider the ratio of the summand in $X$ and in $Y$:
	\begin{align*}
	\dfrac{\dbinom{n - Q - |C_1| - |C_2|}{\frac{n}{2} - S_A - |C_1(\text{odd})| - |C_2(\text{odd})| }}{\dbinom{n - Q - |C_1| - |C_2|}{\frac{n}{2} - S_A - |C_1(\text{odd})| - |C_2(\text{even})|}} &= \dfrac{\left(\frac{n}{2} - S_A - |C_1(\text{odd})| - |C_2(\text{even})|\right)!}{\left( \frac{n}{2} - S_A - |C_1(\text{odd})| - |C_2(\text{odd})|\right)!} \\
	&\qquad\times \dfrac{\left(\frac{n}{2} - S_{\oA} - |C_1(\text{even})| - |C_2(\text{odd})| \right)!}{\left( \frac{n}{2} - S_{\oA} - |C_1(\text{even})| - |C_2(\text{even})|\right)!} \\
	&= \left(1 \pm O\left(\frac{\log n}{n} \right)\right)^{\log n} \left(1 \pm O\left(\frac{\log n}{n} \right) \right)^{\log n} \\
	&= 1 \pm O\left(\frac{\log^2 n}{n} \right),
	\end{align*}
	where we used the fact that $|C_2(\text{even})|, |C_2(\text{odd})| \leq \log n$, and $\frac{n}{2} - S_A - |C_1(\text{odd})| = \Omega(n)$ and $\frac{n}{2} - S_{\oA} - |C_1(\text{odd})| = \Omega(n)$. 
	Thus, we have:
	\[ \dfrac{X}{Y} = 1 \pm O\left( \frac{\log^2 n}{n}\right), \]
	and since:
	\[ \Prx_{\bA}[k \in \bA \mid j \in \bA, \calbE_{C, \text{no}}] = \dfrac{X}{X + Y},\]
	we get the desired claim.
\end{proof}

\section{Reducing to the case $k = \frac{3}{4} n$}\label{app:reduct}

\begin{claim}\label{cl:hehe}
For $\eps < \frac{1}{2}$, let $f \colon \{0, 1\}^n \to \{0, 1\}$ have $\dist(f, \text{$k$-}\Junta) = \eps < \frac{1}{2}$. Then, $g \colon \{0, 1\}^n \times \{0, 1\} \to \{0, 1\}$ given by $g(x, y) = f(x) \oplus y$ has $\dist(g, \text{$(k+1)$-}\Junta) = \eps$.
\end{claim}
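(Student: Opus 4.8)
The plan is to establish the two inequalities $\dist(g, \text{$(k+1)$-}\Junta) \le \eps$ and $\dist(g, \text{$(k+1)$-}\Junta) \ge \eps$ separately; the hypothesis $\eps < \tfrac12$ is needed only for the second.

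For the upper bound I would take a $k$-junta $h \colon \{0,1\}^n \to \{0,1\}$ achieving $\dist(f,h) = \eps$ and set $h'(x,y) = h(x) \oplus y$. Then $h'$ depends on at most $k+1$ coordinates, and since $y \mapsto c \oplus y$ is a bijection for every fixed $c$, we get $\dist(g, h') = \Prx_{\bx,\by}[f(\bx) \oplus \by \ne h(\bx) \oplus \by] = \Prx_{\bx}[f(\bx) \ne h(\bx)] = \eps$, whence $\dist(g, \text{$(k+1)$-}\Junta) \le \eps$.

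For the lower bound, let $g'$ be an arbitrary $(k+1)$-junta with relevant variable set $J$, $|J| \le k+1$, and split on whether the appended coordinate $y$ lies in $J$. If $y \notin J$, then $g'$ is a function of $x$ alone, so for each fixed $x$ exactly one of the two values of $\by$ makes $g(x,\by) = f(x)\oplus \by$ differ from $g'(x)$; hence $\dist(g,g') = \tfrac12 > \eps$. If $y \in J$, put $J' = J \setminus \{y\}$, so $|J'| \le k$. For each setting $a \in \{0,1\}^{J'}$, the restriction of $g'$ to the subcube $\{x : x|_{J'} = a\}$ (together with the $y$-axis) is a function of $y$ only, hence one of $0, 1, y, \lnot y$; let $P$ be the union of the subcubes of type $y$ or $\lnot y$, and let $Q$ be the union of the remaining (``constant'') subcubes. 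I would then define $\hat f \colon \{0,1\}^n \to \{0,1\}$ as a function of $x|_{J'}$ (hence automatically a $k$-junta) by: $\hat f \equiv 0$ on each type-$y$ subcube, $\hat f \equiv 1$ on each type-$\lnot y$ subcube, and $\hat f \equiv$ the majority value of $f$ on each $Q$-subcube. A one-line per-point check shows that for $x \in P$ the number of $\by$ with $g(x,\by) \ne g'(x,\by)$ equals $2\cdot\mathbf{1}[f(x) \ne \hat f(x)]$, while for $x \in Q$ it equals exactly $1$; summing over $x$ gives $\dist(g,g') = \tfrac{1}{2^n}\bigl(N_P + \tfrac12|Q|\bigr)$ with $N_P = |\{x\in P : f(x)\ne\hat f(x)\}|$. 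On the other hand, the majority choice on each $Q$-subcube errs on at most half its points, so $\dist(f,\hat f) \le \tfrac{1}{2^n}\bigl(N_P + \tfrac12|Q|\bigr) = \dist(g,g')$; since $\hat f$ is a $k$-junta this yields $\dist(g,g') \ge \dist(f,\hat f) \ge \eps$. Combining the two cases, every $(k+1)$-junta $g'$ has $\dist(g,g') \ge \eps$, so $\dist(g, \text{$(k+1)$-}\Junta) \ge \eps$, and together with the upper bound we obtain equality.

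The main subtlety is the family $Q$ of subcubes on which $g'$ does not depend on $y$: there $g$ and $g'$ necessarily disagree on exactly one of the two $y$-values, a fixed $\tfrac12$-density contribution independent of $f$, and one must argue this is ``for free'' from the point of view of approximating $f$ by a $k$-junta on the coordinates $J'$ — which is precisely why the majority rule (error $\le \tfrac12$ per subcube) makes the inequality $\dist(f,\hat f) \le \dist(g,g')$ go through. The $y \notin J$ case is the only place $\eps < \tfrac12$ enters, and it must be handled on its own since there $g'$ is not of the form $(\text{junta in }x) \oplus y$.
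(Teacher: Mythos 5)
Your proof is correct, but it takes a genuinely different route from the paper's for the lower bound. The paper argues by contradiction: it takes the closest $(k+1)$-junta $h'$, observes that the appended coordinate $y$ must be influential in $h'$ (otherwise $\dist(g,h')=\tfrac12>\eps$), slices $h'$ into $h_0(x)=h'(x,0)$ and $h_1(x)=h'(x,1)$ — both $k$-juntas once $y$ is known to be relevant — and uses
\[
\dist(g,h') = \tfrac12\bigl(\dist(f,h_0) + \dist(\neg f, h_1)\bigr) \geq \dist(f,\text{$k$-}\Junta),
\]
since $\neg h_1$ is also a $k$-junta. You instead take an arbitrary $(k+1)$-junta $g'$, handle the $y\notin J$ case separately, and in the main case construct an explicit $k$-junta $\hat f$ (agreeing with the dictator/antidictator slices of $g'$ and taking majorities of $f$ on the constant-in-$y$ slices) with $\dist(f,\hat f) \leq \dist(g,g')$ by a per-subcube count. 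Your argument is more constructive — it produces a witness $k$-junta rather than deriving a contradiction — and avoids relying on an optimal $h'$. The paper's version is shorter: the observation that $\neg h_1$ is also a $k$-junta absorbs your type-$\neg y$/type-$y$ bookkeeping and the majority step automatically, and the ``$y$ must be influential'' remark absorbs your $y\notin J$ case. Both proofs use $\eps < \tfrac12$ in exactly the same place.
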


\begin{proof}
For the upper bound, suppose $h \colon \{0, 1\}^n \to \{0, 1\}$ had $\dist(f, h) = \eps$. Then, we have that $h' \colon \{0, 1\}^n \times\{0, 1\} \to \{0, 1\}$ given by $h'(x, y) = h(x) \oplus y$ has $\dist(h', g) = \eps$. Thus, we have $\dist(g, \text{$(k+1)$-}\Junta) \leq \dist(f, \text{$k$-}\Junta)$.

For the lower bound, suppose for the sake of contradiction that $h' \colon \{0, 1\}^n \times \{0, 1\} \to \{0, 1\}$ is a $(k+1)$-junta with $\dist(g, h') = \dist(g, \text{$(k+1)$-}\Junta) < \dist(f, \text{$k$-}\Junta)$. We note that since $\eps < \frac{1}{2}$, the last variable must be influential in $h'$. Then, consider the functions $h_0, h_1 \colon \{0, 1\}^n \to \{0, 1\}$ given by $h_0(x) = h'(x, 0)$ and $h_1(x) = h(x, 1)$. Since $y$ is influential in $h'$, $h_0$ and $h_1$ are both $k$-juntas, and therefore 
\[ \dist(h', g) = \dfrac{\dist(h_0, f) + \dist(h_1, \neg f)}{2} \geq \dist(f, \text{$k$-}\Junta), \] 
which is a contradiction.
\end{proof}

\begin{claim}\label{cl:hehe2}
Let $f \colon \{0, 1\}^n \to \{0, 1\}$ have $\dist(f, \text{$k$-}\Junta) = \eps$. Then $g \colon \{0, 1\}^n \times \{0, 1\} \to \{0, 1\}$ given by $g(x, y) = f(x)$ has $\dist(g, \text{$k$-}\Junta) = \eps$. 
\end{claim}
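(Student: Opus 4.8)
The plan is to prove the two inequalities $\dist(g, \text{$k$-}\Junta) \le \eps$ and $\dist(g, \text{$k$-}\Junta) \ge \eps$ separately, following the same outline as the proof of Claim~\ref{cl:hehe}; the argument is in fact simpler here because $g$ does not read the extra variable at all.

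For the upper bound, I would start from a $k$-junta $h \colon \{0,1\}^n \to \{0,1\}$ achieving $\dist(f, h) = \eps$ and lift it to $h' \colon \{0,1\}^n \times \{0,1\} \to \{0,1\}$ defined by $h'(x, y) = h(x)$. This $h'$ is again a $k$-junta, since it depends on the same $\le k$ coordinates of $x$ and not on $y$. Because neither $g$ nor $h'$ depends on $y$, we get $\dist(g, h') = \Prx_{x, y}[f(x) \neq h(x)] = \dist(f, h) = \eps$, and hence $\dist(g, \text{$k$-}\Junta) \le \eps$.

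For the lower bound, I would take an arbitrary $k$-junta $h' \colon \{0,1\}^n \times \{0,1\} \to \{0,1\}$ and consider its two restrictions $h_0(x) = h'(x, 0)$ and $h_1(x) = h'(x, 1)$. The key observation is that each $h_b$ depends on at most $k$ of the first $n$ coordinates: $h'$ depends on at most $k$ coordinates overall, and fixing the value of $y$ cannot increase the number of relevant coordinates among $x_1, \dots, x_n$. Thus each $h_b$ is a $k$-junta, so $\dist(f, h_b) \ge \eps$. Averaging over $y$ gives $\dist(g, h') = \frac{1}{2}\bigl(\dist(f, h_0) + \dist(f, h_1)\bigr) \ge \eps$, and taking the minimum over all $k$-juntas $h'$ yields $\dist(g, \text{$k$-}\Junta) \ge \eps$. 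Combining the two bounds gives the claim.

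I do not expect a genuine obstacle: the only point requiring a sentence of care is the verification that restricting the dummy variable $y$ of a $k$-junta $h'$ leaves a $k$-junta in the remaining variables, which is immediate from the definition of a junta. Everything else is a direct computation with $\dist(\cdot, \cdot) = \Prx[\cdot \neq \cdot]$ and the fact that $g$ is constant in its last argument.
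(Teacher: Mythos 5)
Your proof is correct and follows essentially the same route as the paper's: lift a near-optimal $k$-junta $h$ to $h'(x,y)=h(x)$ for the upper bound, and restrict an arbitrary $k$-junta $h'$ to $h_0,h_1$ and average over $y$ for the lower bound. The only cosmetic difference is that you argue the lower bound directly by taking a minimum over $h'$, while the paper phrases it as a proof by contradiction; your one-sentence justification that each restriction $h_b$ remains a $k$-junta is a welcome clarification of a step the paper leaves implicit.
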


\begin{proof}
For the upper bound, we have that if $h \colon \{0, 1\}^n \to \{0, 1\}$ has $\dist(f, h) = \eps$, then if $h' \colon \{0, 1\}^n \times \{0, 1\} \to \{0, 1\}$ is given by $h(x, y) = h(x)$, then $\dist(h', g) = \eps$. Thus, we have $\dist(g, \text{$(k+1)$-}\Junta) \leq \dist(f, \text{$k$-}\Junta)$. 

For the lower bound, suppose for the sake of contradiction that $h' \colon \{0, 1\}^n \times \{0, 1\} \to \{0, 1\}$ is a $k$-junta with $\dist(g, h') = \dist(g, \text{$k$-}\Junta) < \dist(f, \text{$k$-}\Junta)$. Then, similarly to above, the functions $h_0, h_1 \colon \{0, 1\}^n \to \{0, 1\}$ given by $h_0(x) = h'(x, 0)$ and $h_1(x) = h'(x, 1)$ are $k$-juntas with
\[ \dist(g, \text{$k$-Junta}) = \dist(g, h') = \dfrac{\dist(f, h_0) + \dist(f, h_1)}{2} \geq \eps, \]
which is a contradiction.
\end{proof}

\begin{lemma}
For $0 < \eps_0 < \eps_1 < \frac{1}{2}$, let $B$ be a non-adaptive $(\eps_0, \eps_1)$-tolerant $k$-junta tester for $n(k)$ variable functions making $q(k)$ queries, where $k \leq \alpha n(k)$. Then, there exists a non-adaptive $(\eps_0, \eps_1)$-tolerant $\frac{3n}{4}$-junta tester making $q(O(n))$ queries.
\end{lemma}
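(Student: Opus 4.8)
The plan is to reduce to the given tester $B$ by a padding argument built on Claim~\ref{cl:hehe} and Claim~\ref{cl:hehe2}. Given an input $f\colon\{0,1\}^n\to\{0,1\}$ on which we wish to $(\eps_0,\eps_1)$-tolerantly test the property of being a $\frac{3n}{4}$-junta, I would transform $f$ into a function $F$ on $n(k)$ variables for a carefully chosen $k$, run $B$ on $F$ with junta parameter $k$, and report its verdict. The point of choosing $k$ well is to arrange that (i) $k\ge\frac{3n}{4}$, so that we only ever need to \emph{raise} the junta parameter, which Claim~\ref{cl:hehe} permits via XOR-padding, and (ii) the ambient dimension $n(k)$ that $B$ expects is large enough, $n(k)\ge n+(k-\frac{3n}{4})$, so that after the XOR-padding we can bring the dimension up to exactly $n(k)$ with dummy variables, which Claim~\ref{cl:hehe2} permits for free.

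First I would pin down $k$. Since $B$ is a $k$-junta tester for $n(k)$-variable functions with $k\le\alpha n(k)$, we have $n(k)\ge k/\alpha$ for every $k$. As $\alpha<1$ is a fixed constant, setting
\[ k = \max\left(\tfrac{3n}{4},\ \left\lceil\tfrac{\alpha n}{4(1-\alpha)}\right\rceil\right) = \Theta(n) \]
gives $k\ge\frac{3n}{4}$ and $n(k)\ge k/\alpha\ge \tfrac{n}{4}+k = n+(k-\tfrac{3n}{4})$. Put $a:=k-\frac{3n}{4}\ge0$ and $b:=n(k)-n-a\ge0$. Define $F\colon\{0,1\}^{n(k)}\to\{0,1\}$ by splitting its coordinates into blocks $x\in\{0,1\}^n$, $y\in\{0,1\}^a$, $z\in\{0,1\}^b$ and setting $F(x,y,z)=f(x)\oplus y_1\oplus\cdots\oplus y_a$. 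Iterating Claim~\ref{cl:hehe} once per XOR-variable and Claim~\ref{cl:hehe2} once per dummy variable then yields
\[ \dist(F, k\text{-}\Junta) = \min\left(\dist(f, \tfrac{3n}{4}\text{-}\Junta),\ \tfrac12\right), \]
where the truncation at $\tfrac12$ and the closure of the class of juntas under negation handle the boundary case $\dist(f,\frac{3n}{4}\text{-}\Junta)=\tfrac12$ not literally covered by the hypothesis $\eps<\tfrac12$ of Claim~\ref{cl:hehe} (this is an immediate strengthening of that claim's proof: the upper bounds $\dist(F,k\text{-}\Junta)\le\dist(f,\frac{3n}{4}\text{-}\Junta)$ and $\le\tfrac12$ use a padded optimal approximator and the constant $0$, while the matching lower bound splits any $(k+a)$-junta approximator over the last XOR variable). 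In particular, since $0<\eps_0<\eps_1<\tfrac12$: if $\dist(f,\frac{3n}{4}\text{-}\Junta)\le\eps_0$ then $\dist(F,k\text{-}\Junta)\le\eps_0$, and if $\dist(f,\frac{3n}{4}\text{-}\Junta)>\eps_1$ then $\dist(F,k\text{-}\Junta)>\eps_1$.

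The tester is now immediate: run $B$ with junta parameter $k$ on the $n(k)$-variable function $F$; answer each query $w=(x,y,z)$ of $B$ by querying $f$ at $x$ and returning $f(x)\oplus y_1\oplus\cdots\oplus y_a$; output whatever $B$ outputs. This is a legal invocation of $B$ because $k\le\alpha n(k)$; it is non-adaptive because $B$ is and each query to $F$ is answered by one predetermined query to $f$; and it makes $q(k)=q(\Theta(n))=q(O(n))$ queries. Correctness follows from the displayed distance identity together with the completeness and soundness of $B$. The step I expect to require the most care is the choice of $k$: it is precisely here that the hypothesis $\alpha<1$ is essential, because if $\alpha$ could equal $1$ then $n(k)$ could be as small as $k$ itself, leaving no slack to absorb the $a=k-\frac{3n}{4}$ extra dimensions introduced by the XOR-padding; everything else is routine bookkeeping with Claims~\ref{cl:hehe} and~\ref{cl:hehe2}.
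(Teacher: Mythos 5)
Your proof is correct and follows essentially the same padding reduction as the paper: XOR-pad $f$ with extra variables to raise the junta parameter via Claim~\ref{cl:hehe}, then add dummy variables to reach ambient dimension $n(k)$ via Claim~\ref{cl:hehe2}, and run $B$. Your explicit choice $k=\max\bigl(\tfrac{3n}{4},\lceil\tfrac{\alpha n}{4(1-\alpha)}\rceil\bigr)$ is a slightly more careful bookkeeping of the same construction (and in particular guarantees $k\ge\tfrac{3n}{4}$ even when $\alpha<\tfrac34$, whereas the paper's formula for $n'$ clips to $0$ there), but the underlying argument is identical.
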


\begin{proof}
We give an algorithm which on input $f \colon \{0, 1\}^n \to \{0, 1\}$, determines whether $f$ is $\eps_0$-close from being a $\frac{3n}{4}$-junta or is $\eps_1$-far from being a $\frac{3n}{4}$-junta. The algorithm works as follows: on input $f\colon \{0, 1\}^n \to \{0, 1\}$, we let $g \colon \{0, 1\}^{n} \times \{0, 1\}^{n'} \to \{0, 1\}$ be given by:
\[ g(x, y) = f(x) \oplus \bigoplus_{j=1}^{n'} y_j, \]
where $n' = \max \{ \frac{(4\alpha - 3)n}{4(1-\alpha)}, 0 \}$. Note that if we let $m = n + n'$ (the number of variables in $g$), by Claim~\ref{cl:hehe}, if $f$ is $\eps_0$-close from being a $\frac{3n}{4}$-junta, then $g$ is $\eps_0$-close to being an $\alpha m$-junta, and if $f$ is $\eps_1$-far from being a $\frac{3n}{4}$-junta, then $g$ is $\eps_1$-far from being an $\alpha m$-junta. Finally, we run the tester $B$ with $k = \alpha m$ on $f$, where we add $n(k) - m$ dummy variables. 

The query complexity is given by $q(O(n))$, since $k = O(n)$ when $\alpha < 1$ is a constant.
\end{proof}

\end{document}